\newif\ifreport\reporttrue
\newcommand{\ignore}[1]{}
\newtheorem{lemma}{Lemma}
\newtheorem{proposition}{Proposition}
\newtheorem{theorem}{Theorem}
\newtheorem{corollary}{Corollary}
\theoremstyle{definition}
\newtheorem{definition}{Definition}
\begin{document}



\title{On Delay-Optimal Scheduling in Queueing Systems with Replications}

\author{Yin Sun, C. Emre Koksal, and Ness B. Shroff \\

\thanks{Yin Sun and C. Emre Koksal are with the Department of Electrical and Computer Engineering, the Ohio State University, Columbus, OH. Email: sunyin02@gmail.com, koksal.2@osu.edu.

Ness B. Shroff is with the Departments of Electrical and Computer Engineering and Computer Science and Engineering, the Ohio State University, Columbus, OH. Email: shroff.11@osu.edu.}
}

\maketitle
\begin{abstract}
In modern computer systems, jobs are divided into short tasks and executed in parallel. 
Empirical observations in practical systems suggest that the task service times are highly random and the job service time is bottlenecked by the slowest straggling task. One common solution for straggler mitigation is to replicate a task on multiple servers and wait for one replica of the task to finish early. The delay performance of replications depends heavily on the scheduling decisions of when to replicate, which servers to replicate on, and which job to serve first. So far, little is understood on how to optimize these scheduling decisions for minimizing the delay to complete the jobs.
In this paper, we present a comprehensive study on delay-optimal scheduling of  replications in both centralized and distributed multi-server systems. Low-complexity scheduling policies are designed and are proven to be delay-optimal or near delay-optimal in stochastic ordering  among all causal and non-preemptive policies. These theoretical results are established for general system settings and delay metrics that allow for arbitrary arrival processes, arbitrary job sizes, arbitrary due times, and heterogeneous servers with data locality constraints. 
Novel sample-path tools are developed to prove these results.



%
%
%
%
%
%
%
%
%
%
%
%
%
%

\end{abstract}
\begin{IEEEkeywords}
Queueing systems, replications, delay optimality, data locality, sample-path ordering, work-efficiency ordering, stochastic ordering.
\end{IEEEkeywords}
\newpage

\section{Introduction}
Achieving low delay is imperative in modern computer systems. 
Google has found that increasing the delay of Web searching from 0.4 seconds to 0.9 seconds decreases the traffic and ad revenues by 20\% \cite{Linden2006}. Similar results were reported by Amazon, where every 100 milliseconds of extra response time was shown to decrease the sales by 1\% \cite{LindenStanfordtalk}. 
These results imply that prompt responses not only allow users to view more pages in the same period of time, but also provide instant gratification to motivate them to spend more time online \cite{Farber2006}. 
In addition, low delay is critically important for the 
stock market, where the fastest trading decisions are made within a few milliseconds. It was  estimated that a 1-millisecond advantage in trading can be worth 100 million dollars a year for a major brokerage firm \cite{Martin2007}. 
Therefore, 
even small changes in delay can have a significant impact on business success.

As the size and complexity of computer systems continues its significant growth, maintaining low delay becomes increasingly challenging. Long-running jobs are broken into a batch of short tasks which can be executed in parallel over many servers \cite{mapreduce}. Experience in practical systems suggests that the response times of individual servers are highly random, because of resource sharing, network congestion, cache misses, database blocking, background activities, and so on \cite{Dean:2013:Tail}. As a result, the job service delay is constrained by the slowest straggling tasks, causing a long delay tail. 

An efficient technique used to tame the delay tail is \emph{replications} \cite{Ghare:2004,Cirne2007213,vulimiri13latency,ShengboInfocom,Wang:2014,Wang2015}, which is also called \emph{redundant requests}  \cite{shah-Allerton-2013,Kristen2015,Lee-Allerton-2015} and \emph{cloning}  \cite{Ananthanarayanan11,Ananthanarayanan13}. In this technique, multiple replicas of a task are dispatched to different servers and the first completed replica is considered as the valid execution of the task. After that, the remaining replicas of this task can be cancelled to release the servers, possibly with a certain amount of cancellation delay overhead. 
The potential benefits of replications are huge.
For example, in Google's BigTable service which has a high degree of parallelism, replications can reduce the 99.9\%-th percentile delay from 1,800 milliseconds to 74 milliseconds \cite{Dean:2013:Tail}. However, in some other systems, replications may worsen the delay performance, e.g., \cite{Liang2013_2,vulimiri13latency}. In particular, the delay performance of replications depends heavily on the scheduling decisions of when to replicate, which servers to replicate on, and which job to serve first. So far, little is understood on how to optimally schedule replications for minimizing the  delay to complete the jobs.

In this paper, we study delay-optimal scheduling of replications for  centralized and distributed multi-server queueing systems, which are illustrated in Fig. \ref{fig1model_central} and Fig. \ref{fig1model_distri}, respectively. Each job brings with it a batch of tasks, and each task is of one unit of work. The jobs arrive over time according to a general arrival process, where the number, batch sizes, arrival times, and due times of the jobs are \emph{arbitrarily} given. In centralized queueing systems, the jobs arrive at a scheduler and are stored in a job queue. The scheduler determines the  assignment, replication, and cancellation of the tasks, based on the casual information (the history and current information) of the system. In distributed queueing systems, the jobs arrive at multiple parallel schedulers and each scheduler make decisions independently, subject to \emph{data locality} constraints \cite{taskplacement2011,Sparrow:2013}. More specifically, the servers  are divided into multiple server groups, and each task can be only assigned by one group of servers, each of which stores one copy of the data necessary for executing the task.\footnote{The data locality constraints considered here are \emph{hard} constraints, where the servers are not allowed to process remote tasks belonging to other groups. There exists another form of \emph{soft} data locality constraints, where a server can execute remote tasks belonging to other groups by first retrieving the necessary data and then processing the task. Therefore, remote tasks are executed at a slower speed than local tasks. In practical systems, hard data locality constraints are more appropriate for interactive Web services, which need to respond within a few seconds or even shorter time; while soft data locality constraints are usually used in offline services, where the tasks are executed at a slower time scale and hence there is sufficient time to retrieve the necessary data.} The service times of the tasks follow New-Better-than-Used (NBU) distributions or New-Worse-than-Used (NWU) distributions, and are \emph{independent} across the servers and \emph{i.i.d.} across the tasks assigned to the same server. Our goal is to seek for low-complexity scheduling policies that optimize the delay performance of the jobs.

\begin{figure}
\centering
\includegraphics[width=0.5\textwidth]{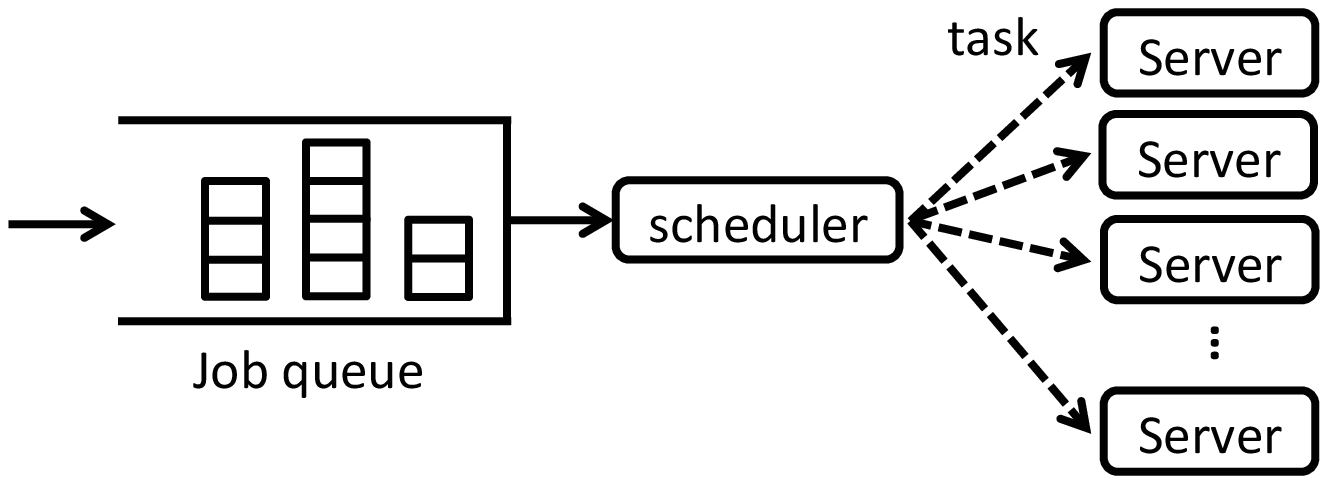}
\caption{A centralized queueing system.} \label{fig1model_central}
\end{figure}
\begin{figure}
\centering
\includegraphics[width=0.8\textwidth]{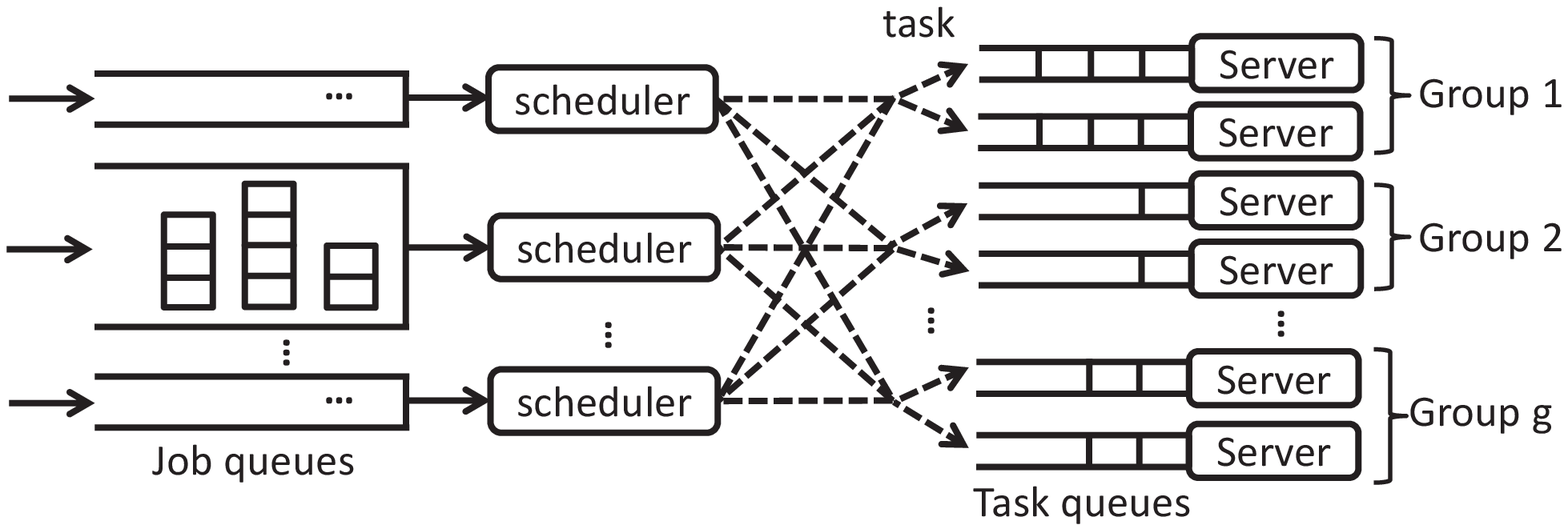}
\caption{A distributed queueing system with data locality constraints. Each task can be only assigned to a group of servers each of which stores one copy of the data necessary for executing the task.} \label{fig1model_distri}
\end{figure}
\subsection{Difficulty of Delay-Optimal Scheduling} 
Without replications, the models that we consider belong to the class of multi-class multi-server queueing systems, where delay optimality has been extremely  difficult to achieve. For example, delay minimization in deterministic scheduling problems (where the service time of each job is known) with more than one servers is $NP$-hard and has no constant competitive ratio \cite{Leonardi:1997}. Similarly, delay-optimal stochastic scheduling (where the service time of each job is random) in multi-class multi-server queueing systems is deemed to be notoriously difficult \cite{Weiss:1992,Weiss:1995,Dacre1999}. Prior attempts on solving the delay-optimal scheduling problem have met little success, except in some limiting regions such as large system limits, e.g., \cite{Ying2015}, and heavy traffic limits, e.g., \cite{Stolyar_heavy2004}. However, these results may not apply outside of these limiting regions or when the stationary distribution of the system does not exist.

In addition, replications add a further layer of difficulty to this problem. If a task is replicated on multiple servers, its service time is reduced, but at a cost of longer waiting times of other tasks. In general, it is difficult to determine whether the gain of shorter service time would exceed the loss of longer waiting times. Hence, ``to replicate or not  to replicate'' is a fundamental dilemma that needs to be resolved in order to design a delay-optimal scheduler. We note that each task has many replication modes (i.e., it can be replicated on different servers and at different time instants), which require different amounts of service time. Thus, the work conservation law \cite{Leonard_Kleinrock_book,Jose2010} does not hold in the study of replications. Hence, certain powerful and well-known  delay minimization methods, such as the achievable region approach \cite{Gittins:11,Dacre1999}, are difficult to apply to  our study.
\subsection{Summary of Main Results}
We develop a number of low-complexity scheduling policies. For arbitrarily given job parameters (including the number, batch sizes, arrival times, and due times of the jobs), these policies are proven to be delay-optimal or near delay-optimal in stochastic ordering for minimizing several classes of delay metrics among all causal and non-preemptive\footnote{We consider task-level non-preemptive policies: Processing of a task cannot be interrupted until the task is completed or cancelled; after completing or cancelling a task, the server can switch to process another task from any job.} policies. Some examples of the delay metrics considered in this paper include the average delay, maximum delay, maximum lateness, increasing and Schur convex functions of delay (e.g., the second moment of delay), etc.\footnote{To the best of our knowledge, except for the average delay, the other delay metrics are considered in the study of replications for the first time.} In particular, the proposed policies are proven to be within a constant additive delay gap from the optimum  for minimizing the mean average delay. 

The key tools in our proofs are new sample-path orderings for comparing the delay performance of different policies. These sample-path orderings are very general because they do not need to specify the queueing system model, and hence can be potentially used for establishing near delay optimality results in other scheduling problems. The interested readers are referred to Appendix \ref{sec_proofmain}.

\subsection{Organization of the Paper} We describe the model and problem formulation in Section \ref{sec_model}, together with the notations that we will use throughout the paper. Replication policies and their delay performance are analyzed for centralized queueing systems in Section \ref{sec_analysis}, and  for distributed queueing systems in Section \ref{sec_distributed}.  Numerical results are  provided in Section \ref{sec_numerical}. Finally, the conclusion is drawn in Section \ref{sec_conclusion}. The sample-path proof method is provided in Appendix \ref{sec_proofmain}.

\section{Related Work}\label{sec_related}
\subsection{Systems Work} 
The benefits of exploiting replications to reduce delay have been empirically studied for many applications \cite{mapreduce,Zaharia:2008,Ananthanarayanan:2010,Dremel2010,Dean:2013:Tail,Ananthanarayanan11,Ananthanarayanan13,Ananthanarayanan2014, vulimiri12latency,vulimiri13latency,addShengboOriginal,Liang2013_2,DTN-delay}. 
In communication networks, multiple replicated copies of a message can be sent over different routing paths to reduce delay \cite{DTN-delay,vulimiri13latency}. 
In cloud computing systems, the task execution time is highly random and it was shown that replicating straggling tasks can significantly reduce the job service delay \cite{mapreduce,Zaharia:2008,Ananthanarayanan:2010,Dremel2010,Ananthanarayanan13,Ananthanarayanan2014,Dean:2013:Tail}. In \cite{vulimiri12latency,vulimiri13latency}, the authors observed that the delay of DNS queries can be reduced by sending multiple replications of a query to multiple servers. 

Replications and more general coding techniques have been proposed to reduce communication delay in cloud storage and information retrieval systems. In \cite{addShengboOriginal,Liang2013_2}, the authors performed experiments on Amazon S3, and found that one can exploit storage redundancy to issue multiple downloading connections to reduce delay. 
Recently, coding techniques were introduced to speedup distributed algorithms such as MapReduce and machine learning \cite{Li-Allerton-2015, Lee-NIPS-2015}. Significant performance improvements were  shown through experiments on Amazon EC2  \cite{Lee-NIPS-2015}.

\subsection{Theoretical Work} 
There has been a growing interest in understanding and characterizing the delay performance of replication and coding techniques. One focus in this area is fast data retrieval in distributed storage systems. In \cite{huang-isit-2012}, Huang et al. showed that codes can reduce the queueing delay in distributed storage systems. In \cite{Joshi-2012,shah-mdsq-2012,Kumar2014}, the authors obtained bounds on the mean average delay of redundant data downloading policies. In \cite{addShengboOriginal,Liang2013_2}, Liang and Kozat provided an approximate analysis for the delay performance of redundant data downloading policies, based on their measurements on Amazon S3. 
There also exist some analytical studies on distributed computing systems.
In \cite{Wang:2014,Wang2015}, Wang et al. studied the tradeoff between delay and computing cost in cloud computing systems.
For Poisson arrivals and exponential service times, Gardner et al. \cite{Kristen2015} characterized the response time distribution for several replication policies in distributed queueing systems. Recently, the delay performance of replications was analyzed in the context of load-balancing \cite{BinInfoCom2016,Gardner2016,Gardner2016_2}, where the number of servers may potentially grow to infinity.  

In addition to the aforementioned studies that focus on delay analysis and characterization, there also exist a few works which aim to find delay-optimal scheduling policies of replications and coding. If the task service times are geometrically distributed and each job has a single task, it was shown in \cite{Borst2003} that distributing the tasks over the servers as evenly as possible is delay-optimal among all admissible policies. 
Later, delay-optimal scheduling of replications was studied for more general service time distributions, such as New-Better-than-Used (NBU) distributions and New-Worse-than-Used (NWU) distributions \cite{Righter2008,Kim2010}. In \cite{shah-Allerton-2013,Lee-Allerton-2015,Joshi2015}, the delay performance of replication policies was analyzed in a few different models, where the optimal policies were obtained for minimizing the mean average delay within a specific family of policies.
In \cite{ShengboInfocom}, Chen et al. presented some similar results with \cite{shah-Allerton-2013}, where the difference is that the delay optimality results in  \cite{ShengboInfocom} were established among all admissible policies. In  \cite{Sun2015}, (near) delay-optimal scheduling results were established for general maximum distance separable (MDS) codes in distributed storage systems. 

This paper differs from the existing works in two aspects: First, our study is carried out for very general system settings and  delay metrics, some of which are considered in the study of replications for the first time. Second, in our study, delay optimality results are established  when it is possible; and in some more general scenarios where delay optimality is inherently difficult to achieve, alternative near delay optimality results with small sub-optimality gaps are obtained.

\section{Model and Formulation}\label{sec_model}
\subsection{Notations and Definitions}
We will use lower case letters such as $x$ and $\bm{x}$, respectively, to represent deterministic scalars and vectors.
In the vector case, a subscript will index the components of a vector, such as $x_i$.
We use $x_{[i]}$ and $x_{(i)}$, respectively, to denote the $i$-th largest and the $i$-th smallest components of $\bm{x}$.  
For any $n$-dimensional vector $\bm{x}$, let $\bm{x}_{\uparrow}=(x_{(1)},\ldots,x_{(n)})$ 
denote the increasing 
rearrangements of $\bm{x}$.  Let $\bm{0}$
denote the vector 
with all 0 
components.

Random variables and vectors will be denoted by upper case letters such as $X$ and $\bf{X}$, respectively, with the subscripts and superscripts following the same conventions as in the deterministic case. 
Throughout the paper, ``increasing/decreasing'' 
and ``convex/concave'' 
are used in the non-strict sense. LHS and RHS denote, respectively, ``left-hand side'' and ``right-hand side''.

For any $n$-dimensional vectors $\bm{x}$ and $\bm{y}$, the elementwise vector ordering $x_i\leq y_i$, $i=1,\ldots,n$, is denoted by $\bm{x} \leq \bm{y}$. Further, $\bm{x}$ is said to be \emph{majorized} by $\bm{y}$, denoted by $\bm{x}\prec\bm{y}$, if (i) $\sum_{i=1}^j x_{[i]} \leq \sum_{i=1}^j y_{[i]}$, $j=1,\ldots,n-1$ and (ii) $\sum_{i=1}^n x_{[i]} = \sum_{i=1}^n y_{[i]}$ \cite{Marshall2011}. In addition, $\bm{x}$ is said to be  \emph{weakly majorized by $\bm{y}$ from below}, denoted by $\bm{x}\prec_{\text{w}}\bm{y}$, if $\sum_{i=1}^j x_{[i]} \leq \sum_{i=1}^j y_{[i]}$, $j=1,\ldots,n$; $\bm{x}$ is said to be  \emph{weakly majorized by $\bm{y}$ from above}, denoted by $\bm{x}\prec^{\text{w}}\bm{y}$, if $\sum_{i=1}^j x_{(i)} \geq \sum_{i=1}^j y_{(i)}$, $j=1,\ldots,n$ \cite{Marshall2011}.
A function that preserves the majorization order is called a Schur convex function. Specifically, $f: \mathbb{R}^n\rightarrow \mathbb{R}$ is termed \emph{Schur convex} if $f(\bm{x})\leq f(\bm{y})$ for all $\bm{x}\prec\bm{y}$ \cite{Marshall2011}. A function $f: \mathbb{R}^n\rightarrow \mathbb{R}$ is termed \emph{symmetric} if $f(\bm{x})= f(\bm{x}_{\uparrow})$ for all $\bm{x}$. The composition of functions $\phi$ and $f$ is denoted by $\phi \circ f (\bm{x}) = \phi(f (\bm{x}))$. Define $x\wedge y=\min\{x,y\}$.

Let $\mathcal{A}$ and $\mathcal{S}$ denote sets and events, with $|\mathcal{S}|$ denoting the cardinality of $\mathcal{S}$.
For all random variable ${X}$ and events $\mathcal{A}$, let $[{X}|\mathcal{A}]$ denote a random variable with the conditional distribution of ${X}$ for given $\mathcal{A}$. A random variable ${X}$ is said to be \emph{stochastically smaller} than another random variable ${Y}$, denoted by ${X}\leq_{\text{st}}{Y}$, if $\Pr({X}>x) \leq \Pr({Y}>x)$ for all~$x\in \mathbb{R}$.
A set $\mathcal{U} \subseteq \mathbb{R}^n$ is called \emph{upper}, if $\bm{y} \in \mathcal{U}$ whenever $\bm{y}\geq \bm{x}$ and $\bm{x} \in \mathcal{U}$. 
A random vector $\bm{X}$ is said to be \emph{stochastically smaller} than another random vector $\bm{Y}$, denoted by $\bm{X}\leq_{\text{st}}\bm{Y}$, if $\Pr(\bm{X}\in \mathcal{U}) \leq \Pr(\bm{Y}\in \mathcal{U})$ for all upper sets ~$\mathcal{U}\subseteq \mathbb{R}^n$. If $\bm{X}\leq_{\text{st}}\bm{Y}$ and $\bm{X}\geq_{\text{st}}\bm{Y}$, then $\bm{X}$ and $\bm{Y}$ follow the same distribution, denoted by $\bm{X}=_{\text{st}}\bm{Y}$. We remark that $\bm{X}\leq_{\text{st}}\bm{Y}$  if, and only if $\mathbb{E}[\phi(\bm{X})] \leq \mathbb{E}[\phi(\bm{Y})]$
holds for all increasing $\phi: \mathbb{R}^n\rightarrow \mathbb{R}$ provided the expectations  exist \cite{StochasticOrderBook}.

A random variable $X$ is said to be \emph{smaller than another random variable $Y$ in the hazard rate ordering}, denoted by ${X}\leq_{\text{hr}}{Y}$, if $\Pr({X}-t>s|X>t) \leq \Pr({Y}-t>s|Y>t)$ for all~$s\geq0$ and all $t$. A random variable $X$ is said to be \emph{smaller than another random variable $Y$ in the increasing convex ordering}, denoted by ${X}\leq_{\text{icx}}{Y}$, if $\mathbb{E}[\phi({X})] \leq \mathbb{E}[\phi({Y})]$ holds for all increasing functions $\phi: \mathbb{R} \rightarrow \mathbb{R}$ provided the expectations exist \cite{StochasticOrderBook}. 

\subsection{System Model}\label{sec:model}
Consider a  system with $m$ servers, which starts to operate at time $t=0$. 
A sequence of $n$ jobs arrive at time instants $a_1,\ldots,$ $a_n$, where $n$ can be either finite or infinite and $0=a_1\leq a_2\leq\cdots\leq a_n$. 
The $i$-th incoming job, also called job $i$, brings with it a batch of $k_i$ tasks.
Each task is the smallest unit of work that can be assigned to a server. Job $i$ is completed when all $k_i$ tasks of job $i$ are completed. The maximum job size is\footnote{If $n\rightarrow\infty$, then the $\max$ operator in \eqref{eq_0} is replaced by $\sup$.}
\begin{align}\label{eq_0}
k_{\max} = \max_{i=1,\ldots,n}k_i.
\end{align}

\subsubsection{Service Time Distributions} \label{sec_Multivariate}
In practice,  the service times of the tasks are highly random due to many reasons, including resource sharing, network congestion, cache misses, database blocking, etc.  \cite{Dean:2013:Tail},  and the servers may operate at different service speeds because they have different amounts of resources, e.g., CPU, memory, I/O bandwidth \cite{Googletrace2012,C3:2015}. Motivated by this, we assume that the task service times
 are \emph{independent} across the servers and \emph{i.i.d.} across the tasks assigned to the same server.
 Let $X_l$ be a random variable representing the task service time of server $l$. The service rate of server $l$ is $\mu_l= 1/\mathbb{E}[X_l]$, which may vary across the servers. 
We consider the following classes of NBU and NWU task service time distributions.
\begin{definition}
Consider a non-negative random variable $X$ with complementary cumulative distribution function (CCDF)  $\bar{F}(x)=\Pr[X>x]$. Then, $X$ is   \textbf{New-Better-than-Used (NBU)} if for all $t,\tau\geq 0$
\begin{eqnarray}\label{eq_NBU}
\bar{F}(\tau+t)\leq \bar{F}(\tau)\bar{F}(t).
\end{eqnarray}
On the other hand, $X$ is \textbf{New-Worse-than-Used (NWU)} if $\bar{F}$ is \emph{absolutely continuous} and for all $t,\tau\geq 0$
\begin{eqnarray}\label{eq_NWU}
\bar{F}(\tau+t)\geq\bar{F}(\tau)\bar{F}(t).
\end{eqnarray}
\end{definition}

NBU distributions include increasing failure rate (IFR) distributions and log-concave distributions as special cases. 
Examples of NBU distributions include constant service time, shifted exponential distribution, geometrical distribution, Erlang distribution, etc. Recent measurements \cite{addShengboOriginal,Liang2013_2} show that the data downloading time in Amazon AWS can be approximated as a shifted exponential distribution. NWU distributions include the classes of decreasing failure rate (DFR) distributions and log-convex distributions. Examples of NWU distributions include hyperexponential distribution, Pareto type II (Lomax) distribution \cite{Arnold99}, gamma distributions with $m<1$, Weibull distribution with $c<1$, etc. In some systems \cite{Christodoulopoulos2008}, the task service time can be modeled as a hyperexponential distribution. A random variable is both NBU and NWU if, and only if, it is exponential. 

\subsubsection{Queueing Models with Replications}\label{eq_model_heter}

We consider both centralized and distributed queueing models:

\textbf{Centralized queueing model:}
In a centralized queueing system, the jobs arrive at a scheduler and are stored in a job queue, as shown in Fig. \ref{fig1model_central}. A scheduler assigns tasks to the available servers over time.

\ifreport
\begin{figure*}
\centering
\begin{subfigure}[b]{\textwidth}
\centering
\includegraphics[width=0.9\textwidth]{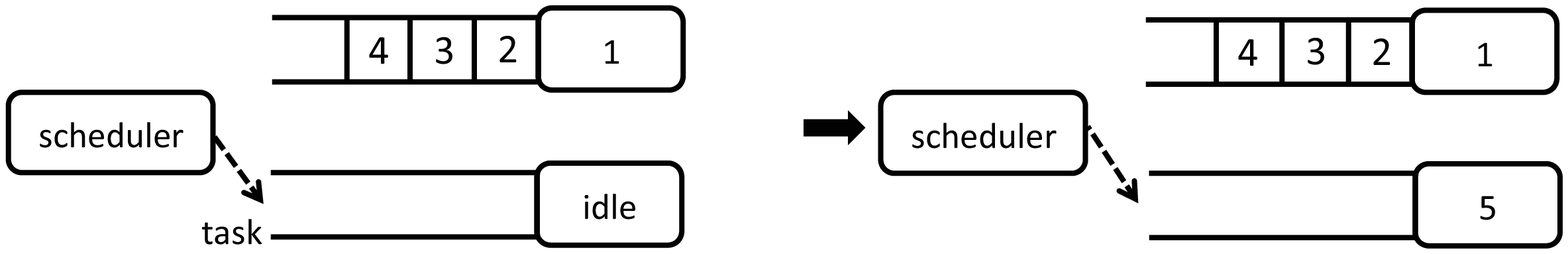}
\caption{\textbf{The Power of $d$ Choices policy:} The scheduler queries the queue lengths of $d=2$ servers, and assigns a task to the server with the shortest queue length.}
\end{subfigure}
\begin{subfigure}[b]{\textwidth}
\centering
\includegraphics[width=0.9\textwidth]{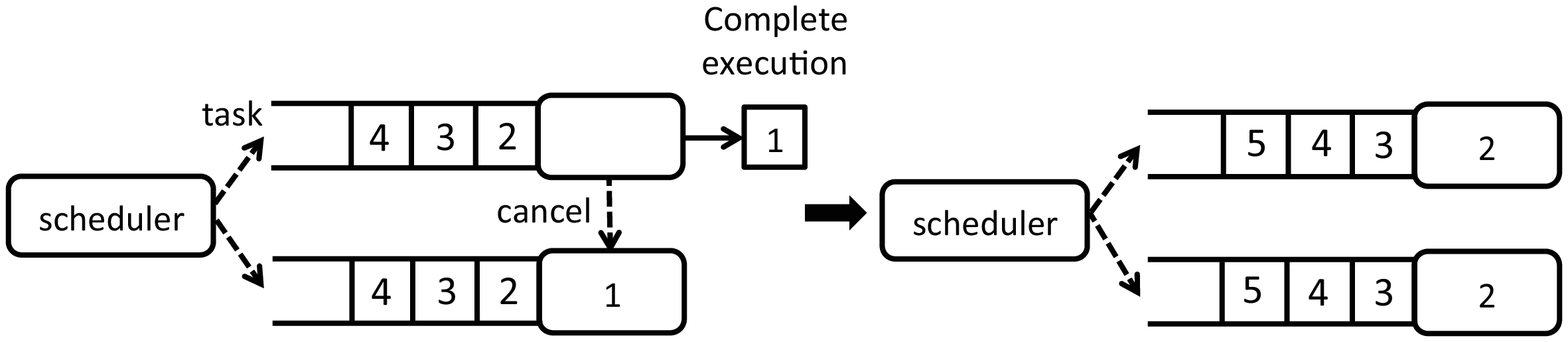}
\caption{\textbf{Cancel-After-Execution policy:} Each task is replicated in both queues and is replicated to both servers. If one copy of task 1 completes execution on one server, a message is sent to cancel the other copy of task 1, which is being executed on the other server.}
\end{subfigure}
\begin{subfigure}[b]{\textwidth}
\centering
\includegraphics[width=0.9\textwidth]{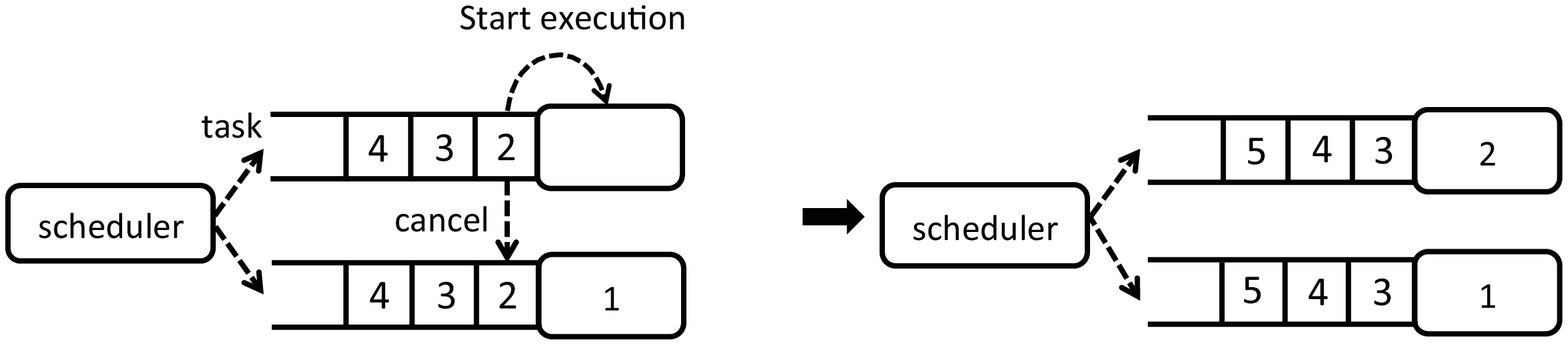}
\caption{\textbf{Cancel-Before-Execution policy:} Each task is replicated in both queues, but can be only assigned to one server. If one copy of task 2 starts execution on one server, a message is sent to cancel the other copy of task 2, which is waiting in the queue on the other server.}
\end{subfigure}
\caption{Scheduling policies for a distributed queueing system, which is a part of the distributed queueing system in Fig. \ref{fig1model_distri}.} \label{fig3_distri_replication}
\end{figure*}
\else
\fi

In order to reduce delay, a task can be \textbf{replicated} on multiple servers, possibly at different time instants.
The task is deemed completed as soon as one copy of the task is completed; after that, the other redundant copies of the task are either executed until completion or cancelled with a certain amount of cancellation delay overhead. 
 We assume that the cancellation overheads are \emph{independent} across the servers and \emph{i.i.d.}~across the  tasks cancelled on the same server. Let  $O_l$ be a random variable representing the cancellation overhead of server $l$.
Denote $\bm{X}=(X_1,\dots,X_m)$ and $\bm{O}=(O_1,\dots,O_m)$, which are  assumed to be \emph{mutually independent}.






\textbf{Distributed queueing model:}
In a distributed queueing system, the jobs arrive at a number of parallel schedulers, and are stored in  
the job queues associated to the schedulers, as depicted in Fig. \ref{fig1model_distri}. Each scheduler independently assigns tasks to the servers.
The servers  are divided into $g$ groups.  Each server has a task queue, which stores the tasks assigned from different schedulers.
The decisions of the schedulers are subject to \textbf{data locality} constraints \cite{taskplacement2011,Sparrow:2013}. More specifically, each task can be only executed by one group of servers, each of which stores one copy of the data necessary for executing the task.

There are multiple ways to assign a task to its required group of the servers. \textbf{The Power of $d$ Choices} \cite{mitzenmacher2001power,vvedenskaya1996queueing} load balancing policy is illustrated in Fig. \ref{fig3_distri_replication}(a), where the scheduler queries the queue lengths of $d$ servers in one group, and selects the server with the shortest queue length. In this policy, it may happen that some queues in the group are empty and the other queues in the group are not, which reduces the efficiency of the system.
Two alternative task assignment policies are depicted in Fig. \ref{fig3_distri_replication}(b) and Fig. \ref{fig3_distri_replication}(c), where the scheduler simultaneously places multiple copies of a task to all the servers in one group, and the servers are allowed to communicate with each other to cancel the redundant task copies. Such an approach is advocated by Google \cite{Dean:2013:Tail}. In Fig. \ref{fig3_distri_replication}(b), when one copy of a task starts execution, a message is sent to cancel the other redundant task copies. This policy is called the \textbf{Cancel-After-Execution} policy. In Fig. \ref{fig3_distri_replication}(c), when one copy of a task completes execution, a message is sent to cancel the other redundant task copies. This policy is called the \textbf{Cancel-Before-Execution} policy, which was also named ``tied-requests'' in \cite{Dean:2013:Tail} and ``late-bindling'' in \cite{Sparrow:2013}.
It was pointed out in \cite{Dean:2013:Tail,Sparrow:2013} that the Cancel-Before-Execution policy
can potentially achieve a better delay performance than the Power of $d$ Choices load balancing policy.

\ifreport
\begin{figure*}
\centering
\begin{subfigure}[b]{\textwidth}
\centering
\includegraphics[width=0.6\textwidth]{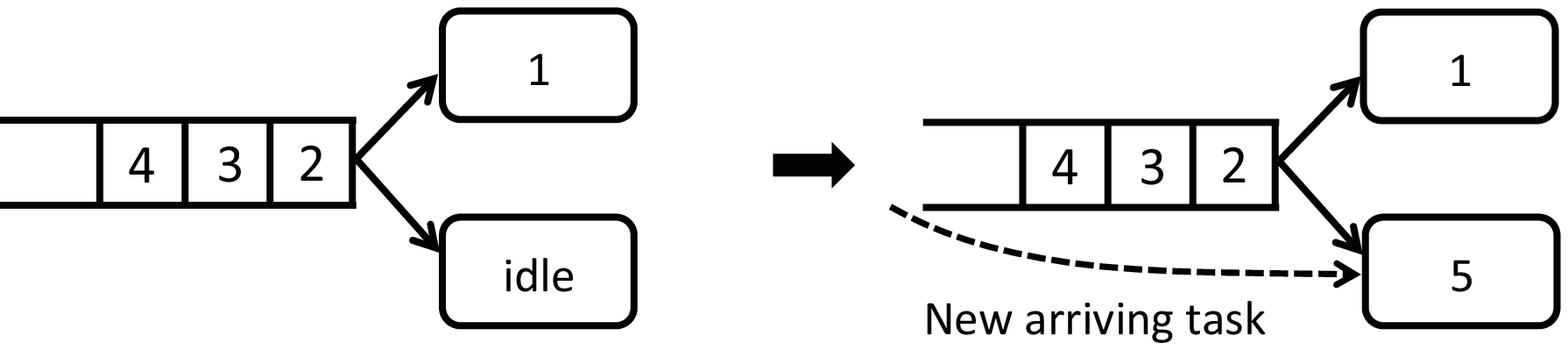}
\caption{\textbf{Assign-When-Enqueueing policy:} Each task is labelled to be served by one of the two servers when it arrives, and cannot be re-assigned to the other server. In this example, tasks 1-4 are labelled to be served by one server. If the other server is idle, only a new arriving task can be assigned to the idle server.}
\end{subfigure}
\begin{subfigure}[b]{\textwidth}
\centering
\includegraphics[width=0.6\textwidth]{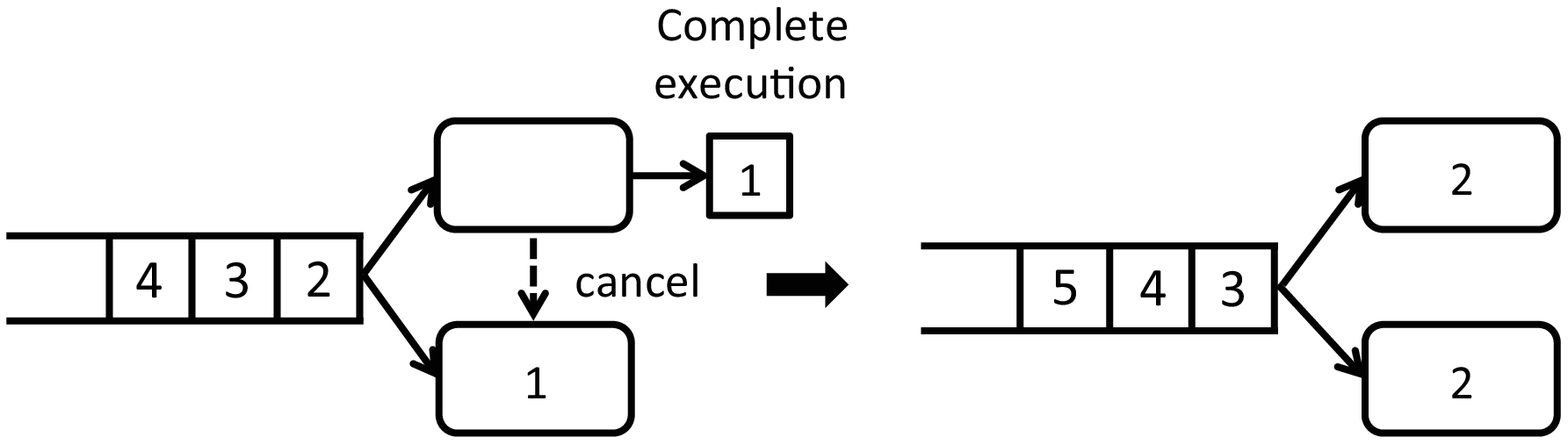}
\caption{\textbf{Replication policy:} Each task is replicated on both servers. If one copy of task 1 completes execution on one server, a message is sent to cancel the other copy of task 1, which is being executed on the other server.}
\end{subfigure}
\begin{subfigure}[b]{\textwidth}
\centering
\includegraphics[width=0.6\textwidth]{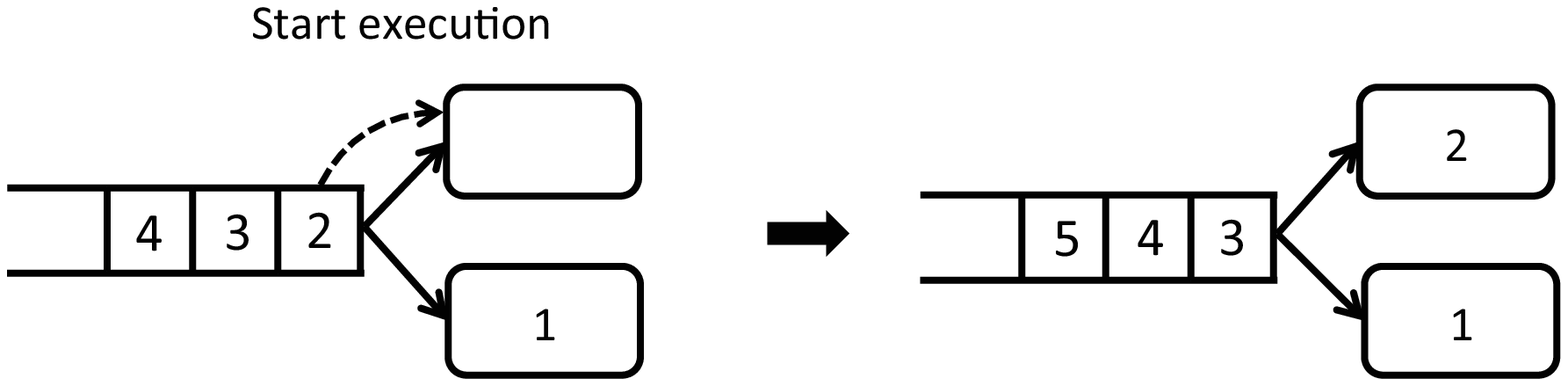}
\caption{\textbf{No-replication policy:} Each task can be only assigned to one server. If one copy of task 2 starts execution on one server, a message is sent to cancel the other copy of task 2, which is waiting in the queue on the other server.}
\end{subfigure}
\caption{Scheduling policies for a centralized queueing system. Each policy is equivalent to a scheduling policy in Fig. \ref{fig3_distri_replication} for a distributed queueing system.} \label{fig2central_replication}
\end{figure*}
\else
\fi

We note that each scheduling policy in Fig. \ref{fig3_distri_replication} for distribution queueing systems has one equivalent scheduling policy for centralized queueing systems.
In particular, The Power of $d$ Choices load balancing policy in Fig. \ref{fig3_distri_replication}(a) is equivalent to one instant of the \textbf{Assign-When-Enqueueing policy} in Fig. \ref{fig2central_replication}(a),
the Cancel-After-Execution policy in Fig. \ref{fig3_distri_replication}(b) is equivalent to the \textbf{Replication policy} in Fig. \ref{fig2central_replication}(b), and the Cancel-Before-Execution policy in Fig. \ref{fig3_distri_replication}(c) is equivalent to the \textbf{No-Replication policy} in Fig. \ref{fig2central_replication}(c). In this paper, we will establish a unified framework to study delay-optimal scheduling of replications in centralized and distributed queueing systems.

\subsection{Scheduling Policies}
A scheduling policy, denoted by $\pi$, determines the task assignments, replications, and cancellations in the system.
We consider the class of \textbf{causal} policies, in which scheduling decisions are made based on the history and current state of the system; the realization of task service time is unknown until the task is completed (unless the service time is deterministic). 
In practice, service preemption is costly and may lead to complexity and reliability issues \cite{Sparrow:2013,Borg2015}. Motivated by this, we assume that \textbf{task-level preemption is not allowed}. Hence, if a server starts to process a task, it must complete or cancel this task before switching to process another task. 
We use $\Pi$ to denote the set of \textbf{causal and non-preemptive} policies. Let us define several types of policies within $\Pi$:

A policy is said to be \textbf{anticipative}, if it has access to the parameters $(a_i,{k}_i,d_i)$ of future arriving jobs (but not other future information). For periodic and pre-planned services, future job arrivals can be predicted in advance. To cover these scenarios, we abuse the definition of causal policies a bit and include anticipative policies into the policy space $\Pi$. However, it should be emphasized that the policies that we propose in this paper are not anticipative. 

\begin{definition}
A task is termed \textbf{remaining} if it is either stored in the queue or being executed by the servers, and is termed \textbf{unassigned} if it is stored in the queue and not being executed by any server.
\end{definition}

A policy is said to be \textbf{work-conserving}, if no server is idle when there are unassigned tasks waiting in the queue.

The goal of this paper is \emph{to design low-complexity non-anticipative scheduling policies that are (near) delay-optimal among all policies in $\Pi$, even compared to the anticipative policies with knowledge about future arriving jobs}.

\subsection{Delay Metrics}\label{sec:metrics}
Each job $i$ has a \emph{due time} $d_i\in[0,\infty)$, also called \emph{due date}, which is the time that job $i$ is promised to be completed \cite{michael2012book}. 
Completion of a job after its due time is allowed, but then a penalty is incurred. Hence, the due time can be considered as a soft deadline. 

For each job $i$, $C_i$ is the job completion time, $D_i= C_i-a_i$ is the delay, $L_i = C_i-d_i$ is the {lateness} after the due time $d_i$, and $T_i = \max[C_i-d_i,0]$ is the  {tardiness} (or positive lateness). 
Define  vectors $\bm{a} =(a_1,\ldots,a_n)$,
$\bm{d} =(d_1,\ldots,d_n)$, $\bm{C}=(C_1,$ $\ldots,C_n)$, $\bm{D}=(D_1,\ldots, D_n)$,  $\bm{L}=(L_1,\ldots,$ $ L_n)$, and $\bm{C}_{\uparrow}\! =\! (C_{(1)},\ldots,C_{(n)})$. 
Let $\bm{c}=(c_1,$ $\ldots,c_n)$ and $\bm{c}_{\uparrow}\! =\! (c_{(1)},\ldots,c_{(n)})$, respectively, denote the realizations of $\bm{C}$ and $\bm{C}_{\uparrow}$. All these quantities are functions of the  scheduling policy $\pi$.


%

Several important delay metrics are introduced in the following: For any policy $\pi$, the \textbf{average delay} ${D}_{\text{avg}}: \mathbb{R}^n\rightarrow \mathbb{R}$ is defined by\footnote{If $n\rightarrow\infty$, then a $\limsup$ operator is enforced on the RHS of \eqref{eq_10}, \eqref{eq_13}, \eqref{eq_14}, and the $\max$ operator in \eqref{eq_11} and \eqref{eq_12} is replaced by $\sup$.
}
\begin{align}
{D}_{\text{avg}}(\bm{C}(\pi))=\frac{1}{n}\sum_{i=1}^{n}\left[C_i(\pi)-a_i\right]. \label{eq_10}
\end{align}
In addition, the \textbf{mean square of tardiness} ${T}_{\text{ms}}: \mathbb{R}^n\rightarrow \mathbb{R}$ is 
\begin{align}
T_{\text{ms}}(\bm{C}(\pi)) \!= \!\frac{1}{n}\sum_{i=1}^{n}\! \max[L_i(\pi),0]^2\!=\!\frac{1}{n}\sum_{i=1}^{n}\! \max\left[C_i(\pi)\!-\!d_i,0\right]^2\!\!.\label{eq_13}
\end{align}
If $d_i=a_i$, $T_{\text{ms}}$ becomes the \textbf{mean square of delay}  ${D}_{\text{ms}}$, i.e.,
\begin{align}
D_{\text{ms}}(\bm{C}(\pi)) \!= \frac{1}{n}\sum_{i=1}^{n} D_i^2(\pi)=\frac{1}{n}\sum_{i=1}^{n} \left[C_i(\pi)-a_i\right]^2,\label{eq_14}
\end{align}
since $C_i(\pi)\geq a_i$ for all $i$ and $\pi$. 

In many systems, fairness is an important aspect of the quality-of-service. We define two delay metrics related to   
\emph{min-max fairness}. The \textbf{maximum lateness}  ${L}_{\max}: \mathbb{R}^n\rightarrow \mathbb{R}$ is defined by
\begin{align}
L_{\max}(\bm{C}(\pi)) \!= \!\max_{i=1,2,\ldots,n} L_i(\pi)=\max_{i=1,2,\ldots,n} \left[C_i(\pi)-d_i\right], \label{eq_11}
\end{align}
if $d_i=a_i$, $L_{\max}$ reduces to the \textbf{maximum delay}  ${D}_{\max}$, i.e.,
\begin{align}
D_{\max}(\bm{C}(\pi))\!=\! \max_{i=1,2,\ldots,n} D_i(\pi)= \max_{i=1,2,\ldots,n} \left[C_i(\pi)-a_i\right].\!\!\label{eq_12}
\end{align}
In addition, two delay metrics related to \emph{proportional fairness} \cite{Kushner2004} are defined by
\begin{align}
&T_{\text{PF}}(\bm{C}(\pi)) \!= \! -\sum_{i=1}^n \log (\max[L_i(\pi),0]+\epsilon)=-\sum_{i=1}^n \log(\max[C_i(\pi)-d_i,0]+\epsilon),\\
&D_{\text{PF}}(\bm{C}(\pi)) \!= \! - \sum_{i=1}^n \log (D_i(\pi)+\epsilon)=-\sum_{i=1}^n \log(C_i(\pi)-a_i+\epsilon),
\end{align}
where $\epsilon$ is a positive number, which can be as small as we wish.

In general, a \textbf{delay metric} can be expressed as a function $f(\bm{C}(\pi))$ of the job completion times $\bm{C}(\pi)$, where $f: \mathbb{R}^n\rightarrow \mathbb{R}$ is increasing. In this paper, we consider three classes of delay metric functions:
\begin{align}
\mathcal{D}_{\text{sym}} &= \{f : f \text{ is symmetric and increasing}\},\nonumber\\
\mathcal{D}_{\text{Sch-1}} &= \{f:  f (\bm{x}\!+\!\bm{d}) \text{ is Schur convex and increasing in $\bm{x}$}\},\nonumber\\
\mathcal{D}_{\text{Sch-2}} &= \{f:  f (\bm{x}\!+\!\bm{a}) \text{ is Schur convex and increasing in $\bm{x}$}\}.\nonumber\end{align}
For each $f\in \mathcal{D}_{\text{Sch-1}}$, the delay metric $f(\bm{C}(\pi)) = f[(\bm{C}(\pi)-\bm{d})+\bm{d}]= f[\bm{L}(\pi)+\bm{d}]$ is Schur convex in the lateness vector $\bm{L}(\pi)$. Similarly, for each $f\in \mathcal{D}_{\text{Sch-2}}$, the delay metric $f(\bm{C}(\pi))= f[(\bm{C}(\pi)-\bm{a})+\bm{a}]= f[\bm{D}(\pi)+\bm{a}]$ is Schur convex in the delay vector $\bm{D}(\pi)$. Furthermore, every convex and symmetric function is Schur convex. Using these properties, we can get
\begin{align}
&{D}_{\text{avg}}\in\mathcal{D}_{\text{sym}}\cap \mathcal{D}_{\text{Sch-1}} \cap \mathcal{D}_{\text{Sch-2}},\nonumber \\
&L_{\max}\in\mathcal{D}_{\text{Sch-1}}, D_{\max}\in\mathcal{D}_{\text{Sch-2}},\nonumber\\
&T_{\text{ms}}\in\mathcal{D}_{\text{Sch-1}}, ~D_{\text{ms}}\in\mathcal{D}_{\text{Sch-2}}.\nonumber
\\
&T_{\text{PF}}\in\mathcal{D}_{\text{Sch-1}}, ~D_{\text{PF}}\in\mathcal{D}_{\text{Sch-2}}.\nonumber
\end{align}



\subsection{Delay Optimality and Its Approximation}\label{sec_optimality}

Define $\mathcal{I}=\{n,(a_i,{k}_i,d_i)_{i=1}^n\}$ as the parameters of the jobs, which include the number, batch sizes, arrival times, and due times of the jobs.
The job parameters $\mathcal{I}$ and random task service times are determined by two external processes, which are \emph{mutually independent} and do not change according to the scheduling policy adopted in the system.
For delay metric function $f$ and policy space $\Pi$, a policy $P\in \Pi$ is said to be \textbf{delay-optimal in stochastic ordering}, if one of the following conditions is satisfied: 
\begin{itemize}
\item[1.] For all $\pi\in \Pi$ and $\mathcal{I}$ 
\begin{align}\label{eq_optimal}
[f (\bm{C}(P))|P,\mathcal{I}]\leq_{\text{st}} [f (\bm{C}(\pi))|\pi,\mathcal{I}];
\end{align}
\item[2.] for all $\mathcal{I}$ and $t\in[0,\infty)$ 
\begin{align}
\!\!\!\!\!\!\Pr[f (\bm{C}(P))>t |P,\mathcal{I}]=\min_{\pi\in\Pi}\Pr[f (\bm{C}(\pi))>t |\pi,\mathcal{I}];
\end{align}
\item[3.] for all $\mathcal{I}$
\begin{align}\label{eq_optimal1}
\mathbb{E}[\phi \circ f (\bm{C}(P))|P,\mathcal{I}]=\min_{\pi\in\Pi}\mathbb{E}[\phi \circ f (\bm{C}(\pi))|\pi,\mathcal{I}]
\end{align}
holds for all increasing function $\phi: \mathbb{R}\rightarrow \mathbb{R}$ provided the conditional expectations in \eqref{eq_optimal1} exist.
\end{itemize}
By the definition of stochastic ordering \cite{StochasticOrderBook}, these three conditions are equivalent.
For notational simplicity, we will omit to mention that policy $\pi$ (or policy $P$) is adopted in the system as a condition of the delay performance in the rest of the paper.

In many system settings, delay optimality is extremely difficult to achieve,
even with respect to some definitions of delay optimality weaker than \eqref{eq_optimal}. This motivated us to study whether there exist policies that can come close to delay optimality. We will show that in many scenarios, near delay optimality can be achieved in the following sense:

\ifreport
\begin{figure}
\centering
\includegraphics[width=0.5\textwidth]{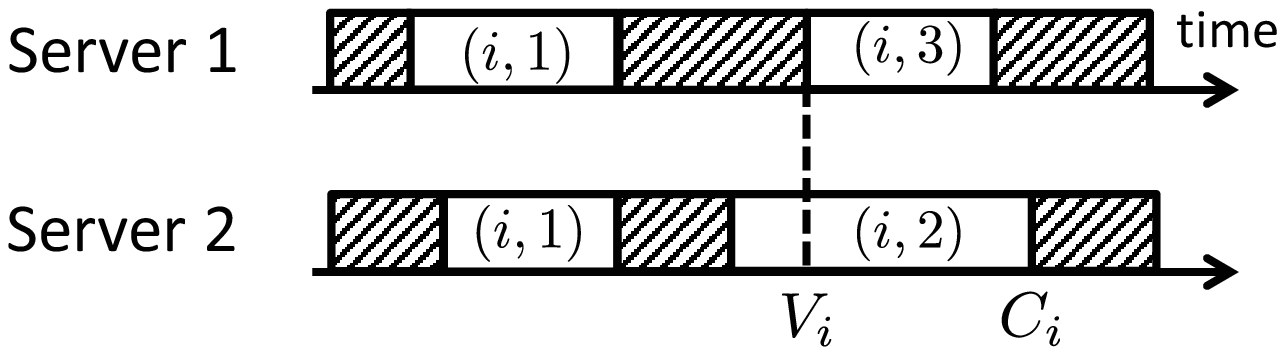} \caption{An illustration of $V_i$ and $C_i$. There are 2 servers, and job $i$ has 3 tasks denoted by $(i,1)$, $(i,2)$, $(i,3)$. Task $(i,1)$ is replicated on both servers, tasks $(i,2)$ and $(i,3)$ are assigned to the two servers separately. By time $V_i$, all tasks of job $i$ have entered the servers. By time $C_i$, all tasks of job $i$ have completed service. Therefore, $V_i\leq C_i$.}
\label{V_i}
\end{figure}
\else
\begin{figure}
\centering
\includegraphics[width=0.5\textwidth]{Figures/definition_of_U_i} 
\vspace{-0.cm}\caption{An illustration of $V_i$ and $C_i$ for job $i$. There are 2 servers, and job $i$ has 3 tasks denoted by $(i,1)$, $(i,2)$, $(i,3)$. Task $(i,1)$ is replicated on both servers, tasks $(i,2)$ and $(i,3)$ are assigned to the two servers separately. By time $V_i$, all tasks of job $i$ have started service. By time $C_i$, all tasks of job $i$ have completed service. It is obvious that $V_i\leq C_i$.}
\label{V_i}
\vspace{-0.cm}
\end{figure}
\fi

{Define $V_i$ as the earliest time that all tasks of job $i$ have started service. In other words, all tasks of job $i$ are either completed or under service at time $V_i$. One illustration of $V_i$ is provided in Fig. \ref{V_i}, from which it is easy to see 
\begin{align}\label{eq_V_i_small}
V_i\leq C_i.
\end{align}
Denote $\bm{V}=(V_1,\ldots, V_n)$, $\bm{V}_{\uparrow} \!=\! (V_{(1)},\ldots,V_{(n)})$. Let $\bm{v}=(v_1,\ldots, v_n)$ and $\bm{v}_{\uparrow} \!=\! (v_{(1)},\ldots,v_{(n)})$ be the realizations of $\bm{V}$ and $\bm{V}_{\uparrow}$, respectively. All these quantities are functions of the  scheduling policy $\pi$.
A policy $P\in \Pi$ is said to be \textbf{near delay-optimal in stochastic ordering}, if one the following three conditions is satisfied:
\begin{itemize}
\item[1.] For all $\pi\in \Pi$ and $\mathcal{I}$ 
\begin{align}\label{eq_nearoptimal}
[f (\bm{V}(P))|\mathcal{I}]\leq_{\text{st}} [f (\bm{C}(\pi))|\mathcal{I}];
\end{align}
\item[2.] for all $\mathcal{I}$ and $t\in[0,\infty)$ 
\begin{align}
\!\!\!\!\!\!\Pr[f (\bm{V}(P))>t |\mathcal{I}]&\leq \min_{\pi\in\Pi}\Pr[f (\bm{C}(\pi))>t |\mathcal{I}]\leq \Pr[f (\bm{C}(P))>t |\mathcal{I}];
\end{align}
\item[3.] for all $\mathcal{I}$
\begin{align}\label{eq_nearoptimal1}
\mathbb{E}[\phi \!\circ\! f (\bm{V}(P))|\mathcal{I}] &\leq\min_{\pi\in\Pi}\mathbb{E}[\phi \!\circ\! f (\bm{C}(\pi))|\mathcal{I}] \leq\mathbb{E}[\phi \!\circ\! f (\bm{C}(P))|\mathcal{I}]
\end{align}
holds for all increasing function $\phi: \mathbb{R}\rightarrow \mathbb{R}$ provided the conditional expectations in \eqref{eq_nearoptimal1} exist.
\end{itemize}
There exist many ways to approximate \eqref{eq_optimal}-\eqref{eq_optimal1}, and obtain various forms of near delay optimality.
We find that the form of near delay optimality in \eqref{eq_nearoptimal}-\eqref{eq_nearoptimal1} is convenient, because it is analytically 
provable and leads to tight sub-optimal delay gap, as we will see in the subsequent sections.

\section{Replications in Centralized Queueing Systems}\label{sec_analysis}


In this section, we provide some delay optimality and near delay optimality results for replications in centralized queueing systems. The proofs of these results will be provided in the Appendix \ref{sec_proofmain} by using a  unified sample-path method.



\subsection{Average Delay and Related Delay Metrics}\label{sec_average_delay}

\begin{algorithm}[h]
\SetKwData{NULL}{NULL}
\SetCommentSty{small}
$Q:=\emptyset$\tcp*[r]{$Q$ is the set of  jobs in the queue}
\While{the system is ON} {
\If{job $i$ arrives}{
$\xi_i:=k_i$\tcp*[r]{job $i$ has $\xi_i$ remaining tasks~~~~~~~}
$\gamma_i:=k_i$\tcp*[r]{job $i$ has $\gamma_i$ unassigned tasks~~~~~~}
$Q:=Q \cup \{i\}$\;
}

\While{$Q\neq\emptyset$ and there are idle servers}{
Pick any idle server $l$\;
\uIf{$\sum_{i\in Q} \gamma_i> 0$}{~~~~~~~~\tcp*[h]{There exist unassigned tasks}

$j :=\arg\min\{\gamma_i : i\in Q, \gamma_i>0\}$\;
Allocate an unassigned task of job $j$ on server $l$\;
$\gamma_j:=\gamma_j - 1$\;
}
\Else(~~~~~~\tcp*[h]{All tasks are under service})
{
Pick any task and replicate it on server $l$\;\label{step}
}
}

\If{a task of job $i$ is completed}{
\For{each server $l$ processing a redundant copy of this task}{
\uIf{the time to cancel the task is shorter than the time to complete the task in the hazard rate ordering}{
Cancel this redundant task\;
}\Else{
complete this redundant task\;
}
}
$\xi_{i} := \xi_{i} -1$\;
\textbf{if} $\xi_{i}=0$ \textbf{then} {$Q:=Q/ \{i\}$\;   }
}

}
\caption{ Fewest Unassigned Tasks first with No Idleness Replication (FUT-NIR).}\label{alg1}
\end{algorithm}

If the task service times are NBU and the delay metric is within $\mathcal{D}_{\text{sym}}$ (including the average delay $D_{\text{avg}}$), we propose a class of scheduling policies called \textbf{Fewest Unassigned Tasks first with Low-Priority Replication (FUT-LPR)}. To understand this class of policies, let us introduce some definitions:

\begin{definition}
A scheduling policy is said to follow the \textbf{Fewest Unassigned Task (FUT)} first discipline, if each task assigned to the servers is from the job with the fewest unassigned tasks whenever the queue is not empty (there exist unassigned tasks in the queue). 
\end{definition}

\begin{definition} 
A scheduling policy is said to follow the \textbf{Low-Priority Replication (LPR)} discipline, if it is work-cons- erving and satisfies the following two principles:
\end{definition}

\begin{itemize}
\item[1.] {Task Replication:} If the queue is not empty (there exist unassigned tasks in the queue), then no replication is allowed; otherwise, if the queue is empty (all the tasks are under service), one can replicate the tasks arbitrarily. 

\item[2.] {Task Cancellation:} If the time to cancel a task $O_l$ is long than the remaining service time to complete the task $R_l$ in  hazard rate ordering, i.e., $R_l\leq_{\text{hr}} O_l$,
then choose to complete the task without cancellation; otherwise, one can choose either to cancel or to complete the task.


\end{itemize}

The LPR policies offer flexible choices for task replication and cancellation operations, which are quite convenient in practice.
Examples of LPR policies include the \textbf{No Replication (NR)} policy in which no replication is allowed at all, and the \textbf{No Idleness Replication (NIR)} policy which satisfies: If the queue is not empty (there exist unassigned tasks in the queue), then no replication is allowed; otherwise, if the queue is empty (all the tasks are under service), 
each idle server is allocated to process a replicated copy of \emph{any} remaining task. Therefore, {no server is idle in the NIR policy until all jobs are completed}, and hence the name. 

A scheduling policy belongs to the class of \textbf{Fewest Unassigned Tasks first with Low-Priority Replication (FUT-LPR)} policies if it simultaneously satisfies the FUT and LPR disciplines. Examples of FUT-LPR policies include the \textbf{Fewest Unassigned Tasks first policy with No Idleness Replication (FUT-NIR)} policy which is illustrated in Algorithm \ref{alg1}, and the \textbf{Fewest Unassigned Tasks first policy with No Replication (FUT-NR)} policy which can be obtained from Algorithm \ref{alg1} by removing Steps 15-16, 20-26. The delay performance of any instance of the FUT-LPR policies is characterized in the following theorem:

\begin{theorem}\label{thm1}
If the task service times are NBU, independent across the servers, i.i.d.~across the tasks assigned to the same server, then for all $\bm{O}\geq\bm{0}$, \emph{$f\in\mathcal{D}_{\text{sym}}$},  ${\pi\in\Pi}$, and $\mathcal{I}$ 
\emph{
\begin{align}\label{eq_delaygap1}
&[f(\bm{V}(\text{FUT-LPR}))|\mathcal{I}] \leq_{\text{st}} \left[f(\bm{C}(\pi))|\mathcal{I}\right].
\end{align}}
\end{theorem}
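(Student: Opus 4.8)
The plan is to derive \eqref{eq_delaygap1} from a sample-path coupling inequality between sorted job-progress vectors. Because $f\in\mathcal{D}_{\text{sym}}$ is symmetric and increasing, $f(\bm{V})=f(\bm{V}_{\uparrow})$ and $f(\bm{C})=f(\bm{C}_{\uparrow})$, and $f$ is increasing on the cone of sorted vectors; hence it suffices to prove $\bm{V}_{\uparrow}(\text{FUT-LPR})\le_{\text{st}}\bm{C}_{\uparrow}(\pi)$. By the coupling characterization of the multivariate stochastic order, this follows once we exhibit a single probability space — carrying, for each server $l$, an i.i.d.\ sequence of potential service times (consumed in order each time server $l$ begins a task) and an i.i.d.\ sequence of cancellation overheads, together with any internal randomization of $\pi$, all with $\mathcal{I}$ fixed — on which $V_{(j)}(\text{FUT-LPR})\le C_{(j)}(\pi)$ for every $j$ almost surely. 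Passing to the ``dual'' description of a sorted-vector inequality, the goal becomes the time-domain statement: for every coupled sample path and every $t\ge 0$, $|\{i:V_i(\text{FUT-LPR})\le t\}|\ge |\{i:C_i(\pi)\le t\}|$; that is, at all times FUT-LPR has \emph{fully started} at least as many jobs as $\pi$ has \emph{completed}.

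To prove this time-$t$ domination I would run an induction over the countable set of event epochs of the two coupled systems (job arrivals, task completions, task cancellations), maintaining a joint invariant relating the state of FUT-LPR — which jobs have all their tasks in service, and the ages/residuals of the tasks occupying the servers — to the set of jobs already completed by $\pi$ together with $\pi$'s server contents. Two distinct phenomena must be tracked. The \emph{combinatorial} part is handled by the FUT priority and work-conservation: among all policies that inject the same cumulative server effort into the queue, FUT assigns tasks to idle servers at the maximal rate and directs that effort to the jobs closest to being fully started; this is exactly what is needed for the inequality to hold for the entire sorted vector $\bm{V}_{\uparrow}$ rather than merely for $\max_i V_i$, and it is a deterministic exchange-of-assignments bookkeeping argument.

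The \emph{service-speed} part is where the NBU hypothesis and the LPR rules enter, and it is what tames the feedback loop — server availability under each policy is driven by that policy's own, possibly very different, task completions. LPR forbids any replication while unassigned tasks remain, so FUT-LPR never spends server effort on redundant copies precisely when the queue still needs attention; any policy $\pi$ that instead idles a server or starts a redundant or fresh copy can only fall behind, because for an NBU service time the residual of an ongoing task is stochastically no larger than a fresh one, so the speed-up a redundant copy gives on its own task never compensates, under NBU, for the service it denies to a not-yet-started task. Formally this is a sequence of NBU ``exchange'' couplings, organized through the work-efficiency ordering of Appendix~\ref{sec_proofmain}: whenever the invariant would be threatened at an event epoch, one re-couples the offending server's future service times using the NBU inequality to restore it. The LPR cancellation rule, ``if $R_l\le_{\text{hr}}O_l$ then do not cancel,'' is what makes the conclusion robust to arbitrary $\bm{O}\ge\bm{0}$: a server is released after a permitted cancellation no later, in the hazard-rate hence the stochastic order, than if it had simply run the task to completion, so cancellation overheads never erode the FUT-LPR side of the comparison, while on the $\pi$ side they can only inflate $\bm{C}(\pi)$.

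Combining the invariant over all $t$ gives $V_{(j)}(\text{FUT-LPR})\le C_{(j)}(\pi)$ for all $j$ on the coupled path, hence $\bm{V}_{\uparrow}(\text{FUT-LPR})\le_{\text{st}}\bm{C}_{\uparrow}(\pi)$ and therefore \eqref{eq_delaygap1}. I expect the main obstacle to be the service-speed step: one must isolate a joint invariant strong enough to survive every arrival, completion, and cancellation epoch yet weak enough to actually hold, and then push the NBU fresh-versus-residual exchange through the coupling without disturbing the FUT-priority bookkeeping that underlies the combinatorial part. Pinning down the right invariant — essentially the work-efficiency ordering and the associated sample-path tools deferred to Appendix~\ref{sec_proofmain} — is the crux of the argument.
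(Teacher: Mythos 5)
Your proposal follows essentially the same route as the paper: reduce the $\mathcal{D}_{\text{sym}}$ claim to a coupled sample-path inequality $V_{(j)}(\text{FUT-LPR})\le C_{(j)}(\pi)$ for all $j$, establish it by inducting over arrival/completion/cancellation epochs while maintaining a state-level invariant, and drive the service-speed step with the NBU fresh-versus-residual coupling and the hazard-rate condition in the LPR cancellation rule, all organized through the weak work-efficiency ordering. This is exactly the paper's three-step scheme (Propositions~\ref{ordering_2} and~\ref{lem1}--\ref{lem1_1} for the deterministic bookkeeping, Lemma~\ref{lem_coupling} for the progressive coupling), and the one thing you flag as the crux — identifying an invariant strong enough to survive the induction — is precisely supplied by the paper's inequality $\sum_{i=j}^n\gamma_{[i],P}(t)\le\sum_{i=j}^n\xi_{[i],\pi}(t)$, which is stronger than the time-$t$ job-count domination you state as the goal and is what actually closes the induction.
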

\ifreport
\fi

Let us characterize the sub-optimality delay gap of the FUT-LPR policies. For mean average delay, i.e., $f(\cdot)=D_{\text{avg}}(\cdot)$, it follows from Theorem \ref{thm1} that 
\begin{align}
\!\!\!\!\mathbb{E}[D_{\text{avg}}(\bm{V}(\text{FUT-LPR}))|\mathcal{I}] & \leq 
\min_{\pi\in\Pi}\mathbb{E}\left[D_{\text{avg}}(\bm{C}(\pi))|\mathcal{I}\right] \nonumber\\
&\leq \mathbb{E}[D_{\text{avg}}(\bm{C}(\text{FUT-LPR}))|\mathcal{I}].\!\! \label{eq_delaygap1_2}
\end{align}
The difference between the LHS and RHS of \eqref{eq_delaygap1_2} is 
\begin{align}\label{eq_delaygap1_1_1}
& \mathbb{E}[D_{\text{avg}}(\bm{C}(\text{FUT-LPR}))-D_{\text{avg}}(\bm{V}(\text{FUT-LPR}))|\mathcal{I}]\nonumber\\
=& \mathbb{E}\bigg[\frac{1}{n}\sum_{i=1}^n\big({C}_i(\text{FUT-LPR})-V_i(\text{FUT-LPR})\big)\bigg|\mathcal{I}\bigg].\!
\end{align}
Recall that $m$ is the number of servers, and $k_i$ is the number of tasks in job $i$. At time $V_i(\text{FUT-LPR})$, all tasks of job $i$ have started service. Hence, if $k_i>m$,  then job $i$ has at most $m$ incomplete tasks that are under service at time $V_i(\text{FUT-LPR})$; if $k_i\leq m$, then job $i$ has at most $k_i$ incomplete tasks that are under service at time $V_i(\text{FUT-LPR})$. Therefore, in the FUT-LPR policies, at most $k_i\wedge m = \min\{k_i,m\}$ tasks of job $i$ are completed during the time interval $[V_i(\text{FUT-LPR}), {C}_i(\text{FUT-}\text{LPR})]$. Using this and the property of NBU distributions, we can obtain

\begin{theorem}\label{lem7_NBU}
Let $\mathbb{E}[X_l]= 1/\mu_l$, and without loss of generality $\mu_1\leq \mu_2\leq \ldots\leq \mu_m$. If the task service times are NBU, independent across the servers, i.i.d.~across the tasks assigned to the same server, then  for all $\bm{O}\geq\bm{0}$ and $\mathcal{I}$
\emph{\begin{align}\label{eq_gap}
&\mathbb{E}[D_{\text{avg}}(\bm{C}(\text{FUT-LPR}))|\mathcal{I}] - \min_{\pi\in\Pi}\mathbb{E}\left[D_{\text{avg}}(\bm{C}(\pi))|\mathcal{I}\right]  \nonumber\\
\leq&\frac{1}{n}\sum_{i=1}^n\sum_{l=1}^{k_i \wedge m } \frac{1}{\sum_{j=1}^l\mu_j}\leq \frac{\ln(k_{\max}\wedge m)+1}{\mu_1},
\end{align}}
where $k_{\max}$ is the maximum job size in \eqref{eq_0} and $x\wedge y=\min\{x,y\}$.
\end{theorem}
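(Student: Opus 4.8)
The plan is to bound the per-job gap $\mathbb{E}[C_i(\text{FUT-LPR})-V_i(\text{FUT-LPR})\mid\mathcal{I}]$ and then average over $i$, invoking Theorem \ref{thm1} to turn this pointwise bound into the claimed sub-optimality gap. First I would observe that, by Theorem \ref{thm1} applied with $f=D_{\text{avg}}$ (which lies in $\mathcal{D}_{\text{sym}}$) and by the fact that $\leq_{\text{st}}$ implies an inequality of expectations, we have the sandwich \eqref{eq_delaygap1_2}, so the left-hand side of \eqref{eq_gap} is at most the quantity in \eqref{eq_delaygap1_1_1}, namely $\tfrac1n\sum_{i=1}^n\mathbb{E}[C_i(\text{FUT-LPR})-V_i(\text{FUT-LPR})\mid\mathcal{I}]$. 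So it suffices to show, for each job $i$, that $\mathbb{E}[C_i-V_i\mid\mathcal{I}]\le\sum_{l=1}^{k_i\wedge m}\frac1{\sum_{j=1}^l\mu_j}$, where $C_i,V_i$ refer to the FUT-LPR policy.

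The core estimate is the per-job bound. At time $V_i$ every task of job $i$ has entered service, so job $i$ has at most $\min\{k_i,m\}=:r$ tasks still incomplete at time $V_i$, each currently occupying a distinct server. Condition on everything up to time $V_i$, including which servers (say a set of $r$ servers, which by relabeling I may take to include the $r$ slowest-rate ones or, more carefully, an arbitrary $r$-subset) hold these residual tasks. The key structural point is that FUT-LPR never cancels a task of job $i$ once it is in service without either completing it or replacing it with another in-service copy (replication of remaining tasks is permitted only when the queue is empty, and the hazard-rate cancellation rule ensures a cancelled copy is replaced by a stochastically no-slower one); consequently, the residual completion process of job $i$ after $V_i$ is stochastically dominated by the time for $r$ independent servers, running fresh i.i.d.\ tasks, to each finish once. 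Here the NBU property enters twice: it lets me replace the \emph{residual} service times of the tasks already in progress at $V_i$ by \emph{fresh} service times (NBU: $[X_l-\tau\mid X_l>\tau]\le_{\text{st}} X_l$), and it makes replication and hazard-rate-based cancellation only help. Thus $C_i-V_i$ is stochastically at most $\max_{1\le l\le r} X_{j_l}$ for the relevant $r$ servers $j_1,\dots,j_r$, and in particular at most $\max_{1\le l\le r} X_l'$ over \emph{some} $r$ servers where $X_l'$ are independent with means $1/\mu_{j_l}$.

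To finish, I bound $\mathbb{E}[\max_{1\le l\le r} X_l']$. Using the standard identity $\mathbb{E}[\max_l X_l']=\int_0^\infty \Pr(\max_l X_l'>t)\,dt=\int_0^\infty\big(1-\prod_l(1-\bar F_l(t))\big)dt$ together with the NBU bound $\bar F_l(t)\le e^{-\mu_l t}$ (which follows by iterating \eqref{eq_NBU}, giving $\bar F_l(nt)\le\bar F_l(t)^n$ and hence $\bar F_l(t)\le e^{-\mu_l t}$ since $\mathbb{E}[X_l]=1/\mu_l$), one shows $\mathbb{E}[\max_{1\le l\le r}X_l']\le\sum_{l=1}^r \frac1{\mu_{(1)}+\cdots+\mu_{(l)}}$, where $\mu_{(1)}\le\cdots$ are the sorted rates among the chosen servers; since sorting only increases the denominators when we use the globally $r$ smallest rates $\mu_1\le\cdots\le\mu_r$, this is at most $\sum_{l=1}^{r}\frac1{\sum_{j=1}^l\mu_j}$ with $r=k_i\wedge m$. (The bound $\mathbb{E}[\max_l X_l']\le\sum_l 1/(\mu_{(1)}+\cdots+\mu_{(l)})$ itself follows from a coupling/peeling argument: the max of $r$ exponentials with rates $\nu_1,\dots,\nu_r$ has mean $\sum_l 1/(\nu_{(1)}+\cdots+\nu_{(l)})$ by the memoryless race, and the NBU tails $e^{-\mu_l t}$ make the general case no larger.) Averaging over $i$ gives the first inequality in \eqref{eq_gap}; the second, $\sum_{l=1}^{k_i\wedge m}\frac1{\sum_{j=1}^l\mu_j}\le\sum_{l=1}^{k_{\max}\wedge m}\frac1{l\mu_1}\le\frac{\ln(k_{\max}\wedge m)+1}{\mu_1}$, is the elementary harmonic-sum estimate using $\sum_{j=1}^l\mu_j\ge l\mu_1$.

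I expect the main obstacle to be the rigorous justification that the residual completion process of job $i$ after $V_i$ — under a general FUT-LPR policy, which may keep replicating arbitrary remaining tasks onto idle servers and may cancel redundant copies subject only to the hazard-rate rule — is stochastically dominated by $\max$ of $k_i\wedge m$ fresh independent task service times. This requires carefully arguing that (a) extra replicas and work-conservation never delay job $i$'s completion (they can only add more "racing" copies), (b) the hazard-rate cancellation condition guarantees a replaced copy is stochastically faster than the cancelled remaining time, and (c) the NBU property legitimately swaps in-progress residuals for fresh service times. These are exactly the sample-path coupling ingredients that the paper defers to Appendix \ref{sec_proofmain}, so I would lean on the work-efficiency / sample-path ordering machinery developed there rather than reprove it inline.
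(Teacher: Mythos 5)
Your overall strategy is the paper's: reduce via Theorem~\ref{thm1} to bounding $\tfrac1n\sum_i\mathbb{E}[C_i-V_i\mid\mathcal I]$ under FUT-LPR, observe that at time $V_i$ at most $k_i\wedge m$ tasks of job $i$ remain in service on distinct servers, dominate $C_i-V_i$ by the maximum of the residual service times on those servers, and bound that maximum by the expected maximum of exponentials with the same means; the closing harmonic-sum estimate is also the same.

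The pivotal step, however, is wrong as you state it. NBU with $\mathbb{E}[X_l]=1/\mu_l$ does \emph{not} imply $\bar F_l(t)\le e^{-\mu_l t}$, i.e.\ it does not give $X_l\le_{\text{st}}\mathrm{Exp}(\mu_l)$. The deterministic time $X_l\equiv 1/\mu_l$ is NBU, yet $\bar F_l(t)=1>e^{-\mu_l t}$ for every $t\in(0,1/\mu_l)$. Your ``iterate \eqref{eq_NBU}'' sketch actually goes the other way: NBU makes $g(t)=-\log\bar F_l(t)$ superadditive, so Fekete gives $g(t)\le\lambda_l t$, i.e.\ $\bar F_l(t)\ge e^{-\lambda_l t}$, a \emph{lower} bound on the tail (and integrating shows $\lambda_l\ge\mu_l$, not $\le$). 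Consequently the comparison $\int_0^\infty\bigl(1-\prod_l(1-\bar F_l(t))\bigr)\,dt\le\int_0^\infty\bigl(1-\prod_l(1-e^{-\mu_l t})\bigr)\,dt$ that your proof needs does not follow, and the parenthetical ``the NBU tails $e^{-\mu_l t}$ make the general case no larger'' is relying on the same false claim.

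The correct tool — and what the paper uses in Appendix~\ref{app_lem7} — is the increasing convex order. By \cite[Theorem 3.A.55]{StochasticOrderBook}, NBU with mean $1/\mu_l$ yields $X_l\le_{\mathrm{icx}} Z_l$ with $Z_l\sim\mathrm{Exp}(\mu_l)$. Combined with the NBU residual bound $[X_l-\chi_l\mid X_l>\chi_l]\le_{\text{st}}X_l$, conditional independence of the residuals across servers (Lemma~\ref{lem_independent}), and preservation of $\le_{\mathrm{icx}}$ under maxima of independent variables (\cite[Corollary 4.A.16]{StochasticOrderBook}), one gets $\max_{l\in\mathcal S_i}R_l\le_{\mathrm{icx}}\max_{l\in\mathcal S_i}Z_l$; since the identity is increasing convex, this gives $\mathbb{E}[\max_l R_l]\le\mathbb{E}[\max_l Z_l]$. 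That icx comparison is strictly weaker than the stochastic-tail claim you asserted, and unlike it, it is true. If you replace the tail-integration step by this argument, the rest of your plan (the reduction via Theorem~\ref{thm1}, the $k_i\wedge m$ residual-task count, passing to the $k_i\wedge m$ smallest rates, and the bound $\sum_{l=1}^{r} 1/\sum_{j\le l}\mu_j\le(\ln r+1)/\mu_1$) is sound and coincides with the paper's proof.
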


Theorem \ref{thm1} and Theorem \ref{lem7_NBU} tell us that {for arbitrary number, batch sizes, arrival times, and due times of the jobs, as well as arbitrary cancellation overheads of the tasks, the class of FUT-LPR policies is near delay-optimal for minimizing the mean average delay within the policy space $\Pi$, even compared to the anticipative policies in $\Pi$ that can predict the parameters of future arriving jobs. 

If $\mathbb{E}[X_l]\leq 1/\mu$ for $l=1,\ldots,m$, then  the sub-optimality delay gap of the FUT-LPR policies is of the order $O(\ln(k_{\max}\wedge m))/{\mu}$. As the number of servers $m$ increases, this  sub-optimality delay gap is upper bounded by $[\ln(k_{\max})+1]/{\mu}$ which is independent of $m$. 


\begin{algorithm}[h]
\SetKwData{NULL}{NULL}
\SetCommentSty{small}
$Q:=\emptyset$\tcp*[r]{the set of  jobs in the queue}
\While{the system is ON} {
\If{job $i$ arrives}{
$\gamma_i:=k_i$\tcp*[r]{job $i$ has $\gamma_i$ unassigned tasks~~~~~~~}
$Q:=Q \cup \{i\}$\;
}

\While{$Q\neq\emptyset$ and all servers are idle}{
$j :=\arg\min\{\gamma_i: i\in Q\}$\;
Replicate a task of job $j$ on all servers\;
}

\If{a task of job $i$ is completed}{
Cancel the remaining $m-1$ replicas of this task\;
$\gamma_{i} := \gamma_{i} -1$\;
\textbf{if} $\gamma_{i}=0$ \textbf{then} {$Q:=Q/ \{i\}$\;   }
}
}
\caption{ Fewest Unassigned Tasks first with Replication (FUT-R).}\label{alg2}
\end{algorithm}

When there are multiple servers and multiple job classes, establishing tight additive bounds on the gap from the optimal delay performance is extremely difficult and has met with little success. In \cite{Weiss:1992,Weiss:1995}, closed-form upper bounds on the sub-optimality gaps of the Smith's rule and the Gittin's index rule were established for the cases that all jobs arrive at time zero.
In \cite[Corollary 2]{Dacre1999}, the authors studied the optimal control in multi-server systems with stationary arrivals of multiple classes of jobs, and used the achievable region method to obtain an additive sub-optimality delay gap, which is of the order $O(m k_{\max})$. The model and methodology in \cite{Weiss:1992,Weiss:1995,Dacre1999} are significantly different from those in this paper.

If the task service times are NWU, we propose a policy called  \textbf{Fewest Unassigned Tasks first with Replication (FUT-R)}, 
which simultaneously satisfies the FUT discipline and the following replication (R) discipline.
\begin{definition} 
A scheduling policy is said to follow the \textbf{Replication (R)} discipline, if it is work-conserving and satisfies the following two principles:
\begin{itemize}
\item[1.] {Task Replication:} When a task is assigned, it is replicated on all $m$ servers.

\item[2.] {Task Cancellation:} The cancellation overhead $\bm{O}$ is assumed to be zero, such that if one  task copy is completed on one server, the remaining $m-1$ replicated copies of this task are cancelled immediately.

\end{itemize}
\end{definition}

The FUT-R policy is described in Algorithm \ref{alg2}. Its delay performance is characterized as follows:


\begin{theorem}\label{thm2}
If (i) $k_1\leq k_2\leq \ldots\leq k_n$, (ii) the task service times are NWU, independent across the servers, i.i.d. across the tasks assigned to the same server, (iii) $\bm{O}=\bm{0}$, then for all \emph{$f\in\mathcal{D}_{\text{sym}}$},  \emph{$\pi\in\Pi$}, and $\mathcal{I}$
\emph{
\begin{align}\label{eq_delaygap3}
[f(\bm{C}(\text{FUT-R}))|\mathcal{I}] \leq_{\text{st}} [f(\bm{C}(\pi))|\mathcal{I}].
\end{align}}
\end{theorem}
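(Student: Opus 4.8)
The plan is to establish a sample-path coupling between the FUT-R policy and an arbitrary policy $\pi \in \Pi$, and then invoke a general sample-path-to-stochastic-ordering argument (the unified method promised in Appendix \ref{sec_proofmain}). Since $f \in \mathcal{D}_{\text{sym}}$ is symmetric and increasing, by the standard characterization of $\leq_{\text{st}}$ it suffices to show that the sorted completion-time vector satisfies $\bm{C}_{\uparrow}(\text{FUT-R}) \leq_{\text{st}} \bm{C}_{\uparrow}(\pi)$; in fact I would try to prove the stronger pathwise statement that, under a suitable coupling of the service-time primitives, $\bm{C}_{\uparrow}(\text{FUT-R}) \leq \bm{C}_{\uparrow}(\pi)$ componentwise on every sample path. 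The key structural observation is that because $k_1 \le k_2 \le \cdots \le k_n$ and FUT always picks the job with the fewest \emph{unassigned} tasks, and because under the R discipline every assigned task occupies all $m$ servers simultaneously with zero cancellation overhead, the FUT-R system behaves like a \emph{single super-server} that processes one task at a time: the task completion epochs are the order statistics of a renewal-type sequence where the $j$-th task takes time $\min_{l} X_l^{(j)}$. So FUT-R reduces the multi-server replication problem to a single-machine scheduling problem, where FUT is exactly the "fewest remaining tasks first" rule, which is the analogue of Shortest-Remaining-Processing-Time and is known to be optimal for symmetric increasing cost functions of completion times.

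The core of the argument is an interchange/induction over the decision epochs. I would proceed by induction on time (equivalently, on the number of task completions), maintaining an inductive hypothesis comparing the "state" of FUT-R against that of $\pi$ — something like: the vector of completion times of already-finished jobs under FUT-R is dominated by that under $\pi$, and the multiset of remaining job sizes under FUT-R is "smaller" in an appropriate majorization-type order. The NWU property enters precisely here: under $\pi$, a task may have been partially served and then (by non-preemption at task level, but possibly split across replication modes) its residual service time is \emph{stochastically larger} than a fresh one, since for NWU distributions $\bar F(\tau+t) \ge \bar F(\tau)\bar F(t)$ means a used item is worse than a new one. This is what lets FUT-R, which always starts fresh tasks on all $m$ servers, dominate: replicating a fresh task on all servers gives completion time $=_{\text{st}} \min_l X_l$, and no policy $\pi$ can do better than this per task because of the NWU residual-life penalty combined with the fact that $\pi$ cannot have more than $m$ servers. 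One has to check the bookkeeping that $\pi$ completing several tasks "in parallel" is still no faster in the order-statistic sense than FUT-R completing them one super-task at a time — this works out because running $m$ distinct tasks in parallel finishes $m$ tasks in time $\max$ of $m$ independent services, whereas running them sequentially-but-replicated finishes each in a $\min$, and the NWU inequality tips the comparison in favor of replication.

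The main obstacle I anticipate is the coupling construction itself: making precise the correspondence between "tasks" on the FUT-R side and the possibly fragmented, partially-served replication modes on the $\pi$ side, so that the NWU residual-life stochastic inequality can be applied cleanly at each step without circular conditioning. This is the kind of thing that is deceptively subtle — one needs the right filtration and a careful statement of which randomness is revealed when, so that "residual service time under $\pi$ $\ge_{\text{st}}$ fresh service time" is a legitimate conditional inequality. I expect the paper handles this via the general sample-path ordering machinery of Appendix \ref{sec_proofmain}, which presumably abstracts away the queueing details into a clean sufficient condition (a pairwise "work-efficiency"/priority comparison at each event), so the real content is: (1) verify FUT-R meets that sufficient condition against any $\pi$ under NWU, using the single-super-server reduction and the NWU residual inequality, and (2) verify the sorting/symmetry step that promotes the $\bm{C}_{\uparrow}$ comparison to the $f(\bm{C})$ comparison for all $f \in \mathcal{D}_{\text{sym}}$. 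The ordering hypothesis $k_1 \le \cdots \le k_n$ is used to ensure FUT's choices are consistent with a globally optimal SRPT-like index, and $\bm{O}=\bm{0}$ is what makes the "one super-server" picture exact rather than merely approximate (hence why, unlike Theorem \ref{thm1}, this is exact delay optimality in terms of $\bm{C}$ and not just near-optimality in terms of $\bm{V}$).
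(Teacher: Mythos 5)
Your proposal follows essentially the same route as the paper's proof: couple FUT-R with an arbitrary $\pi$ so that the sorted task-completion vector of FUT-R is pathwise dominated (the paper's ``work-efficiency ordering,'' established in Lemma~\ref{lem_coupling_2} via the NWU residual-life inequality and a Shaked--Shanthikumar-type conditional construction), then use $k_1\le\cdots\le k_n$ together with the FUT rule to lift this to $\bm{C}_{\uparrow}(\text{FUT-R})\le\bm{C}_{\uparrow}(\pi)$ (Propositions~\ref{lem1_0} and~\ref{lem1_1_0}), and finish by symmetry and monotonicity of $f\in\mathcal{D}_{\text{sym}}$. The only imprecise spot is your ``$\max$ of $m$ services vs.\ sum of mins'' bookkeeping remark: the paper does not compare job-level $\max$ against replicated sums, but rather compares successive inter-task-completion epochs, showing at each step that $\min_l$ of the residual service times under $\pi$ is stochastically no smaller than a fresh $\min_l X_l$, which is the clean form of the NWU argument.
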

\ifreport
\begin{proof}
See Appendix \ref{app2}.
\end{proof}
\fi

Hence, policy FUT-R is delay-optimal in stochastic ordering under the conditions of Theorem \ref{thm2}. 
One special case of Theorem \ref{thm2} was  obtained in Theorem 3.2 of  \cite{Righter2008} and Theorem 3 of \cite{shah-Allerton-2013}, where each job has a single task, i.e., $k_1=k_2=\ldots=k_n=1$.
We note that compared to the traditional definition of NWU distributions in reliability theory \cite[p. 1]{StochasticOrderBook}, the definition in \eqref{eq_NWU} has one additional condition on the absolute continuity of $\bar{F}$. This condition is introduced to ensure that the probability for any two servers to complete task executions at the same time is zero. In particular, if two servers complete task executions at the same time, it might be better to assign these two servers to process two distinct tasks than to replicate two copies of a task on these two servers. Similar phenomena were reported in \cite{BinInfoCom2016}.

If  the job sizes are arbitrarily given (i.e., Condition (i) of Theorem \ref{thm2} is removed), and the task service times are  exponential, the delay performance of the FUT-R policy is characterized as follows.

\begin{theorem}\label{thm2_exp}
If (i) the task service times are exponential, independent across the servers, i.i.d.~across the tasks assigned to the same server, (ii) $\bm{O}=\bm{0}$, then for all \emph{$f\in\mathcal{D}_{\text{sym}}$},  \emph{$\pi\in\Pi$}, and $\mathcal{I}$
\emph{
\begin{align}\label{eq_delaygap3_exp}
[f(\bm{V}(\text{FUT-R}))|\mathcal{I}] \leq_{\text{st}} [f(\bm{C}(\pi))|\mathcal{I}].
\end{align}}
\!\!If $E[X_l]= 1/\mu_l$, then for all $\mathcal{I}$
\emph{\begin{align}\label{eq_gap_NWU}
\mathbb{E}[D_{\text{avg}}(\bm{C}(\text{FUT-R}))|\mathcal{I}] - \min_{\pi\in\Pi}\mathbb{E}\left[D_{\text{avg}}(\bm{C}(\pi))|\mathcal{I}\right]  \leq\frac{1}{\sum_{j=1}^m\mu_j}.
\end{align}}
\end{theorem}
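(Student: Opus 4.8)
The plan is to prove the two assertions of the theorem in turn, reusing the sample-path method of Appendix~\ref{sec_proofmain} together with the memorylessness of the exponential distribution, and exploiting the fact that the exponential law is \emph{simultaneously} NBU and NWU. The NWU side is what justifies full replication in the style of Theorem~\ref{thm2} (replicating a task on all $m$ servers never wastes service capacity in the relevant sample-path ordering), while the NBU side is what forces the $\bm{V}$-relaxation in the style of Theorem~\ref{thm1}, which is needed here because the job sizes $k_i$ are arbitrary rather than ordered. The exponential case sits precisely at the intersection of the two arguments, and $\bm{V}(\text{FUT-R})$ is the quantity on which both can be made to act at once.

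First I would establish a single-server reduction. Since service is task-level non-preemptive, every task enters service afresh, and by memorylessness the time until the task currently in service (replicated on all $m$ servers) completes is $\mathrm{Exp}(\mu_\Sigma)$ with $\mu_\Sigma:=\sum_{l=1}^m\mu_l$, independently of the past; in fact, whenever \emph{all} $m$ servers are busy — on possibly distinct tasks — the time to the next task completion is $\mathrm{Exp}(\mu_\Sigma)$. Using the strong Markov property I would couple the FUT-R system with a virtual system consisting of a single server of rate $\mu_\Sigma$ that processes tasks one at a time, always choosing a task of the job with the fewest unassigned tasks, so that the two systems share their task-start epochs and task-completion epochs on the coupled probability space. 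Under this coupling $\bm{V}(\text{FUT-R})$ and $\bm{C}(\text{FUT-R})$ are unchanged, tasks complete in the order they enter service, and for every job $i$ the last of its tasks to enter service is the last to complete; hence $C_i(\text{FUT-R})-V_i(\text{FUT-R})$ equals the fresh service time of that task, which is $\mathrm{Exp}(\mu_\Sigma)$ and independent of $V_i(\text{FUT-R})$.

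Second, and this is the main obstacle, I would prove the stochastic inequality \eqref{eq_delaygap3_exp}. Since $f\in\mathcal{D}_{\text{sym}}$ satisfies $f(\bm{x})=f(\bm{x}_{\uparrow})$, by the characterization of $\leq_{\text{st}}$ it suffices to show $[\bm{V}_{\uparrow}(\text{FUT-R})|\mathcal{I}]\leq_{\text{st}}[\bm{C}_{\uparrow}(\pi)|\mathcal{I}]$ for every $\pi\in\Pi$. Here I would invoke the two generic ingredients of the sample-path framework. (i) A work-efficiency ordering: whenever all $m$ servers are busy the aggregate task-completion rate is exactly $\mu_\Sigma$, and for any $\pi$ that leaves a server idle it is strictly smaller, while FUT-R keeps all servers busy until all jobs finish; so under a common rate-$\mu_\Sigma$ Poisson clock FUT-R is weakly more work-efficient than $\pi$. (ii) A priority-interchange argument: among work-efficient policies, serving the job with the fewest unassigned tasks first minimizes every symmetric increasing function of the completion-time vector (an SRPT-for-batches swap argument). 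The $\bm{V}$-relaxation is essential because replication "wastes" the capacity of $m-1$ servers on the in-service task, so FUT-R cannot be compared to $\pi$ through $\bm{C}$; but reaching $V_i$ only requires FUT-R to have \emph{started} all $k_i$ tasks of job $i$ (equivalently, completed $k_i-1$ of them in the virtual system), which costs it no more total service than $\pi$ needs to \emph{complete} job $i$ — exactly the budget that (i) and (ii) reconcile. Combining (i), (ii) and \eqref{eq_V_i_small} yields $\bm{V}_{\uparrow}(\text{FUT-R})\leq_{\text{st}}\bm{C}_{\uparrow}(\pi)$, hence \eqref{eq_delaygap3_exp}. The delicate points are running this against an \emph{arbitrary}, possibly anticipative and non-work-conserving $\pi$, and ensuring the memorylessness coupling genuinely preserves $V_i$ and $C_i$.

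Finally, the gap \eqref{eq_gap_NWU} follows by taking $f=D_{\text{avg}}\in\mathcal{D}_{\text{sym}}$ in \eqref{eq_delaygap3_exp}, which gives $\mathbb{E}[D_{\text{avg}}(\bm{V}(\text{FUT-R}))|\mathcal{I}]\leq\min_{\pi\in\Pi}\mathbb{E}[D_{\text{avg}}(\bm{C}(\pi))|\mathcal{I}]\leq\mathbb{E}[D_{\text{avg}}(\bm{C}(\text{FUT-R}))|\mathcal{I}]$, together with the estimate of the middle-to-right slack obtained from the single-server reduction:
\begin{align}
\mathbb{E}\big[D_{\text{avg}}(\bm{C}(\text{FUT-R}))-D_{\text{avg}}(\bm{V}(\text{FUT-R}))\,\big|\,\mathcal{I}\big]
&= \frac{1}{n}\sum_{i=1}^{n}\mathbb{E}\big[C_i(\text{FUT-R})-V_i(\text{FUT-R})\,\big|\,\mathcal{I}\big] \nonumber\\
&= \frac{1}{n}\sum_{i=1}^{n}\frac{1}{\mu_\Sigma} = \frac{1}{\sum_{j=1}^{m}\mu_j}. \nonumber
\end{align}
Subtracting then gives $\mathbb{E}[D_{\text{avg}}(\bm{C}(\text{FUT-R}))|\mathcal{I}]-\min_{\pi\in\Pi}\mathbb{E}[D_{\text{avg}}(\bm{C}(\pi))|\mathcal{I}]\leq 1/\sum_{j=1}^{m}\mu_j$, as claimed.
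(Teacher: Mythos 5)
Your overall plan is the paper's own: establish (weak) work-efficiency of FUT-R against an arbitrary $\pi$ using memorylessness, combine it with the FUT priority and the $\bm{V}$-relaxation to get the sample-path ordering $v_{(i)}(\text{FUT-R})\le c_{(i)}(\pi)$, lift to $\leq_{\text{st}}$ for symmetric increasing $f$, and then compute the slack $\mathbb{E}[C_i-V_i]$ from the fact that at $V_i$ the last task of job $i$ is a freshly started task replicated on all $m$ servers. Your gap computation is correct and in fact sharper than the paper's: for exponentials $\min_l X_l\sim\mathrm{Exp}(\mu_\Sigma)$ exactly, whereas the paper is content with the one-sided bound $\mathbb{E}[\min_l X_l]\le 1/\mu_\Sigma$ via Theorem 3.A.55 of \cite{StochasticOrderBook} in Appendix~\ref{app_lem7_NWU}. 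You also correctly locate the theorem at the NBU/NWU intersection, and correctly explain why the $\bm{V}$-relaxation is needed when the $k_i$ are not ordered.

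However, your proof of the inequality \eqref{eq_delaygap3_exp} is only a sketch, and the two ingredients you name do not actually close it. In~(i), a ``common rate-$\mu_\Sigma$ Poisson clock'' gives at best a comparison of \emph{aggregate} task-completion rates, which is not the weak work-efficiency ordering of Definition~5: that ordering demands, for each individual task $j$ with service interval $[\tau,\nu]$ in $\pi$, a corresponding task start in FUT-R inside $[\tau,\nu]$, whenever FUT-R's queue is nonempty there. The paper's Lemma~\ref{lem_coupling_3} obtains this per-task correspondence by coupling the residual on the \emph{winning server} $l_v$: by memorylessness, server $l_v$'s residual in FUT-R after $\tau_v$ has the same law as its fresh service time, so it can be coupled to complete at exactly $\nu$, at which point the R discipline launches a new task. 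A global Poisson coupling neither singles out $l_v$ nor respects $\pi$'s (possibly non-work-conserving, anticipative) schedule. In~(ii), an ``SRPT-for-batches swap argument'' is not what the paper uses and would be hard to run here: interchange arguments compare a policy to a neighbor obtained by swapping two jobs \emph{within the same policy}, but the comparison you need is asymmetric — $\bm{V}$ (state variable $\bm{\gamma}$) for FUT-R against $\bm{C}$ (state variable $\bm{\xi}$) for an arbitrary anticipative, non-work-conserving $\pi$, and no swap within $\Pi$ bridges that asymmetry. The paper instead proves Proposition~\ref{lem1} by induction over arrival epochs, maintaining the invariant $\sum_{i\ge j}\gamma_{[i],P}(t)\le\sum_{i\ge j}\xi_{[i],\pi}(t)$ via Lemmas~\ref{lem_non_prmp1}--\ref{lem_non_prmp2} together with a ``modification of task completion times'' trick, then lifts this to $f\in\mathcal{D}_{\text{sym}}$ via Proposition~\ref{lem1_1} and Proposition~\ref{ordering_2}. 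You flag the coupling as a ``delicate point'' yourself; it is in fact where the whole weight of the theorem rests, and as written the argument does not go through without the per-task coupling of Lemma~\ref{lem_coupling_3} and the induction of Proposition~\ref{lem1}.
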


Hence, if $E[X_l]\leq 1/\mu$ for all $l$, the sub-optimality gap of the FUT-R policy diminishes to zero at a speed of $O(1/m)$ as $m\rightarrow\infty$. Note that the sub-optimal delay gaps in Theorem \ref{lem7_NBU} and Theorem \ref{thm2_exp} are independent of the job parameters, and hence remain constant for any traffic load.
Because exponential distribution is both NBU and NWU, Theorems \ref{thm1}-\ref{thm2_exp} are all satisfied for exponential service time distributions.

It is important to emphasize that the FUT discipline is a nice approximation of the Shortest Remaining Processing Time (SRPT) first discipline  \cite{Schrage68,Smith78}: The FUT discipline  utilizes the number of unassigned tasks of a job to approximate the remaining processing time of this job, and is within a small additive sub-optimality gap from the optimum for minimizing the mean average delay in the scheduling problems  that we consider.

\subsection{Maximum Lateness and Related Delay Metrics}\label{sec_maxlate}
Next, we consider the maximum lateness $L_{\max}$ and the delay metrics in $\mathcal{D}_{\text{Sch-1}}$. 
When the task service times are NBU, we propose a class of policies named \textbf{Earliest Due Date first with Low-Priority Replication (EDD-LPR)}, which can be obtained by combining the following EDD discipline and the LPR discipline. 

\begin{definition}
A scheduling policy is said to follow the \textbf{Earliest Due Date (EDD)} first discipline, if each task assigned to the servers is from the job with the earliest due date  whenever the queue is not empty (there exist unassigned tasks in the queue). 
\end{definition}

Two instances of the EDD-LPR policies are policy \textbf{Earliest Due Date first with No Idleness Replication (EDD-NIR)} and policy \textbf{Earliest Due Date first with No Replication (EDD-NR)}. Policy EDD-NIR can be obtained from Algorithm \ref{alg1} by revising Step 12 as $j :=\arg\min\{d_i: i\in Q, \gamma_i>0\}$. Policy EDD-NR can be obtained by revising Step 12 of Algorithm \ref{alg1} and further removing Steps 15-16, 20-26. The delay performance of policy EDD-LPR is characterized in the following theorem:
\begin{theorem}\label{thm3}
If the task service times are NBU, independent across the servers, i.i.d.~across the tasks assigned to the same server, then for all $\bm{O}\geq\bm{0}$, ${\pi\in\Pi}$, and $\mathcal{I}$ 
\emph{
\begin{align}\label{eq_delaygap1_thm3}
&[L_{\max}(\bm{V}(\text{EDD-LPR}))|\mathcal{I}] \leq_{\text{st}} \left[L_{\max}(\bm{C}(\pi))\left.\right|\mathcal{I}\right].
\end{align}}
\end{theorem}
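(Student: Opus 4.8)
The plan is to reduce the stochastic-ordering claim to a threshold statement and then to a sample-path comparison, in the same spirit as the proof of Theorem~\ref{thm1}, but with the Fewest-Unassigned-Tasks priority replaced by the Earliest-Due-Date priority and the symmetric functional replaced by the bottleneck functional $L_{\max}$. First I would use the identity $L_{\max}(\bm{C}(\pi))=\max_i(C_i(\pi)-d_i)$ to note that $\{L_{\max}(\bm{C}(\pi))>t\}=\{\exists i:\ C_i(\pi)>d_i+t\}$, i.e. ``some job misses its $t$-shifted due date $d_i+t$''. Hence \eqref{eq_delaygap1_thm3} is equivalent to proving that, for every $t\geq 0$ and every $\mathcal{I}$, EDD-LPR starts all tasks of every job $i$ by time $d_i+t$ with probability at least as large as the probability that an arbitrary $\pi\in\Pi$ completes every job $i$ by time $d_i+t$. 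I would relabel the jobs so that $d_1\leq d_2\leq\cdots\leq d_n$; note that EDD processes unassigned tasks exactly in this order.

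The argument then rests on two ingredients, combined on a common probability space built by coupling the per-server service times as i.i.d.\ sequences and invoking the NBU property \eqref{eq_NBU} to dominate the residual service time of any partially served task by a fresh one. Ingredient (a) is an \emph{EDD interchange/feasibility} argument: among all causal, non-preemptive policies, if \emph{any} policy can get all tasks of every job started by the deadlines $d_i+t$, then EDD can --- whenever a server would process a task of a less urgent job while a more urgent job still has unassigned tasks, exchanging the two assignments does not push any job's relevant completion past its deadline; this is the classical optimality of EDD for minimizing maximum lateness, adapted to parallel servers, unit-work tasks, and the ``job done when all its tasks are done'' criterion. Ingredient (b) is the \emph{work-efficiency of LPR under NBU}: replicating a task while unassigned tasks are waiting only wastes server capacity, and by NBU this waste is never recovered, so the Low-Priority-Replication rule weakly dominates any policy that replicates earlier; the cancellation rule of LPR (complete rather than cancel whenever $R_l\leq_{\text{hr}}O_l$) guarantees that the cancellation overhead $\bm{O}\geq\bm{0}$ never hurts. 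The role of comparing against $\bm{V}(\text{EDD-LPR})$ rather than $\bm{C}(\text{EDD-LPR})$ on the left-hand side is to absorb the residual service of the at most $k_i\wedge m$ tasks of job $i$ still in service at time $V_i$, exactly as in the passage around \eqref{eq_delaygap1_2} and Theorem~\ref{lem7_NBU}.

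Concretely, I would formalize (a) and (b) as a single sample-path dominance: under the coupling, on the event that $\pi$ completes every job by $d_i+t$, EDD-LPR has started every task of every job by $d_i+t$; taking probabilities and letting $t$ range over $[0,\infty)$ yields $\Pr[L_{\max}(\bm{V}(\text{EDD-LPR}))>t]\leq\Pr[L_{\max}(\bm{C}(\pi))>t]$ for all $t$, which is one of the equivalent definitions of $\leq_{\text{st}}$. Since the coupling only exposes the arrival parameters $(a_i,k_i,d_i)$ and not the realized service times to $\pi$, the comparison automatically covers anticipative $\pi$. Wherever possible I would invoke the general sample-path lemmas of Appendix~\ref{sec_proofmain} that already power Theorem~\ref{thm1}, instantiating the generic priority order with EDD and the generic functional class with $\{L_{\max}\}$.

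The step I expect to be the main obstacle is making the EDD interchange of ingredient (a) coexist with the replication structure and work-conservation: swapping a more urgent task onto a server currently running a less urgent (possibly replicated) task cannot be done naively without risking idleness or violating the LPR replication rule, so the exchange must be carried out as a careful induction over decision epochs, tracking the number of unassigned higher-priority tasks rather than individual task identities. A secondary difficulty is that, unlike the symmetric case, $L_{\max}$ is sensitive to the matching of completion times to due dates, so the pathwise dominance cannot be stated purely in terms of sorted completion vectors; I would address this by carrying the EDD-sorted job order through the entire induction so that the per-job deadline comparisons stay coordinate-wise.
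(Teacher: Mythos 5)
Your outline matches the paper's own route: the coupling of NBU residual service times under the LPR rule supplies the weak work-efficiency ordering (Lemma~\ref{lem_coupling}); the ``careful induction over decision epochs, tracking the number of unassigned higher-priority tasks'' rather than a per-task interchange is exactly the sample-path ordering $\sum_{i:d_i\leq\tau}\gamma_{i,P}(t)\leq\sum_{i:d_i\leq\tau}\xi_{i,\pi}(t)$ of Proposition~\ref{lem2}; and Proposition~\ref{ordering_3_1} is precisely the threshold-type conversion of that ordering into $L_{\max}(\bm{v}(P))\leq L_{\max}(\bm{c}(\pi))$. The paper's proof of Theorem~\ref{thm3} is indeed obtained by instantiating the proof of Theorem~\ref{thm1} with the EDD priority and Proposition~\ref{lem2} in place of FUT and Proposition~\ref{lem1_1}, so your self-correction away from naive interchange lands you on the same argument.
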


If the job parameters $\mathcal{I}$ satisfy certain conditions, Theorem \ref{thm3} can be generalized to all delay metrics in {$\mathcal{D}_{\text{sym}}\cup\mathcal{D}_{\text{Sch-1}}$}.

\begin{theorem}\label{coro_thm3_1}
If $k_1=\ldots=k_n=1$ (or $d_1\leq d_2\leq \ldots\leq d_n$ and $k_1\leq k_2 \leq \ldots\leq k_n$) and \emph{$L_{\max}$} is replaced by any \emph{$f\in\mathcal{D}_{\text{sym}}\cup\mathcal{D}_{\text{Sch-1}}$}, Theorem  \ref{thm3} still holds.
\end{theorem}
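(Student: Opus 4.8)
The plan is to reduce Theorem \ref{coro_thm3_1} to Theorems \ref{thm1} and \ref{thm3} by exploiting the fact that, under either of the two stated hypotheses, the EDD priority and the FUT priority become mutually consistent, and then to upgrade the resulting pair of stochastic comparisons to all of $\mathcal{D}_{\text{sym}}\cup\mathcal{D}_{\text{Sch-1}}$ via a weak-majorization argument. First I would re-index the jobs so that $d_1\le\cdots\le d_n$ and $k_1\le\cdots\le k_n$ hold simultaneously (in the case $k_1=\cdots=k_n=1$ any indexing works; in the other case this is the hypothesis), with ties broken in a fixed way. I then claim that the policy $P$ that always serves an unassigned task of the queued job with the fewest unassigned tasks, breaking ties by earliest due date and then by smallest index, and that follows the LPR replication/cancellation rules, is \emph{simultaneously} an instance of FUT-LPR and of EDD-LPR. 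When $k_i\equiv 1$ this is immediate, since then FUT imposes no constraint among jobs, so every EDD-LPR policy is already a FUT-LPR policy. In the general case the FUT membership of $P$ is by construction, and the EDD membership follows from the invariant that, among the jobs currently in the queue, a job with an earlier due date never carries more unassigned tasks than a job with a later due date: this invariant holds at arrival epochs because $d_i\le d_j\Rightarrow k_i\le k_j$, and it is preserved by each assignment step of $P$, since $P$ decrements the unassigned-task count of the job it serves, which is the one with the fewest such tasks and hence, by the invariant, the earliest-due-date job among those with unassigned tasks. A short induction on assignment epochs closes this step; verifying this invariant carefully, given the dynamic nature of the unassigned-task counts, is one of the two places that requires care.

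With such a $P$ in hand, applying Theorem \ref{thm1} to $P$ (as a FUT-LPR policy) gives $[f(\bm{V}(P))\,|\,\mathcal{I}]\le_{\text{st}}[f(\bm{C}(\pi))\,|\,\mathcal{I}]$ for all $\bm{O}\ge\bm{0}$, all $f\in\mathcal{D}_{\text{sym}}$, and all $\pi\in\Pi$, which already settles the $\mathcal{D}_{\text{sym}}$ part. For $f\in\mathcal{D}_{\text{Sch-1}}$, write $f(\bm{C})=g(\bm{C}-\bm{d})$ with $g(\bm{x}):=f(\bm{x}+\bm{d})$, which is increasing and Schur convex (hence symmetric). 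Since an increasing Schur-convex function preserves the order $\prec_{\text{w}}$ of weak majorization from below, it suffices to exhibit a coupling of the service times under which $\bm{V}(P)-\bm{d}\prec_{\text{w}}\bm{C}(\pi)-\bm{d}$ holds on every sample path, i.e. $\sum_{i=1}^{j}(\bm{V}(P)-\bm{d})_{[i]}\le\sum_{i=1}^{j}(\bm{C}(\pi)-\bm{d})_{[i]}$ for every $j$. The case $j=1$ is exactly the sample-path statement behind Theorem \ref{thm3}. For general $j$ I expect one of two routes: either the sample-path lemma used to prove Theorem \ref{thm3} in Appendix \ref{sec_proofmain} already delivers weak majorization of the entire sorted lateness vector and not merely of its largest coordinate, or, failing that, one re-runs that lemma on the sub-system consisting of the $j$ most urgent jobs and uses the fact that $P$, being an EDD policy, serves precisely those jobs with highest priority; here the monotone-pairing hypothesis is what guarantees that this restriction is also consistent with $P$'s FUT behaviour. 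Bridging the gap between the $L_{\max}$ conclusion of Theorem \ref{thm3} and the full weak-majorization conclusion is the second and main obstacle; note that the monotone-pairing hypothesis is genuinely necessary here, since $D_{\text{avg}}\in\mathcal{D}_{\text{Sch-1}}$ while EDD is not average-delay optimal in general, so the weak-majorization statement cannot hold unconditionally.

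Finally, combining the two cases, for every $f\in\mathcal{D}_{\text{sym}}\cup\mathcal{D}_{\text{Sch-1}}$ and every $\pi\in\Pi$ we get $[f(\bm{V}(P))\,|\,\mathcal{I}]\le_{\text{st}}[f(\bm{C}(\pi))\,|\,\mathcal{I}]$; since $P$ is an EDD-LPR policy and $\bm{V}(P)\le\bm{C}(P)$ by \eqref{eq_V_i_small}, this is exactly the near-delay-optimality assertion of Theorem \ref{coro_thm3_1}, namely the analogue of \eqref{eq_delaygap1_thm3} with $L_{\max}$ replaced by $f$. In summary, the two steps I expect to demand real work are (i) showing that the common policy $P$ is both an EDD-LPR and a FUT-LPR policy under the stated hypothesis, and (ii) promoting the sample-path comparison underlying Theorem \ref{thm3} from the top coordinate of the sorted lateness vector to the full weak-majorization order; everything else is bookkeeping plus the standard majorization fact that increasing Schur-convex functions preserve $\prec_{\text{w}}$.
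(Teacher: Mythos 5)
Your skeleton matches the paper's: under either hypothesis, exhibit a policy that simultaneously satisfies the EDD and FUT priority rules and then upgrade the two resulting sample-path comparisons to all of $\mathcal{D}_{\text{sym}}\cup\mathcal{D}_{\text{Sch-1}}$. But there are two places where the argument as written does not go through. First, your stated invariant ``a job with an earlier due date never carries more unassigned tasks than a job with a later due date'' is \emph{not} implied at arrival epochs by $d_i\le d_j\Rightarrow k_i\le k_j$ alone: when job $i$ arrives, $\gamma_i=k_i$, while a job $j$ with $d_j>d_i$ that is already partially served may have $\gamma_j<k_i$. What saves the claim is the paper's global convention $a_1\le\cdots\le a_n$, which under $d_1\le\cdots\le d_n$ forces the earlier-due-date job to also arrive earlier; from this one argues directly that if job $i$ currently has $\gamma_i>0$ then, under EDD, no job $j>i$ has ever had a task assigned (otherwise job $i$ would have been exhausted at that moment and could not again have unassigned tasks), hence $\gamma_j=k_j\ge k_i\ge\gamma_i$. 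You need to invoke the arrival-order assumption explicitly; the inequality $k_i\le k_j$ by itself does not do the work.

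Second, and more importantly, you leave the $\mathcal{D}_{\text{Sch-1}}$ bridge as a genuine open gap, and neither of your two speculative routes is what the paper actually uses. Proposition~\ref{lem2} (the ingredient behind Theorem~\ref{thm3}) only yields the $L_{\max}$ comparison, not weak majorization of the sorted lateness vector; and restricting to the $j$ most urgent jobs gives a bound on $\max_{i\le j}[v_i(P)-d_i]$, which is not the same quantity as the $j$-th partial sum of the sorted components of $\bm{v}(P)-\bm{d}$. The paper's actual argument is Proposition~\ref{coro2}: having both $v_{(i)}(P)\le c_{(i)}(\pi)$ for all $i$ (from the FUT side) and the EDD monotonicity of $\bm{v}(P)$, one constructs a permutation $\bm{v}'$ of $\bm{v}(P)$ aligned with the order of $\bm{c}(\pi)$, shows $\bm{v}'\le\bm{c}(\pi)$ componentwise from the sorted-vector comparison, and shows $\bm{v}(P)-\bm{d}\prec\bm{v}'-\bm{d}$ by a sequence of transpositions each justified by the rearrangement inequality (Lemma~\ref{lem_rearrangement}, i.e.\ Theorem~6.F.14 of Marshall--Olkin). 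Chaining these gives $\bm{v}(P)-\bm{d}\prec_{\text{w}}\bm{c}(\pi)-\bm{d}$, after which the increasing Schur-convex step you mention finishes it. This rearrangement construction is the missing idea; without it, your proposal does not actually reach weak majorization.
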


When the task service times are NWU, we propose a policy called  \textbf{Earliest Due Date first with Replication (EDD-R)}. This policy is similar with the FUT-R policy, except that in the EDD-R policy, all servers are allocated to process $m$ replicated copies of a task from the job with the earliest due time. The EDD-R policy can be obtained from Algorithm \ref{alg2} by revising Step 8 as $j :=\arg\min\{d_i: i\in Q\}$. The delay performance of the EDD-R policy is provided as follows.

\begin{theorem}\label{thm4}
If (i) $d_1\leq d_2\leq \ldots\leq d_n$, (ii) the task service times are NWU, independent across the servers, i.i.d. across the tasks assigned to the same server, and (iii) $\bm{O}=\bm{0}$, then for all \emph{$\pi\in\Pi$} and $\mathcal{I}$
\emph{
\begin{align}\label{eq_delaygap3_thm4}
[L_{\max}(\bm{C}(\text{EDD-R}))|\mathcal{I}] \leq_{\text{st}} [L_{\max}(\bm{C}(\pi))|\mathcal{I}].
\end{align}}
\end{theorem}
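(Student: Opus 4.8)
The plan is to mirror the sample-path strategy used for Theorem \ref{thm2} (the FUT-R optimality result for NWU service times), adapting the argument from the symmetric delay metrics $\mathcal{D}_{\text{sym}}$ to the max-lateness metric $L_{\max}$, and from the ordering-by-job-size condition to the ordering-by-due-date condition. First I would fix an arbitrary policy $\pi\in\Pi$ and an arbitrary instance $\mathcal{I}$, and set up a coupling between the service-time processes under EDD-R and under $\pi$. Since the task service times are NWU with absolutely continuous $\bar F$, at any given time at most one server completes a task (ties have probability zero), so the evolution of both systems is driven by the same sequence of i.i.d.\ server-side service-time clocks; I would couple these clocks so that the residual-service structure can be compared server by server. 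Because EDD-R replicates each assigned task on all $m$ servers and cancels instantly with $\bm O=\bm 0$, under EDD-R the servers always work in lockstep on a single task, and the NWU property guarantees that pooling all $m$ servers on one task is (stochastically) at least as fast as any split of those servers across distinct tasks — this is exactly the mechanism that made FUT-R optimal in Theorem \ref{thm2}, and the same comparison applies here.

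The core of the argument is a sample-path domination lemma: under the coupling, for every $k=1,\dots,n$ the $k$-th smallest completion time under EDD-R is no later than the $k$-th smallest completion time under $\pi$, i.e.\ $\bm C_\uparrow(\text{EDD-R}) \le \bm C_\uparrow(\pi)$ pathwise. Given this, I would combine it with the due-date ordering $d_1\le d_2\le\cdots\le d_n$: there is a classical rearrangement fact (the same one underlying EDD-optimality for $L_{\max}$ in single-machine scheduling) that when the due dates are sorted, serving jobs so as to make the sorted completion-time vector as small as possible componentwise, and in the order dictated by due dates, minimizes $\max_i (C_i - d_i)$. Concretely, if $\bm c_\uparrow \le \bm c'_\uparrow$ componentwise and EDD-R assigns the $k$-th-to-finish slot to the job with the $k$-th earliest due date, then $L_{\max}(\bm c(\text{EDD-R})) = \max_k (c_{(k)} - d_k) \le \max_k (c'_{(k)} - d_k) \le L_{\max}(\bm c(\pi))$, where the last step uses that for any completion-time realization $\bm c'$ under $\pi$ and sorted $\bm d$ one has $\max_i(c'_i-d_i)\ge \max_k(c'_{(k)}-d_k)$. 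Taking $\leq_{\text{st}}$ over the coupling then yields \eqref{eq_delaygap3_thm4}.

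I would organize the steps as: (1) reduce to a pathwise comparison via a coupling of service clocks, invoking absolute continuity of $\bar F$ to rule out simultaneous completions; (2) prove, by induction on completion events and using the NWU inequality \eqref{eq_NWU} exactly as in the proof of Theorem \ref{thm2}, that EDD-R stochastically minimizes the sorted completion-time vector $\bm C_\uparrow$ among all $\pi\in\Pi$ — note this sub-claim does not use the due-date ordering at all and is essentially Theorem \ref{thm2_exp}'s mechanism generalized to NWU; (3) use the rearrangement inequality tying sorted completion times and sorted due dates to convert the $\bm C_\uparrow$-domination into an $L_{\max}$-domination, here crucially using hypothesis (i). The main obstacle is step (2): making the "pooling $m$ servers on one task beats any split" comparison rigorous on sample paths requires carefully tracking the joint residual-service state across both systems and setting up the right inductive invariant (something like: EDD-R's set of remaining completion times is majorized-from-below by $\pi$'s), and the NWU hypothesis together with the no-ties property is what keeps this invariant intact across each completion/assignment event. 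This is precisely where I would lean on the sample-path toolkit developed in Appendix \ref{sec_proofmain} and the already-established Theorem \ref{thm2}, rather than redoing the combinatorics from scratch.
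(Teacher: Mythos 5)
Your step (2) is the weak point, and it does not hold as stated. You claim that EDD-R achieves the componentwise sorted-completion-time domination $\bm{C}_\uparrow(\text{EDD-R}) \leq \bm{C}_\uparrow(\pi)$ pathwise, and that this ``does not use the due-date ordering at all and is essentially Theorem~\ref{thm2}'s mechanism generalized to NWU.'' But the mechanism behind Theorem~\ref{thm2} is Proposition~\ref{lem1_0}, which requires both the job-size ordering $k_1 \leq k_2 \leq \cdots \leq k_n$ \emph{and} that each completing task belongs to the job with the fewest remaining tasks. Theorem~\ref{thm4} assumes neither: there is no ordering on the $k_i$, and EDD-R serves by due date, not by remaining-task count. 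A concrete counterexample kills the claim: one server, two jobs at time $0$ with $d_1 < d_2$, $k_1 = 2$, $k_2 = 1$. EDD-R serves job $1$ to completion first, so $c_{(1)}(\text{EDD-R}) = X_1 + X_2$, whereas a policy serving job $2$ first gets $c_{(1)} = X_1 < X_1 + X_2$. Hence $\bm{C}_\uparrow(\text{EDD-R}) \not\leq \bm{C}_\uparrow(\pi)$ in general, and your step (3) is left without its hypothesis. The $\bm{C}_\uparrow$-domination you want is exactly what Theorem~\ref{coro4_1} delivers \emph{under the extra assumption} $k_1 \leq \cdots \leq k_n$; it is not available under the hypotheses of Theorem~\ref{thm4} alone.

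The correct route, and the one the paper takes, bypasses $\bm{C}_\uparrow$-domination entirely. The paper pairs the work-efficiency ordering obtained from the NWU coupling (Lemma~\ref{lem_coupling_2}) with a due-date-indexed sample-path ordering: Proposition~\ref{lem2_0} shows that if $P$ is more work-efficient than $\pi$, $P$ completes tasks in EDD order, and $d_1 \leq \cdots \leq d_n$, then $\sum_{i: d_i \leq \tau} \xi_{i,P}(t) \leq \sum_{i: d_i \leq \tau} \xi_{i,\pi}(t)$ for every $\tau$ and $t$, and Proposition~\ref{ordering_3} converts that directly into $L_{\max}(\bm{c}(P)) \leq L_{\max}(\bm{c}(\pi))$ without ever touching the sorted completion vector. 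Your step (1) coupling and your closing rearrangement observation are sound and in the right spirit, but the bridge in step (2) must be the due-date-restricted ordering of Propositions~\ref{ordering_3} and~\ref{lem2_0}, not the $\mathcal{D}_{\text{sym}}$-style componentwise $\bm{C}_\uparrow$-domination.
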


If the jobs sizes satisfy certain conditions, Theorem \ref{thm4} can be generalized to all delay metrics in {$\mathcal{D}_{\text{sym}}\cup\mathcal{D}_{\text{Sch-1}}$}.
\begin{theorem}\label{coro4_1}
If $k_1\leq k_2\leq\ldots\leq k_n$ and \emph{$L_{\max}$} is replaced by any \emph{$f\in\mathcal{D}_{\text{sym}}\cup\mathcal{D}_{\text{Sch-1}}$}, Theorem  \ref{thm4} still holds.
\end{theorem}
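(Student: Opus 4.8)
The plan is to split the argument by whether $f\in\mathcal{D}_{\text{sym}}$ or $f\in\mathcal{D}_{\text{Sch-1}}$, and to reduce both cases to a single sample-path comparison of the \emph{sorted} completion-time vectors. The first observation is that, under the added hypothesis $k_1\leq\cdots\leq k_n$ combined with $d_1\leq\cdots\leq d_n$ and $a_1\leq\cdots\leq a_n$, the EDD-R policy is actually an instance of the FUT-R policy. Because the R discipline keeps all $m$ servers tied to one task of a single job at a time and EDD-R gives strict priority by (sorted) due date, one shows by induction over the arrival and completion epochs that the set of uncompleted jobs is always a contiguous block of indices $\{i_0,i_0+1,\dots,N\}$; on such a block the job with the earliest due date, namely $i_0$, is also a job with the fewest unassigned tasks, since its count has only decreased below $k_{i_0}$ while every other job $i>i_0$ still has $\gamma_i=k_i\geq k_{i_0}$. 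Hence every task EDD-R assigns comes from a fewest-unassigned-tasks job, so EDD-R satisfies the FUT-R discipline and Theorem \ref{thm2} applies to it. The same block structure shows jobs complete in index order under EDD-R, so $C_1(\text{EDD-R})\leq\cdots\leq C_n(\text{EDD-R})$, i.e. $\bm{C}(\text{EDD-R})=\bm{C}_{\uparrow}(\text{EDD-R})$ on every sample path.

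Next I would use that the sample-path proof of Theorem \ref{thm2} furnishes, for every $\pi\in\Pi$, a coupling of the task service times under which $\bm{C}_{\uparrow}(\text{EDD-R})\leq\bm{C}_{\uparrow}(\pi)$ holds almost surely (equivalently: applying Theorem \ref{thm2} to the symmetric increasing functions $\bm{x}\mapsto\psi(\bm{x}_{\uparrow})$, $\psi$ increasing, gives $\bm{C}_{\uparrow}(\text{EDD-R})\leq_{\text{st}}\bm{C}_{\uparrow}(\pi)$, from which such a coupling exists). On that coupling the case $f\in\mathcal{D}_{\text{sym}}$ is immediate, since $f(\bm{C}(\text{EDD-R}))=f(\bm{C}_{\uparrow}(\text{EDD-R}))\leq f(\bm{C}_{\uparrow}(\pi))=f(\bm{C}(\pi))$ a.s., which is the required stochastic ordering with $L_{\max}$ replaced by $f$.

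For $f\in\mathcal{D}_{\text{Sch-1}}$, write $f(\bm{C})=h(\bm{C}-\bm{d})$ with $h$ increasing and Schur convex, so $f(\bm{C}(\pi))=h(\bm{L}(\pi))$. I would establish $\bm{L}(\text{EDD-R})\prec_{\text{w}}\bm{L}(\pi)$ on the coupling and then invoke the standard fact (\cite{Marshall2011}) that an increasing Schur convex function is monotone with respect to $\prec_{\text{w}}$, giving $h(\bm{L}(\text{EDD-R}))\leq h(\bm{L}(\pi))$ a.s. and hence the stochastic ordering. The weak majorization is obtained by chaining two elementary facts, using that both $\bm{C}(\text{EDD-R})$ and $\bm{d}$ are increasingly sorted: (i) from the coupling, $\bm{C}_{\uparrow}(\text{EDD-R})-\bm{d}\leq\bm{C}_{\uparrow}(\pi)-\bm{d}$ componentwise, and componentwise domination implies $\prec_{\text{w}}$ (each top-$j$ partial sum is monotone); (ii) by the rearrangement inequality for majorization, pairing the sorted vector $\bm{C}_{\uparrow}(\pi)$ with the sorted vector $\bm{d}$ is the ``most concentrated'' pairing, so $\bm{C}_{\uparrow}(\pi)-\bm{d}\prec\bm{C}(\pi)-\bm{d}=\bm{L}(\pi)$. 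Transitivity of $\prec_{\text{w}}$ together with $\bm{L}(\text{EDD-R})=\bm{C}_{\uparrow}(\text{EDD-R})-\bm{d}$ completes the step; note this route also recovers Theorem \ref{thm4} in the present special case, since $L_{\max}\in\mathcal{D}_{\text{Sch-1}}$.

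The part I expect to be the real obstacle is securing the sample-path coupling $\bm{C}_{\uparrow}(\text{EDD-R})\leq\bm{C}_{\uparrow}(\pi)$ a.s.\ in a form usable here: this is precisely the work-efficiency content behind Theorem \ref{thm2}, and it is essential that the extra hypothesis $k_1\leq\cdots\leq k_n$ is used exactly to identify EDD-R with FUT-R. Without it, prioritizing a large-batch job by its early due date can make EDD-R's smallest completion time strictly larger than a competitor's, so the $\bm{C}_{\uparrow}$ domination fails while $L_{\max}$-optimality still survives --- which is why Theorem \ref{thm4} needs no size ordering but its $\mathcal{D}_{\text{sym}}\cup\mathcal{D}_{\text{Sch-1}}$ strengthening does. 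The remaining ingredients --- the contiguous-block induction and the two majorization manipulations --- are routine.
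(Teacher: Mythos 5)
Your proof is correct and follows essentially the same route as the paper: identify EDD-R with FUT-R under the joint orderings $k_1\leq\cdots\leq k_n$, $d_1\leq\cdots\leq d_n$, transfer the work-efficiency coupling underlying Theorem \ref{thm2}, and finish with a rearrangement-inequality argument (the paper packages the last step as Proposition \ref{coro2_0}, which is invoked in its one-line proof of this theorem). The only variation is cosmetic --- you apply the rearrangement inequality on the competitor side, chaining $\bm{C}_{\uparrow}(\text{EDD-R})-\bm{d}\leq\bm{C}_{\uparrow}(\pi)-\bm{d}\prec\bm{C}(\pi)-\bm{d}$, whereas Proposition \ref{coro2_0} sorts an intermediate permutation $\bm{c}'$ of $\bm{c}(P)$ against the order of $\bm{c}(\pi)$; both hinge on the same Marshall--Olkin Theorem 6.F.14 and the same componentwise domination $c_{(i)}(P)\leq c_{(i)}(\pi)$.
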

If the task service times are exponential, the delay performance of the EDD-R policy is characterized in the following two theorems. 
\begin{theorem}\label{thm4_exp}
If (i) the task service times are exponential, independent across the servers, i.i.d.~across the tasks assigned to the same server, (ii) $\bm{O}=\bm{0}$, then for all \emph{$\pi\in\Pi$}, and $\mathcal{I}$
\emph{
\begin{align}\label{eq_delaygap4_exp}
[L_{\max}(\bm{V}(\text{EDD-R}))|\mathcal{I}] \leq_{\text{st}} [L_{\max}(\bm{C}(\pi))|\mathcal{I}].
\end{align}}
\end{theorem}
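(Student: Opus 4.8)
The plan is to establish \eqref{eq_delaygap4_exp} with the sample-path/coupling machinery of Appendix \ref{sec_proofmain} that already underlies Theorems \ref{thm2_exp} and \ref{thm4}, now specialized to the lateness vector and the EDD sequencing rule; throughout I write $\mu_\Sigma=\sum_{l=1}^m\mu_l$ and use that the exponential distribution is both NBU and NWU. \textbf{First}, I would collapse EDD-R onto a virtual single server of rate $\mu_\Sigma$. By Algorithm \ref{alg2} with its modified Step 8, EDD-R starts a task only when all $m$ servers are idle and then runs a replica of that task on every server, and since $\bm O=\bm 0$ the redundant copies disappear the moment the first copy finishes; hence at each instant either all servers idle or all run replicas of one task, and by memorylessness the residual time to finish that task is $\mathrm{Exp}(\mu_\Sigma)$, independent of the past. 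So EDD-R is distributionally equivalent to a single server of rate $\mu_\Sigma$ serving tasks one at a time, always from the arrived-and-unfinished job with the earliest due date; moreover $C_i(\text{EDD-R})-V_i(\text{EDD-R})$ is exactly the $\mathrm{Exp}(\mu_\Sigma)$ service time of the last task of job $i$ to enter this virtual server, since that job's tasks are served sequentially so the last to start is the last to finish.

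\textbf{Next}, I would compare against an arbitrary $\pi\in\Pi$ via coupling. Again by memorylessness, the aggregate task-completion process of any $\pi$ is a superposition of at most $m$ exponential clocks whose merged instantaneous rate never exceeds $\mu_\Sigma$ (strictly smaller whenever a server idles or runs a redundant replica), so a thinning coupling places $\pi$ and the virtual rate-$\mu_\Sigma$ server on one probability space with the $r$-th task-completion epoch of the virtual server no later than the $r$-th such epoch of $\pi$, on every sample path and even for anticipative $\pi$. On the coupled path I would then invoke the optimality of EDD for $L_{\max}$ on a single resource through a Jackson-type interchange: swapping an out-of-EDD-order adjacent pair of served tasks cannot increase $\max_i(c_i-d_i)$, and since the virtual service times are i.i.d.\ this swap is distribution-preserving --- here task-boundary preemption mimics full preemption, a property that fails for general NWU. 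Packaging this as a pathwise prefix-sum/majorization domination of $\bm V(\text{EDD-R})-\bm d$ by $\bm C(\pi)-\bm d$ and feeding it to the Appendix \ref{sec_proofmain} lemma for $\mathcal D_{\text{Sch-1}}$ gives \eqref{eq_delaygap4_exp}; for the specific metric $L_{\max}$ it is enough to check the equivalent deadline-feasibility statement: for every threshold $t$, on the coupled path EDD-R meets every due time $d_i+t$ in the $\bm V$-sense whenever $\pi$ meets all of them in the $\bm C$-sense.

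\textbf{The main obstacle} is the reduction used in the second step --- showing that a genuinely multi-server, possibly replicating, possibly anticipative $\pi$ cannot beat the virtual single server when the due times are \emph{arbitrary}, i.e.\ not necessarily agreeable with the arrival order, which is exactly the restriction dropped relative to Theorem \ref{thm4}. One has to argue that the best $\pi$ can do is pour its whole rate $\mu_\Sigma$ onto the currently most urgent outstanding job, even though $\pi$ may run $m$ distinct tasks at once and spread effort across jobs in a locally attractive way, and one has to do it without the ordering $k_1\le\cdots\le k_n$. That absence of ordering is why $\bm V$ rather than $\bm C$ appears on the left of \eqref{eq_delaygap4_exp}: EDD-R is then only near-optimal, and the unavoidable slack is the last-task service time $C_i-V_i\le_{\text{st}}\mathrm{Exp}(\mu_\Sigma)$, consistent with the $1/\mu_\Sigma$ gap in \eqref{eq_gap_NWU}. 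The exponential assumption is used essentially at every stage --- memoryless residuals for the collapse, for the thinning coupling, and for the distribution-preserving interchange --- and without it one is forced back to imposing agreeable due times as in Theorem \ref{thm4}.
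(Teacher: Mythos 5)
Your proposal is a genuinely different route from the paper's, and the overall architecture (collapse EDD-R to a rate-$\mu_\Sigma$ virtual server, couple against $\pi$, invoke EDD optimality for $L_{\max}$) is a sensible instinct, but it has a gap at exactly the step that makes this theorem harder than Theorem~\ref{thm4}. Your coupling is aimed at the \emph{strong} work-efficiency ordering of Definition~\ref{def_order} (elementwise domination $\bm t_P\le\bm t_\pi$ of sorted task-completion epochs, what the paper proves in Lemma~\ref{lem_coupling_2}). But the paper's proof of Theorem~\ref{thm4_exp} deliberately switches to Lemma~\ref{lem_coupling_3} and the \emph{weak} work-efficiency ordering, which is a per-task correspondence: for each task of $\pi$ with service interval $[\tau,\nu]$, some task in $P$ \emph{starts} service inside $[\tau,\nu]$ whenever $P$'s queue is nonempty there. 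The paper explicitly remarks that the two orderings are not comparable, and the distinction is not cosmetic: the $\bm V$-bound on the left of \eqref{eq_delaygap4_exp} requires controlling when tasks \emph{start} in EDD-R relative to when they \emph{complete} in $\pi$, which is what the weak ordering encodes and the strong one does not. From $\bm t_P\le\bm t_\pi$ alone one can get that $P$ has started at least as many tasks (total) as $\pi$ has completed, but that global count does not descend to the per-priority statement $\sum_{i:d_i\le\tau}\gamma_{i,P}(t)\le\sum_{i:d_i\le\tau}\xi_{i,\pi}(t)$ of Proposition~\ref{ordering_3_1}, because a late-arriving urgent job can be stuck behind a non-preemptible task in EDD-R while $\pi$ reserved capacity for it --- exactly the case that agreeable due times rule out in Theorem~\ref{thm4}.

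Your Jackson-type interchange is likewise a single-server, $\bm C$-versus-$\bm C$ argument, and for non-preemptive service with arbitrary due times EDD does \emph{not} minimize $L_{\max}(\bm C)$; "task-boundary preemption mimics full preemption" is the right intuition for why exponential helps, but it is stated as a slogan rather than a lemma, and it does not by itself convert your strong-ordering coupling into the $\bm V$-versus-$\bm C$ inequality. The paper's bridge is Proposition~\ref{lem2}: it combines the weak work-efficiency ordering with the EDD task-assignment rule and the induction machinery of Lemmas~\ref{lem_non_prmp1_thm3}--\ref{lem_non_prmp2_thm3} to establish the due-date-stratified sample-path ordering \eqref{eq_ordering_3_3}, which then delivers $L_{\max}(\bm v(P))\le L_{\max}(\bm c(\pi))$ through Proposition~\ref{ordering_3_1}. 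Your plan has no surrogate for that bridge --- the "pathwise prefix-sum/majorization domination" you invoke is Proposition~\ref{coro2}, whose hypotheses are precisely those of Propositions~\ref{lem1} and~\ref{lem2}, i.e.\ the weak ordering you have not established under your coupling. So the argument is incomplete; you would need either to prove the weak work-efficiency ordering directly (reproducing Lemma~\ref{lem_coupling_3}'s task-by-task construction using memorylessness) or to supply a new lemma taking strong ordering $+$ EDD to \eqref{eq_ordering_3_3}, and the latter, as stated, is false.
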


\begin{theorem}\label{coro_thm3_1_exp}
If $k_1=\ldots=k_n=1$ (or $d_1\leq d_2\leq \ldots\leq d_n$ and $k_1\leq k_2 \leq \ldots\leq k_n$) and \emph{$L_{\max}$} is replaced by any \emph{$f\in\mathcal{D}_{\text{sym}}\cup\mathcal{D}_{\text{Sch-1}}$}, Theorem  \ref{thm4_exp} still holds.
\end{theorem}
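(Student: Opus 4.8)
The statement is the exponential-service refinement of Theorem~\ref{coro_thm3_1}, and I would prove it in the same way, with Theorem~\ref{thm4_exp} now playing the role of Theorem~\ref{thm3}. First I would dispose of the branch ``$d_1\le\cdots\le d_n$ and $k_1\le\cdots\le k_n$'': an exponential service time is NWU, so Theorem~\ref{coro4_1} already gives $[f(\bm{C}(\text{EDD-R}))|\mathcal{I}]\le_{\text{st}}[f(\bm{C}(\pi))|\mathcal{I}]$ for every $f\in\mathcal{D}_{\text{sym}}\cup\mathcal{D}_{\text{Sch-1}}$; since $\bm{V}(\text{EDD-R})\le\bm{C}(\text{EDD-R})$ by \eqref{eq_V_i_small} and $f$ is increasing, this yields $[f(\bm{V}(\text{EDD-R}))|\mathcal{I}]\le_{\text{st}}[f(\bm{C}(\text{EDD-R}))|\mathcal{I}]\le_{\text{st}}[f(\bm{C}(\pi))|\mathcal{I}]$. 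So the substantive case is $k_1=\cdots=k_n=1$ with the due times in arbitrary order.

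For that case I would work on a common probability space on which EDD-R and an arbitrary $\pi\in\Pi$ are driven by a single stream of i.i.d.\ exponential ``potential completions'': replicating a unit task on all $m$ servers produces a completion at the pooled rate $\sum_l\mu_l$, and memorylessness lets partially elapsed service clocks be carried from one policy to the other without disturbing any marginal law. Two sample-path facts then drive the proof. (a) EDD-R is work-conserving and pools all $m$ servers on the single most urgent unassigned task, so the number of unit tasks it has completed by any time $t$ is at least the number $\pi$ has completed by $t$; hence $\bm{C}_{\uparrow}(\text{EDD-R})\le\bm{C}_{\uparrow}(\pi)$ componentwise, and combined with $\bm{V}(\text{EDD-R})\le\bm{C}(\text{EDD-R})$ also $\bm{V}_{\uparrow}(\text{EDD-R})\le\bm{C}_{\uparrow}(\pi)$ componentwise, which already settles the case $f\in\mathcal{D}_{\text{sym}}$. (b) EDD-R always serves the currently most urgent remaining unit job, so $\bm{V}(\text{EDD-R})-\bm{d}$ and $\bm{C}(\text{EDD-R})-\bm{d}$ carry the EDD structure required to match the $L_{\max}$-type interchange argument.

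For $f\in\mathcal{D}_{\text{Sch-1}}$ I would upgrade the $L_{\max}$ inequality of Theorem~\ref{thm4_exp} to the pathwise weak majorization
\begin{align}\label{eq_plan_wm}
\big(\bm{V}(\text{EDD-R})-\bm{d}\big)\ \prec_{\text{w}}\ \big(\bm{C}(\pi)-\bm{d}\big).
\end{align}
Given \eqref{eq_plan_wm}, any $f\in\mathcal{D}_{\text{Sch-1}}$, being increasing and Schur convex in its lateness argument, satisfies $f(\bm{V}(\text{EDD-R}))\le f(\bm{C}(\pi))$ pathwise, and taking expectations of increasing test functions converts this into the asserted $\le_{\text{st}}$. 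To obtain \eqref{eq_plan_wm} one must control every partial sum $\sum_{i=1}^{j}[\bm{V}(\text{EDD-R})-\bm{d}]_{[i]}$, not merely the maximum; I would follow the scheme behind Theorem~\ref{coro_thm3_1}, in which the unit-size assumption lets the single-interchange comparison underlying Theorems~\ref{thm4} and~\ref{thm4_exp} be iterated to ``bubble-sort'' any policy's schedule toward EDD order without increasing any of the sorted-lateness partial sums, the memorylessness of the exponential clocks keeping the coupling intact through each swap, and fact (b) identifying which index set attains the $j$-th partial sum for EDD-R.

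The main obstacle is exactly this upgrade from a single maximum to all partial sums of sorted latenesses. It is also the reason the hypothesis on $\mathcal{I}$ cannot be dropped: already for one long job carrying an early due time, EDD is $L_{\max}$-optimal yet not weak-majorization optimal, so no Schur-convex extension can hold — the restriction $k_1=\cdots=k_n=1$ (or the agreement $d_1\le\cdots\le d_n$, $k_1\le\cdots\le k_n$) is precisely what forces the EDD-R schedule to be sorted consistently with $\bm{d}$ and keeps each interchange monotone in all order statistics simultaneously. Once \eqref{eq_plan_wm} and the componentwise bound $\bm{V}_{\uparrow}(\text{EDD-R})\le\bm{C}_{\uparrow}(\pi)$ are established, the remaining steps are routine, and the whole argument should fit inside the unified sample-path machinery of Appendix~\ref{sec_proofmain}.
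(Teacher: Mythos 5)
Your two-branch split is reasonable and your shortcut for the ``$d_1\le\cdots\le d_n$ and $k_1\le\cdots\le k_n$'' branch via Theorem~\ref{coro4_1} plus $\bm{V}\le\bm{C}$ is valid and clean (it even gives the stronger $\bm{C}$-form for that branch). The difficulty is in the unit-job branch, and there your proposal has a genuine gap. The heuristic in step (a) --- ``EDD-R pools all $m$ servers, so it has completed at least as many unit tasks as $\pi$ by any time $t$'' --- does not follow from pooling: under exponential service, the aggregate completion rate of EDD-R is $\sum_l\mu_l$, but so is the aggregate rate of any work-conserving $\pi$ serving distinct tasks on distinct servers. Replication provides no rate advantage here; what makes a pathwise comparison possible is the memorylessness-based coupling of Lemma~\ref{lem_coupling_2} or~\ref{lem_coupling_3}, and you must invoke one of those rather than a rate argument. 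A second issue is that you implicitly mix the strong and weak work-efficiency orderings: step (a) asserts the strong statement $\bm{C}_{\uparrow}(\text{EDD-R})\le\bm{C}_{\uparrow}(\pi)$ (which would follow from Lemma~\ref{lem_coupling_2} and Proposition~\ref{lem1_0}), while step (b) needs the $\bm{V}$-level EDD structure that the paper obtains from the \emph{weak} coupling of Lemma~\ref{lem_coupling_3} combined with Propositions~\ref{lem1} and~\ref{lem2}, and the strong ordering does not by itself deliver that structure. So as written the two halves of your argument rest on different couplings without acknowledging it.

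The paper's route is shorter and handles both branches uniformly: it observes that when $k_1=\cdots=k_n=1$ (or when $\bm{d}$ and $\bm{k}$ are co-sorted), the job with the earliest due time among unassigned jobs is also a job with the fewest unassigned tasks, so EDD-R simultaneously satisfies the priority conditions of Proposition~\ref{lem1} and of Proposition~\ref{lem2}. Coupling via Lemma~\ref{lem_coupling_3} (exponential, weak work-efficiency) then lets one apply Proposition~\ref{coro2} directly, which already packages both the componentwise bound $\bm{V}_{\uparrow}\le\bm{C}_{\uparrow}(\pi)$ and the weak majorization $\bm{V}(\text{EDD-R})-\bm{d}\prec_{\text{w}}\bm{C}(\pi)-\bm{d}$, so your ``bubble-sort'' in (b) is a re-derivation of machinery the paper already has. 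The idea that the hypothesis forces EDD-R's priority order to also be the FUT order is the single observation your proposal is missing; once it is stated, the rest reduces to the proof of Theorem~\ref{thm2_exp} with Proposition~\ref{lem1_1} replaced by Proposition~\ref{coro2}.
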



\subsection{Maximum Delay and Related Delay Metrics} \label{sec_maxdelay}
Finally, we consider maximum delay $D_{\max}$ and the delay metrics in $\mathcal{D}_{\text{Sch-2}}$. 
When the task service times are NBU, we propose a policy named \textbf{First-Come, First-Served with Low-Priority Replication (FCFS-LPR)}, which can be obtained by combining the following FCFS discipline and the LPR discipline. 

\begin{definition}
A scheduling policy is said to follow the \textbf{First-Come, First-Served (FCFS)} first discipline, if each task assigned to the servers is from the job with the earliest arrival time whenever the queue is not empty (there exist unassigned tasks in the queue). 
\end{definition}

Two instances of the FCFS-LPR policies are \textbf{Earliest Due Date first with No Idleness Replication (FCFS-NIR)} and \textbf{First-Come, First-Served with No Replication (FCFS-NR)}. Policy FCFS-NIR can be obtained from Algorithm \ref{alg1} by revising Step 12 as $j :=\arg\min\{a_i: i\in Q,\gamma_i>0\}$. Policy FCFS-NR can be obtained by revising Step 12 of Algorithm \ref{alg1} and further removing Steps 15-16, 20-26. The delay performance of FCFS-LPR is characterized as follows:

\begin{corollary}\label{thm5}
If the task service times are NBU, independent across the servers, i.i.d.~across the tasks assigned to the same server,  then for all $\bm{O}\geq\bm{0}$,  $\pi\in\Pi$, and $\mathcal{I}$
\emph{
\begin{align}\label{eq_delaygap1_thm5}
&[D_{\max}(\bm{V}(\text{FCFS-LPR}))|\mathcal{I}] \leq_{\text{st}} \left[D_{\max}(\bm{C}(\pi))\left.\right|\mathcal{I}\right].
\end{align}}
\end{corollary}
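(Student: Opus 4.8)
\textbf{Proof proposal for Corollary \ref{thm5}.}

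The plan is to obtain Corollary \ref{thm5} as a direct specialization of Theorem \ref{thm3}, exploiting the structural parallel between the FCFS discipline and the EDD discipline. The key observation is that $D_{\max}$ is the ``due-time-free'' version of $L_{\max}$: formally, if we replace every due time $d_i$ by the arrival time $a_i$, then $L_i = C_i - d_i$ becomes $D_i = C_i - a_i$, and $L_{\max}$ becomes $D_{\max}$. At the same time, under this substitution the EDD discipline (serve the job with the smallest $d_i$) becomes precisely the FCFS discipline (serve the job with the smallest $a_i$), and the FUT/LPR machinery is untouched. So the first step is to make this correspondence precise: given an instance $\mathcal{I} = \{n, (a_i, k_i, d_i)_{i=1}^n\}$, introduce the auxiliary instance $\mathcal{I}' = \{n, (a_i, k_i, a_i)_{i=1}^n\}$ in which each due time is set equal to the arrival time. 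Since the arrival process and the service-time process do not depend on the due times (they only enter through the scheduling rule and the delay metric), the job completion times $\bm{C}(\pi)$ and the vector $\bm{V}(\cdot)$ are statistically unaffected by passing from $\mathcal{I}$ to $\mathcal{I}'$, for any policy $\pi\in\Pi$ that ignores due times (and in particular for any work-conserving, non-anticipative policy restricted to the relevant comparison).

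The second step is to apply Theorem \ref{thm3} to the instance $\mathcal{I}'$. On $\mathcal{I}'$, policy EDD-LPR coincides with policy FCFS-LPR, because the EDD tie-break ``smallest $d_i$'' is literally ``smallest $a_i$'' once $d_i = a_i$. Theorem \ref{thm3} then gives, for all $\bm{O}\geq\bm{0}$, all $\pi\in\Pi$, and the instance $\mathcal{I}'$,
\begin{align}
[L_{\max}(\bm{V}(\text{EDD-LPR}))\,|\,\mathcal{I}'] \leq_{\text{st}} [L_{\max}(\bm{C}(\pi))\,|\,\mathcal{I}'].\nonumber
\end{align}
The third step is to translate this back: on $\mathcal{I}'$ we have $d_i = a_i$ for every $i$, so $L_{\max}(\bm{x}) = \max_i (x_i - d_i) = \max_i(x_i - a_i) = D_{\max}(\bm{x})$ for any vector $\bm{x}$; substituting $\bm{x} = \bm{V}(\text{FCFS-LPR})$ on the left and $\bm{x} = \bm{C}(\pi)$ on the right, and using that FCFS-LPR on $\mathcal{I}'$ is the same algorithm as FCFS-LPR on $\mathcal{I}$ with identical completion-time statistics, yields exactly \eqref{eq_delaygap1_thm5}. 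The NBU/independence/i.i.d.\ hypotheses of Corollary \ref{thm5} are precisely those of Theorem \ref{thm3}, so no new assumptions are needed.

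I do not expect any serious obstacle here; the only point requiring a line of care is the claim that the joint law of $(\bm{C}(\pi), \bm{V}(\pi))$ under any given realization of arrivals and service times is unchanged when the due times are relabeled, which holds because due times influence neither the server dynamics nor the admissible decisions of a non-anticipative policy in $\Pi$ — they appear only inside the objective $f$. One should also note, for completeness, that the remark following Theorem \ref{coro_thm3_1} has no analogue needed here: Corollary \ref{thm5} is stated only for $D_{\max}$ itself (not for a general $f\in\mathcal{D}_{\text{Sch-2}}$), so the argument does not require the majorization refinements used in Theorem \ref{coro_thm3_1}; it is a clean corollary of Theorem \ref{thm3} via the $d_i \mapsto a_i$ reduction.
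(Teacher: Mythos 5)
Your proposal is correct and matches the paper's own proof exactly: the paper states that Corollaries \ref{thm5}--\ref{coro6_1} "follow directly from Theorems \ref{thm3}--\ref{coro4_1} by setting $d_i=a_i$ for all job $i$," which is precisely the $d_i \mapsto a_i$ reduction you carry out. Your added observation that due times affect neither the dynamics nor the admissible decisions of policies in $\Pi$ makes the substitution airtight, but it is the same argument.
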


If the job sizes satisfy certain conditions, Corollary \ref{thm5} can be generalized to all delay metrics in {$\mathcal{D}_{\text{sym}}\cup\mathcal{D}_{\text{Sch-2}}$}.
\begin{corollary}\label{coro5_1}
If $k_1\leq k_2 \leq \ldots\leq k_n$ and \emph{$D_{\max}$} is replaced by any \emph{$f\in\mathcal{D}_{\text{sym}}\cup\mathcal{D}_{\text{Sch-2}}$}, Corollary  \ref{thm5} still holds.
\end{corollary}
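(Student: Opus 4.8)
The plan is to reduce this corollary to Corollary~\ref{thm5} exactly the way Theorem~\ref{coro_thm3_1} reduces to Theorem~\ref{thm3}. The starting point is the stochastic ordering $[D_{\max}(\bm{V}(\text{FCFS-LPR}))|\mathcal{I}] \leq_{\text{st}} [D_{\max}(\bm{C}(\pi))|\mathcal{I}]$ from Corollary~\ref{thm5}. I would like to upgrade the scalar-valued statement ``$D_{\max}$'' to a statement about the whole delay vector, so that an arbitrary increasing/Schur-convex $f$ can be applied. Concretely, the first step is to establish a stronger sample-path claim behind Corollary~\ref{thm5}: that the FCFS-LPR policy produces a vector of job completion times $\bm{V}(\text{FCFS-LPR})$ whose delay vector, after sorting, is weakly majorized from above (in the $\prec^{\text{w}}$ sense) by the sorted delay vector of any $\pi\in\Pi$, on a suitably coupled probability space. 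This is the natural vector-level invariant that a FCFS-type discipline preserves, and it is presumably what the Appendix~\ref{sec_proofmain} sample-path machinery actually proves; the scalar $D_{\max}$ bound in Corollary~\ref{thm5} is just the first coordinate of that majorization.

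Given such a vector ordering, the argument is routine. Under the hypothesis $k_1\leq k_2\leq\cdots\leq k_n$ the jobs arrive in nondecreasing order of size, so FCFS and FUT coincide up to ties and the indices are arranged so that a weak-majorization comparison on the delay vectors $\bm{D}=\bm{C}-\bm{a}$ (equivalently $\bm{V}-\bm{a}$ for the proposed policy) is well defined and consistent with the definition of $\mathcal{D}_{\text{Sch-2}}$, i.e., $f(\bm{x}+\bm{a})$ Schur convex and increasing in $\bm{x}$. For $f\in\mathcal{D}_{\text{Sch-2}}$: Schur convexity plus monotonicity implies $f$ preserves the weak majorization $\prec^{\text{w}}$ from above (this is the standard fact that an increasing Schur-convex function is monotone with respect to weak majorization — cite \cite{Marshall2011}), so $f(\bm{V}(\text{FCFS-LPR}))\leq f(\bm{C}(\pi))$ pathwise, hence $\leq_{\text{st}}$. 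For $f\in\mathcal{D}_{\text{sym}}$ (symmetric and increasing), I would instead invoke that the same coupled sample path also gives the coordinatewise sorted inequality $\bm{V}_{\uparrow}(\text{FCFS-LPR})\leq \bm{C}_{\uparrow}(\pi)$ — which, note, is what Theorem~\ref{thm1}'s proof already yields for the FUT family and which FCFS also yields once $k_i$ are sorted so that FCFS respects the fewest-unassigned-tasks order — and then symmetry plus monotonicity gives $f(\bm{V})=f(\bm{V}_{\uparrow})\leq f(\bm{C}_{\uparrow}(\pi))=f(\bm{C}(\pi))$. Finally, combine with $\bm{V}\leq\bm{C}$ from \eqref{eq_V_i_small} to get the sandwich form of near delay optimality.

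The main obstacle is the very first step: extracting, from whatever is proved in Appendix~\ref{sec_proofmain} for Corollary~\ref{thm5}, the appropriate \emph{vector}-level sample-path comparison (sorted coordinatewise inequality and/or $\prec^{\text{w}}$ majorization of the delay vectors) rather than just the scalar $D_{\max}$ bound, and checking that the condition $k_1\leq\cdots\leq k_n$ is exactly what is needed to make FCFS-LPR obey that vector invariant against an arbitrary (even anticipative) $\pi\in\Pi$. Once the vector comparison is in hand, the passage to general $f\in\mathcal{D}_{\text{sym}}\cup\mathcal{D}_{\text{Sch-2}}$ is purely a matter of invoking the standard order-preservation properties of symmetric increasing and increasing Schur-convex functions, with no new probabilistic content. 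I expect the write-up to be short, mirroring the proof of Theorem~\ref{coro_thm3_1} almost verbatim with $d_i$ replaced by $a_i$ and $\mathcal{D}_{\text{Sch-1}}$ replaced by $\mathcal{D}_{\text{Sch-2}}$.
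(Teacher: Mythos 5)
Your proposal lands in the right place, and your final sentence --- ``mirror the proof of Theorem~\ref{coro_thm3_1} almost verbatim with $d_i$ replaced by $a_i$ and $\mathcal{D}_{\text{Sch-1}}$ replaced by $\mathcal{D}_{\text{Sch-2}}$'' --- is exactly what the paper does. The paper's own proof is literally one sentence: Corollaries~\ref{thm5}--\ref{coro6_1} follow from Theorems~\ref{thm3}--\ref{coro4_1} by setting $d_i=a_i$. Once $d_i=a_i$ the hypothesis $d_1\leq\cdots\leq d_n$ of Theorem~\ref{coro_thm3_1} holds automatically (arrivals are time-ordered), the extra hypothesis $k_1\leq\cdots\leq k_n$ is exactly what you assume, EDD becomes FCFS, $\mathcal{D}_{\text{Sch-1}}$ becomes $\mathcal{D}_{\text{Sch-2}}$, and $L_{\max}$ becomes $D_{\max}$.

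Two small points of friction in your write-up, neither fatal. First, the framing in your opening paragraph --- trying to ``extract'' the vector-level comparison from the already-proved scalar $D_{\max}$ bound in Corollary~\ref{thm5} --- is backwards. The scalar bound does not imply the vector bound; rather, the vector-level sample-path comparison is established directly in the Appendix (Propositions~\ref{coro2_0} and~\ref{coro2}) and both the scalar and vector corollaries descend from it. Since Theorem~\ref{coro_thm3_1} is already stated and proved via that machinery, you never need to reopen the sample-path argument; you only need to specialize. Second, the majorization direction is off: the paper establishes $\bm{v}(P)-\bm{d}\prec_{\text{w}}\bm{c}(\pi)-\bm{d}$, weak majorization \emph{from below} (sums of largest components), which is the order preserved by increasing Schur-convex functions (Theorem~3.A.8 of \cite{Marshall2011}). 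You wrote ``$\prec^{\text{w}}$'' (weak majorization from above, sums of smallest components), which is the order preserved by \emph{decreasing} Schur-convex functions and is not what is needed here. With that symbol corrected and with the reduction carried out as a specialization of Theorem~\ref{coro_thm3_1} rather than an attempted upgrade of Corollary~\ref{thm5}, your proposal coincides with the paper's proof.
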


When the task service times are NWU, we propose a policy called \textbf{First-Come, First-Served with Replication (FCFS-R)}. This policy is similar with the FUT-R policy, except that in the FCFS-R policy, all servers are allocated to process $m$ replicated copies of a task from the job with the earliest arrival time. The FCFS-R policy can be obtained from Algorithm \ref{alg2} by revising Step 8 as $j :=\arg\min\{a_i: i\in Q\}$. The delay performance of the FCFS-R policy is provided as follows.

\begin{corollary}\label{thm6}
If (i) the task service times are NWU, independent across the servers, i.i.d.~across the tasks assigned to the same server,  and (ii) $\bm{O}=\bm{0}$, then for all $\pi\in\Pi$, and $\mathcal{I}$
\emph{
\begin{align}\label{eq_delaygap3_thm6}
[D_{\max}(\bm{C}(\text{FCFS-R}))|\mathcal{I}] \leq_{\text{st}} [D_{\max}(\bm{C}(\pi))|\mathcal{I}].
\end{align}}
\end{corollary}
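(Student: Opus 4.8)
The plan is to obtain this corollary as an immediate specialization of Theorem~\ref{thm4}, exploiting two elementary observations: first, that the maximum delay $D_{\max}$ is nothing but the maximum lateness $L_{\max}$ in the instance where every due time is set equal to the corresponding arrival time, i.e.\ $d_i=a_i$; and second, that under $d_i=a_i$ the \textbf{EDD} discipline ("serve the job with the smallest due time") becomes the \textbf{FCFS} discipline ("serve the job with the smallest arrival time"), so that policy \textbf{EDD-R} coincides verbatim with policy \textbf{FCFS-R} (the replication-on-all-$m$-servers and zero-overhead cancellation rules are identical in both, and ties among jobs with equal arrival time, which then also have equal due time, are broken arbitrarily in both). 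Note that I would invoke Theorem~\ref{thm4} rather than its generalization Theorem~\ref{coro4_1}: Theorem~\ref{thm4} needs only an ordering of the due times, whereas here no ordering of the job sizes $k_i$ is available, and none is needed because the target metric is exactly $D_{\max}$.

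Concretely, I would fix an arbitrary job-parameter set $\mathcal{I}=\{n,(a_i,k_i,d_i)_{i=1}^n\}$ and an arbitrary $\pi\in\Pi$, and pass to the modified instance $\mathcal{I}'=\{n,(a_i,k_i,a_i)_{i=1}^n\}$ obtained by resetting each due time to its arrival time. Because $0=a_1\leq a_2\leq\cdots\leq a_n$ by the assumption on the arrival process, the instance $\mathcal{I}'$ automatically satisfies hypothesis~(i) of Theorem~\ref{thm4}; hypotheses~(ii) (NWU, independent across servers, i.i.d.\ across the tasks assigned to the same server) and~(iii) ($\bm{O}=\bm{0}$) are precisely the hypotheses of the corollary. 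Since neither the \textbf{FCFS-R} policy (Algorithm~\ref{alg2} with Step~8 reading $j:=\arg\min\{a_i:i\in Q\}$) nor the metric $D_{\max}(\bm{C}(\cdot))=\max_i\big(C_i(\cdot)-a_i\big)$ makes any reference to the due times, running \textbf{FCFS-R} on $\mathcal{I}$ produces the same distribution of $\bm{C}$ as running \textbf{EDD-R} on $\mathcal{I}'$; and for the benchmark one can mimic $\pi$ on $\mathcal{I}'$ by hard-coding the (deterministic) constants $d_i$ of $\mathcal{I}$, yielding a policy $\pi'\in\Pi$ whose $\bm{C}$-distribution equals that of $\pi$ on $\mathcal{I}$. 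Finally, on the instance $\mathcal{I}'$ one has, for every policy, $L_{\max}(\bm{C}(\cdot))=\max_i\big(C_i(\cdot)-d_i\big)=\max_i\big(C_i(\cdot)-a_i\big)=D_{\max}(\bm{C}(\cdot))$. Chaining these identities with conclusion~\eqref{eq_delaygap3_thm4} of Theorem~\ref{thm4} applied to $\mathcal{I}'$ and $\pi'$ delivers $[D_{\max}(\bm{C}(\text{FCFS-R}))|\mathcal{I}]\leq_{\text{st}}[D_{\max}(\bm{C}(\pi))|\mathcal{I}]$, which is exactly \eqref{eq_delaygap3_thm6}.

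I expect there to be no genuine obstacle: the entire substantive work—the sample-path coupling argument behind the stochastic ordering—has already been carried out in the proof of Theorem~\ref{thm4}, and what remains is pure bookkeeping. The only step deserving a sentence of care is the reduction to $d_i=a_i$, namely the verification that discarding the true due times is harmless for both the proposed policy and the benchmark; this holds because the system dynamics, the policy space $\Pi$, and the objective $D_{\max}$ all depend on the instance only through the arrival times, batch sizes, and (random) service times, so the family of achievable $\bm{C}$-distributions is unchanged when $\mathcal{I}$ is replaced by $\mathcal{I}'$.
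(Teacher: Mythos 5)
Your proposal is correct and is essentially the paper's own argument: the paper derives Corollary~\ref{thm6} by setting $d_i=a_i$ in Theorem~\ref{thm4}, noting that the ordering hypothesis $d_1\leq\cdots\leq d_n$ is then automatic from $a_1\leq\cdots\leq a_n$, that EDD-R becomes FCFS-R, and that $L_{\max}$ becomes $D_{\max}$. You simply spell out the bookkeeping (the passage to $\mathcal{I}'$, the harmlessness of discarding due times for the dynamics, the metric, and the benchmark) in more detail than the paper's one-line remark.
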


\begin{corollary}\label{coro6_1}
If $k_1\leq k_2\leq\ldots\leq k_n$ and \emph{$D_{\max}$} is replaced by any \emph{$f\in\mathcal{D}_{\text{sym}}\cup\mathcal{D}_{\text{Sch-2}}$}, Corollary  \ref{thm6} still hold.
\end{corollary}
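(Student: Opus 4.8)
The plan is to deduce Corollary~\ref{coro6_1} from Corollary~\ref{coro4_1} by specializing the due dates, exactly as Corollary~\ref{thm6} is deduced from Theorem~\ref{thm4}; no new sample-path argument is required. The starting observation is that the FCFS-R policy coincides with the EDD-R policy run on the modified job instance $\mathcal{I}'=\{n,(a_i,k_i,a_i)_{i=1}^n\}$ in which each due date is reset to the corresponding arrival time, $d_i:=a_i$. Indeed, on $\mathcal{I}'$ the EDD selection rule ``serve the job with the smallest $d_i$'' becomes ``serve the job with the smallest $a_i$'', which is exactly the rule obtained from Step~8 of Algorithm~\ref{alg2} for FCFS-R; the remaining ingredients of the Replication discipline --- replicate every assigned task on all $m$ servers, cancel the other $m-1$ copies at the first completion, $\bm{O}=\bm{0}$, work-conservation --- are identical in the two policies. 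Hence FCFS-R and EDD-R-on-$\mathcal{I}'$ produce the same completion vector $\bm{C}$ on every realization of the service times, under any common tie-breaking convention among jobs with equal arrival times.

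First I would verify that the hypotheses of Corollary~\ref{coro4_1} hold on $\mathcal{I}'$: the system model already fixes $0=a_1\le a_2\le\cdots\le a_n$, so $a_1\le a_2\le\cdots\le a_n$ plays the role of condition~(i) of Theorem~\ref{thm4}; conditions~(ii)--(iii) (NWU service times, independence and i.i.d.\ structure, $\bm{O}=\bm{0}$) are untouched by the substitution; and the extra hypothesis $k_1\le k_2\le\cdots\le k_n$ of Corollary~\ref{coro6_1} is precisely the job-size ordering demanded by Corollary~\ref{coro4_1}. Next I would translate the objective: with $d_i=a_i$ one has $L_i=C_i-d_i=C_i-a_i=D_i$, so $L_{\max}(\bm{C})=D_{\max}(\bm{C})$; more generally, $f\in\mathcal{D}_{\text{Sch-1}}$ means $f(\bm{x}+\bm{d})$ is Schur convex and increasing in $\bm{x}$, and for $\bm{d}=\bm{a}$ this is exactly the defining property of $\mathcal{D}_{\text{Sch-2}}$, while $\mathcal{D}_{\text{sym}}$ does not depend on $\bm{d}$ at all. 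Thus the admissible metric class $\mathcal{D}_{\text{sym}}\cup\mathcal{D}_{\text{Sch-1}}$, interpreted with $\bm{d}=\bm{a}$, equals $\mathcal{D}_{\text{sym}}\cup\mathcal{D}_{\text{Sch-2}}$.

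Applying Corollary~\ref{coro4_1} to the instance $\mathcal{I}'$ then gives, for every $f\in\mathcal{D}_{\text{sym}}\cup\mathcal{D}_{\text{Sch-2}}$ and every $\pi\in\Pi$, the stochastic ordering $[f(\bm{C}(\text{EDD-R}))|\mathcal{I}']\le_{\text{st}}[f(\bm{C}(\pi))|\mathcal{I}']$. Resetting $\bm{d}$ changes neither the admissible policy class $\Pi$, nor the service-time process, nor --- as shown above --- the trajectory of EDD-R, which here is FCFS-R; so this inequality is precisely \eqref{eq_delaygap3_thm6} with $D_{\max}$ replaced by $f$, which is the claim. I do not expect a genuine obstacle inside this corollary: the substantive content --- the coupling and work-efficiency ordering showing that, for NWU service times with $\bm{O}=\bm{0}$, replicating the front-of-queue job on all servers keeps FCFS-R (resp.\ EDD-R) sample-path ahead of an arbitrary $\pi\in\Pi$ --- has already been carried out in the proofs of Theorem~\ref{thm4} and Corollary~\ref{coro4_1}. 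The only items needing (routine) care are the two reductions above: the exact policy identification FCFS-R $\equiv$ EDD-R-on-$\mathcal{I}'$, tie-breaking included, and the set identity obtained by reading $\mathcal{D}_{\text{Sch-1}}$ with $\bm{d}=\bm{a}$. Should a self-contained proof be desired, one would re-run the EDD-R coupling verbatim with ``earliest due date'' replaced throughout by ``earliest arrival time''.
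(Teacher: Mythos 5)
Your proposal is correct and follows essentially the same route as the paper, which states that Corollaries \ref{thm5}--\ref{coro6_1} follow directly from Theorems \ref{thm3}--\ref{coro4_1} by setting $d_i=a_i$ for all $i$. You have merely filled in the routine verifications the paper leaves implicit: that FCFS-R coincides with EDD-R on the modified instance, that condition (i) of Theorem~\ref{thm4} becomes the always-true ordering $a_1\le\cdots\le a_n$, and that $\mathcal{D}_{\text{Sch-1}}$ read with $\bm{d}=\bm{a}$ is $\mathcal{D}_{\text{Sch-2}}$.
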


Corollaries \ref{thm5}-\ref{coro6_1} follow directly from Theorems \ref{thm3}-\ref{coro4_1} by setting $d_i=a_i$ for all job $i$. Nonetheless, due to the importance of the maximum delay metric $D_{\max}$ and the FCFS queueing discipline, Corollaries \ref{thm5}-\ref{coro6_1} are of independent interests.

We note that it is difficult to obtain an additive sub-optimality delay gap for minimizing the maximum lateness $L_{\max}(\bm{C}(\pi))$ or maximum delay $D_{\max}(\bm{C}(\pi))$ that remains constant for any number of jobs $n$. This is because the maximum lateness $L_{\max}(\bm{C}(\pi))$ and maximum delay $D_{\max}(\bm{C}(\pi))$ will likely grow to infinity as the number of jobs $n$ increases, due to the maximum operator over all jobs. Hence, unlike the  average delay $D_{\text{avg}}(\bm{C}(\pi))$, the maximum lateness $L_{\max}(\bm{C}(\pi))$ and maximum delay $D_{\max}(\bm{C}(\pi))$ are unstable delay metrics as $n\rightarrow\infty$. In our future work, we will consider  stable delay metrics and try to establish tight additive  gaps from the optimal delay.

%
%
%
%





\section{Replications in Distributed Queueing Systems}\label{sec_distributed}
In this section, we propose scheduling policies for replications in distributed queueing systems with data-locality constraints,\footnote{Note that if there is no data-locality constraint, each task can be assigned on all servers. In this case, a distributed queueing system can be equivalently viewed as a centralized queueing system and all results in Section \ref{sec_analysis} apply directly.} and prove that these policies are near delay-optimal for minimizing several classes of delay metrics. To the extent of our knowledge, these are the first results on delay-optimal scheduling of replications in distributed queueing systems with data-locality constraints.




\subsection{An Equivalent Distributed Queueing Model}

There are two types of data locality constraints: per-task constraints and per-job constraints \cite{Sparrow:2013}. In \textbf{per-task  data locality constraints}, the tasks of one job may have a different group of servers on which it can run; while in \textbf{per-job data locality constraints}, all tasks of one job must be executed on a predetermined group of servers. Per-task constraints are more general than per-job constraints, and are also more difficult to handle. Both types of constraints play an important role in cloud computing  \cite{taskplacement2011}. We will consider both types of data locality constraints in our study.

For the convenience of analysis, we consider a hierarchical distributed queueing model depicted in Fig. \ref{fig1model_distri_eq}. In this model, there are two levels of job queues: the job queues at the schedulers, and the sub-job queues at the server groups. At the scheduler side, each incoming job is split into $g$ sub-jobs, where the $h$-th sub-job consists of the tasks to be executed by server group $h$ and $h=1,\ldots,g$. Each sub-job is routed its corresponding server group, and stored in a local queue. Then, a local scheduler assigns tasks to the servers within the group. 
In this hierarchical distributed queueing model, each local queue for a group of servers can be considered as a centralized queueing system in Fig. \ref{fig1model_central}. 
It is important to note that this hierarchical distributed queueing model is equivalent to the original distributed queueing model in Fig. \ref{fig1model_distri}. In particular, according to the discussions in Section \ref{sec:model}, each decision of the local scheduler in the hierarchical distributed queueing model can be equivalently implemented in the original distributed queueing model, and vice versa.


Let $k_{ih}$ denote the number of tasks in $h$-th sub-job of job $i$, where $\sum_{h=1}^g k_{ih} = k_i$. If no task of job $i$ should be executed by $h$-th group of the servers, then $k_{ih}=0$. 
The arrival time and due time of each sub-job of job $i$ are $a_i$ and $d_i$, respectively. Define $\mathcal{I}_h = \{n,(a_i,d_i,k_{ih})_{i=1}^n\}$ as the parameters of the sub-jobs of server group $h$.
For sub-job $h$ of job $i$, $V_{ih}$ is the earliest time that all tasks of the sub-job have entered the servers, $C_{ih}$ is the completion time, $D_{ih}= C_{ih}-a_i$ is the delay, and $L_{ih} = C_{ih}-d_i$ is the {lateness} after the due time $d_i$. If $k_{ih}=0$,  we set $V_{ih}=C_{ih}=0$. Then, it holds that $V_{ih}\leq C_{ih}$. In addition, a job is completed when all of its $g$ sub-jobs are completed, i.e.,
\begin{align}\label{eq_sub_job_relation}
C_i = \max_{h=1,\ldots,g}C_{ih}, V_i=\max_{h=1,\ldots,g}V_{ih}, D_i = \max_{h=1,\ldots,g} D_{ih},  L_i = \max_{h=1,\ldots,g}L_{ih},
\end{align}
where $V_i$ and $C_i$  depend only on the sub-jobs with positive sizes $k_{ih}>0$. Define $\bm{V}_h = (V_{1h},\ldots,V_{nh})$ and $\bm{C}_h = (C_{1h},\ldots,C_{nh})$. 
Let  $V_{(i),h}$ and $C_{(i),h}$ denote the $i$-th smallest components of $\bm{V}_h$ and $\bm{C}_h$, respectively. 
All these quantities are functions of the adopted scheduling policy $\pi$. 

Note that under per-job data locality constraints, each job can be only executed by one predetermined  group of the servers. Therefore, for each job $i$ there exists $u(i)\in\{1,\ldots,g\}$ such that $C_{i,u(i)} = C_{i} $, $V_{i,u(i)}  = V_{i} $, and $C_{i,h} = V_{i,h}= 0$ for all other server groups satisfying $h\neq u(i)$.

Next, we will exploit the results in Section \ref{sec_analysis} to study the delay performance of replications in distributed queueing systems.


\begin{figure}
\centering
\includegraphics[width=0.9\textwidth]{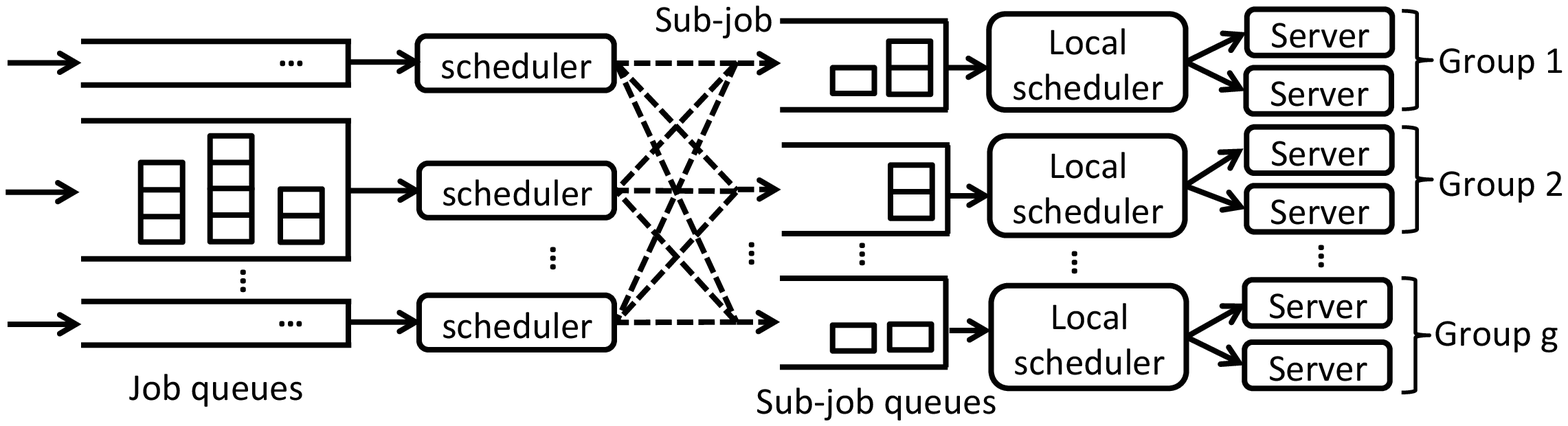}  
\caption{A hierarchical distributed queueing model, which is equivalent to the distributed queueing model in Fig. \ref{fig1model_distri}. } \label{fig1model_distri_eq} 
\end{figure} 

\subsection{Per-Task Data Locality Constraints}
We first consider delay minimization with per-task data locality constraints.
For minimizing the maximum lateness $L_{\max}(\cdot)$, we propose one policy called \textbf{Earliest Due Date first with Group-based Replication (EDD-GR)}: In the hierarchical distributed queueing model, once a job arrives at a scheduler, it is split into $g$ sub-jobs which are immediately routed to the $g$ local schedulers. If the task service times are NBU for one group of servers, the local scheduler will make decisions by following policy EDD-LPR; if the task service times are NWU for one group of servers, the local scheduler will make decisions by following policy EDD-R; if the task service times are exponential for one group of servers, the local scheduler can either choose EDD-LPR or EDD-R. According to the discussions in Section \ref{sec:model}, policy EDD-LPR (policy EDD-R) can be implemented in the original distributed queueing model  as follows: First, replicate each task to all the local task queues in the corresponding group of  servers (see Fig. \ref{fig1model_distri}), and then let the servers to communicate with each other to determine the replication and cancellation operations by following the LPR discipline (R discipline); each task assigned to the servers has earliest due date among all the unassigned tasks whenever the queue is not empty (there exist unassigned tasks in the queue). 
Hence, \emph{policy EDD-GR can be implemented distributedly}. The delay performance of policy EDD-GR is provided as follows. 
\begin{theorem}\label{thm_dis4}
If (i) the task service times are independent across the servers and i.i.d.~across the tasks assigned to the same server, (ii) each server group $h$ and its sub-job parameters $\mathcal{I}_h$ satisfy the conditions of Theorem \ref{thm3}, Theorem \ref{thm4}, or Theorem \ref{thm4_exp}, and (iii) the system is subject to per-task data locality constraints, then for all \emph{$\pi\in\Pi$} and $\mathcal{I}$
\emph{
\begin{align}\label{eq_delaygap3_thm4_dis}
[L_{\max}(\bm{V}(\text{EDD-GR}))|\mathcal{I}] \leq_{\text{st}} [L_{\max}(\bm{C}(\pi))|\mathcal{I}].
\end{align}}
\end{theorem}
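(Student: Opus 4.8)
The plan is to reduce Theorem~\ref{thm_dis4} to the already-established single-group results by exploiting the equivalence between the original distributed queueing model in Fig.~\ref{fig1model_distri} and the hierarchical model in Fig.~\ref{fig1model_distri_eq}. The key structural fact is \eqref{eq_sub_job_relation}: under any policy $\pi$, $V_i = \max_{h} V_{ih}$ and $C_i = \max_{h} C_{ih}$, so the job-level lateness is $L_i = \max_h (C_{ih} - d_i)$, and hence $L_{\max}(\bm{C}(\pi)) = \max_{i} \max_{h} (C_{ih} - d_i) = \max_{h} L_{\max}(\bm{C}_h(\pi))$, i.e., the overall maximum lateness is the maximum over server groups of the per-group maximum lateness. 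The same identity holds with $\bm{V}$ in place of $\bm{C}$. This decouples the objective across groups, which is exactly what makes the per-task case tractable (per-task locality means the $g$ sub-jobs of a job can be routed to independently-operating local schedulers with no coupling beyond the common arrival/due times $a_i, d_i$, which are fixed data).

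First I would fix an arbitrary policy $\pi\in\Pi$ and an arbitrary job-parameter realization $\mathcal{I}$. Because the local task queues in each group operate on independent service-time processes and EDD-GR assigns to group $h$ exactly the local policy (EDD-LPR, EDD-R, or either, as appropriate) prescribed by Theorem~\ref{thm3}, Theorem~\ref{thm4}, or Theorem~\ref{thm4_exp}, I can apply the relevant single-group theorem to group $h$ with its sub-job parameters $\mathcal{I}_h = \{n, (a_i, d_i, k_{ih})_{i=1}^n\}$. Note that $\pi$ restricted to the local queue of group $h$ is itself a causal non-preemptive policy for that centralized subsystem, so it lies in the relevant policy space. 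Each applicable theorem gives
\begin{align}\label{eq_pertask_group}
[L_{\max}(\bm{V}_h(\text{EDD-GR}))\,|\,\mathcal{I}_h] \leq_{\text{st}} [L_{\max}(\bm{C}_h(\pi))\,|\,\mathcal{I}_h],
\end{align}
where for the NWU/non-exponential case $\bm{V}_h$ can be replaced by $\bm{C}_h$ on the left (which only strengthens the statement, since $V_{ih}\leq C_{ih}$ and $L_{\max}$ is increasing). The second step is to combine these $g$ per-group stochastic dominances into a single statement about $L_{\max}(\bm{V}(\text{EDD-GR})) = \max_h L_{\max}(\bm{V}_h(\text{EDD-GR}))$ versus $L_{\max}(\bm{C}(\pi)) = \max_h L_{\max}(\bm{C}_h(\pi))$.

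The main obstacle is precisely that last combination step: stochastic ordering of scalar maxima of independent collections does \emph{not} follow from termwise stochastic ordering of \emph{arbitrarily} coupled families — one needs either independence across groups, or a coupling argument. The clean route is a coupling: the service-time and cancellation processes in distinct groups are mutually independent by assumption, and EDD-GR makes each group's decisions using only that group's local history; so the vector $(L_{\max}(\bm{V}_h(\text{EDD-GR})))_{h=1}^g$ has independent coordinates. For the comparison policy $\pi$ the per-group quantities $(L_{\max}(\bm{C}_h(\pi)))_{h=1}^g$ may be correlated, but $\pi$ cannot beat the per-group optimum in \eqref{eq_pertask_group} even conditioned on the other groups' sample paths — because per-task locality forbids group $h$ from using servers outside group $h$, so conditioning on the rest of the system does not enlarge group $h$'s feasible action set. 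Hence I would argue: for every threshold $t$,
\begin{align}\label{eq_pertask_combine}
\Pr[L_{\max}(\bm{V}(\text{EDD-GR}))>t\,|\,\mathcal{I}] = 1 - \prod_{h=1}^g \Pr[L_{\max}(\bm{V}_h(\text{EDD-GR}))\leq t\,|\,\mathcal{I}_h] \leq \Pr[L_{\max}(\bm{C}(\pi))>t\,|\,\mathcal{I}],
\end{align}
using \eqref{eq_pertask_group} termwise for the product and, for the right inequality, the fact that $\{L_{\max}(\bm{C}(\pi))\leq t\} = \bigcap_h \{L_{\max}(\bm{C}_h(\pi))\leq t\}$ together with a conditioning/induction over groups that invokes the per-group optimality of EDD-GR against $\pi$'s group-$h$ restriction given the other groups. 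Since this holds for all $t$, the scalar stochastic order \eqref{eq_delaygap3_thm4_dis} follows, and by the equivalence of conditions 1--3 in the definition of (near) delay optimality we are done. The remaining care is bookkeeping: handling groups with $k_{ih}=0$ (where $V_{ih}=C_{ih}=0$ contributes $-d_i$ to the max, harmless), and verifying that EDD-GR as implemented in the original model of Fig.~\ref{fig1model_distri} really does realize the hierarchical local policies — which is exactly the equivalence already argued in Section~\ref{sec:model} and restated before the theorem.
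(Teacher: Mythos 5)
Your decomposition over server groups and the use of \eqref{eq_sub_job_relation} match the paper, and you correctly flag the combination-over-groups step as the crux. But the argument you offer for that step --- the product bound $\Pr[\bigcap_h\{L_{\max}(\bm{C}_h(\pi))\leq t\}\,|\,\mathcal{I}]\leq\prod_h\Pr[L_{\max}(\bm{V}_h(\text{EDD-GR}))\leq t\,|\,\mathcal{I}_h]$ via a ``conditioning/induction over groups'' --- does not follow from the conditional per-group dominance you invoke. Writing $A_h=L_{\max}(\bm{V}_h(\text{EDD-GR}))$ and $B_h=L_{\max}(\bm{C}_h(\pi))$: even if $[A_h\,|\,\omega_{-h}]\leq_{\text{st}}[B_h\,|\,\omega_{-h}]$ for a.e.\ realization $\omega_{-h}$ of the other groups' service processes, the cross-group dependence of $(B_1,\dots,B_g)$ is unconstrained. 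Concretely, take $g=2$, $\omega_1,\omega_2$ i.i.d.\ Bernoulli$(1/2)$, $A_h=\omega_h$, $B_1=B_2=\mathbf{1}\{\omega_1\neq\omega_2\}$; then $[B_h\,|\,\omega_{-h}]=_{\text{st}}A_h$ for every $\omega_{-h}$, yet
\begin{align}
\Pr[\max(B_1,B_2)\leq 0]=\tfrac{1}{2}>\tfrac{1}{4}=\Pr[A_1\leq 0]\Pr[A_2\leq 0],\nonumber
\end{align}
so the product bound fails. The $\sigma$-field you condition on ($\omega_{-h}$) is also not the one a sequential result like Theorem 6.B.3 of \cite{StochasticOrderBook} would need (namely $B_1,\dots,B_{h-1}$), so no straightforward induction rescues the step.

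What does close the argument is the \emph{joint} sample-path coupling across groups that you mention (``the clean route is a coupling'') but then abandon. Because the service and cancellation processes of distinct groups are mutually independent, and the causal coupling of Lemma \ref{lem_coupling} (or \ref{lem_coupling_2}, \ref{lem_coupling_3}) for group $h$ touches only group $h$'s service process, these per-group couplings can be run simultaneously for all groups, yielding common-probability-space versions $\text{EDD-GR}_1,\pi_1$ for which the group-$h$ work-efficiency ordering, and hence $L_{\max}(\bm{V}_h(\text{EDD-GR}_1))\leq L_{\max}(\bm{C}_h(\pi_1))$ by Proposition \ref{lem2} or \ref{lem2_0}, holds for \emph{all} $h$ with probability one; taking $\max_h$ then preserves the pathwise inequality and yields \eqref{eq_delaygap3_thm4_dis}. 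This is exactly what the paper does explicitly in the proof of the sibling Theorem \ref{thm2_dist}. (The paper's own write-up of Theorem \ref{thm_dis4} instead cites Theorem 6.B.16(b) of \cite{StochasticOrderBook}, which also presupposes independence of the $B_h$'s under $\pi$ and is therefore not self-contained; the joint-coupling reading is the one that actually carries the load.) Replace the product-bound induction with that coupling.
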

\begin{proof}
See Appendix \ref{app_dis_EDD}.
\end{proof}

For minimizing the maximum delay $D_{\max}(\cdot)$, we propose one policy called  \textbf{First-Come, First-Served with Group-based Replication (FCFS-GR)}: 
If the task service times are NBU for one group of servers, the local scheduler will make decisions by following policy FCFS-LPR; if the task service times are NWU for one group of servers, the local scheduler will make decisions by following policy FCFS-R; if the task service times are exponential for one group of servers, the local scheduler can either choose FCFS-LPR or FCFS-R. Similar with policy EDD-GR, policy FCFS-GR can also be implemented distributedly in the original distributed queueing systems.
By choosing $d_i=a_i$ for all $i$, it follows from Theorem \ref{thm_dis4} that

\begin{corollary}\label{thm_dis5}
If (i) the task service times are independent across the servers and i.i.d.~across the tasks assigned to the same server, (ii) each server group $h$ and its sub-job parameters $\mathcal{I}_h$ satisfy the conditions of Corollary \ref{thm5} or Corollary \ref{thm6}, and (iii) the system is subject to per-task data locality constraints, then for all \emph{$\pi\in\Pi$} and $\mathcal{I}$
\emph{
\begin{align}\label{eq_delaygap3_thm5_dis}
[D_{\max}(\bm{V}(\text{FCFS-GR}))|\mathcal{I}] \leq_{\text{st}} [D_{\max}(\bm{C}(\pi))|\mathcal{I}].
\end{align}}
\end{corollary}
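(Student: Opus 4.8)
The plan is to derive Corollary \ref{thm_dis5} from Theorem \ref{thm_dis4} exactly as the text indicates, by specializing the due dates. First I would observe that policy FCFS-GR is \emph{defined} as policy EDD-GR with the ordering criterion ``earliest due date'' replaced by ``earliest arrival time.'' Hence, if we run EDD-GR on a modified job instance $\mathcal{I}' = \{n,(a_i,k_i,d_i')_{i=1}^n\}$ in which every due date is reset to $d_i' = a_i$, the EDD tie-breaking/ordering rule becomes precisely the FCFS rule, and the induced local disciplines (EDD-LPR becomes FCFS-LPR, EDD-R becomes FCFS-R) coincide with those used by FCFS-GR on the original instance. So FCFS-GR on $\mathcal{I}$ and EDD-GR on $\mathcal{I}'$ generate the same sample paths of task assignments, replications, and cancellations, and therefore the same completion times $\bm{C}$ and the same vectors $\bm{V}$.

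Next I would check that the hypotheses transfer. Condition (i) of Corollary \ref{thm_dis5} is identical to condition (i) of Theorem \ref{thm_dis4}. Condition (iii) (per-task data locality) is also identical. For condition (ii): by the observation following Corollaries \ref{thm5}--\ref{coro6_1} in the excerpt (``Corollaries \ref{thm5}-\ref{coro6_1} follow directly from Theorems \ref{thm3}-\ref{coro4_1} by setting $d_i=a_i$''), a server group $h$ whose sub-job parameters $\mathcal{I}_h$ satisfy the conditions of Corollary \ref{thm5} or Corollary \ref{thm6} is exactly a group whose sub-job parameters $\mathcal{I}_h'$ (with $d_i'=a_i$) satisfy the conditions of Theorem \ref{thm3} or Theorem \ref{thm4}; in particular the monotonicity conditions like $a_1\le\cdots\le a_n$ that appear in the FCFS corollaries play the role of the $d_1\le\cdots\le d_n$ conditions in the EDD theorems once $d_i'=a_i$. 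Thus the instance $\mathcal{I}'$ meets all the hypotheses of Theorem \ref{thm_dis4}.

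Then I would apply Theorem \ref{thm_dis4} to $\mathcal{I}'$: for all $\pi\in\Pi$,
\begin{align}
[L_{\max}(\bm{V}(\text{EDD-GR}))|\mathcal{I}'] \leq_{\text{st}} [L_{\max}(\bm{C}(\pi))|\mathcal{I}'].
\end{align}
Finally I would rewrite both sides under the substitution $d_i'=a_i$: since $L_{ih} = C_{ih}-d_i'$ becomes $C_{ih}-a_i = D_{ih}$, we get $L_i = D_i$ for every job and hence $L_{\max}(\cdot)$ evaluated on this instance equals $D_{\max}(\cdot)$; and since EDD-GR on $\mathcal{I}'$ equals FCFS-GR on $\mathcal{I}$ pathwise (so $\bm V(\text{EDD-GR})$ under $\mathcal{I}'$ equals $\bm V(\text{FCFS-GR})$ under $\mathcal{I}$, and likewise for any competitor $\pi$, which is unaffected by how we label due dates), the inequality becomes exactly \eqref{eq_delaygap3_thm5_dis}. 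This also yields, via \eqref{eq_V_i_small}, the sandwiched near-optimality form, and the generalization to $f\in\mathcal{D}_{\text{sym}}\cup\mathcal{D}_{\text{Sch-2}}$ follows by the same substitution applied to the $\mathcal{D}_{\text{sym}}\cup\mathcal{D}_{\text{Sch-1}}$ version of Theorem \ref{thm_dis4} (since $f(\bm x+\bm d')=f(\bm x+\bm a)$ turns a $\mathcal{D}_{\text{Sch-1}}$ statement into a $\mathcal{D}_{\text{Sch-2}}$ one).

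The only real obstacle is the bookkeeping in the pathwise-equivalence step: I need to argue carefully that resetting due dates changes neither the arrival/size process $\mathcal{I}$ nor the random service times (they are external and independent of the policy and of the labels $d_i$), and that FCFS-GR's behavior genuinely depends on $(a_i,k_i)$ only — not on $d_i$ — so that the relabeling is innocuous for FCFS-GR while simultaneously making EDD-GR mimic it. Once that identification is nailed down, the rest is a one-line substitution. Hence no new machinery beyond Theorem \ref{thm_dis4} is needed, and the proof is ``set $d_i=a_i$.''

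\begin{proof}
Set $d_i = a_i$ for every job $i$. Under this choice of due dates, the EDD ordering rule (serve the unassigned task from the job with the earliest due date) coincides with the FCFS rule (serve the unassigned task from the job with the earliest arrival time), and the local disciplines EDD-LPR, EDD-R reduce to FCFS-LPR, FCFS-R, respectively. Moreover, the conditions of Corollary \ref{thm5} (resp.\ Corollary \ref{thm6}) on $\mathcal{I}_h$ are, by construction, precisely the conditions of Theorem \ref{thm3} (resp.\ Theorem \ref{thm4}) on $\mathcal{I}_h$ with $d_i=a_i$. Therefore, policy FCFS-GR on an instance with per-task data locality constraints satisfying hypotheses (i)--(iii) of Corollary \ref{thm_dis5} is identical, sample path by sample path, to policy EDD-GR on the same instance with $d_i=a_i$, which satisfies hypotheses (i)--(iii) of Theorem \ref{thm_dis4}. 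Since the external job-parameter process and the random task service times do not depend on the labels $d_i$, applying Theorem \ref{thm_dis4} and using $L_{ih}=C_{ih}-d_i=C_{ih}-a_i=D_{ih}$ (so that $L_{\max}$ evaluated with $d_i=a_i$ equals $D_{\max}$) yields, for all $\pi\in\Pi$ and all $\mathcal{I}$,
\begin{align}
[D_{\max}(\bm{V}(\text{FCFS-GR}))|\mathcal{I}] \leq_{\text{st}} [D_{\max}(\bm{C}(\pi))|\mathcal{I}],
\end{align}
which is \eqref{eq_delaygap3_thm5_dis}.
\end{proof}
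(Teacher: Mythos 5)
Your proof of the stated corollary is correct and follows exactly the paper's (one-line) argument: set $d_i=a_i$, observe that EDD-GR specializes pathwise to FCFS-GR and $L_{\max}$ specializes to $D_{\max}$, and invoke Theorem~\ref{thm_dis4}. Your care in checking that the $d_1\le\cdots\le d_n$ condition of Theorem~\ref{thm4} is automatically satisfied under $d_i=a_i$ (since $a_1\le\cdots\le a_n$) is exactly the right hypothesis-transfer check.

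One caution on the closing remark of your preamble: the claimed generalization to all $f\in\mathcal{D}_{\text{sym}}\cup\mathcal{D}_{\text{Sch-2}}$ does \emph{not} follow here, because Theorem~\ref{thm_dis4} has no $\mathcal{D}_{\text{sym}}\cup\mathcal{D}_{\text{Sch-1}}$ counterpart under per-task data locality constraints. The paper makes this point explicitly (``It is difficult for us to generalize Theorem~\ref{thm_dis4} and Corollary~\ref{thm_dis5}\dots under per-task data locality constraints''); the corresponding extension (Theorem~\ref{thm3_dist} and Corollary~\ref{thm4_dist}) requires per-job constraints, which permit the key majorization step across server groups. Since that remark sits outside your formal proof and Corollary~\ref{thm_dis5} claims only the $D_{\max}$ ordering, the proof itself is sound; just be aware the bonus generalization is not available in this setting.
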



\subsection{Per-Job Data Locality Constraints}
It is difficult for us to generalize Theorem \ref{thm_dis4} and Corollary \ref{thm_dis5} and minimize other delay metrics under per-task data locality constraints. However, under per-job data locality constraints such generalizations are possible, which are discussed in the sequel.

For minimizing the delay metrics in $\mathcal{D}_{\text{sym}}$ (including the average delay $D_{\text{avg}}$), we propose one policy called \textbf{Fewest Unassigned Tasks first with Group-based Replication (FUT-GR)}: If the task service times are NBU for one group of servers, the local scheduler will make decisions by following policy FUT-LPR; if the task service times are NWU for one group of servers, the local scheduler will make decisions by following policy FUT-R; if the task service times are exponential for one group of servers, the local scheduler can either choose FUT-LPR or FUT-R. Hence, the priority of a sub-job is determined by the number of unassigned tasks in this sub-job. 
The delay performance of policy FUT-GR is provided as follows.

\begin{theorem}\label{thm2_dist}
If (i) the task service times are independent across the servers and i.i.d.~across the tasks assigned to the same server, (ii) each server group $h$ and its sub-job parameters $\mathcal{I}_h$ satisfy the conditions of Theorem \ref{thm1}, 
Theorem \ref{thm2}, or Theorem \ref{thm2_exp}, and (iii) the system is subject to per-job data locality constraints, then for all \emph{$f\in\mathcal{D}_{\text{sym}}$},  \emph{$\pi\in\Pi$}, and $\mathcal{I}$
\emph{
\begin{align}\label{eq_delaygap2_dist}
[f(\bm{V}(\text{FUT-GR}))|\mathcal{I}] \leq_{\text{st}} [f(\bm{C}(\pi))|\mathcal{I}].
\end{align}}
\end{theorem}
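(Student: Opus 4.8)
The plan is to reduce the distributed problem to the centralized results already established (Theorems \ref{thm1}, \ref{thm2}, \ref{thm2_exp}) by exploiting the hierarchical equivalent model and the special structure of per-job data locality constraints. Under per-job constraints, each job $i$ lives entirely inside one server group $u(i)$, so $V_i = V_{i,u(i)}$ and $C_{i,h}=0$ for $h\ne u(i)$; consequently the sub-job parameters $\mathcal{I}_h$ partition the original job set, and the local queue at group $h$ is exactly a centralized queueing system in the sense of Section \ref{sec_analysis} with its own job stream. The key observation is that policy FUT-GR runs, in each group $h$ independently, the appropriate centralized policy (FUT-LPR, FUT-R, or either, according to the service-time class for that group), so by whichever of Theorems \ref{thm1}, \ref{thm2}, \ref{thm2_exp} applies we obtain, for each fixed $h$ and any competing policy $\pi$,
\begin{align}\label{eq_pergroup}
[\bm{V}_h(\text{FUT-GR})|\mathcal{I}_h] \leq_{\text{st}} [\bm{C}_h(\pi)|\mathcal{I}_h]
\end{align}
— here I would first record that the cited theorems, although stated for a scalar symmetric $f$, in fact carry the underlying vector stochastic-ordering statement on the completion/start-time vectors (this is how their sample-path proofs work); if one prefers to stay with the stated form, one applies the theorem with $f$ ranging over all indicator-type symmetric increasing functions to recover \eqref{eq_pergroup}.

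Next I would lift \eqref{eq_pergroup} across groups. Because the $g$ server groups use physically disjoint servers with mutually independent task service times, and because FUT-GR makes the decisions in each group using only that group's sub-job parameters and its own history, the processes $\{\bm{V}_h(\text{FUT-GR})\}_{h=1}^g$ are mutually independent, and for an arbitrary $\pi$ the processes $\{\bm{C}_h(\pi)\}_{h=1}^g$ need not be independent but each marginal still satisfies \eqref{eq_pergroup}. The standard fact that stochastic ordering of random vectors is preserved under taking independent products on the smaller side, combined with the fact that for \emph{any} coupling of the $\bm{C}_h(\pi)$ the componentwise comparison to independent copies still yields $\leq_{\text{st}}$ for increasing functionals (one couples group by group, replacing $\bm{C}_h(\pi)$ by $\bm{V}_h(\text{FUT-GR})$ one group at a time while holding the others fixed, using conditional independence on the FUT-GR side at each step), gives
\begin{align}\label{eq_joint}
[(\bm{V}_1,\ldots,\bm{V}_g)(\text{FUT-GR})|\mathcal{I}] \leq_{\text{st}} [(\bm{C}_1,\ldots,\bm{C}_g)(\pi)|\mathcal{I}].
\end{align}
Finally, $\bm{V}(\text{FUT-GR})$ is obtained from $(\bm{V}_1,\ldots,\bm{V}_g)$ by the componentwise map $V_i=\max_{h}V_{ih}$ of \eqref{eq_sub_job_relation}, and likewise $\bm{C}(\pi)$ from $(\bm{C}_1,\ldots,\bm{C}_g)$; this map is coordinatewise increasing, so it preserves $\leq_{\text{st}}$. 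Composing with any $f\in\mathcal{D}_{\text{sym}}$, which is increasing, yields $[f(\bm{V}(\text{FUT-GR}))|\mathcal{I}]\leq_{\text{st}}[f(\bm{C}(\pi))|\mathcal{I}]$, which is \eqref{eq_delaygap2_dist}.

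The main obstacle I anticipate is the middle step: being careful about the fact that under a general (possibly anticipative, possibly cross-group-coordinating) policy $\pi$ the per-group completion vectors $\bm{C}_h(\pi)$ are not independent, so \eqref{eq_joint} cannot be concluded by a naive "product of stochastic orders" argument. The honest route is a group-by-group coupling: process the groups in order $h=1,\ldots,g$, at stage $h$ use \eqref{eq_pergroup} (conditioning on everything already fixed for groups $1,\ldots,h-1$ and for $\pi$, which is legitimate because FUT-GR's behavior in group $h$ depends only on $\mathcal{I}_h$ and group $h$'s own randomness, hence is conditionally independent of the rest), and invoke the characterization $\bm{X}\leq_{\text{st}}\bm{Y}\iff \mathbb{E}\phi(\bm{X})\leq\mathbb{E}\phi(\bm{Y})$ for all increasing $\phi$ to push the substitution through the outer expectation. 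One must also double-check that per-job locality is genuinely used — it is, in guaranteeing that the job set decomposes cleanly into the $\mathcal{I}_h$ so that each $\mathcal{I}_h$ is a valid input to the centralized theorems and so that $V_i$ really is the max over sub-jobs of the same underlying job rather than a quantity entangled across jobs. Under per-task locality this decomposition fails (a single job straddles several groups and the theorems' job-size orderings cannot be simultaneously arranged), which is exactly why the paper restricts Theorem \ref{thm2_dist} to the per-job case.
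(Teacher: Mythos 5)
Your overall plan---compare per group, lift across groups, then map sub-jobs to jobs and apply $f$---matches the paper's in spirit, but the per-group claim you isolate as your starting point is a genuine error. You assert that Theorems \ref{thm1}, \ref{thm2}, \ref{thm2_exp} ``carry the underlying vector stochastic-ordering statement'' for the per-group vectors $\bm{V}_h$ and $\bm{C}_h$, but that is not what the sample-path proofs deliver. The couplings behind those theorems establish, with probability one, a componentwise ordering of the \emph{sorted} vectors, $V_{(i),h}\leq C_{(i),h}$ for all $i$; equivalently, $\bm{V}_{h,\uparrow}\leq_{\text{st}}\bm{C}_{h,\uparrow}$. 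This does \emph{not} give $\bm{V}_h\leq_{\text{st}}\bm{C}_h$ for the unsorted vectors, and the unsorted ordering is in fact false: FUT can push a large job's start/completion time past that of a competing policy that serves the large job first. For instance, with one server, $a_1=a_2=a_3=0$, $k_1=k_3=1$, $k_2=5$, deterministic unit service times (which are NBU), and $\pi$ serving job $2$ first, one gets $V_2(\text{FUT})=6>5=C_2(\pi)$. Your suggestion to range $f$ over indicator-type symmetric increasing functions recovers only the sorted ordering, again because symmetry erases the index information. Consequently the final step---the componentwise map $V_i=\max_h V_{ih}$ followed by $f$---does not go through as written, since it requires the unsorted ordering you do not have.

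The repair, and what the paper actually does, is to stay with sorted vectors throughout. Under per-job locality the zero pattern of each $\bm{V}_h$ is fixed by $\mathcal{I}$ and identical across policies, so the global sorted vector $\bm{V}_\uparrow$ is the sorted merge of the per-group nonzero entries. Sorted merging is coordinatewise increasing in the per-group sorted vectors: if $\bm{x}_\uparrow\leq\bm{u}_\uparrow$ and $\bm{y}_\uparrow\leq\bm{v}_\uparrow$ componentwise then $(\bm{x},\bm{y})_\uparrow\leq(\bm{u},\bm{v})_\uparrow$, seen by comparing the rank counts $\#\{i:x_i\leq t\}$ at every threshold $t$. Since $f\in\mathcal{D}_{\text{sym}}$ is symmetric, $f(\bm{V})=f(\bm{V}_\uparrow)$. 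The paper exploits the independence of the service randomness across the disjoint server groups to build a \emph{single} coupled pair $(\text{FUT-GR}_1,\pi_1)$ under which $V_{(i),h}(\text{FUT-GR}_1)\leq C_{(i),h}(\pi_1)$ holds almost surely for all $h$ and $i$ simultaneously; per-job locality then gives $V_{(i)}(\text{FUT-GR}_1)\leq C_{(i)}(\pi_1)$ almost surely, and \eqref{eq_delaygap2_dist} follows at once for any symmetric increasing $f$. Your group-by-group conditional argument is a workable alternative to the single joint coupling, but it must carry the sorted vectors, not the unsorted ones, through each stage.
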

\begin{proof}
See Appendix \ref{app_thm2_dist}.
\end{proof}

Let us further consider the delay metrics in the set $\mathcal{D}_{\text{sym}}\cup\mathcal{D}_{\text{Sch-1}}$, for which we can obtain the following result.

\begin{theorem}\label{thm3_dist}
If (i) the task service times are independent across the servers and i.i.d.~across the tasks assigned to the same server, (ii) each server group $h$ and its sub-job parameters $\mathcal{I}_h$ satisfy the conditions of Theorem \ref{coro_thm3_1}, Theorem \ref{coro4_1}, or Theorem \ref{coro_thm3_1_exp}, and (iii) the system is subject to {per-job} data locality constraints, then for all \emph{$f\in\mathcal{D}_{\text{sym}}\cup\mathcal{D}_{\text{Sch-1}}$},  \emph{$\pi\in\Pi$}, and $\mathcal{I}$
\emph{
\begin{align}\label{eq_delaygap3_dist}
[f(\bm{V}(\text{EDD-GR}))|\mathcal{I}] \leq_{\text{st}} [f(\bm{C}(\pi))|\mathcal{I}].
\end{align}}
\end{theorem}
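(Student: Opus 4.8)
The plan is to reduce Theorem~\ref{thm3_dist} to the single-group results of Theorems~\ref{coro_thm3_1}, \ref{coro4_1}, and \ref{coro_thm3_1_exp} and then reassemble the $g$ groups. Under per-job data locality the server groups never interact: all tasks of job~$i$ reside in one predetermined group $u(i)$, so in the hierarchical model of Fig.~\ref{fig1model_distri_eq} every other sub-job of job~$i$ has size $0$ with $V_{ih}=C_{ih}=0$, hence $V_i=V_{i,u(i)}$, $C_i=C_{i,u(i)}$. Consequently there is a single map $\Psi$ --- ``take the $g$ per-group sub-job completion vectors, rebuild the $n$-vector of job completion times, apply $f$'' --- with $f(\bm{V}(\text{EDD-GR}))=\Psi(\bm{V}_1(\text{EDD-GR}),\ldots,\bm{V}_g(\text{EDD-GR}))$ and $f(\bm{C}(\pi))=\Psi(\bm{C}_1(\pi),\ldots,\bm{C}_g(\pi))$. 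Restricted to group~$h$, policy EDD-GR is exactly EDD-LPR, EDD-R, or either of them according to whether group~$h$ is NBU, NWU, or exponential; by hypothesis each group~$h$ and its sub-job parameters $\mathcal{I}_h$ meet the conditions of the corresponding single-group theorem, so for every $g'\in\mathcal{D}_{\text{sym}}\cup\mathcal{D}_{\text{Sch-1}}$ (taken relative to $\bm{d}_h$) one has $[g'(\bm{V}_h(\text{EDD-GR}))|\mathcal{I}_h]\leq_{\text{st}}[g'(\bm{C}_h(\tilde\pi))|\mathcal{I}_h]$ for any centralized policy $\tilde\pi$ on group~$h$'s subsystem (for the NWU groups one first uses $V_{ih}\le C_{ih}$ to pass from the $\bm{C}_h$-form of Theorem~\ref{coro4_1} to the $\bm{V}_h$-form).

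I would then carry the comparison out on a common probability space, using the natural coupling in which, for every server $l$, the $j$-th task it ever serves receives the $j$-th i.i.d.\ sample of $X_l$ (and of $O_l$), the coupling underlying the single-group proofs. Fix a realization. For any $\pi\in\Pi$, the behaviour of $\pi$ inside group~$h$ along this sample path is a causal non-preemptive feasible schedule for group~$h$'s centralized subsystem --- causality with respect to group~$h$'s own timeline is not violated, because group~$h$'s future service times are independent of whatever history of the other groups $\pi$ may consult. Being ordering statements for the whole class $\mathcal{D}_{\text{sym}}$ (resp.\ $\mathcal{D}_{\text{Sch-1}}$), the single-group results in fact supply a coupling under which, on every sample path and for each group~$h$, the increasingly sorted vectors obey $\bm{V}_{h,\uparrow}(\text{EDD-GR})\le\bm{C}_{h,\uparrow}(\pi)$ ($\mathcal{D}_{\text{sym}}$ case) while the sub-job lateness vectors obey $\bm{V}_h(\text{EDD-GR})-\bm{d}_h\prec_{\text{w}}\bm{C}_h(\pi)-\bm{d}_h$ ($\mathcal{D}_{\text{Sch-1}}$ case). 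Both relations survive concatenation of the $g$ independent groups --- componentwise dominance of each sorted block gives componentwise dominance of the merged sorted vector, and $\prec_{\text{w}}$ of each block gives $\prec_{\text{w}}$ of the concatenation --- so, reassembling via $\Psi$ and using that $f$ is symmetric and increasing in the first case and $f(\cdot+\bm{d})$ is Schur convex and increasing (hence $\prec_{\text{w}}$-preserving) in the second, one gets $f(\bm{V}(\text{EDD-GR}))\le f(\bm{C}(\pi))$ on every sample path; composing with an arbitrary increasing $\phi$ and taking expectations yields \eqref{eq_delaygap3_dist}.

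An equivalent route that avoids the internal couplings is a group-by-group exchange argument: with the blocks other than $h$ held fixed, the section of $\Psi$ in block~$h$ again lies in $\mathcal{D}_{\text{sym}}\cup\mathcal{D}_{\text{Sch-1}}$ relative to $\bm{d}_h$ (immediate when $f$ is symmetric increasing; for the Schur case because majorization is preserved when a fixed common vector is appended, so the restriction of a Schur convex function to a coordinate block is Schur convex there), one interpolates between $\pi$ and EDD-GR one group at a time, and at each step conditions on the unchanged independent groups and invokes the single-group $\leq_{\text{st}}$ statement for the one group being switched. The step I expect to be the real work --- in either route --- is making this ``restriction and reassembly'' watertight: confirming that a member of $\mathcal{D}_{\text{Sch-1}}$ stays in $\mathcal{D}_{\text{Sch-1}}$ of the relevant sub-problem (with the correct due-date vector) under restriction to a coordinate block, that the size-$0$ sub-jobs frozen at $0$ are genuinely inert, that the single-group proofs really hand back exactly the sorted-dominance and weak-majorization relations needed, and that these relations are stable under merging of blocks. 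In the exchange variant one must additionally check that the restriction of a $\pi$ that correlates the groups is still a legitimate (possibly randomized) element of the centralized $\Pi$ --- which is why freezing all randomness first, as in the sample-path route, is the cleaner option.
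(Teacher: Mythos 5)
Your proposal is correct and follows essentially the same route as the paper: couple the sample paths, pull the per-group relations (sorted dominance of $\bm{V}_{h,\uparrow}$ vs.\ $\bm{C}_{h,\uparrow}$ for $\mathcal{D}_{\text{sym}}$, weak majorization $\bm{V}_h-\bm{d}\prec_{\text{w}}\bm{C}_h-\bm{d}$ for $\mathcal{D}_{\text{Sch-1}}$) out of Theorems \ref{coro_thm3_1}, \ref{coro4_1}, \ref{coro_thm3_1_exp}, concatenate across groups (the paper cites Marshall's Theorem 5.A.7 for the $\prec_{\text{w}}$ step), strip the common size-$0$ sub-job entries equal to $-d_i$ on both sides using per-job locality, and apply increasing symmetry resp.\ increasing Schur convexity via Marshall's Theorem 3.A.8. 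The only cosmetic divergence is that the paper dispatches the $\mathcal{D}_{\text{sym}}$ case by noting EDD-GR coincides with FUT-GR under the stated hypotheses and invoking Theorem \ref{thm2_dist} (whose proof is precisely the sorted-dominance concatenation you describe), and it does not pursue the group-by-group exchange variant you sketch as an alternative.
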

\begin{proof}
See Appendix \ref{app_thm3_dist}.
\end{proof}

Finally, if $d_i=a_i$ for all job $i$, it follows from Theorem 
\ref{thm3_dist} that 
\begin{corollary}\label{thm4_dist}
If (i) the task service times are independent across the servers and i.i.d.~across the tasks assigned to the same server, (ii) each server group $h$ and its sub-job parameters $\mathcal{I}_h$ satisfy the conditions of Corollary \ref{coro5_1} or Corollary \ref{coro6_1}, and (iii) the system is subject to per-job data locality constraints, then for all \emph{$f\in\mathcal{D}_{\text{sym}}\cup\mathcal{D}_{\text{Sch-2}}$},  \emph{$\pi\in\Pi$}, and $\mathcal{I}$
\emph{
\begin{align}\label{eq_delaygap4_dist}
[f(\bm{V}(\text{FCFS-GR}))|\mathcal{I}] \leq_{\text{st}} [f(\bm{C}(\pi))|\mathcal{I}].
\end{align}}
\end{corollary}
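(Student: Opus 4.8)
The plan is to obtain Corollary~\ref{thm4_dist} as the specialization of Theorem~\ref{thm3_dist} in which every due time is set equal to the corresponding arrival time, $d_i=a_i$. The first step is to record the two structural identifications that this substitution produces. Because the model assumes $0=a_1\leq a_2\leq\cdots\leq a_n$, setting $d_i=a_i$ makes the due times non-decreasing, $d_1\leq d_2\leq\cdots\leq d_n$, so the Earliest-Due-Date ordering of jobs and the First-Come-First-Served ordering of jobs coincide. Consequently the local discipline EDD-LPR coincides with FCFS-LPR and EDD-R coincides with FCFS-R, hence policy EDD-GR coincides with policy FCFS-GR, and in particular $\bm{V}(\text{EDD-GR})=\bm{V}(\text{FCFS-GR})$ as random vectors. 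The second identification is at the level of the delay-metric classes: by definition $f\in\mathcal{D}_{\text{Sch-1}}$ means $f(\bm{x}+\bm{d})$ is Schur convex and increasing in $\bm{x}$, and with $\bm{d}=\bm{a}$ this is precisely the defining condition of $\mathcal{D}_{\text{Sch-2}}$; therefore $\mathcal{D}_{\text{sym}}\cup\mathcal{D}_{\text{Sch-1}}$ becomes $\mathcal{D}_{\text{sym}}\cup\mathcal{D}_{\text{Sch-2}}$.

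Next I would check that, under $d_i=a_i$, the hypotheses of Corollary~\ref{thm4_dist} imply those of Theorem~\ref{thm3_dist}. Conditions~(i) (task service times independent across the servers, i.i.d.~across the tasks assigned to the same server) and~(iii) (per-job data-locality constraints) are literally the same in the two statements, so it remains to match condition~(ii). Here the point is that the conditions of Corollary~\ref{coro5_1} imposed on a group $h$ and its parameters $\mathcal{I}_h$ --- namely NBU task service times together with $k_{1h}\leq\cdots\leq k_{nh}$ --- are exactly the conditions of Theorem~\ref{coro_thm3_1} once one observes that the requirement $d_1\leq\cdots\leq d_n$ appearing there is automatically satisfied when $d_i=a_i$; similarly the conditions of Corollary~\ref{coro6_1} (NWU service times, zero cancellation overhead, and $k_{1h}\leq\cdots\leq k_{nh}$) coincide with those of Theorem~\ref{coro4_1}. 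An exponential group is covered because an exponential distribution is NBU, so it may be treated under Theorem~\ref{coro_thm3_1}.

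Having established these correspondences, I would simply invoke Theorem~\ref{thm3_dist} with $d_i=a_i$: it gives $[f(\bm{V}(\text{EDD-GR}))|\mathcal{I}]\leq_{\text{st}}[f(\bm{C}(\pi))|\mathcal{I}]$ for every $f\in\mathcal{D}_{\text{sym}}\cup\mathcal{D}_{\text{Sch-1}}$, every $\pi\in\Pi$, and every $\mathcal{I}$, and substituting the two identifications from the first paragraph ($\mathcal{D}_{\text{sym}}\cup\mathcal{D}_{\text{Sch-1}}=\mathcal{D}_{\text{sym}}\cup\mathcal{D}_{\text{Sch-2}}$ and $\text{EDD-GR}=\text{FCFS-GR}$) turns this into exactly~\eqref{eq_delaygap4_dist}. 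There is no genuine analytic difficulty in this argument; the only part that warrants care is the bookkeeping in the second paragraph --- making sure each ``$\mathcal{I}_h$ satisfies the conditions of $\cdots$'' clause is translated correctly under $d_i=a_i$, and that one invokes the per-job (rather than per-task) version, since Theorem~\ref{thm3_dist} is available only for per-job locality constraints.
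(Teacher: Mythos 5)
Your proof is correct and follows the same approach as the paper, which derives Corollary~\ref{thm4_dist} directly from Theorem~\ref{thm3_dist} by setting $d_i=a_i$ for all $i$. The paper gives only the one-line observation; your proposal simply spells out the two identifications (EDD-GR $=$ FCFS-GR and $\mathcal{D}_{\text{Sch-1}}=\mathcal{D}_{\text{Sch-2}}$ under $\bm{d}=\bm{a}$) and the matching of the per-group hypotheses, all of which check out.
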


\section{Numerical Results}\label{sec_numerical}
In this section, we present some numerical results to illustrate the delay
performance of different scheduling policies and validate our
theoretical results.
\subsection{Centralized Queueing Systems}
Consider a centralized queueing system consisting of 3 servers with heterogeneous service time distributions, 
The inter-arrival time of the jobs $a_{i+1}-a_i$ is exponentially distributed for even $i$; and is zero for odd $i$. Let $\lambda$ be the average job arrival rate.
\ifreport
\begin{figure}
\centering \includegraphics[width=0.6\textwidth]{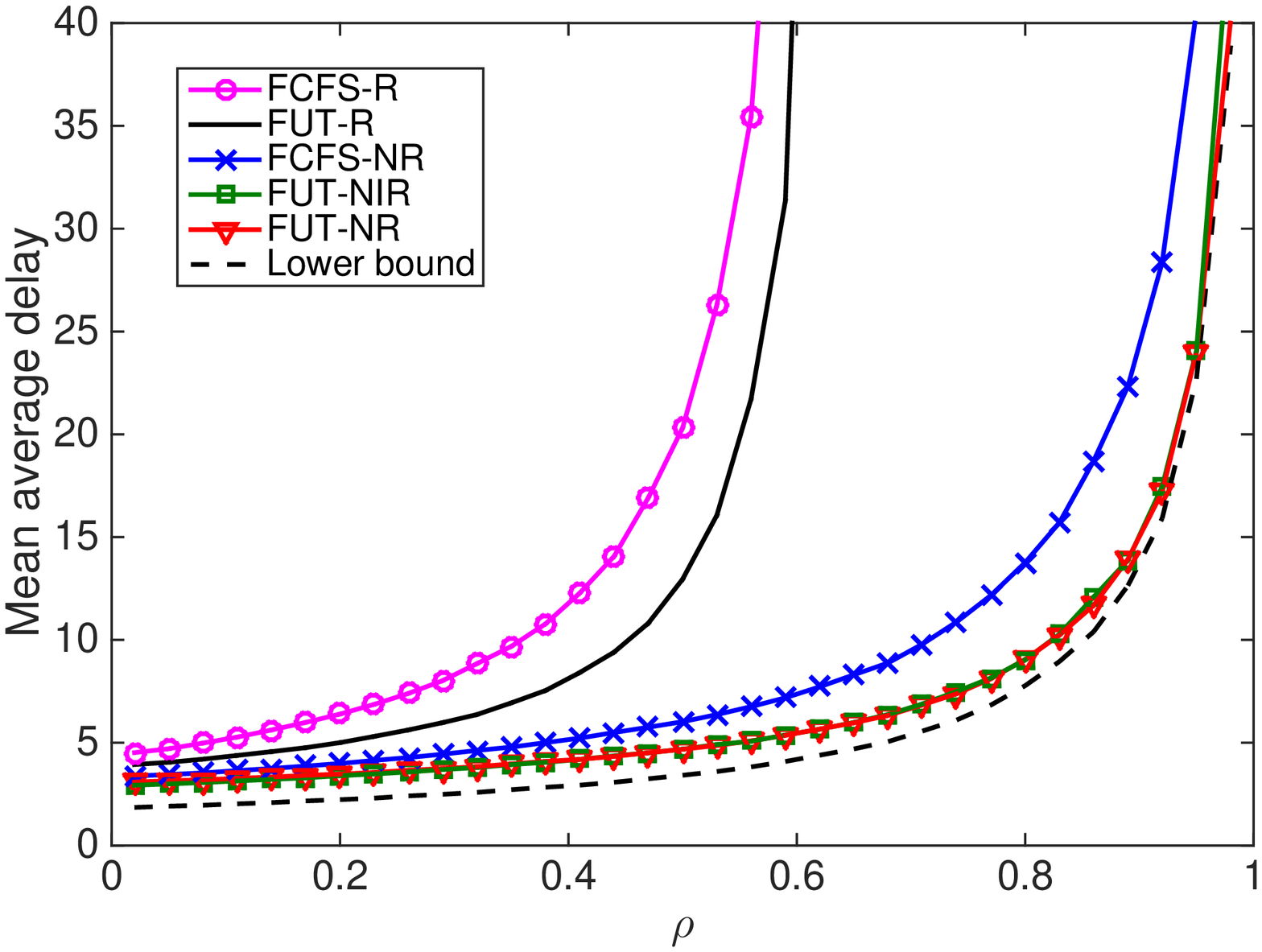} \caption{Expected average delay $\mathbb{E}[D_{\text{avg}} (\bm{C}(\pi))|\mathcal{I}]$ versus traffic intensity $\rho$ in a centralized queueing system with heterogeneous NBU service time distributions.}
\label{fig1} \vspace{-0.cm}
\end{figure}
\else
\begin{figure}
\centering \includegraphics[width=0.4\textwidth]{Figures/figure1_shifted_exp} \caption{Expected average delay $\mathbb{E}[D_{\text{avg}} (\bm{C}(\pi))|\mathcal{I}]$ versus traffic intensity $\rho$ in a centralized queueing system with heterogeneous NBU service time distributions.}
\label{fig1} \vspace{-0.cm}
\end{figure}
\fi
\subsubsection{Average Delay}
Figure \ref{fig1} plots the expected average delay $\mathbb{E}[D_{\text{avg}} (\bm{C}(\pi))|\mathcal{I}]$ versus traffic intensity $\rho$ in a centralized queueing system with heterogeneous NWU service time distributions. The number of incoming jobs is $n=3000$, the job size $k_i$ is chosen to be either $1$ or $10$ with equal probability.
The task service time $X_l$ follows a shifted exponential distribution:
\begin{align}
\Pr[X_l>x] = \left\{\begin{array}{l l}1,&\text{if}~x<\frac{1}{3\mu_l};\\
\exp[-\frac{3\mu_l}{2}(x-\frac{1}{3\mu_l})],&\text{if}~x\geq \frac{1}{3\mu_l},
\end{array}\right.
\end{align}
and the service rate of the 3 servers are $\mu_1=1.4$, $\mu_2=1$, and $\mu_3=0.6$, respectively. 
The traffic intensity can be computed as $\rho = 
\lambda \frac{1+10}{2} (\mu_1+\mu_2+\mu_3) =33\lambda/2$. The cancellation overhead $O_l$ of server $l$ is exponentially distributed with rate $3\mu_l/2$. 
The ``Lower bound'' curve is generated by using $\mathbb{E}[D_{\text{avg}}(\bm{V}(\text{FUT-NR}))|\mathcal{I}]$ which, according to Theorem \ref{thm1}, is a lower bound of the optimum expected average delay.
We can observe from Fig. \ref{fig1} that the policies FCFS-NR, FUT-NR, and FUT-NIR are throughput-optimal, while the policies FCFS-R and  FUT-R have a smaller throughput region.  The average delays of policies FUT-NR and FUT-NIR are quite close to the lower bound, while the other policies are far from the lower bound.


\ifreport
\begin{figure}
\centering \includegraphics[width=0.6\textwidth]{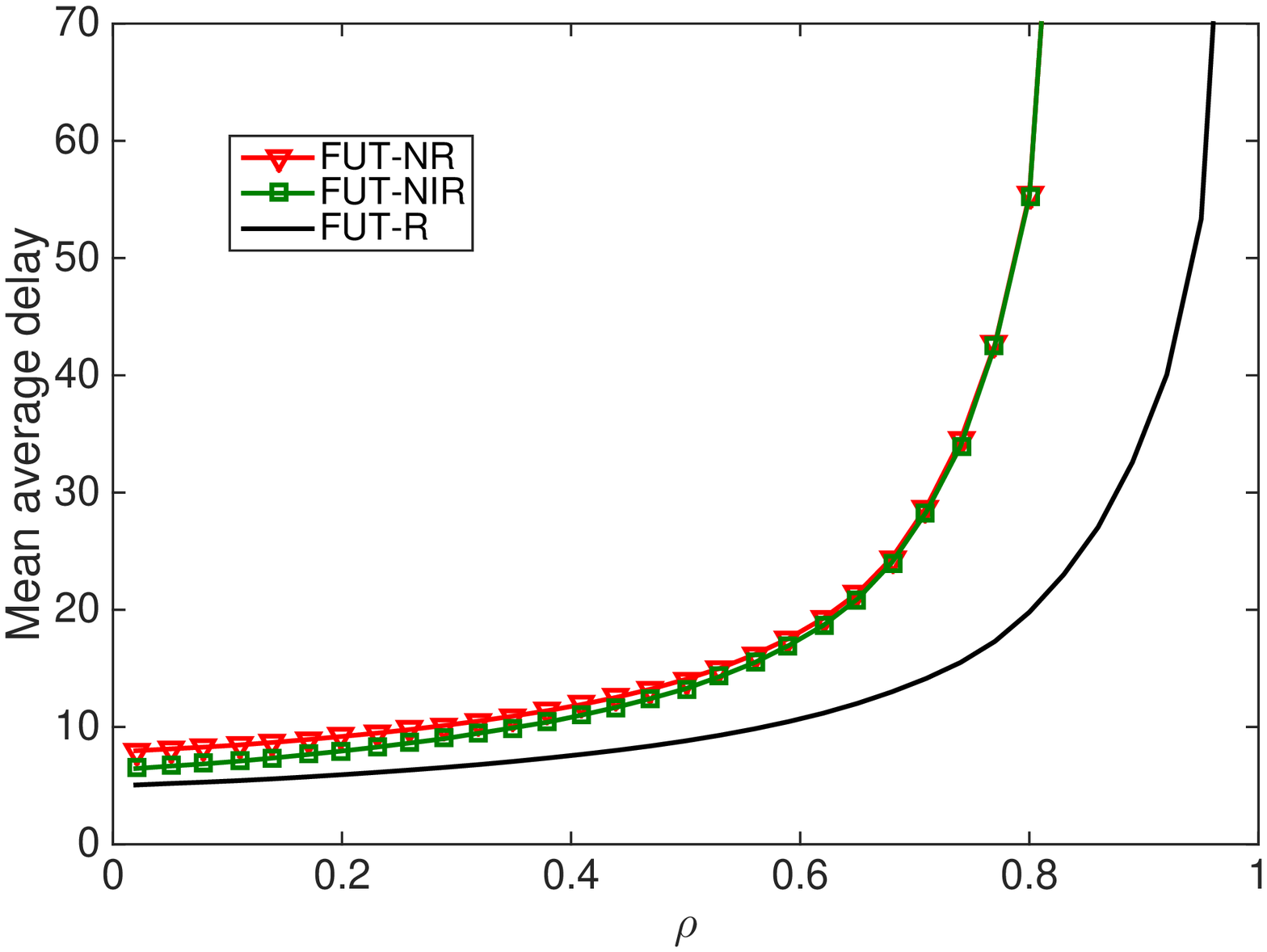} \caption{Expected average delay $\mathbb{E}[D_{\text{avg}} (\bm{C}(\pi))|\mathcal{I}]$ versus traffic intensity $\rho$ in a centralized queueing system with heterogeneous NWU service time distributions.}
\label{fig2} \vspace{-0.cm}
\end{figure}
\else
\begin{figure}
\centering \includegraphics[width=0.4\textwidth]{Figures/figure2_Pareto} \caption{Expected average delay $\mathbb{E}[D_{\text{avg}} (\bm{C}(\pi))|\mathcal{I}]$ versus traffic intensity $\rho$ in a centralized queueing system with heterogeneous NWU service time distributions.}
\label{fig2} \vspace{-0.cm}
\end{figure}
\fi

Figure \ref{fig2} illustrates the expected average delay $\mathbb{E}[D_{\text{avg}} (\bm{C}(\pi))|\mathcal{I}]$ versus traffic intensity $\rho$ in a centralized queueing system with heterogeneous NWU service time distributions. The number of incoming jobs is $n=3000$, the job sizes are $k_i=10$ for all $i$. The task service time $X_l$ follows a Pareto type II (Lomax) distribution \cite{ParetoDis}:
\begin{align}
\Pr[X_l>x] = \left[1+\frac{x}{\sigma}\right]^{-\alpha_l},
\end{align}
where $\sigma = 14/3$, $\alpha_1 = 7$, $\alpha_2 = 5$, and $\alpha_3 = 3$. The cancellation overhead $O_l$ is zero for all servers. According to the property of Pareto type II (Lomax) distribution \cite{ParetoDis}, the maximum task service rate is $1/\mathbb{E}[\min_{l=1,2,3}X_l] = (\alpha_1+\alpha_2+\alpha_3-1)/\sigma=3$, which is achieved when each task is replicated on all 3 servers. Hence, the traffic intensity can be computed as $\rho = \lambda \times\frac{1+10}{2}\frac{1}{\mathbb{E}[\min_{l=1,2,3}X_l]} =33\lambda/2$. Because all jobs are of the same size, the FUT discipline is identical with the FCFS discipline. 
We can observe that policy FUT-R, which is identical with policy FCFS-R, is throughput-optimal, while  policy FUT-NR and policy FUT-NIR have a smaller throughput region. 
In addition, policy FUT-R achieves better delay performance than the other policies, which is in accordance with Theorem \ref{thm2}. 


\begin{figure}
\centering \includegraphics[width=0.6\textwidth]{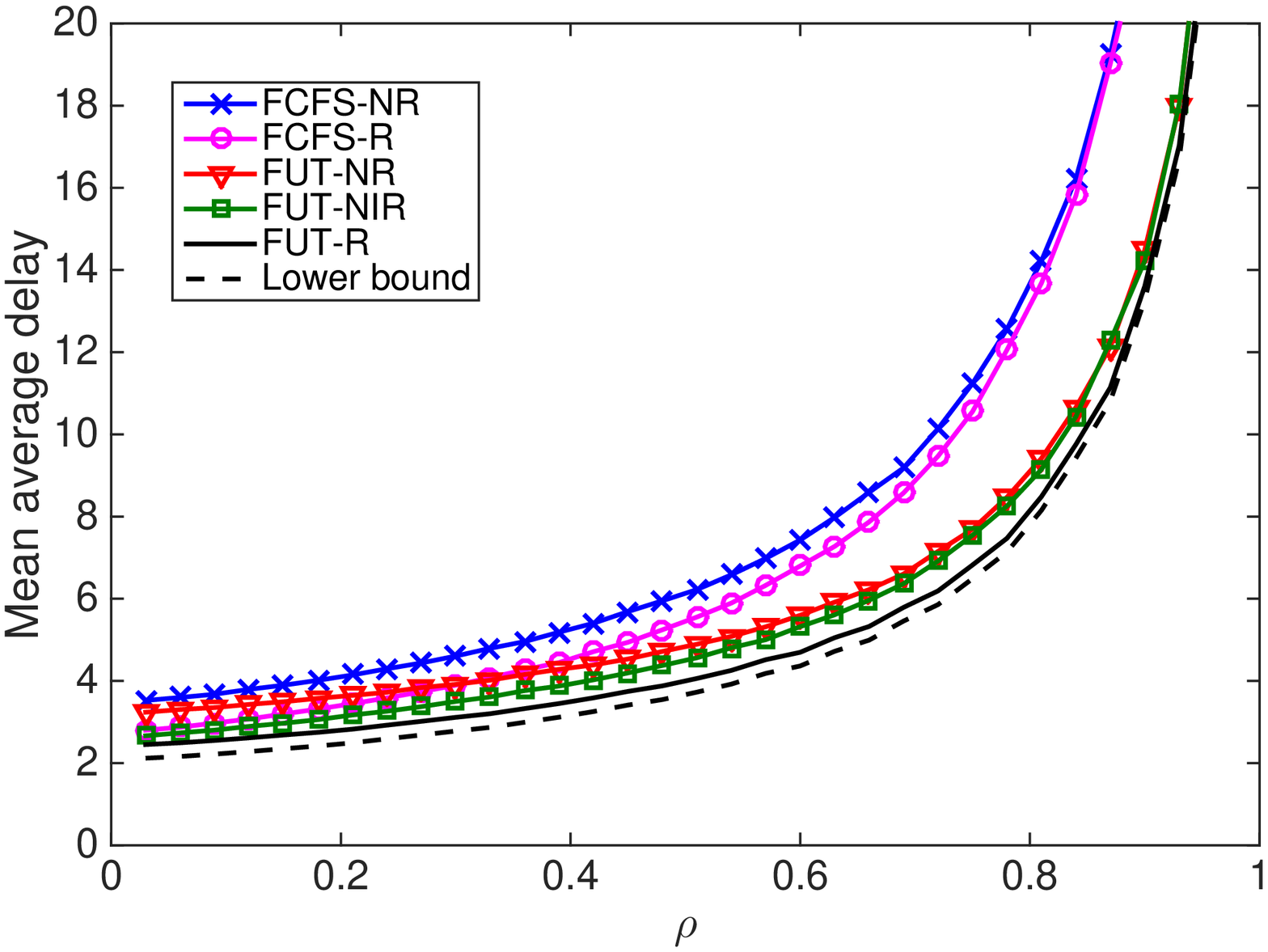} \caption{Expected average delay $\mathbb{E}[D_{\text{avg}} (\bm{C}(\pi))|\mathcal{I}]$ versus traffic intensity $\rho$ in a centralized queueing system with heterogeneous exponential service time distributions.}
\label{fig2_exp} \vspace{-0.cm}
\end{figure}

Figure \ref{fig2_exp} depicts the expected average delay $\mathbb{E}[D_{\text{avg}} (\bm{C}(\pi))|\mathcal{I}]$ versus traffic intensity $\rho$ in a centralized queueing system with heterogeneous exponential service time distributions. The number of incoming jobs is $n=3000$, the job size $k_i$ is chosen to be either $1$ or $10$ with equal probability. The service rate of the 3 servers are $\mu_1=1.4$, $\mu_2=1$, and $\mu_3=0.6$, respectively. The cancellation overhead $O_l$ is zero for all servers. The maximum task service rate is $1/\mathbb{E}[\min_{l=1,2,3}X_l] = \mu_1+\mu_2+\mu_3=3$. Hence, the traffic intensity can be computed as $\rho = \lambda \times\frac{1+10}{2}\frac{1}{\mathbb{E}[\min_{l=1,2,3}X_l]} =33\lambda/2$. The ``Lower bound'' curve is generated by using $\mathbb{E}[D_{\text{avg}}(\bm{V}(\text{FUT-R}))|\mathcal{I}]$ which, according to Theorem \ref{thm1}, is a lower bound of the optimum expected average delay. We can observe that the delay performance of policy FUT-R is close to the lower bound, compared with the other policies.



\ifreport
\begin{figure}
\centering \includegraphics[width=0.6\textwidth]{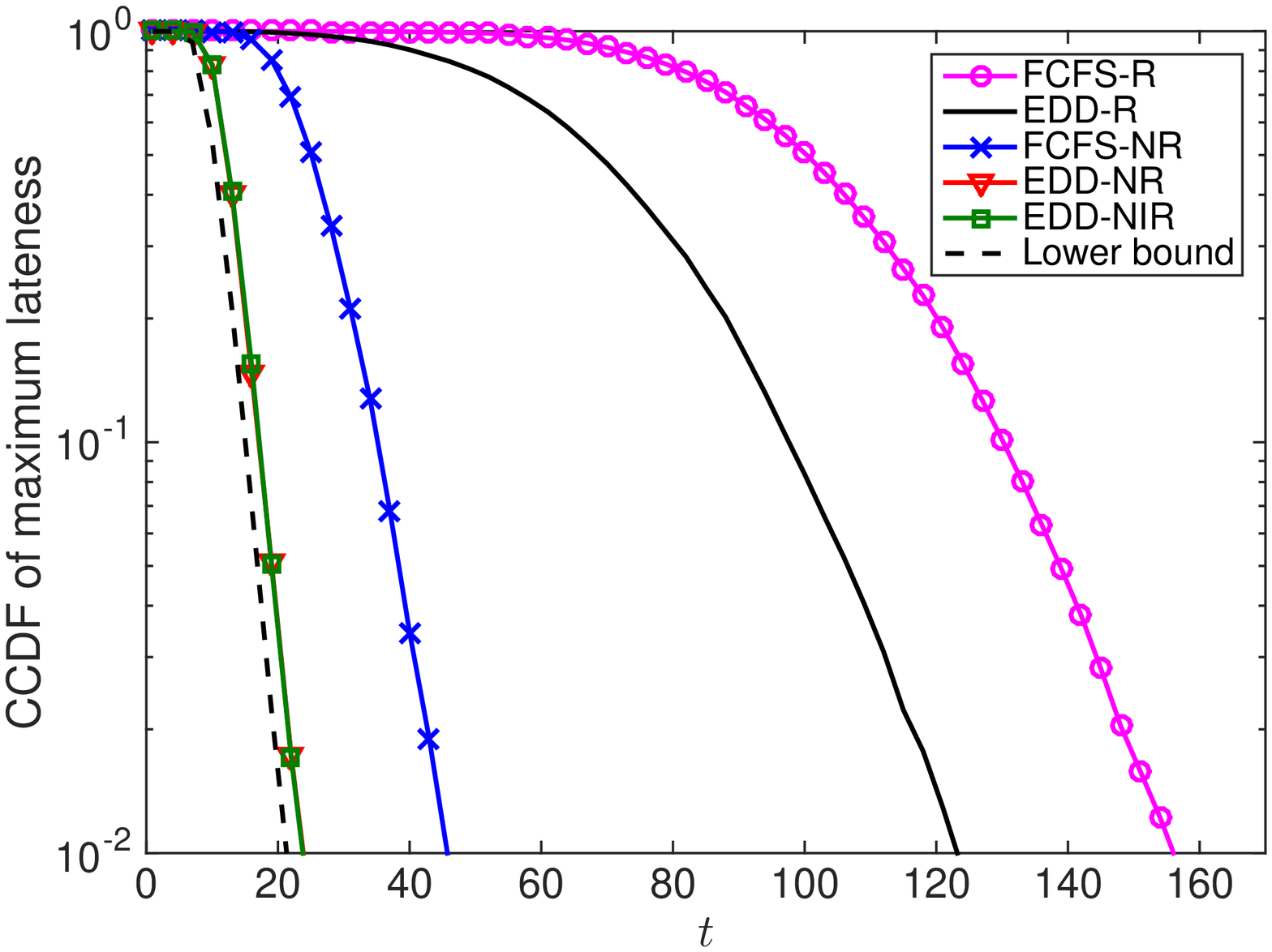} \caption{Complementary CDF of the maximum lateness $\Pr[L_{\max} (\bm{C}(\pi))>t|\mathcal{I}]$ versus $t$ in a centralized queueing system with heterogeneous NBU service time distributions.}
\label{fig3} \vspace{-0.cm}
\end{figure}
\else
\begin{figure}
\centering \includegraphics[width=0.4\textwidth]{Figures/figure3_lateness_shifted_exp} \caption{Complementary CDF of the maximum lateness $\Pr[L_{\max} (\bm{C}(\pi))>t|\mathcal{I}]$ versus $t$ for different policy $\pi$ in a centralized queueing system  with heterogeneous NBU service time distributions.}
\label{fig3} \vspace{-0.cm}
\end{figure}
\fi

\subsubsection{Maximum Lateness}

Figure \ref{fig3} evaluates the complementary CDF of maximum lateness $\Pr[L_{\max} (\bm{C}(\pi))$ $>t|\mathcal{I}]$ versus $t$ in a centralized queueing system with heterogeneous NBU service time distributions. The number of incoming jobs is $n=100$, the job size $k_i$ is chosen to be either $1$ or $10$ with equal probability, and the due time $d_i$ is chosen to be either $a_i$ or $a_i+50$ with equal probability. 
The distributions of task service times and cancellation overheads are the same with those in Fig. \ref{fig1}. The traffic intensity is set as $\rho = 0.8$.  
The ``Lower bound'' curve is generated by using $\Pr[L_{\max}(\bm{V}(\text{EDD-NR}))>t|\mathcal{I}]$, which, according to Theorem \ref{thm3}, is a lower bound of the optimum delay performance.
We can observe that the complementary CDF of the maximum lateness of policies EDD-NR and EDD-NIR are close to the lower bound curve. In addition, the delay performance of policy FCFS-NR is better than that of FCFS-R and EDD-R. This is because policy FCFS-NR  has a larger throughput region than policy  FCFS-R and policy EDD-R. 


\ifreport
\begin{figure}
\centering \includegraphics[width=0.6\textwidth]{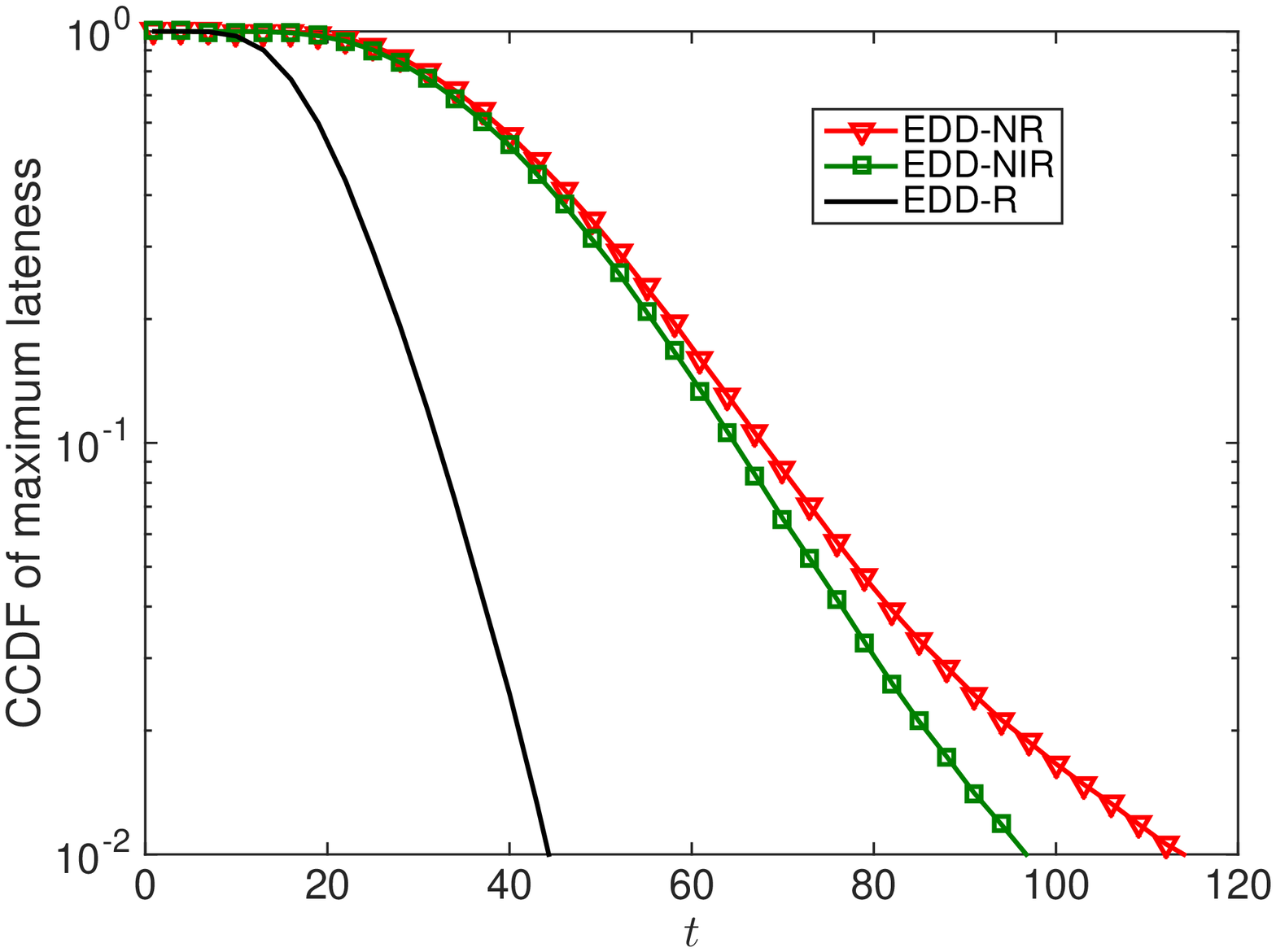} \caption{Complementary CDF of the maximum lateness $\Pr[L_{\max} (\bm{C}(\pi))>t|\mathcal{I}]$ versus $t$ in a centralized queueing system with heterogeneous service time NWU distributions.}
\label{fig4} \vspace{-0.cm}
\end{figure}
\else
\begin{figure}
\centering \includegraphics[width=0.4\textwidth]{Figures/figure4_lateness_Pareto} \caption{Complementary CDF of the maximum lateness $\Pr[L_{\max} (\bm{C}(\pi))>t|\mathcal{I}]$ versus $t$ for different policy $\pi$ in a centralized queueing system with heterogeneous service time NWU distributions.}
\label{fig4} \vspace{-0.cm}
\end{figure}
\fi

Figure \ref{fig4} shows the complementary CDF of maximum lateness $\Pr[L_{\max} (\bm{C}(\pi))>t|\mathcal{I}]$ versus $t$ for a centralized queueing system with heterogeneous NWU service time distributions. The number of incoming jobs is $n=100$, the job size $k_i$ is chosen to be either $1$ or $10$ with equal probability, and the due time $d_i$ is $a_i+5$. 
The distributions of task service times and cancellation overheads are the same with those in Fig. \ref{fig2}. The traffic intensity is set as $\rho = 0.8$. 
Because $d_1\leq d_2\leq\ldots\leq d_n$, the EDD discipline is identical with the FCFS discipline. 
We can observe that policy EDD-R, which is identical with policy FCFS-R, achieves better  performance than the other policies, which is in accordance with Theorem \ref{thm4}. 


\ifreport
\begin{figure}
\centering \includegraphics[width=0.6\textwidth]{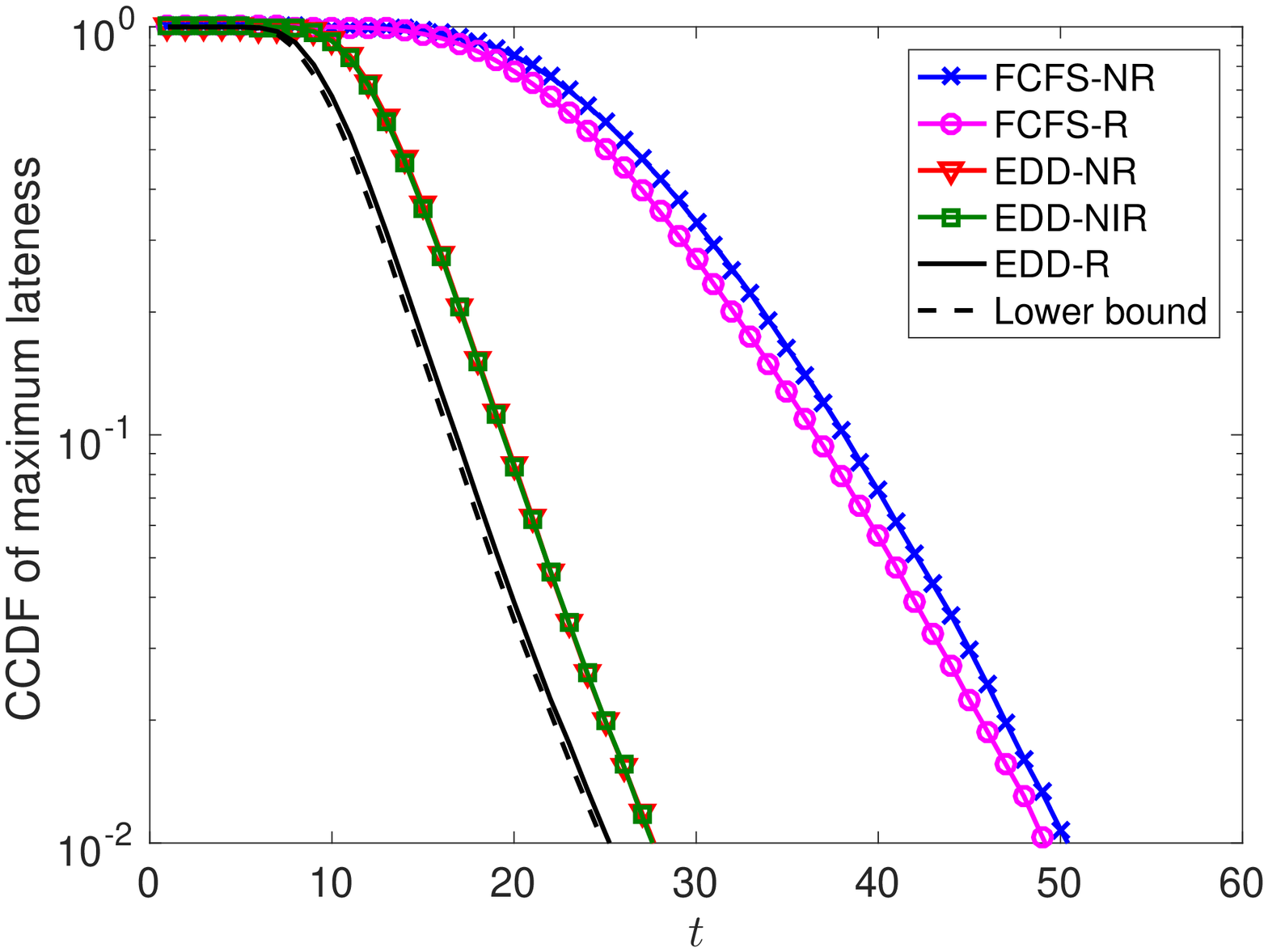} \caption{Complementary CDF of the maximum lateness $\Pr[L_{\max} (\bm{C}(\pi))>t|\mathcal{I}]$ versus $t$ in a centralized queueing system with heterogeneous exponential service time distributions.}
\label{fig4_exp} \vspace{-0.cm}
\end{figure}
\else
\begin{figure}
\centering \includegraphics[width=0.4\textwidth]{Figures/figure4_lateness_exp} \caption{Complementary CDF of the maximum lateness $\Pr[L_{\max} (\bm{C}(\pi))>t|\mathcal{I}]$ versus $t$ for different policy $\pi$ in a centralized queueing system with heterogeneous exponential service time  distributions.}
\label{fig4_exp} \vspace{-0.cm}
\end{figure}
\fi

Figure \ref{fig4_exp} provides the complementary CDF of maximum lateness $\Pr[L_{\max} (\bm{C}(\pi))>t|\mathcal{I}]$ versus $t$ for a centralized queueing system with heterogeneous exponential service time distributions. The number of incoming jobs is $n=100$, the job size $k_i$ is chosen to be either $1$ or $10$ with equal probability, and the  due time $d_i$ is chosen to be either $a_i$ or $a_i+50$ with equal probability. 
The distributions of task service times and cancellation overheads are the same with those in Fig. \ref{fig2_exp}. The traffic intensity is set as $\rho = 0.8$. The ``Lower bound'' curve is generated by using $\Pr[L_{\max}(\bm{V}(\text{EDD-R}))>t|\mathcal{I}]$.
We can observe that the delay performance of policy EDD-R is quite close to the lower bound curve. The delay performance of policy EDD-NR and policy EDD-NIR is a bit farther than from the lower bound curve. 
Policy FCFS-R and policy FCFS-NR have the worse performance. 
These results are in accordance with  Theorem \ref{thm3} and Theorem \ref{thm4_exp}.



\ifreport
\begin{figure}
\centering \includegraphics[width=0.6\textwidth]{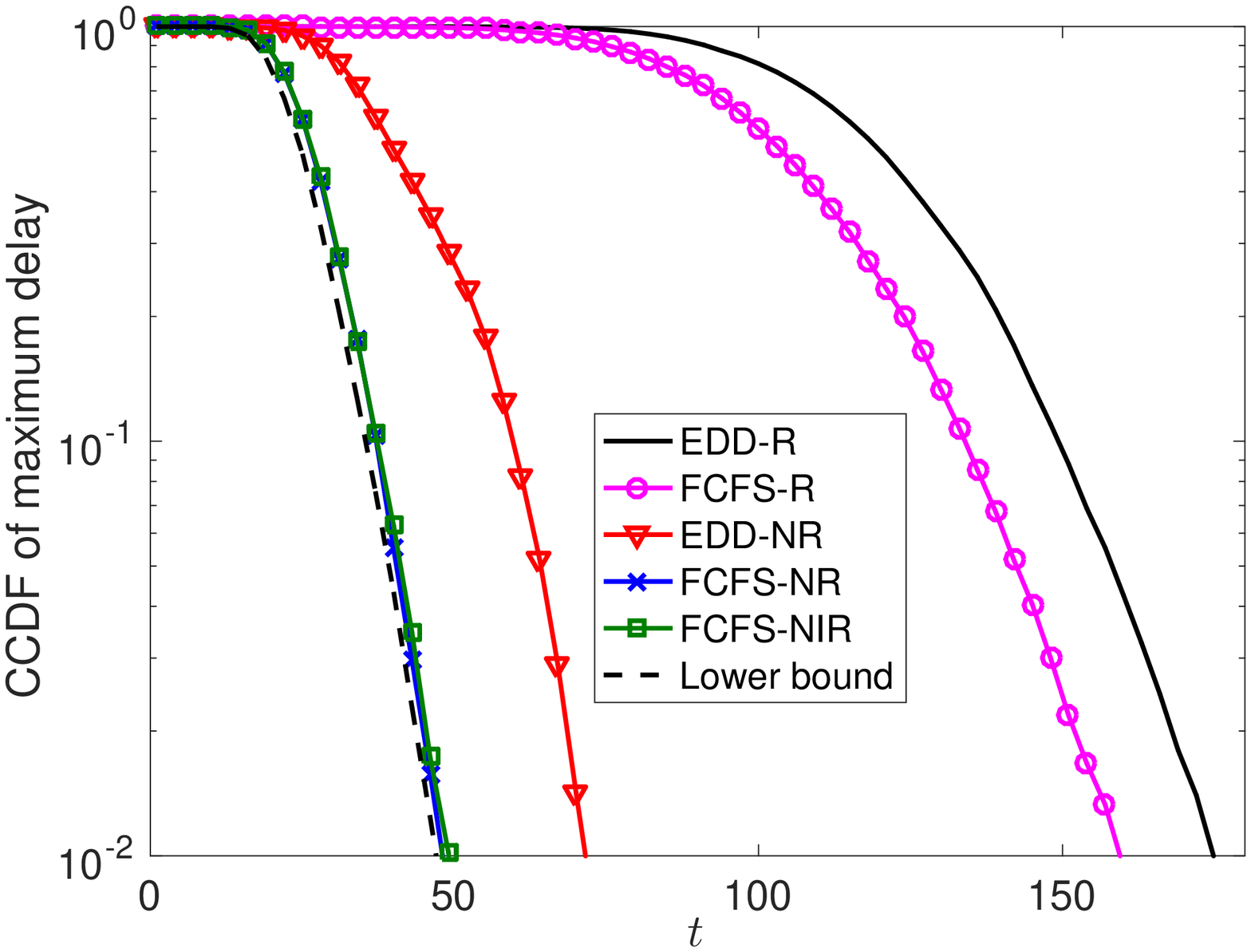} \caption{Complementary CDF of the maximum delay $\Pr[D_{\max} (\bm{C}(\pi))>t|\mathcal{I}]$ versus $t$  in a centralized queueing system with heterogeneous NBU service time distributions.}
\label{fig5} \vspace{-0.cm}
\end{figure}
\else
\begin{figure}
\centering \includegraphics[width=0.4\textwidth]{Figures/figure5_maximum_delay_shifted_exp} \caption{Complementary CDF of the maximum delay $\Pr[D_{\max} (\bm{C}(\pi))>t|\mathcal{I}]$ versus $t$  in a centralized queueing system with heterogeneous NBU  service time distributions.}
\label{fig5} \vspace{-0.cm}
\end{figure}
\fi
\ifreport
\begin{figure}
\centering \includegraphics[width=0.6\textwidth]{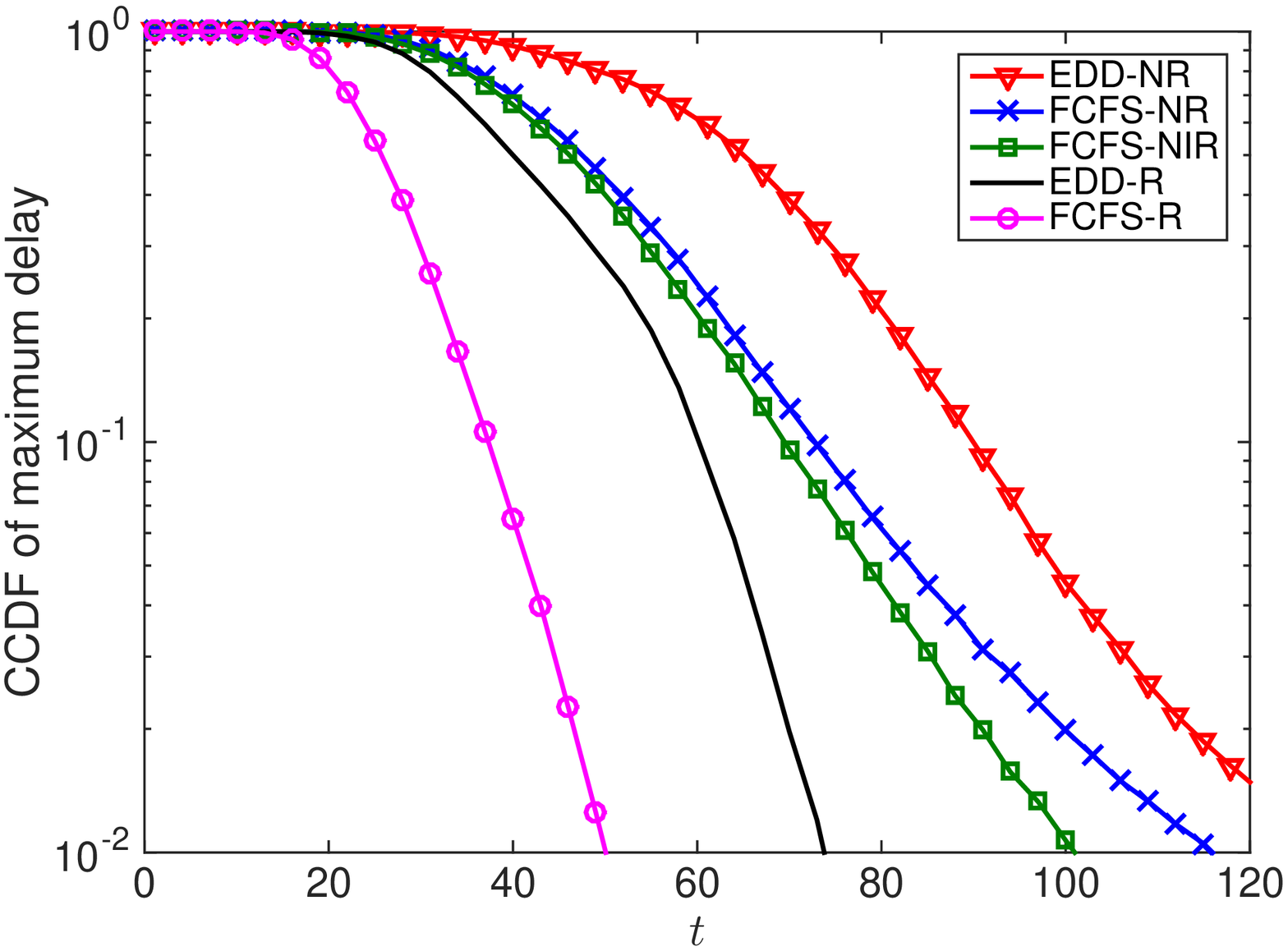} \caption{Complementary CDF of the maximum delay $\Pr[D_{\max} (\bm{C}(\pi))>t|\mathcal{I}]$ versus $t$  in a centralized queueing system with heterogeneous NWU service time distributions.}
\label{fig6} \vspace{-0.cm}
\end{figure}
\else
\begin{figure}
\centering \includegraphics[width=0.4\textwidth]{Figures/figure6_maximum_delay_Pareto} \caption{Complementary CDF of the maximum delay $\Pr[D_{\max} (\bm{C}(\pi))>t|\mathcal{I}]$ versus $t$  in a centralized queueing system with heterogeneous NWU service time distributions.}
\label{fig6} \vspace{-0.cm}
\end{figure}
\fi

\subsubsection{Maximum Delay}
Figure \ref{fig5} plots the complementary CDF of maximum delay $\Pr[D_{\max} (\bm{C}(\pi))>t|\mathcal{I}]$ versus $t$ in a centralized queueing system with heterogeneous NBU service time distributions. The system model is the same with that of Fig. \ref{fig3}.
The ``Lower bound'' curve is generated by using $\Pr[D_{\max}(\bm{V}(\text{FCFS-NR}))$ $>t|\mathcal{I}]$,  which, according to Corollary \ref{thm5}, is a lower bound of the optimum delay performance.
We can observe that the delay performance of policy FCFS-NR and policy FCFS-NIR is close to the lower bound curve. In addition, the delay performance of policy EDD-NR is better than that of FCFS-R and EDD-R because policy EDD-NR has a larger throughput region than policy  FCFS-R and policy EDD-R. 

Figure \ref{fig6} illustrates the complementary CDF of maximum delay $\Pr[D_{\max} (\bm{C}(\pi))>t|\mathcal{I}]$ versus $t$ in a centralized queueing system with heterogeneous NWU service time distributions. The system model is almost the same with that of Fig. \ref{fig4}, except that the  due time $d_i$ is chosen to be either $a_i$ or $a_i+50$ with equal probability. 
We can observe that the delay performance of policy FCFS-R is much better than that of the other policies, which validates Corollary \ref{thm6}. 

\ifreport
\begin{figure}
\centering \includegraphics[width=0.6\textwidth]{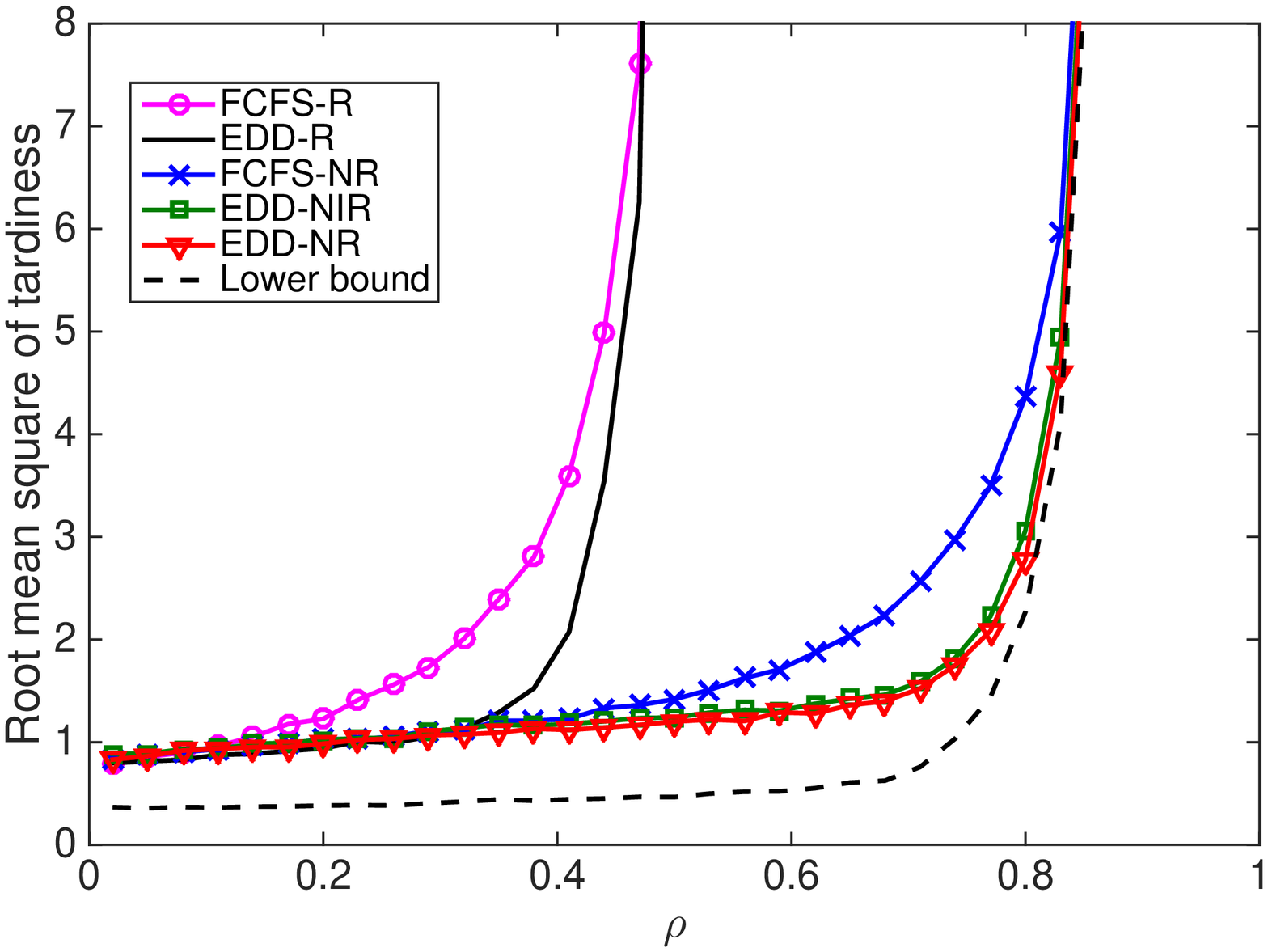} \caption{Root mean square of tardiness $\mathbb{E}[T_{\text{ms}} (\bm{C}(\pi))|\mathcal{I}]^{0.5}$ versus traffic intensity $\rho$ in a centralized queueing system with heterogeneous NBU service time distributions.}
\label{fig5_RMS_tardiness} \vspace{-0.cm}
\end{figure}
\else
\begin{figure}
\centering \includegraphics[width=0.4\textwidth]{Figures/figure_3root_mean_square_lateness_shifted_exp} \caption{Root mean square of tardiness $\mathbb{E}[T_{\text{ms}} (\bm{C}(\pi))|\mathcal{I}]^{0.5}$ versus traffic intensity $\rho$ in a centralized queueing system with heterogeneous NBU  service time distributions.}
\label{fig5_RMS_tardiness} \vspace{-0.cm}
\end{figure}
\fi

\ifreport
\begin{figure}
\centering \includegraphics[width=0.6\textwidth]{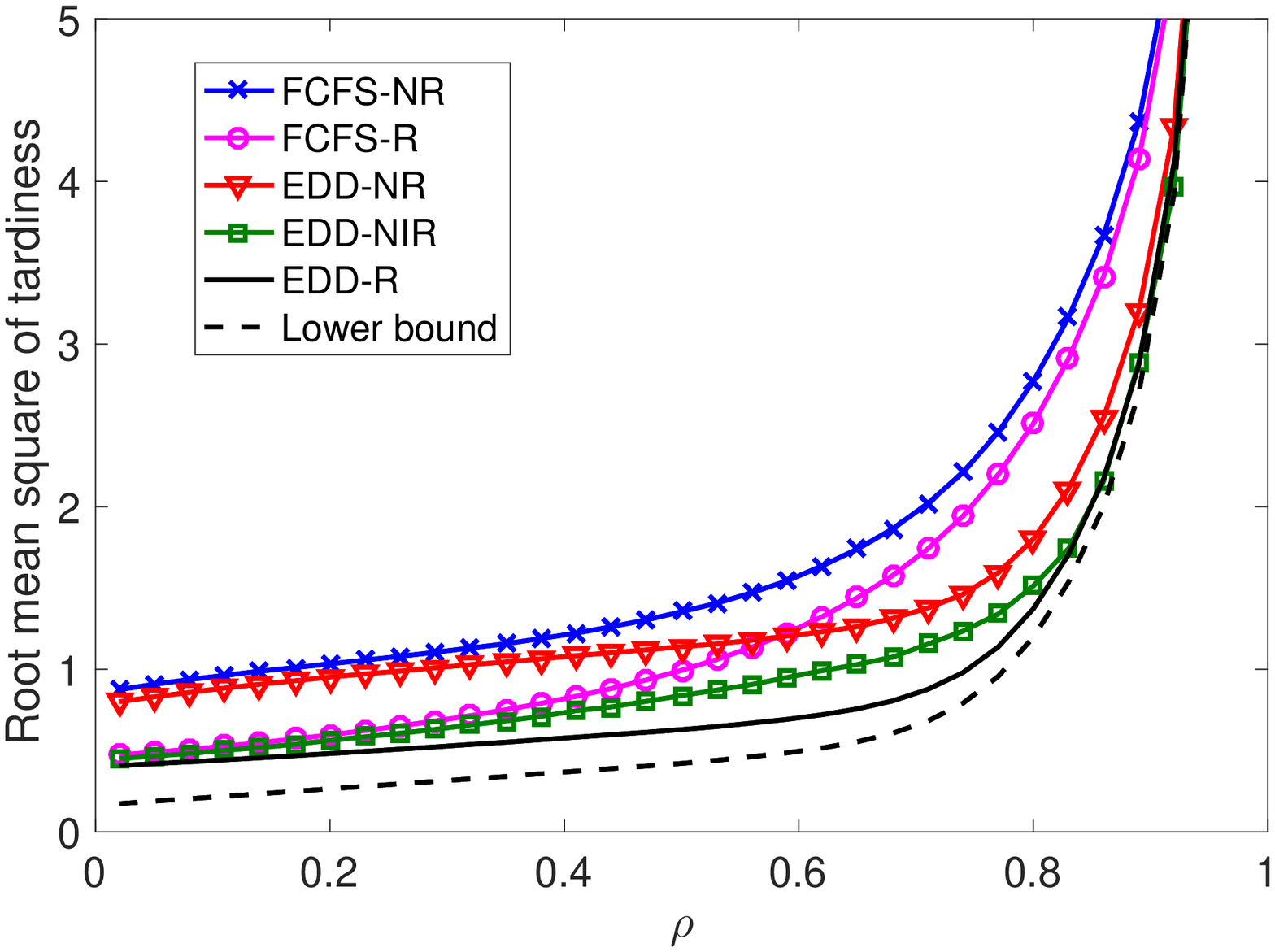} \caption{Root mean square of tardiness $\mathbb{E}[T_{\text{ms}} (\bm{C}(\pi))|\mathcal{I}]^{0.5}$ versus traffic intensity $\rho$ in a centralized queueing system with heterogeneous exponential service time distributions.}
\label{fig6_RMS_tardiness} \vspace{-0.cm}
\end{figure}
\else
\begin{figure}
\centering \includegraphics[width=0.4\textwidth]{Figures/figure_4root_mean_square_lateness_exp} \caption{Root mean square of tardiness $\mathbb{E}[T_{\text{ms}} (\bm{C}(\pi))|\mathcal{I}]^{0.5}$ versus traffic intensity $\rho$ in a centralized queueing system with heterogeneous exponential  service time distributions.}
\label{fig6_RMS_tardiness} \vspace{-0.cm}
\end{figure}
\fi

\subsection{Some Other Delay Metrics}
Figure \ref{fig5_RMS_tardiness} shows the root mean square of tardiness $\mathbb{E}[T_{\text{ms}} (\bm{C}(\pi))|\mathcal{I}]^{0.5}$ versus traffic intensity $\rho$ in a centralized queueing system with heterogeneous NBU service time distributions, where 
\begin{align}
T_{\text{ms}}(\bm{C}(\pi)) \!= \frac{1}{n}\sum_{i=1}^{n} \max[L_i(\pi),0]^2=\frac{1}{n}\sum_{i=1}^{n} \max\left[C_i(\pi)-d_i,0\right]^2.\nonumber
\end{align}
The system model is similar with that of Fig. \ref{fig5}, except that $k_i=1$ for all job $i$ and the due time $d_i$ is either $a_i$ or $a_i+10$ with equal probability.
The ``Lower bound'' curve is generated by using $\mathbb{E}[T_{\text{ms}} (\bm{V}(\text{EDD-NR}))|\mathcal{I}]^{0.5}$, which, according to Theorem \ref{coro_thm3_1}, is a lower bound of the optimum delay performance.
We can observe that the delay performance of policy EDD-NR and policy EDD-NIR is close to the lower bound curve, and is better than that of the other policies. Notice that as  $\rho\rightarrow 0$, policy EDD-R and policy EDD-NIR tends become the same policy, and hence has the same delay performance.


Figure \ref{fig6_RMS_tardiness} presents the root mean square of tardiness $\mathbb{E}[T_{\text{ms}} (\bm{C}(\pi))|\mathcal{I}]^{0.5}$ versus traffic intensity $\rho$ in a centralized queueing system with heterogeneous exponential service time distributions. The system model is similar with that of Fig. \ref{fig6}, except that $k_i=1$ for all job $i$ and the due time $d_i$ is either $a_i$ or $a_i+10$ with equal probability. 
The ``Lower bound'' curve is generated by using $\mathbb{E}[T_{\text{ms}} (\bm{V}(\text{EDD-R}))|\mathcal{I}]^{0.5}$, which, according to Theorem \ref{coro_thm3_1_exp}, is a lower bound of the optimum delay performance.
We can observe that the delay performance of policy EDD-R is close to the lower bound curve, and is better than that of the other policies.

\ifreport
\begin{figure}
\centering \includegraphics[width=0.6\textwidth]{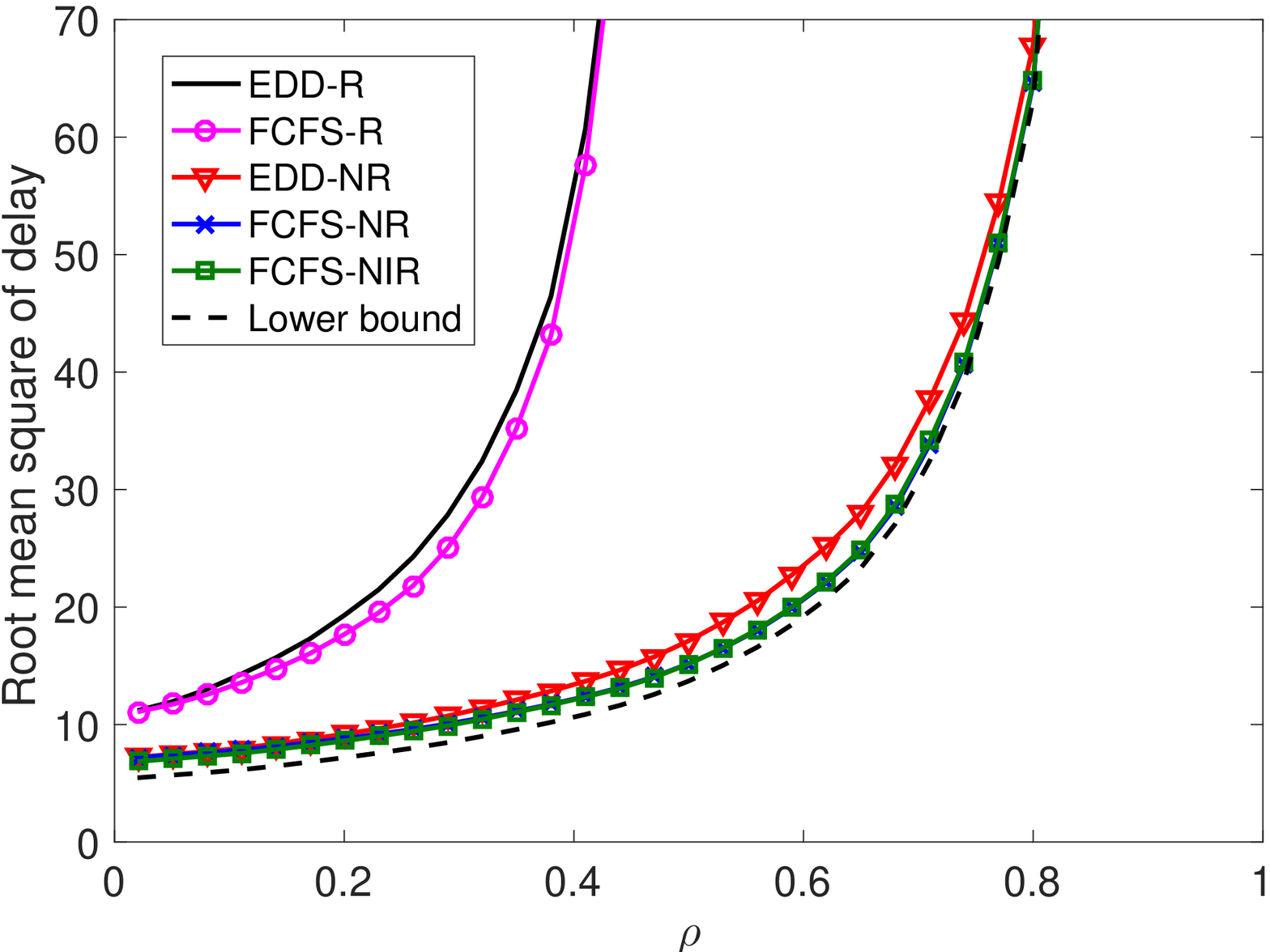} \caption{Root mean square of delay $\mathbb{E}[D_{\text{ms}} (\bm{C}(\pi))|\mathcal{I}]^{0.5}$ versus traffic intensity $\rho$ in a centralized queueing system with heterogeneous NBU service time distributions.}
\label{fig5_RMS} \vspace{-0.cm}
\end{figure}
\else
\begin{figure}
\centering \includegraphics[width=0.4\textwidth]{Figures/figure_1root_mean_square_delay_shifted_exp} \caption{Root mean square of delay $\mathbb{E}[D_{\text{ms}} (\bm{C}(\pi))|\mathcal{I}]^{0.5}$ versus traffic intensity $\rho$ in a centralized queueing system with heterogeneous NBU  service time distributions.}
\label{fig5_RMS} \vspace{-0.cm}
\end{figure}
\fi

\ifreport
\begin{figure}
\centering \includegraphics[width=0.6\textwidth]{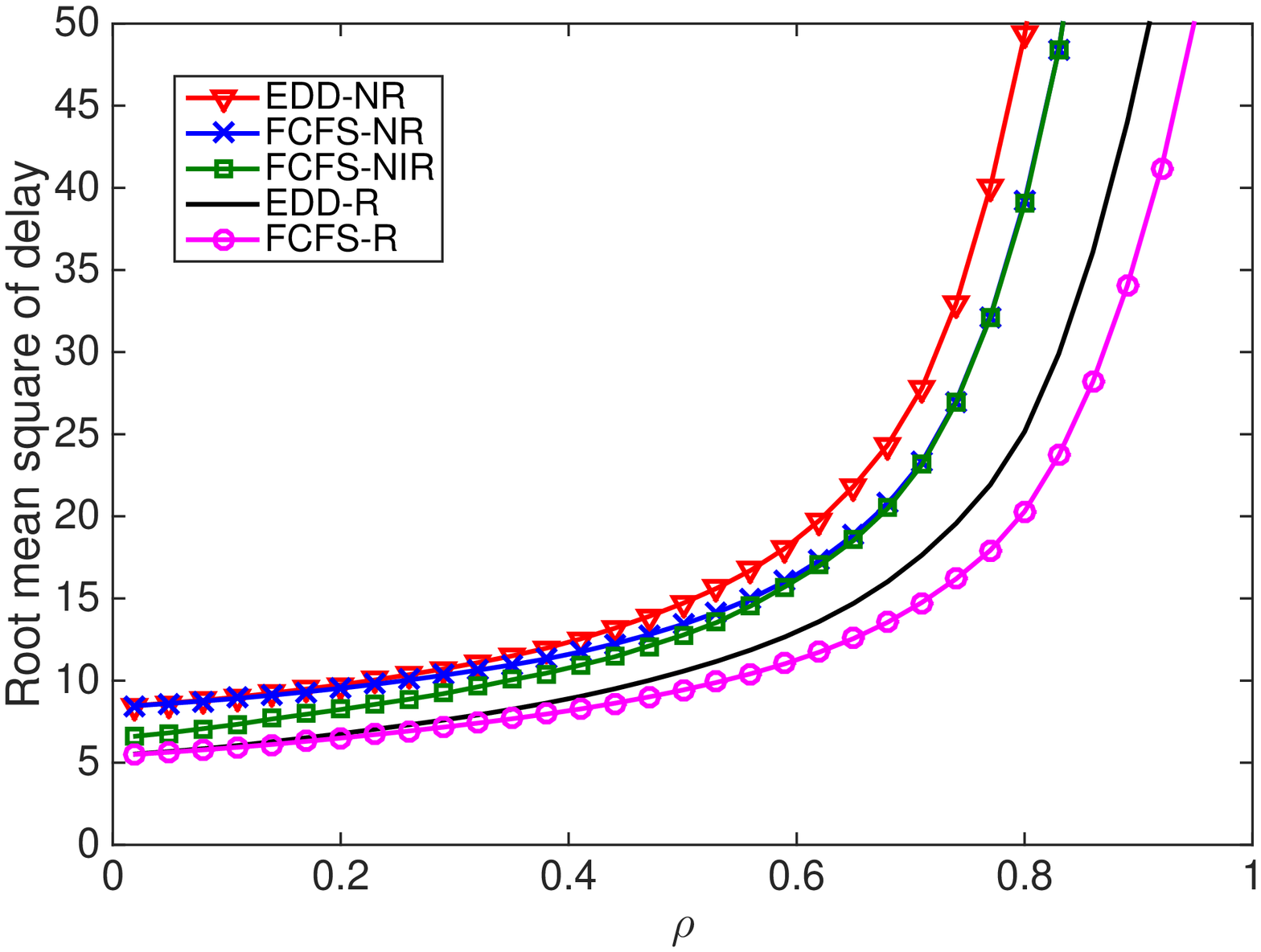} \caption{Root mean square of delay $\mathbb{E}[D_{\text{ms}} (\bm{C}(\pi))|\mathcal{I}]^{0.5}$ versus traffic intensity $\rho$ in a centralized queueing system with heterogeneous NWU service time distributions.}
\label{fig6_RMS} \vspace{-0.cm}
\end{figure}
\else
\begin{figure}
\centering \includegraphics[width=0.4\textwidth]{Figures/figure_2root_mean_square_delay_Pareto} \caption{Root mean square of delay $\mathbb{E}[D_{\text{ms}} (\bm{C}(\pi))|\mathcal{I}]^{0.5}$ versus traffic intensity $\rho$ in a centralized queueing system with heterogeneous NWU  service time distributions.}
\label{fig6_RMS} \vspace{-0.cm}
\end{figure}
\fi

Figure \ref{fig5_RMS} illustrates the root mean square of delay $\mathbb{E}[D_{\text{ms}} (\bm{C}(\pi))|\mathcal{I}]^{0.5}$ versus traffic intensity $\rho$ in a centralized queueing system with heterogeneous NBU service time distributions, where 
\begin{align}
D_{\text{ms}}(\bm{C}(\pi)) \!= \frac{1}{n}\sum_{i=1}^{n} D_i^2(\pi)=\frac{1}{n}\sum_{i=1}^{n} \left[C_i(\pi)-a_i\right]^2.\nonumber
\end{align}
The system model is almost the same with that of Fig. \ref{fig5}, except that $k_i=10$ for all job $i$.
The ``Lower bound'' curve is generated by using $\mathbb{E}[D_{\text{ms}} (\bm{V}(\text{FCFS-NR}))|\mathcal{I}]^{0.5}$, which, according to Corollary \ref{coro5_1}, is a lower bound of the optimum delay performance.
We can observe that the delay performance of policy FCFS-NR and policy FCFS-NIR is close to the lower bound curve, and is much better than the other policies. 


Figure \ref{fig6_RMS} presents the root mean square of delay $\mathbb{E}[D_{\text{ms}} (\bm{C}(\pi))|\mathcal{I}]^{0.5}$ versus traffic intensity $\rho$ in a centralized queueing system with heterogeneous NWU service time distributions. The system model is almost the same with that of Fig. \ref{fig6}, except that $k_i=10$ for all job $i$. 
We can observe that the delay performance of policy FCFS-R is much better than that of the other policies, which validates Corollary \ref{coro6_1}. 

\ifreport
\begin{figure}
\centering \includegraphics[width=0.6\textwidth]{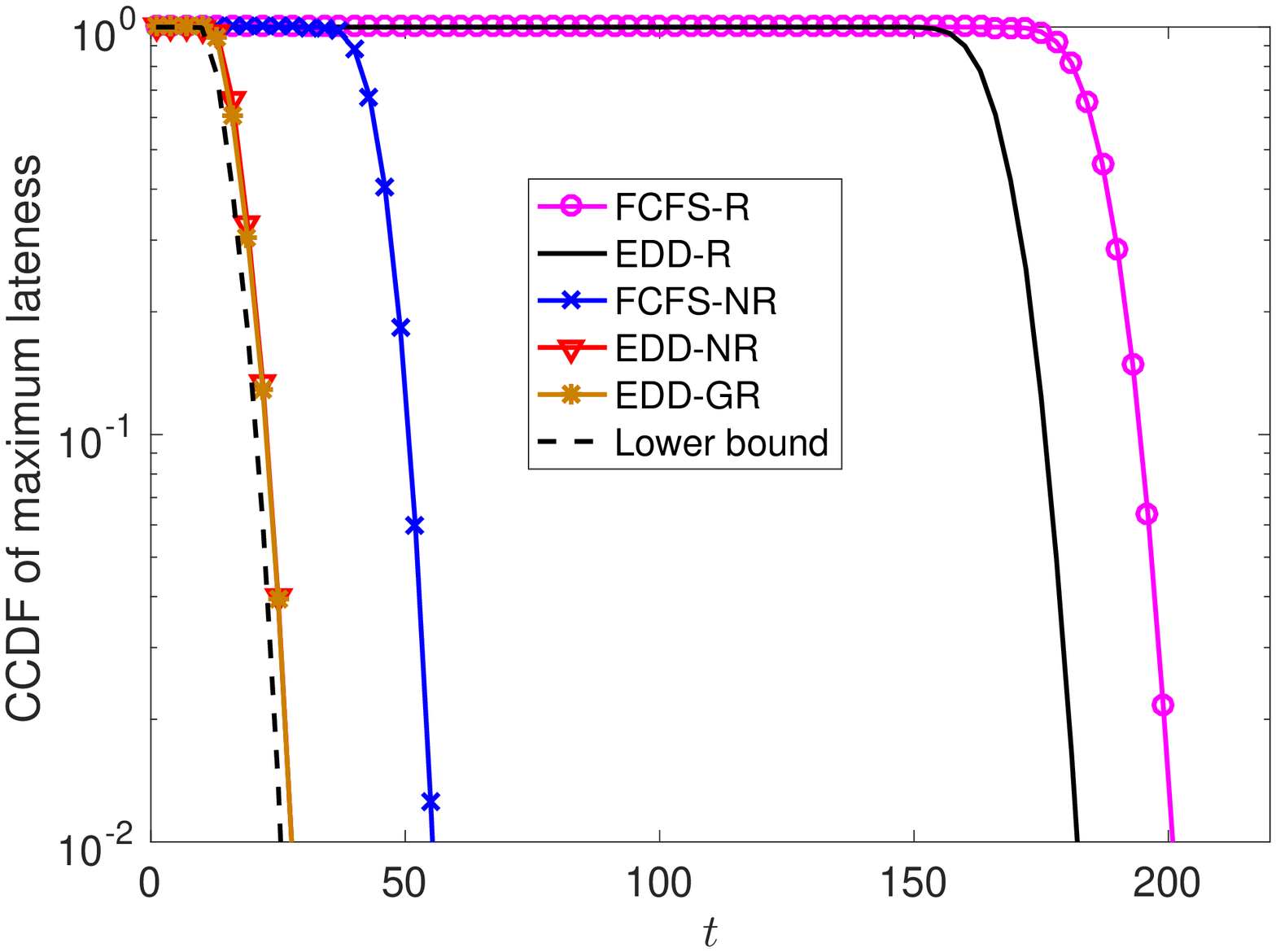} \caption{Complementary CDF of the maximum lateness $\Pr[L_{\max} (\bm{C}(\pi))>t|\mathcal{I}]$ versus $t$ in a distributed queueing system with per-task data locality constraints.}
\label{fig7} \vspace{-0.cm}
\end{figure}
\else
\begin{figure}
\centering \includegraphics[width=0.4\textwidth]{Figures/figure7_maximum_lateness_distributed} \caption{Complementary CDF of the maximum lateness $\Pr[L_{\max} (\bm{C}(\pi))>t|\mathcal{I}]$ versus $t$ in a distributed queueing system with per-task data locality constraints.}
\label{fig7} \vspace{-0.cm}
\end{figure}
\fi
\subsection{Distributed Queueing Systems}


Next, we provide some numerical results for the delay performance of replications in distributed queueing systems with data locality constraints. The inter-arrival time of the jobs $T_i=a_{i+1}-a_i$ is exponentially distributed for even $i$; and is zero for odd $i$. The number of incoming jobs is $n=100$. The due time $d_i$ of job $i$ is $a_i$ or $a_i+50$ with equal probability. The traffic intensity is set as $\rho = 0.8$.  

Figure \ref{fig7} evaluates the complementary CDF of maximum lateness $\Pr[L_{\max} (\bm{C}(\pi))$ $>t|\mathcal{I}]$ versus $t$ in a distributed queueing system with per-task data locality constraints. The system has $g=2$ groups of servers, each consisting of 3 servers. The size of each sub-job $k_{ih}$ is  $1$ or $10$ with equal probability. The distributions of task service times and cancellation overheads of one group of servers are the same with those in Fig. \ref{fig1}, and the distributions of task service times and cancellation overheads of the other  group of servers are the same with those in Fig. \ref{fig2_exp}. 
The ``Lower bound'' curve is generated by using $\Pr[L_{\max}(\bm{V}(\text{EDD-GR}))>t|\mathcal{I}]$, which, according to Theorem \ref{thm_dis4}, is a lower bound of the optimum delay performance. We can observe that The delay performance of  policy EDD-GR is close to the lower bound curve. In addition, policy EDD-NR and EDD-GR have similar performance. This is because exponential distribution is also NBU. Hence, both groups of servers have NBU service time distributions.

\ifreport
\begin{figure}
\centering \includegraphics[width=0.6\textwidth]{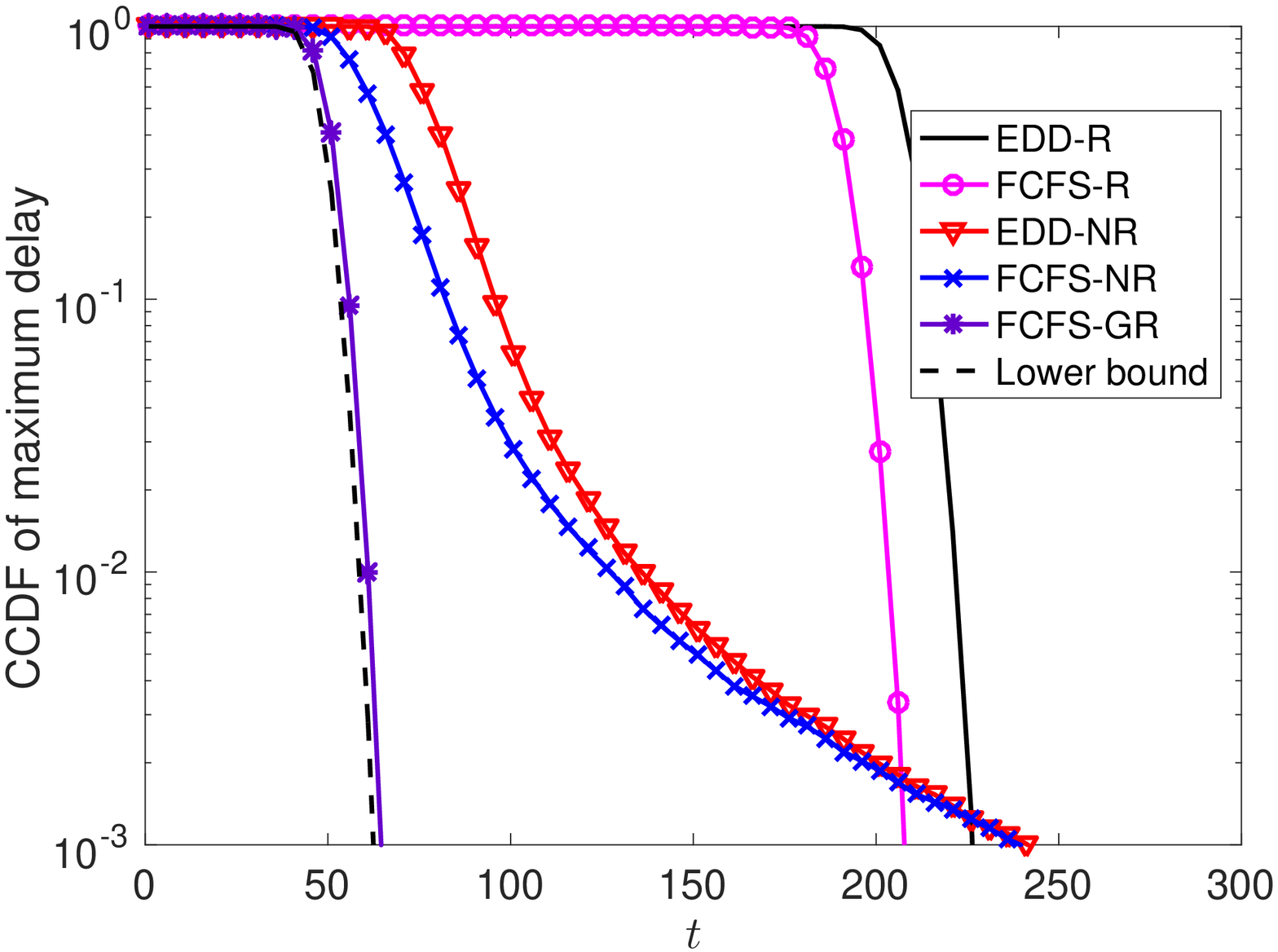} \caption{Complementary CDF of the maximum delay $\Pr[D_{\max} (\bm{C}(\pi))>t|\mathcal{I}]$ versus $t$ in a distributed queueing system with per-task data locality constraints.}
\label{fig8} \vspace{-0.cm}
\end{figure}
\else
\begin{figure}
\centering \includegraphics[width=0.4\textwidth]{Figures/figure8_maximum_delay_distributed} \caption{Complementary CDF of the maximum delay $\Pr[D_{\max} (\bm{C}(\pi))>t|\mathcal{I}]$ versus $t$ in a distributed queueing system with per-task data locality constraints.}
\label{fig8} \vspace{-0.cm}
\end{figure}
\fi
Figure \ref{fig8} depicts the complementary CDF of maximum delay $\Pr[D_{\max} (\bm{C}(\pi))$ $>t|\mathcal{I}]$ versus $t$ in a distributed queueing system with per-task data locality constraints. This system has $g=3$ groups of servers, each consisting of 3 servers. The size of each sub-job $k_{ih}$ is  $1$ or $10$ with equal probability. The distributions of task service times and cancellation overheads of two groups of servers are the same with those in Fig. \ref{fig7}, and the distributions of task service times and cancellation overheads of the third group of servers are the same with those in Fig. \ref{fig2}. 
The ``Lower bound'' curve is generated by using $\Pr[D_{\max}(\bm{V}(\text{FCFS-GR}))>t|\mathcal{I}]$, which, according to Corollary \ref{thm_dis5}, is a lower bound of the optimum delay performance. We can see that the delay performance of  policy FCFS-GR is quite close to the lower bound curve and is much better than that of the other policies. 

\section{Conclusion}\label{sec_conclusion}
This paper presented a comprehensive study on delay-optimal scheduling of batch jobs with replications in multi-server  systems. A number of low-complexity scheduling policies are developed and are proven to be (near) delay-optimal in a stochastic ordering sense
for minimizing three general classes of delay metrics among all causal and non-preemptive policies. The key  tools in our proofs are new sample-path conditions for comparing the delay performance of different policies. 
These sample-path conditions do not need to specify the queueing system model and hence can potentially be  applied to obtain (near) delay-optimal results for other scheduling systems. An interesting topic for future research is to develop an analytical framework to design (near) delay-optimal scheduling of replications and coding under general service time distributions (that go beyond NBU and NWU). In addition, service time correlation across the servers has significant influence on the delay performance of replications and coding, which requires further investigation.


\ifreport
\appendices
\else
\appendix
\fi
\section{Sample-path  Method}\label{sec_proofmain}

We propose a unified sample-path method to prove the theorems in Section \ref{sec_analysis}. This method contains three steps: 

\begin{itemize}
\item[\emph{1.}] \textbf{Sample-path Orderings:} We first introduce several  sample-path orderings (Propositions \ref{ordering_1}-\ref{ordering_3_1} and Corollaries \ref{ordering_4}-\ref{ordering_4_1}).
Each of these sample-path orderings can be used to obtain a delay inequality for comparing the delay performance (i.e., average delay, maximum lateness, or maximum delay) of different policies.
\item[\emph{2.}]  \textbf{Sufficient Conditions for Sample-path Orderings:} As we have mentioned, each task has many replication modes, which require different amounts of service time. In order to minimize delay, the scheduler needs to choose efficient replication modes to execute the tasks as fast as possible. Motivated by this, we introduce two work-efficiency orderings to compare the efficiency of task executions in different policies. By combining these  work-efficiency orderings with appropriate priority rules (i.e., FUT, EDD, FCFS) for job services, we obtain several sufficient conditions  (Propositions \ref{lem1_0}-\ref{lem2}) of the sample-path orderings in \emph{Step 1}. In addition, if more than one sufficient conditions are simultaneously satisfied, we are able to obtain delay inequalities for comparing more general classes of delay metrics achieved by different policies (Propositions \ref{lem1_1_0}-\ref{coro2}).

\item[\emph{3.}]  \textbf{Coupling Arguments:} We use coupling arguments to prove that for NBU and NWU task service times and appropriate replication rules, the work-efficiency orderings are satisfied in the sense of stochastic ordering (Lemmas \ref{lem_coupling}-\ref{lem_coupling_3}). By combining this with the priority rules (i.e., FUT, EDD, FCFS), we are able to prove the sufficient conditions in \emph{Step 2} in the sense of stochastic ordering. 
By this, the main results of this paper are proven. 
\end{itemize}

This sample-path method is quite general. In particular, \emph{Step 1} and \emph{Step 2} do not need to specify the queueing system model, and can be potentially used for establishing (near) delay optimality results in other systems. 

\subsection{Step 1: Sample-path Orderings}
\label{sec_sufficient}

We first propose several sample-path orderings to compare the delay performance of different scheduling policies. Let us first define the system state of any policy $\pi\in\Pi$. 

\begin{definition}\label{def_state_thm3}
At any time instant $t\in[0,\infty)$, the system state of policy $\pi$ is specified by a pair of $n$-dimensional vectors $\bm{\xi}_{\pi}(t)=(\xi_{1,\pi}(t),\ldots,\xi_{n,\pi}(t))$ and $\bm{\gamma}_{\pi}(t)=(\gamma_{1,\pi}(t),\ldots,$ $\gamma_{n,\pi}(t))$ with non-negative components, where $n$ is the total number of jobs and can be either finite or infinite. The components of $\bm{\xi}_{\pi}(t)$ and $\bm{\gamma}_{\pi}(t)$ are interpreted as follows:
If job $i$ is present in the system at time $t$, then $\xi_{i,\pi}(t)$ is the number of \emph{remaining} tasks (which are either stored in the queue or being executed by the servers) of job $i$, and $\gamma_{i,\pi}(t)$ is the number of \emph{unassigned} tasks (which are stored in the queue and not being executed by any server) of job $i$; if job $i$ is not present in the system at time $t$ (i.e., job $i$ has not arrived at the system or has departed from the system), then $\xi_{i,\pi}(t)=\gamma_{i,\pi}(t)=0$. Hence, for all $i=1,\ldots,n$, $\pi\in\Pi$, and $t\in[0,\infty)$
\begin{align}\label{def_relation}
0\leq \gamma_{i,\pi}(t)\leq \xi_{i,\pi}(t)\leq k_i.
\end{align}

Let $\{\bm{\xi}_{\pi}(t),\bm{\gamma}_{\pi}(t),t\in[0,\infty)\}$ denote the state process of policy $\pi$ in a probability space $(\Omega,\mathcal{F},P)$, which is assumed to be right-continuous. The realization of the state process on a sample path $\omega\in \Omega$ can be expressed as $\{\bm{\xi}_{\pi}(\omega,t),$ $\bm{\gamma}_{\pi}(\omega,t),t\in[0,\infty)\}$. To ease the notational burden, we will omit $\omega$ henceforth and reuse $\{\bm{\xi}_{\pi}(t),\bm{\gamma}_{\pi}(t),t\in[0,\infty)\}$ to denote the realization of the state process on a sample path. Because the system starts to operate at time $t=0$,  there is no job in the system before time $t=0$. Hence, 
$\bm{\xi}_{\pi}(0^-) = \bm{\gamma}_{\pi}(0^-)=\bm{0}$. 
\end{definition}

%





The following proposition provides one condition \eqref{eq_ordering_1_1} for comparing the average delay $D_{\text{avg}}(\bm{c}(\pi))$ of different policies on a sample path, which was firstly introduced in \cite{Smith78} to prove the optimality of the preemptive SRPT policy for minimizing the average delay in single-server scheduling problems.
 \begin{proposition} \label{ordering_1} 
For any given job parameters $\mathcal{I}$ and a sample path of two policies $P,\pi\in\Pi$, 
if \footnote{In majorization theory \cite{Marshall2011},  \eqref{eq_ordering_1_1} is equivalent to ``$\bm{\xi}_{\pi}(t)$ is weakly supermajorized by $\bm{\xi}_{P}(t)$, i.e., $\bm{\xi}_{\pi}(t)\prec^{\text{w}}\bm{\xi}_{P}(t)$''.}  
\begin{eqnarray}\label{eq_ordering_1_1}
\sum_{i=j}^n {\xi}_{[i],P}(t)\leq \sum_{i=j}^n {\xi}_{[i],\pi}(t),~\forall~j = 1,2,\ldots,n
\end{eqnarray}
holds for all $t\in[0,\infty)$, where ${\xi}_{[i],\pi}(t)$ is the $i$-th largest component of $\bm{\xi}_{\pi}(t)$, then  
\begin{align}\label{eq_ordering_1_1_2}
c_{(i)}(P)\leq c_{(i)}(\pi),~\forall~i=1,2,\ldots,n,
\end{align}
where $c_{(i)}(\pi)$ is the $i$-th smallest component of $\bm{c}(\pi)$.\footnote{In other words, $c_{(i)}(\pi)$ is the earliest time in policy $\pi$ by which $i$ jobs have been completed in policy $\pi$.} Hence,
\emph{
\begin{align}\label{eq_ordering_1_2}
D_{\text{avg}}(\bm{c}(P))\leq D_{\text{avg}}(\bm{c}(\pi)).
\end{align}
}
\end{proposition}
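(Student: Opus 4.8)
The plan is to first show the pointwise completion-count domination \eqref{eq_ordering_1_1_2}, and then derive the average-delay inequality \eqref{eq_ordering_1_2} from it by a summation-by-parts argument. For the first part, I would reason in terms of the number of jobs completed by each policy up to time $t$. Define $N_\pi(t) = |\{i : \xi_{i,\pi}(t) = 0 \text{ and } a_i \le t\}|$, the number of jobs completed under $\pi$ by time $t$; completed jobs are exactly those present-or-past jobs with zero remaining tasks, so $N_\pi(t)$ equals the number of zero components among those $\xi_{i,\pi}(t)$ corresponding to already-arrived jobs. The key observation is that $c_{(i)}(\pi) \le t$ if and only if $N_\pi(t) \ge i$, so \eqref{eq_ordering_1_1_2} is equivalent to the statement $N_P(t) \ge N_\pi(t)$ for all $t \ge 0$. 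The plan is to extract this from \eqref{eq_ordering_1_1}: taking $j$ to be the largest index with a positive entry, or more directly counting, the condition $\sum_{i=j}^n \xi_{[i],P}(t) \le \sum_{i=j}^n \xi_{[i],\pi}(t)$ for every $j$ (i.e.\ $\bm{\xi}_\pi(t) \prec^{\mathrm w} \bm{\xi}_P(t)$) forces $P$ to have at least as many zero components as $\pi$ among the $n$ entries --- because if $P$ had strictly fewer zeros, then picking $j = n - (\text{number of zeros of }\bm\xi_\pi(t)) + 1$ would make the right-hand sum zero while the left-hand sum is strictly positive, a contradiction. Since all $n$ jobs that have not yet arrived contribute zero to both state vectors identically (and both policies see the same arrival process $\mathcal I$), the zero components corresponding to arrived jobs are what differ, so $N_P(t) \ge N_\pi(t)$, giving \eqref{eq_ordering_1_1_2}.

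For the second part, I would use the identity $D_{\text{avg}}(\bm c(\pi)) = \frac1n\sum_{i=1}^n (c_i(\pi) - a_i) = \frac1n\sum_{i=1}^n c_{(i)}(\pi) - \frac1n\sum_{i=1}^n a_i$, observing that $\sum_i c_i(\pi) = \sum_i c_{(i)}(\pi)$ since summation is permutation-invariant and that the arrival term $\sum_i a_i$ does not depend on the policy. Then \eqref{eq_ordering_1_1_2} applied termwise and summed over $i$ gives $\sum_{i=1}^n c_{(i)}(P) \le \sum_{i=1}^n c_{(i)}(\pi)$, hence $D_{\text{avg}}(\bm c(P)) \le D_{\text{avg}}(\bm c(\pi))$. (Alternatively, one can integrate: $\sum_i (c_{(i)}(\pi) - a_i)$ can be written via $\int_0^\infty [\,(\text{\# arrived by }t) - N_\pi(t)\,]\,dt$, and monotonicity of $N_\pi(t)$ in the policy gives the result directly; but the permutation-invariance route is cleaner.)

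The main obstacle I anticipate is the careful bookkeeping in the first step: making precise the correspondence between the weak supermajorization condition \eqref{eq_ordering_1_1} on the whole vector $\bm\xi_\pi(t)$ (which includes zero entries for not-yet-arrived and already-departed jobs alike) and the ranking of completion times $c_{(i)}$. One must check that the entries are always in $\{0,1,\dots,k_i\}$ by \eqref{def_relation}, that "job $i$ completed'' is genuinely the same as "$\xi_{i,\pi}(t)=0$ with $a_i \le t$,'' and that the hypothesis holding \emph{for all} $t$ is what lets us conclude the ordering of completion times rather than just of completion counts at a fixed instant. Once the equivalence $\{c_{(i)}(\pi) \le t\} \Leftrightarrow \{N_\pi(t) \ge i\}$ and $\{\bm\xi_\pi(t) \prec^{\mathrm w}\bm\xi_P(t)\} \Rightarrow \{N_P(t) \ge N_\pi(t)\}$ are nailed down, the rest is routine. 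This argument is essentially the classical one from \cite{Smith78}, adapted to the present multi-server, batch-job state description.
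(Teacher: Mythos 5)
Your proposal is correct and follows essentially the same route as the paper's proof: both arguments use the tail sum $\sum_{i=j}^n$ in \eqref{eq_ordering_1_1} to force that policy $P$ has at least as many zero (i.e.\ completed or not-yet-arrived) components in $\bm{\xi}$ as policy $\pi$ at every time $t$, then invoke the policy-invariance of the arrival process to translate this into $N_P(t)\geq N_\pi(t)$, which is equivalent to \eqref{eq_ordering_1_1_2}. The paper phrases it in terms of counting unfinished jobs rather than completed ones, but that is a trivial complementation; your deduction of \eqref{eq_ordering_1_2} via permutation-invariance of the sum is the intended one.
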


\begin{proof}
Suppose that there are $l$ unfinished jobs at time $t$ in policy $\pi$, then $\sum_{i=l+1}^n {\xi}_{[i],\pi}(t)=0$. By \eqref{eq_ordering_1_1}, we get $\sum_{i=l+1}^n {\xi}_{[i],P}(t)=0$ and hence there are at most $l$ unfinished jobs in policy $P$. In other words, there are at least as many unfinished jobs in policy $\pi$ as in policy $P$ at any time $t\in[0,\infty)$. This implies \eqref{eq_ordering_1_1_2}, because the sequence of job arrival times $a_1,a_2,\ldots, a_n$ are invariant under any policy. In addition, \eqref{eq_ordering_1_2} follows from \eqref{eq_ordering_1_1_2}, which completes the proof.
\end{proof}

The sample-path ordering  \eqref{eq_ordering_1_1} is quite insightful. According to Proposition \ref{ordering_1}, if \eqref{eq_ordering_1_1} holds for all policies $\pi\in\Pi$ and all sample paths $\omega\in \Omega$, then policy $P$ is sample-path delay-optimal for minimizing the average delay $D_{\text{avg}}(\bm{C}(\pi))$. 
Interestingly, Proposition \ref{ordering_1} is also necessary: If \eqref{eq_ordering_1_1} does not hold at some time $t$, then one can construct an arrival process after time $t$ such that \eqref{eq_ordering_1_1_2} and \eqref{eq_ordering_1_2} do not hold  \cite{Smith78}. 

The sample-path ordering  \eqref{eq_ordering_1_1} has been successfully used in single-server scheduling problems \cite{Smith78}. However, it cannot be directly applied in multi-server scheduling problems. 
In the sequel, we consider an alternative method to relax the sample-path ordering \eqref{eq_ordering_1_1} and seek for near delay optimality.

 \begin{proposition} \label{ordering_2} 
For any given job parameters $\mathcal{I}$ and a sample path of two policies $P,\pi\in\Pi$, if 
\begin{eqnarray}\label{eq_ordering_2_1}
\sum_{i=j}^n {\gamma}_{[i],P}(t)\leq \sum_{i=j}^n {\xi}_{[i],\pi}(t),~\forall~j = 1,2,\ldots,n
\end{eqnarray}
holds for all $t\in[0,\infty)$, where ${\gamma}_{[i],\pi}(t)$ is the $i$-th largest component of $\bm{\gamma}_{\pi}(t)$, then 
\begin{align}\label{eq_ordering_2_2}
v_{(i)}(P)\leq c_{(i)}(\pi),~\forall~i=1,2,\ldots,n,
\end{align}
where $v_{(i)}(P)$ is the $i$-th smallest component of $\bm{v}(P)$.\footnote{In other words, $v_{(i)}(P)$ is the earliest time in policy $P$ that there exist $i$ jobs whose tasks have all started service.}
Hence, 
\emph{
\begin{align}\label{eq_ordering_2_3}
D_{\text{avg}} (\bm{v}(P))\leq D_{\text{avg}}(\bm{c}(\pi)).
\end{align}}
\end{proposition}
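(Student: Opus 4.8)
The plan is to adapt, essentially verbatim, the counting argument used for Proposition~\ref{ordering_1}, the one change being that on the side of policy $P$ the role of ``number of unfinished jobs'' is played by ``number of jobs with at least one unassigned task.'' Concretely, I would fix an arbitrary time $t\in[0,\infty)$ and let $l$ be the number of unfinished jobs of policy $\pi$ at time $t$. Then the $n-l$ smallest components of $\bm{\xi}_{\pi}(t)$ vanish, so $\sum_{i=l+1}^{n}\xi_{[i],\pi}(t)=0$. Plugging $j=l+1$ into the hypothesis \eqref{eq_ordering_2_1} and using that all components of $\bm{\gamma}_P(t)$ are non-negative forces $\sum_{i=l+1}^{n}\gamma_{[i],P}(t)=0$; hence at most $l$ jobs of policy $P$ have a strictly positive number of unassigned tasks at time $t$.

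Next I would translate this into a statement about the $V_i$'s and $C_i$'s. A job with $\gamma_{i,P}(t)>0$ has necessarily arrived, and conversely an already-arrived job with $\gamma_{i,P}(t)=0$ has every one of its tasks either under service or completed, so each task has started service, i.e.\ $V_i(P)\le t$; and $V_i(P)\le t$ in turn implies $a_i\le t$ and $\gamma_{i,P}(t)=0$. Therefore $\{i:V_i(P)\le t\}=\{i:a_i\le t\}\setminus\{i:\gamma_{i,P}(t)>0\}$, which contains at least $|\{i:a_i\le t\}|-l$ jobs. On the other side, the number of jobs completed by time $t$ under policy $\pi$ is exactly $|\{i:a_i\le t\}|-l$, since $l$ of the arrived jobs are still unfinished and all completed jobs have arrived. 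Hence $|\{i:V_i(P)\le t\}|\ge|\{i:C_i(\pi)\le t\}|$ for every $t\in[0,\infty)$.

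Finally I would deduce \eqref{eq_ordering_2_2} from this ordering of the two counting functions: evaluating it at $t=c_{(i)}(\pi)$, where at least $i$ jobs of $\pi$ are completed, shows that at least $i$ jobs of $P$ have all their tasks started by that time, i.e.\ $v_{(i)}(P)\le c_{(i)}(\pi)$ for all $i=1,\dots,n$. Inequality \eqref{eq_ordering_2_3} then follows immediately, since $\sum_{i}v_i(P)=\sum_i v_{(i)}(P)\le\sum_i c_{(i)}(\pi)=\sum_i c_i(\pi)$ while $\sum_i a_i$ is invariant across policies, so $D_{\text{avg}}(\bm{v}(P))\le D_{\text{avg}}(\bm{c}(\pi))$.

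The only genuine subtlety—and the sole place where this departs from the proof of Proposition~\ref{ordering_1}—is keeping not-yet-arrived jobs out of the count: ``$\gamma_{i,P}(t)=0$'' and ``all tasks of job $i$ have started service'' coincide only after job $i$'s arrival, so the identity $\{i:V_i(P)\le t\}=\{i:a_i\le t\}\setminus\{i:\gamma_{i,P}(t)>0\}$ must be argued carefully from the definitions of \emph{remaining} and \emph{unassigned} tasks (and from non-preemption, which guarantees a task that has started stays under service until completion/cancellation). Note that $V_i\le C_i$ from \eqref{eq_V_i_small} is not actually needed for this argument, although it is consistent with the conclusion; everything else is the same monotone bookkeeping as in Proposition~\ref{ordering_1}.
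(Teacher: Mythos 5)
Your proposal is correct and follows essentially the same counting argument as the paper's proof in Appendix~B: both compare, at the completion epoch $c_{(j)}(\pi)$, the number of arrived jobs, the number of $\pi$-incomplete jobs, and (via the hypothesis \eqref{eq_ordering_2_1} and non-negativity of $\bm{\gamma}_P$) the number of $P$-jobs with unassigned tasks, concluding $v_{(j)}(P)\le c_{(j)}(\pi)$. The only cosmetic difference is that you phrase it as a pointwise-in-$t$ comparison of the two counting functions $|\{i:V_i(P)\le t\}|$ and $|\{i:C_i(\pi)\le t\}|$ before specializing $t=c_{(j)}(\pi)$, whereas the paper specializes from the outset; your remark that \eqref{eq_V_i_small} is never invoked is also accurate.
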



\begin{proof}
See Appendix \ref{app_lem2}. 
\end{proof}

Hence, by relaxing the sample-path ordering \eqref{eq_ordering_1_1} as \eqref{eq_ordering_2_1}, a relaxed delay inequality \eqref{eq_ordering_2_3} is obtained which can be used to compare the average delay of policy $P$ and policy $\pi$ in a near-optimal sense. 



Similarly, two sample-path orderings are developed in the following two lemmas to compare the maximum lateness $L_{\max}(\cdot)$ achieved by different policies.




 \begin{proposition} \label{ordering_3} 
For any given job parameters $\mathcal{I}$ and a sample path of two policies $P,\pi\in\Pi$, if 
\begin{eqnarray}\label{eq_ordering_3_1}
\sum_{i: d_i \leq \tau} {\xi}_{i,P}(t)\leq \sum_{i: d_i \leq \tau} {\xi}_{i,\pi}(t),~\forall~\tau\in[0,\infty)
\end{eqnarray}
holds for all $t\in[0,\infty)$, then 
\emph{
\begin{align}\label{eq_ordering_3_2}
L_{\max} (\bm{c}(P))\leq L_{\max}(\bm{c}(\pi)).
\end{align}}
\end{proposition}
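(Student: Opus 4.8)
The plan is to reduce the statement about $L_{\max}$ to a family of ``completion-time domination'' statements indexed by due-date thresholds $\tau$, and to extract each of these from the hypothesis \eqref{eq_ordering_3_1} by a short emptiness argument on the state vectors. Write $S_\tau = \{i : d_i\leq \tau\}$; note that $S_\tau$ depends only on the exogenous parameters $\mathcal{I}$, hence is the same set under $P$ and under $\pi$. The key claim I would establish first is: for every $\tau\in[0,\infty)$ with $S_\tau\neq\emptyset$,
\begin{align}\label{eq_plan_key}
\max_{i\in S_\tau} c_i(P) \leq \max_{i\in S_\tau} c_i(\pi).
\end{align}

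To prove \eqref{eq_plan_key}, let $t^* = \max_{i\in S_\tau}c_i(\pi)$ be the last instant at which some job of $S_\tau$ finishes under $\pi$. For every $t\geq t^*$, all jobs of $S_\tau$ have already departed from the system under $\pi$, so $\xi_{i,\pi}(t)=0$ for all $i\in S_\tau$ (using right-continuity of the state process to cover $t=t^*$ itself), and therefore $\sum_{i:d_i\leq\tau}\xi_{i,\pi}(t)=0$. Substituting this into \eqref{eq_ordering_3_1} with the same $\tau$ gives $\sum_{i:d_i\leq\tau}\xi_{i,P}(t)=0$ for all $t\geq t^*$, i.e., $\xi_{i,P}(t)=0$ for every $i\in S_\tau$ and every $t\geq t^*$. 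If some $i\in S_\tau$ had $c_i(P)>t^*$, then at any time $t\in(t^*,c_i(P))$ job $i$ would still be present with at least one remaining task, forcing $\xi_{i,P}(t)\geq 1$ --- a contradiction. Hence $c_i(P)\leq t^*$ for all $i\in S_\tau$, which is exactly \eqref{eq_plan_key}.

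Finally I would deduce the proposition. Pick a job $i^*$ attaining $L_{\max}(\bm{c}(P))=c_{i^*}(P)-d_{i^*}$ and set $\tau=d_{i^*}$, so that $i^*\in S_\tau$ and $S_\tau\neq\emptyset$. By \eqref{eq_plan_key}, $c_{i^*}(P)\leq\max_{i\in S_\tau}c_i(\pi)=c_{j^*}(\pi)$ for some $j^*\in S_\tau$, and $j^*\in S_\tau$ means $d_{j^*}\leq d_{i^*}$. Combining,
\begin{align}
L_{\max}(\bm{c}(\pi)) \geq c_{j^*}(\pi)-d_{j^*} \geq c_{j^*}(\pi)-d_{i^*} \geq c_{i^*}(P)-d_{i^*} = L_{\max}(\bm{c}(P)),
\end{align}
which is \eqref{eq_ordering_3_2}. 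For $n=\infty$ the same argument goes through with $\max$ replaced by $\sup$: a job (or an approximating sequence of jobs) whose lateness is close to $L_{\max}(\bm{c}(P))$ is compared threshold-by-threshold exactly as above.

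The part that needs care --- the ``hard part'', though it is bookkeeping rather than depth --- is the interface between completion times and the right-continuous state process near the critical instant $t^*$: one must argue cleanly that $\xi_{i,\pi}(t^*)=0$ for jobs finishing at $t^*$, and that $\xi_{i,P}(t)\geq 1$ strictly before a job's completion, so that vanishing of the weighted sums genuinely forces $c_i(P)\leq t^*$. Everything else is the routine ``threshold'' manipulation converting a pointwise-in-$\tau$ domination of completion times into a domination of the maximum lateness.
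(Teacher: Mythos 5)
Your proof is correct and is essentially the same argument as the paper's: both rest on the observation that once $\sum_{i:d_i\leq\tau}\xi_{i,\pi}(t)$ has vanished, the hypothesis forces $\sum_{i:d_i\leq\tau}\xi_{i,P}(t)$ to vanish as well, so that no job with due date at most $\tau$ can still be unfinished under $P$ past the last completion instant of such jobs under $\pi$. The paper packages this as a proof by contradiction centered on the job attaining $L_{\max}(\bm{c}(P))$, whereas you extract the direct threshold claim $\max_{i\in S_\tau}c_i(P)\leq\max_{i\in S_\tau}c_i(\pi)$ and deduce the lateness bound from it; the underlying mechanism is identical.
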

\begin{proof}
See Appendix \ref{app_lem3}. 
\end{proof}
\begin{proposition} \label{ordering_3_1} 
For any given job parameters $\mathcal{I}$ and a sample path of two policies $P,\pi\in\Pi$, if 
\begin{eqnarray}\label{eq_ordering_3_3}
\sum_{i: d_i \leq \tau} {\gamma}_{i,P}(t)\leq \sum_{i: d_i \leq \tau} {\xi}_{i,\pi}(t),~\forall~\tau\in[0,\infty)
\end{eqnarray}
holds for all $t\in[0,\infty)$, then 
\emph{
\begin{align}\label{eq_ordering_3_4}
L_{\max} (\bm{v}(P))\leq L_{\max}(\bm{c}(\pi)).
\end{align}} 
\end{proposition}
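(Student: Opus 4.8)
The plan is to mirror the reasoning behind Proposition~\ref{ordering_3}, with the observable $\bm{\xi}_P$ replaced by $\bm{\gamma}_P$ and, correspondingly, the completion-time vector $\bm{c}(P)$ replaced by the start-of-service vector $\bm{v}(P)$; this is exactly the kind of relaxation that turns Proposition~\ref{ordering_1} into Proposition~\ref{ordering_2}. First I would dispose of the trivial case: if $L_{\max}(\bm{c}(\pi))=\infty$ (or $=\sup$ in the case $n=\infty$) there is nothing to prove, so set $\ell:=L_{\max}(\bm{c}(\pi))<\infty$, which by definition means $c_j(\pi)\le d_j+\ell$ for every job $j$ on the given sample path.

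Next, fix an arbitrary job $i$ and examine the time instant $t^\star:=d_i+\ell$. Every job $j$ with $d_j\le d_i$ satisfies $c_j(\pi)\le d_j+\ell\le d_i+\ell=t^\star$, so it has already departed from policy $\pi$ by time $t^\star$, i.e.\ $\xi_{j,\pi}(t^\star)=0$; summing, $\sum_{j:\,d_j\le d_i}\xi_{j,\pi}(t^\star)=0$. Then I would invoke the hypothesis \eqref{eq_ordering_3_3} with $\tau=d_i$ and $t=t^\star$, which gives $\sum_{j:\,d_j\le d_i}\gamma_{j,P}(t^\star)\le 0$; since all terms are non-negative, $\gamma_{j,P}(t^\star)=0$ for every such $j$, and in particular $\gamma_{i,P}(t^\star)=0$.

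It then remains to translate $\gamma_{i,P}(t^\star)=0$ into a bound on $v_i(P)$. Because $c_i(\pi)\ge a_i$ and $c_i(\pi)\le d_i+\ell$, job $i$ has already arrived by time $t^\star$; combined with $\gamma_{i,P}(t^\star)=0$, this means that at time $t^\star$ every task of job $i$ is either completed or under service in policy $P$, i.e.\ the event $\{V_i(P)\le t^\star\}$ holds, so $v_i(P)\le t^\star=d_i+\ell$ and hence $v_i(P)-d_i\le\ell$. Taking the maximum over $i$ (a supremum when $n=\infty$) yields $L_{\max}(\bm{v}(P))\le\ell=L_{\max}(\bm{c}(\pi))$, which is \eqref{eq_ordering_3_4}.

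I do not expect a genuine obstacle here: the argument is a short specialization of the proof of Proposition~\ref{ordering_3}. The only points requiring care are the bookkeeping ones — that $\gamma_{i,P}(t^\star)=0$ together with ``job $i$ has arrived by $t^\star$'' is precisely the event $\{V_i(P)\le t^\star\}$ (this uses the definition of $V_i$ in \eqref{eq_V_i_small} and the fact that, under a non-preemptive policy, a task that has entered service never returns to the unassigned set), and the degenerate cases $\ell=\infty$ and $n=\infty$. The step I would be most careful to state cleanly is why $\sum_{j:\,d_j\le d_i}\xi_{j,\pi}(t^\star)=0$ forces the job-$i$ coordinate of $\bm{\gamma}_P(t^\star)$ to vanish, which is immediate once \eqref{eq_ordering_3_3} is instantiated at $\tau=d_i$.
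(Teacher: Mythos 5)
Your proof is correct and rests on the same key mechanism as the paper's: instantiate the sample-path ordering \eqref{eq_ordering_3_3} at a time and due-date threshold for which the $\pi$-side sum is already zero, force the $P$-side sum (hence $\gamma_{i,P}$) to vanish, and translate that into a bound on $v_i(P)$. The paper runs this as a proof by contradiction (fixing a job $w_j$, assuming every earlier-due, already-arrived job finishes strictly before $v_{w_j}(P)$ in $\pi$, and deriving $\gamma_{w_j,P}(\tau_j)=0$ for $\tau_j<v_{w_j}(P)$), whereas you run it directly: set $\ell=L_{\max}(\bm{c}(\pi))$, probe at $t^\star=d_i+\ell$, and read off $v_i(P)\le d_i+\ell$ without any contradiction. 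That direct phrasing is a genuine, if modest, simplification — it avoids introducing the auxiliary index $w_j$ and the cutoff $\tau_j$, and it makes the role of $\ell$ transparent. Two small bookkeeping points worth stating explicitly in a final write-up: (i) you need $t^\star\ge 0$ before you may invoke \eqref{eq_ordering_3_3} at $t=t^\star$, and this follows from $t^\star=d_i+\ell\ge c_i(\pi)\ge a_i\ge 0$, so establish that before step~2 rather than after; and (ii) the step from ``$\gamma_{i,P}(t^\star)=0$ and job $i$ has arrived'' to ``$V_i(P)\le t^\star$'' uses that under non-preemptive policies a task, once assigned, never returns to the unassigned set — you flag this, and it is exactly the fact needed.
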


\begin{proof}
See Appendix \ref{app_lem3}. 
\end{proof}

%
%
%

If $d_i=a_i$ for all $i$,  the maximum lateness $L_{\max}(\cdot)$ reduces to the {maximum delay} $D_{\max}(\cdot)$. Hence, we can obtain 

 \begin{corollary} \label{ordering_4} 
For any given job parameters $\mathcal{I}$ and a sample path of two policies $P,\pi\in\Pi$, if 
\begin{eqnarray}\label{eq_ordering_4_1}
\sum_{i: a_i \leq \tau} {\xi}_{i,P}(t)\leq \sum_{i: a_i \leq \tau} {\xi}_{i,\pi}(t),~\forall~\tau\in[0,\infty)
\end{eqnarray}
holds for all $t\in[0,\infty)$, then 
\emph{
\begin{align}\label{eq_ordering_4_2}
D_{\max} (\bm{c}(P))\leq D_{\max}(\bm{c}(\pi)).
\end{align}}
\end{corollary}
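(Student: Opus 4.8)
The plan is to recognize the statement as the special case $d_i = a_i$ of Proposition \ref{ordering_3}. When every due date is set equal to the corresponding arrival time, the lateness of job $i$ becomes $L_i = C_i - d_i = C_i - a_i = D_i$, so $L_{\max}(\cdot)$ coincides with $D_{\max}(\cdot)$; moreover the hypothesis \eqref{eq_ordering_3_1} of Proposition \ref{ordering_3}, namely $\sum_{i: d_i \leq \tau} \xi_{i,P}(t) \leq \sum_{i: d_i \leq \tau} \xi_{i,\pi}(t)$ for all $\tau$ and all $t$, turns into exactly \eqref{eq_ordering_4_1}. Hence \eqref{eq_ordering_4_2} is immediate from \eqref{eq_ordering_3_2}. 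This is the shortest route, and it is presumably what the ``Hence'' in the text is pointing to.

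For a self-contained argument (essentially re-running the proof of Proposition \ref{ordering_3}), I would fix an arbitrary sample path and set $\ell := D_{\max}(\bm{c}(\pi))$. First I would note that for any threshold $\tau \in [0,\infty)$, every job $i$ with $a_i \leq \tau$ satisfies $c_i(\pi) = a_i + D_i(\pi) \leq a_i + \ell \leq \tau + \ell$ under policy $\pi$, so all such jobs have departed by time $\tau + \ell$ under $\pi$; equivalently $\sum_{i: a_i \leq \tau} \xi_{i,\pi}(\tau + \ell) = 0$. Evaluating the hypothesis \eqref{eq_ordering_4_1} at the time $t = \tau + \ell$ then gives $\sum_{i: a_i \leq \tau} \xi_{i,P}(\tau + \ell) = 0$, and since each $\xi_{i,P}(\cdot)$ is non-negative this forces $\xi_{i,P}(\tau + \ell) = 0$ for every $i$ with $a_i \leq \tau$; that is, every such job is also completed by time $\tau + \ell$ under policy $P$.

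To finish, I would apply the previous step with $\tau = a_j$ for each job $j$ in turn, which yields $c_j(P) \leq a_j + \ell$, i.e., $c_j(P) - a_j \leq \ell$. Taking the maximum over $j = 1,\ldots,n$ gives $D_{\max}(\bm{c}(P)) = \max_j\, [c_j(P) - a_j] \leq \ell = D_{\max}(\bm{c}(\pi))$, which is \eqref{eq_ordering_4_2}.

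I do not anticipate a genuine obstacle here: the statement is a corollary and all the substance lives in Proposition \ref{ordering_3} (or, in the self-contained version, in the elementary bookkeeping above). The only point worth a word of care is that the hypothesis is used not at a single instant but at the particular instants $\tau + \ell$; this causes no trouble because \eqref{eq_ordering_4_1} is assumed to hold for all $t \in [0,\infty)$. Likewise, the reduction from Proposition \ref{ordering_3} is legitimate precisely because that proposition is stated for arbitrary due times, so the substitution $d_i = a_i$ is admissible.
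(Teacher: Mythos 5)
Your primary route — specializing Proposition \ref{ordering_3} to $d_i = a_i$, which turns $L_{\max}$ into $D_{\max}$ and hypothesis \eqref{eq_ordering_3_1} into \eqref{eq_ordering_4_1} — is exactly what the paper does: the text explicitly states that Corollaries \ref{ordering_4} and \ref{ordering_4_1} follow from Propositions \ref{ordering_3} and \ref{ordering_3_1} by setting $d_i = a_i$. So on the main point you match the paper and the reduction is airtight.

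Your self-contained argument is also correct and is worth a remark: it differs structurally from the paper's proof of Proposition \ref{ordering_3}. The paper proceeds by contradiction, positing that every relevant job finishes strictly before $c_{w_j}(P)$ and deriving $\xi_{w_j,P}(\tau_j)=0$ with $\tau_j < c_{w_j}(P)$; your version is direct, fixing $\ell := D_{\max}(\bm{c}(\pi))$ and evaluating \eqref{eq_ordering_4_1} at the uniformly shifted time $t=\tau+\ell$ to conclude $\sum_{i: a_i\leq\tau}\xi_{i,P}(\tau+\ell)=0$, then choosing $\tau = a_j$. The direct version is cleaner for the $d_i=a_i$ case precisely because all due times move with the arrivals, so a single additive shift $\ell$ works uniformly over $\tau$; in the general-$d_i$ setting of Proposition \ref{ordering_3} no single shift suffices, which is presumably why the paper resorts to the job-by-job contradiction argument instead. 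Both arguments rely on the same two facts — that $\xi_{i,\pi}(\cdot)$ vanishes after $c_i(\pi)$ (and before $a_i$), and that the componentwise nonnegativity of $\bm{\xi}_{P}(t)$ lets a vanishing sum force each term to vanish — and your handling of those is correct.
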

\begin{corollary} \label{ordering_4_1} 
For any given job parameters $\mathcal{I}$ and a sample path of two policies $P,\pi\in\Pi$, if
\begin{eqnarray}\label{eq_ordering_4_3}
\sum_{i: a_i \leq \tau} {\gamma}_{i,P}(t)\leq \sum_{i: a_i \leq \tau} {\xi}_{i,\pi}(t),~\forall~\tau\in[0,\infty)
\end{eqnarray}
holds for all $t\in[0,\infty)$, then 
\emph{
\begin{align}\label{eq_ordering_4_4}
D_{\max} (\bm{v}(P))\leq D_{\max}(\bm{c}(\pi)).
\end{align}}
\end{corollary}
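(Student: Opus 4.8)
The plan is to derive Corollary \ref{ordering_4_1} directly as the special case of Proposition \ref{ordering_3_1} obtained by taking $d_i = a_i$ for every $i = 1,\ldots,n$. This is legitimate because $d_i = a_i$ is an admissible instance of the job parameters in $\mathcal{I}$, so Proposition \ref{ordering_3_1} applies with no change. Under this substitution the lateness of each job coincides with its delay, $C_i - d_i = C_i - a_i$, so for any $n$-vector $\bm{x}$ one has $\max_i (x_i - d_i) = \max_i (x_i - a_i)$; applying this identity with $\bm{x} = \bm{v}(P)$ and with $\bm{x} = \bm{c}(\pi)$ turns the conclusion \eqref{eq_ordering_3_4} into exactly \eqref{eq_ordering_4_4}. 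Similarly, the hypothesis \eqref{eq_ordering_3_3} becomes, term by term, the hypothesis \eqref{eq_ordering_4_3}. Hence the only thing I would actually write down is this reduction.

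If instead one wanted a self-contained argument, I would replay the contradiction proof of Proposition \ref{ordering_3_1}. Set $L^{\ast} = D_{\max}(\bm{c}(\pi)) = \max_i (c_i(\pi) - a_i)$ and suppose, toward a contradiction, that $v_k(P) > a_k + L^{\ast}$ for some job $k$. Pick any $t$ with $a_k + L^{\ast} < t < v_k(P)$. For every job $j$ with $a_j \le a_k$ we have $c_j(\pi) \le a_j + L^{\ast} \le a_k + L^{\ast} < t$, so job $j$ is completed in $\pi$ by time $t$ and $\xi_{j,\pi}(t) = 0$; thus $\sum_{i : a_i \le a_k} \xi_{i,\pi}(t) = 0$. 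On the other hand, since $t < v_k(P)$, at least one task of job $k$ has not yet started service in $P$ at time $t$ and is therefore an unassigned task, so $\gamma_{k,P}(t) \ge 1$ and hence $\sum_{i : a_i \le a_k} \gamma_{i,P}(t) \ge 1$. Taking $\tau = a_k$ in \eqref{eq_ordering_4_3} then yields $1 \le \sum_{i : a_i \le a_k} \gamma_{i,P}(t) \le \sum_{i : a_i \le a_k} \xi_{i,\pi}(t) = 0$, a contradiction. Therefore $v_i(P) \le a_i + L^{\ast}$ for all $i$, i.e.\ $D_{\max}(\bm{v}(P)) \le L^{\ast} = D_{\max}(\bm{c}(\pi))$.

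I do not expect any real obstacle: all the substance is already carried by Proposition \ref{ordering_3_1} (itself the arrival-time counterpart of Proposition \ref{ordering_3}), whose proof is deferred to Appendix \ref{app_lem3}. The only subtlety worth flagging in the self-contained version is the elementary observation that a task which has not yet started service is, by definition, unassigned and so contributes to $\gamma_{k,P}(t)$; combined with the definition of $V_k$ as the earliest instant at which all tasks of job $k$ have started, this gives $\gamma_{k,P}(t) \ge 1$ throughout $t \in [a_k + L^{\ast}, v_k(P))$, which is all that the argument uses.
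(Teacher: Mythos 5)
Your proposal is correct and takes exactly the paper's approach: the paper also derives Corollary~\ref{ordering_4_1} by specializing Proposition~\ref{ordering_3_1} to $d_i = a_i$ for all $i$, noting that the hypothesis and conclusion reduce term-by-term to \eqref{eq_ordering_4_3} and \eqref{eq_ordering_4_4}. Your supplementary self-contained contradiction argument is a streamlined but faithful rendition of the proof of Proposition~\ref{ordering_3_1} in the appendix and is also correct.
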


The proofs of Corollary \ref{ordering_4} and Corollary \ref{ordering_4_1} are omitted, because they follow directly from Proposition \ref{ordering_3} and Proposition \ref{ordering_3_1} by setting $d_i=a_i$ for all $i$. Nonetheless, due to the importance of the maximum delay metric, Corollary \ref{ordering_4} and Corollary \ref{ordering_4_1} are of independent interests.

The sample-path orderings in Propositions \ref{ordering_1}-\ref{ordering_3_1} and Corollaries \ref{ordering_4}-\ref{ordering_4_1} are of similar forms. Their distinct features are 
\begin{itemize} 
\item In the sample-path orderings \eqref{eq_ordering_1_1} and \eqref{eq_ordering_2_1} corresponding to the average delay $D_{\text{avg}}(\cdot)$, the summations are taken over the jobs with the fewest remaining/unassigned tasks;
 
\item In the sample-path orderings \eqref{eq_ordering_3_1} and \eqref{eq_ordering_3_3} corresponding to the maximum lateness $L_{\max}(\cdot)$, the summations are taken over the jobs with the earliest due times;
 
\item In the sample-path orderings \eqref{eq_ordering_4_1} and \eqref{eq_ordering_4_3} corresponding to the maximum delay $D_{\max}(\cdot)$, the summations are taken over the jobs with the earliest arrival times.
\end{itemize} 
These features are tightly related to the priority rules for minimizing the corresponding delay metrics: The priority rule for minimizing the average delay $D_{\text{avg}}(\cdot)$ is 
 FUT first; the priority rule for minimizing the maximum lateness $L_{\max}(\cdot)$ is EDD first; the priority rule for minimizing the maximum delay $D_{\max}(\cdot)$ is FCFS. Hence, \emph{the summations in these sample-path orderings are taken over the high priority jobs}. This is the key insight behind these sample-path orderings.

A number of popular sample-path methods --- such as forward induction, backward induction, and interchange arguments  \cite{Liu1995} --- have been successfully used to establish delay optimality results in single-server scheduling problems \cite{Schrage68,Jackson55,Baccelli:1993}. However, it is challenging to directly generalize these methods and characterize the sub-optimal delay gap from the optimum when delay optimality is essentially difficult to achieve. 
On the other hand, the sample-path orderings \eqref{eq_ordering_1_1}, \eqref{eq_ordering_2_1}, \eqref{eq_ordering_3_1}, \eqref{eq_ordering_3_3}, \eqref{eq_ordering_4_1}, and \eqref{eq_ordering_4_3} provide an interesting unified framework for  sample-path delay comparisons towards both delay optimality and near delay optimality. 
To the best of our knowledge, except for \eqref{eq_ordering_1_1} developed in \cite{Smith78},  the sample-path orderings \eqref{eq_ordering_2_1}, \eqref{eq_ordering_3_1}, \eqref{eq_ordering_3_3}, \eqref{eq_ordering_4_1}, and \eqref{eq_ordering_4_3} have not appeared before.

\subsection{Step 2: Sufficient Conditions for Sample-path Orderings}

In \emph{Step 2}, we will introduce several sufficient conditions for the sample-path orderings \eqref{eq_ordering_1_1}, \eqref{eq_ordering_2_1}, \eqref{eq_ordering_3_1}, \eqref{eq_ordering_3_3}, \eqref{eq_ordering_4_1}, and \eqref{eq_ordering_4_3}. In addition, we will also develop sample-path sufficient conditions for comparing more general delay metrics in $\mathcal{D}_{\text{sym}}$, $\mathcal{D}_{\text{Sch-1}}$, and $\mathcal{D}_{\text{Sch-2}}$. 

\subsubsection{ Work-efficiency Orderings}

In traditional queueing systems without replications, the service delay is largely governed by the work conservation law (or its generalizations): At any time, the expected total amount of time for completing the jobs in the queue is invariant among all work-conserving policies \cite{Leonard_Kleinrock_book,Jose2010,Gittins:11}. However, this work conservation law does not hold in queueing systems with replications.
In particular, each task has many replication modes (i.e., it can be replicated on different sets of servers and at different time instants), which require different amounts of service time. In order to minimize delay, the scheduler needs to choose efficient replication modes to execute the tasks. Motivated by this, we introduce an ordering to compare the efficiency of task executions in different policies. We call it \emph{work-efficiency ordering}.

Let $k_{\text{sum}}=\sum_{i=1}^n k_i$ denote the total number of tasks of all jobs. Define $\bm{T}_\pi=(T_{1,\pi},\ldots,$ $ T_{k_{\text{sum}},\pi})$ as the sequence of task completion times in policy $\pi$ where $T_{1,\pi}\leq \ldots\leq T_{k_{\text{sum}},\pi}$. 
Let $\bm{t}_\pi=(t_{1,\pi},\ldots, t_{k_{\text{sum}},\pi})$ denote the realization of $\bm{T}(\pi)$ on a sample path.
\begin{definition} \label{def_order} \emph{Work-Efficiency Ordering:}
For given job parameters $\mathcal{I}$ and a sample path of two policies $P,\pi\in\Pi$, policy $P$ is said to be \emph{more work-efficient than} policy $\pi$, if
\begin{align}\label{eq_work_efficient_order_1_1}
\bm{t}_P \leq\bm{t}_\pi.
\end{align}
\end{definition}

The key idea of this work-efficiency ordering is \emph{to complete tasks as early as possible.} This idea was  used  to study delay-optimal replications in \cite{Borst2003,Righter2008,Kim2010} where each job has a single task, i.e., $k_1 = \cdots=k_n=1$.

In some scenarios, the above work-efficiency ordering is not satisfied, but it is possible to establish the following alternative form of work-efficiency ordering:
\begin{figure}
\centering 
\includegraphics[width=0.4\textwidth]{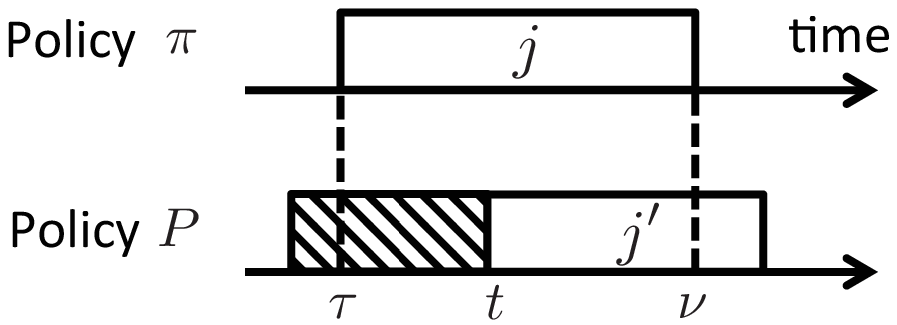} \caption{Illustration of the weak work-efficiency ordering, where the service duration of a task (i.e., the time duration since one copy of the task starts service until one copy of the task is completed) is indicated by a rectangle.
Task $j$ starts service at time $\tau$ and completes service at time $\nu$ in policy $\pi$, and one corresponding task $j'$ starts service at  time $t\in[\tau,\nu]$ in policy $P$.}
\label{Work_Efficiency_Ordering} 
\end{figure} 

\begin{definition} \label{def_order} \emph{Weak Work-efficiency Ordering:}
For any given job parameters $\mathcal{I}$ and a sample path of two policies $P,\pi\in\Pi$, policy $P$ is said to be \emph{weakly more work-efficient than} policy $\pi$, if the following assertion is true:
For each task $j$ executed in policy $\pi$, if
\begin{itemize}
\item[1.] In policy $\pi$, task $j$ starts service at time $\tau$ and completes service at time $\nu$ ($\tau\leq \nu$), 
\item[2.] In policy $P$, the queue is not empty (there exist unassigned tasks in the queue) during $[\tau,\nu]$, 
\end{itemize}
then there exists one corresponding task $j'$ in policy $P$ which starts service during $[\tau,\nu]$. \end{definition}
An illustration of this weak work-efficiency ordering is provided in Fig. \ref{Work_Efficiency_Ordering}. Notice that this weak work-efficient ordering requires the service starting time of task $j'$ in policy $P$ to be within the service duration of its corresponding task $j$ in policy $\pi$. This is a key feature that will be used later to establish near delay optimality. 

We note that the weak work-efficiency ordering does not follow from the work-efficiency ordering. We say it is \emph{weak} in the sense that work-efficiency ordering can be used to establish delay optimality, while {weak} work-efficiency ordering can be used to establish near delay optimality. 
\subsubsection{Sufficient Conditions for  Sample-path Orderings}
Using these two work-efficiency orderings, we can obtain 
the following sufficient conditions for the sample-path ordering \eqref{eq_ordering_1_1} and \eqref{eq_ordering_2_1} associated to the average delay $D_{\text{avg}}(\cdot)$.

\begin{proposition}\label{lem1_0}
For given job parameters $\mathcal{I}$ and a sample path of two policies $P,\pi\in\Pi$, if
\begin{itemize}
\itemsep0em 
\item[1.] $k_1\leq k_2\leq \ldots\leq k_n$, 
\item[2.] Policy $P$ is {more work-efficient} than policy $\pi$,
\item[3.] In policy $P$, each task completing service is from the job with the fewest remaining tasks among all jobs with remaining tasks, 
\end{itemize} 
then \eqref{eq_ordering_1_1}-\eqref{eq_ordering_1_2} hold.  
\end{proposition}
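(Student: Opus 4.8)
The plan is to prove \eqref{eq_ordering_1_1} by induction on the ``event times'' of the coupled sample path of $P$ and $\pi$, i.e., the (countably many) instants at which a task arrives, starts service, or completes service in either policy. At each such time $t$ I will show that the partial-sum inequality $\sum_{i=j}^n \xi_{[i],P}(t)\le \sum_{i=j}^n \xi_{[i],\pi}(t)$ for all $j$ is preserved; once \eqref{eq_ordering_1_1} is established for all $t$, Proposition \ref{ordering_1} immediately yields \eqref{eq_ordering_1_1_2} and \eqref{eq_ordering_1_2}. It is convenient to rephrase the inequality in terms of the increasing rearrangements: writing $\bm{\xi}_{P}(t)$ and $\bm{\xi}_{\pi}(t)$ sorted in increasing order, \eqref{eq_ordering_1_1} says that every \emph{suffix} (largest components) of $\bm{\xi}_{P}(t)$ sums to no more than the corresponding suffix of $\bm{\xi}_{\pi}(t)$ --- this is exactly the weak supermajorization $\bm{\xi}_{\pi}(t)\prec^{\text{w}}\bm{\xi}_{P}(t)$ noted in the footnote. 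I would record at the outset the two elementary facts I will lean on: (a) decreasing one coordinate of a vector (when it stays $\ge 0$) can only help the ``$\le$'' direction on the $P$ side and only hurt it on the $\pi$ side; and (b) if $\bm{x}\prec^{\text{w}}\bm{y}$ coordinatewise-sorted and we decrement a coordinate of $\bm{y}$ that is \emph{not} the smallest positive one while $\bm{x}$ has a strictly smaller total in every suffix, the relation persists --- the technical lemma that makes the FUT rule matter.

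The core of the argument is the case analysis at an event time $t$. \textbf{Arrival of job $i$:} the same job, with the same batch size $k_i$, is added to both systems (the job parameters $\mathcal{I}$ are policy-invariant), so a common value is inserted into both vectors; inserting equal entries into two vectors related by weak supermajorization preserves the relation. \textbf{Task completion in $\pi$:} this only decreases a $\xi_{\cdot,\pi}$ coordinate, which relaxes the RHS --- but I must be careful, since decrementing the RHS can break a suffix inequality; here I use Condition 1 ($k_1\le\cdots\le k_n$) together with the bookkeeping from Proposition \ref{ordering_1}'s proof that the number of unfinished jobs controls which suffixes are ``tight.'' \textbf{Task completion in $P$:} by Condition 3, the completing task belongs to the job with the fewest remaining tasks among jobs with remaining tasks, i.e., we decrement the \emph{smallest positive} coordinate of $\bm{\xi}_{P}(t)$ (after sorting); this is precisely the move under which the suffix-sum inequalities are most robustly preserved, and is where the FUT discipline is essential --- decrementing a larger coordinate could violate a suffix bound. \textbf{Task start (or replication start) in either policy:} this does not change $\bm{\xi}$ at all (a task is ``remaining'' whether queued or in service), so there is nothing to check; the role of the work-efficiency ordering is not in these instantaneous jumps but in guaranteeing, via Condition 2 ($\bm{t}_P\le\bm{t}_\pi$ pointwise), that the sequence of completion \emph{times} in $P$ dominates that of $\pi$ favorably --- concretely, whenever the $k$-th task completes in $\pi$, at least $k$ tasks have completed in $P$, which is the invariant tying the two completion processes together and forcing the $P$ side to stay ahead.

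The main obstacle I anticipate is handling the completion events cleanly when the two policies are not processing ``the same'' tasks --- there is no natural identification of individual tasks across $P$ and $\pi$, only across jobs via batch sizes and the FUT ordering. The right way to manage this is to argue entirely at the level of the sorted state vectors and a counting invariant: maintain, as a strengthened induction hypothesis, both \eqref{eq_ordering_1_1} and the statement ``the total number of completed tasks in $P$ is $\ge$ that in $\pi$ at every time $t$'' (which follows from Condition 2), and show these two together are self-perpetuating across all four event types. Condition 1 is used to ensure that the FUT rule in $P$ keeps the $\bm{\xi}_P$ vector ``as balanced toward small coordinates as possible,'' so that the jobs which finish first in $P$ are consistent with the suffix-sum bound; making this last implication rigorous --- essentially a discrete majorization-monotonicity lemma for the decrement-smallest-coordinate operation --- is the step I would write out most carefully, and everything else is routine bookkeeping.
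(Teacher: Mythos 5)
Your approach differs from the paper's in an important structural way, and it contains a genuine gap.

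The paper does \emph{not} attempt a direct asynchronous induction on event times. Instead, it first \emph{modifies} the sample path of $P$ by postponing its $i$-th task completion time from $t_{i,P}$ to $t_{i,\pi}$ (using Condition~2 to guarantee this is a postponement). This produces an auxiliary process $\hat{\bm{\xi}}_P$ whose completion instants coincide exactly with those of $\pi$, so that the only events to compare are synchronized arrivals and synchronized batches of completions with $b_P\ge b_\pi$; these are precisely the two cases covered by Smith's Lemmas 1--3 (quoted as Lemmas~\ref{lem_non_prmp1_0}--\ref{lem_non_prmp2_0}). One then observes that $\xi_{i,P}(t)\le\hat\xi_{i,P}(t)$ for all $i,t$ (postponing completions only increases remaining work), and chains the two comparisons. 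The sole role of Condition~1 ($k_1\le\cdots\le k_n$) in the paper's proof is to show that the \emph{modified} sample path still satisfies Condition~3: when the $i$-th completion is delayed from $t_{i,P}$ to $t_{i,\pi}$, any job that arrives in between has index larger and hence batch size $\ge k_{j_i}\ge\xi_{j_i,P}(t_{i,P}^-)$, so the completing job remains the FUT choice. This is quite different from the role you ascribe to Condition~1.

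The gap in your proposal is at the asynchronous ``$\pi$-only completion'' step. Your strengthened invariant is the suffix ordering \eqref{eq_ordering_1_1} plus ``$N_P(t)\ge N_\pi(t)$,'' but the latter is literally the $j=1$ case of \eqref{eq_ordering_1_1}, so it adds nothing. When a task completes only in $\pi$ at time $t$, the $j$-th suffix sum $\sum_{i=j}^n\xi_{[i],\pi}$ can drop by one for $j>1$ as well, and the invariant alone does not rule out that it was tight there. Concretely, $\bm{\xi}_P=(1,1)$, $\bm{\xi}_\pi=(2,1)$ satisfy both \eqref{eq_ordering_1_1} and $N_P>N_\pi$; if $\pi$ now completes the task of the job with one remaining, $\bm{\xi}_\pi$ becomes $(2,0)$ and the $j=2$ (smallest-component) inequality $\xi_{(1),P}\le\xi_{(1),\pi}$ fails ($1>0$). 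Of course this $\bm{\xi}_P$ is not FUT-reachable, so the proposition is not falsified --- but that is exactly the point: the induction step needs to invoke structural information about which $\bm{\xi}_P$ are reachable under FUT with $k_1\le\cdots\le k_n$, and your stated invariant does not carry that information. Your ``fact (b)'' is intended to be the missing lemma, but as written it is not correctly stated (you speak of decrementing a non-smallest coordinate of $\bm{y}=\bm{\xi}_P$, whereas FUT decrements the smallest, and ``$\bm{x}$ has a strictly smaller total in every suffix'' is in tension with $\bm{x}\prec^{\text{w}}\bm{y}$). Until that lemma is formulated and proved --- which is the whole difficulty --- the direct induction does not close. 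The modification trick sidesteps the asynchronous case entirely, which is why the paper can lean on Smith's synchronized-step lemmas.
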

\begin{proof}
See Appendix \ref{app_lem1_0}.
\end{proof}

\begin{proposition}\label{lem1}
For given job parameters $\mathcal{I}$ and a sample path of two policies $P,\pi\in\Pi$, if
\begin{itemize}
\itemsep0em 
\item[1.] Policy $P$ is \textbf{weakly} {more work-efficient} than policy $\pi$,
\item[2.] In policy $P$, each task starting service is from the job with the fewest unassigned tasks among all jobs with unassigned tasks, 
\end{itemize} 
then \eqref{eq_ordering_2_1}-\eqref{eq_ordering_2_3} hold.  
\end{proposition}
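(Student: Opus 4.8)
The plan is to reduce the claim to the single sample-path ordering \eqref{eq_ordering_2_1}: once that is established for the given coupled sample path of $P$ and $\pi$, the two delay inequalities \eqref{eq_ordering_2_2}--\eqref{eq_ordering_2_3} follow verbatim from Proposition \ref{ordering_2}. Thus the whole job is to show that, at every $t\in[0,\infty)$, the sum of the $k$ smallest entries of $\bm{\gamma}_{P}(t)$ is at most the sum of the $k$ smallest entries of $\bm{\xi}_{\pi}(t)$ for each $k=1,\dots,n$ (which is exactly \eqref{eq_ordering_2_1} after the reindexing $j=n-k+1$, since $\sum_{i=j}^n\gamma_{[i],P}(t)$ is the sum of the $n-j+1$ smallest components). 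This holds trivially at $t=0^-$, where both vectors equal $\bm{0}$; since both state processes are right-continuous and piecewise constant with only countably many jump epochs, it suffices to verify that the relation is preserved across each jump.

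First I would enumerate the epochs at which the tracked vectors change. The vector $\bm{\gamma}_{P}$ changes only when a job arrives (a zero coordinate jumps up by $k_i$) or when a task starts service in $P$ (one positive coordinate drops by one, and by Condition 2 it is a coordinate currently attaining the smallest positive value); the vector $\bm{\xi}_{\pi}$ changes only when a job arrives (a zero coordinate jumps up by $k_i$) or when a task completes service in $\pi$ (one positive coordinate drops by one). Task starts in $\pi$, task completions in $P$, and replica starts in $P$ on already-running tasks leave both vectors untouched. A task start in $P$ only decreases the left-hand side, so it preserves \eqref{eq_ordering_2_1} immediately. A job arrival adds the same increment $k_i$ to a zero coordinate of both vectors; because \eqref{eq_ordering_2_1} is assumed at $t^-$ for \emph{all} $k$, a routine computation with the partial sums of order statistics (using the explicit effect of raising one zero coordinate) shows the relation survives the arrival.

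The real obstacle is a task completion in $\pi$, which decreases a coordinate of $\bm{\xi}_{\pi}$ and hence can only hurt the right-hand side. If the completing task belongs to job $\bar{j}$ and finishes at time $\nu$, the relation can fail only at indices $k$ for which \eqref{eq_ordering_2_1} was \emph{tight} at $\nu^-$ and $\bar{j}$ is among the $k$ jobs with the fewest remaining tasks in $\pi$ at $\nu^-$. To rule this out I would invoke the weak work-efficiency ordering through the service interval $[\tau,\nu]$ of the completing task: during $[\tau,\nu]$, policy $P$ either empties its queue (so $\bm{\gamma}_{P}=\bm{0}$ at some instant in $[\tau,\nu]$, making \eqref{eq_ordering_2_1} slack there and easy to propagate forward through the arrivals that may occur before $\nu$) or starts some task, and Condition 2 then identifies that $P$-start as the removal of one unit from a minimum-size job of $P$ --- precisely the slack needed at the dangerous indices. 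The subtlety, and where I expect most of the work to lie, is that this $P$-start occurs somewhere \emph{inside} $[\tau,\nu]$ rather than exactly at $\nu$, so the slack it creates must be protected against being consumed by intervening arrivals and other $\pi$-completions. Making this airtight calls for strengthening the induction hypothesis to carry an explicit pairing between the service intervals of tasks in $\pi$ and the task-starts in $P$ that respects the interval constraint of the weak work-efficiency ordering --- in the same spirit as (but more involved than) the argument behind Proposition \ref{lem1_0}. Once that augmented invariant is set up and shown to propagate across all three event types, \eqref{eq_ordering_2_1} follows, and with it the proposition.
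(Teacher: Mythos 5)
Your reduction to the sample-path ordering \eqref{eq_ordering_2_1}, the enumeration of event types, the observation that job arrivals add the same increment to a zero coordinate of both vectors, and the identification of the crux --- that the $P$-start guaranteed by weak work-efficiency lands somewhere in $[\tau,\nu]$ rather than at $\nu$, and may be ``used up'' by intervening events --- are all exactly right. This is where the paper's argument also concentrates its effort. What's missing is the resolution. You defer to ``strengthening the induction hypothesis to carry an explicit pairing between the service intervals of tasks in $\pi$ and the task-starts in $P$,'' but you do not formulate such an invariant, and it is not at all clear how to state it so that it survives arrivals and simultaneous $\pi$-completions; that is precisely the hard part, so the proof as written has a genuine gap.

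The paper closes this gap not by augmenting the invariant but by \emph{modifying the $\pi$ sample path} interval by interval. Working on $[a_i, a_{i+1})$, for every $\pi$-task $j$ that started at $\tau \le a_i$ and completes at some $\nu \in (a_i,t]$, whose paired $P$-task $j'$ already started at some $t' \le a_i$, the completion time of $j$ in $\pi$ is moved back to $a_i^-$. This modification (a) leaves $\bm{\xi}_\pi(t)$ unchanged, (b) preserves the weak work-efficiency ordering, and (c) --- the key --- guarantees that on the modified $\pi$ path, every completion in $[a_i,t]$ is paired with a $P$-start also in $[a_i,t]$, so one can count $b_\pi$ completions against $b_P \ge b_\pi$ starts and apply a two-state transfer lemma (the analogue of Lemma~\ref{lem_non_prmp1}) in lock-step with a job-arrival lemma, with the base case at $a_i$ supplied by the inductive hypothesis. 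This is lighter than tracking an explicit pairing: the modification ``pushes'' the problematic completions into an earlier interval that the induction has already certified, instead of carrying forward bookkeeping to protect the slack they would have needed. If you want to pursue your route, you would essentially be rediscovering the effect of this modification, but with more invariant state to maintain. One more small imprecision worth noting: an isolated $P$-start preserves \eqref{eq_ordering_2_1}, but to actually \emph{offset} a $\pi$-completion at a tight index you need Condition~2 of the proposition to guarantee that, as long as the left-hand partial sum is positive, the decremented coordinate lies among the $n-j+1$ smallest, so the sum drops by exactly the number of $P$-starts; ``only decreases'' alone is not enough to cancel the damage done to the right-hand side.
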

\begin{proof}
See Appendix \ref{app0}.
\end{proof}

Similarly, two sufficient conditions are obtained for the sample-path orderings \eqref{eq_ordering_3_1} and \eqref{eq_ordering_3_3} for comparing the maximum lateness $L_{\max}(\cdot)$ of different policies.

\begin{proposition}\label{lem2_0}
For given job parameters $\mathcal{I}$ and a sample path of two policies $P,\pi\in\Pi$, if
\begin{itemize}
\itemsep0em 
\item[1.] $d_1\leq d_2\leq \ldots\leq d_n$, 
\item[2.] Policy $P$ is {more work-efficient} than policy $\pi$,
\item[3.] In policy $P$, each task completing service is from the job with the earliest due time among  all jobs with remaining tasks, 
\end{itemize} 
then \eqref{eq_ordering_3_1} and \eqref{eq_ordering_3_2} hold.  
\end{proposition}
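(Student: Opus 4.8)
The plan is to reduce the statement to \eqref{eq_ordering_3_1}: once that family of inequalities is established, \eqref{eq_ordering_3_2} is immediate from Proposition \ref{ordering_3}. So the whole task is to verify $\sum_{i:d_i\le\tau}\xi_{i,P}(t)\le\sum_{i:d_i\le\tau}\xi_{i,\pi}(t)$ for every $\tau\in[0,\infty)$ and every $t$. I would fix $\tau$ and, using Condition~1, write $G=\{i:d_i\le\tau\}=\{1,\dots,m\}$ for some $m$ (an initial segment of the job indices). Let $K_G(t)$ be the number of $G$-tasks that have arrived by time $t$ (the same under $P$ and $\pi$, since arrivals do not depend on the policy), and let $M_P^G(t)$, $M_\pi^G(t)$ be the numbers of $G$-tasks that have completed service by time $t$ under $P$ and $\pi$. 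Since $\sum_{i\in G}\xi_{i,P}(t)=K_G(t)-M_P^G(t)$ and likewise for $\pi$, the claim is equivalent to $M_P^G(t)\ge M_\pi^G(t)$ for all $t$: policy $P$ completes at least as many high-priority (early-due) tasks as $\pi$ does, by every instant.

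Two ingredients drive the argument. First, the work-efficiency hypothesis $\bm t_P\le\bm t_\pi$ gives $N_P(t)\ge N_\pi(t)$ for all $t$, where $N_\pi(t)$ is the total number of tasks completed by time $t$; in particular the $\tau=\infty$ case of \eqref{eq_ordering_3_1} holds. Second, Condition~3 implies that whenever $P$ completes a task at a moment at which it still has a remaining $G$-task, the completed task itself belongs to $G$, because the completing job attains the minimum due date among all jobs with remaining tasks, and that minimum is $\le\tau$; hence, as long as $P$'s ``$G$-queue'' is nonempty, every $P$-completion is a $G$-completion. I would then proceed by tracking the event epochs (arrivals and the completions under $P$ and $\pi$): at an arrival both sides of \eqref{eq_ordering_3_1} jump equally; at a $P$-completion the left side can only drop; at a $\pi$-completion of a task outside $G$ the right side is unchanged. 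The only delicate event is a $\pi$-completion of a $G$-task at a time $t_0$ at which \eqref{eq_ordering_3_1} is tight. There I would let $s_0\le t_0$ be the last epoch at which $P$'s $G$-queue is empty (it exists, being empty at $0^-$ by the convention $\bm\xi_P(0^-)=\bm 0$), so that $M_P^G(s_0)=K_G(s_0)$; on $(s_0,t_0)$ every $P$-completion is a $G$-completion, so $M_P^G$ increases there at exactly the rate of $N_P$. Combining this with $N_P\ge N_\pi$, with $M_\pi^G\le K_G$, and with the number of $G$-arrivals in $(s_0,t_0)$ should give $M_P^G(t_0)\ge M_\pi^G(t_0)$, contradicting the assumed violation.

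The hard part will be making this last bookkeeping step airtight. Work-efficiency dominates only the \emph{global} completion counts, and turning that into domination of the \emph{restricted} (high-priority) counts forces one to account for the non-$G$ tasks that $P$ may have cleared during earlier idle spells of its $G$-queue: one really has to relate $P$'s and $\pi$'s progress on the complementary class across $(s_0,t_0)$ and back to the previous time $P$'s $G$-queue emptied, rather than merely invoking $N_P\ge N_\pi$ at $s_0$. I expect this to be handled by strengthening the inductive invariant — carrying, in addition to \eqref{eq_ordering_3_1}, the total-count domination and the complementary-class bookkeeping simultaneously — with the induction anchored at the successive times $P$'s $G$-queue empties. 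This mirrors the structure of the proof of Proposition \ref{lem1_0}, with the earliest-due-date priority and the summation over $\{i:d_i\le\tau\}$ playing the role there played by the fewest-remaining-tasks priority and the partial sums of the sorted vector $\bm\xi$. Once \eqref{eq_ordering_3_1} holds for all $\tau$ and $t$, Proposition \ref{ordering_3} yields \eqref{eq_ordering_3_2} and completes the proof.
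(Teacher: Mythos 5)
Your plan reduces \eqref{eq_ordering_3_2} to \eqref{eq_ordering_3_1} via Proposition~\ref{ordering_3}, which is the right move, and your reformulation of \eqref{eq_ordering_3_1} as a cumulative-count inequality $M_P^G(t)\ge M_\pi^G(t)$ (with $G=\{i:d_i\le\tau\}$ an initial segment by Condition~1) is correct. But the gap you flag at the end is real, and the fix you gesture at (a ``strengthened inductive invariant'' anchored at the successive times $P$'s $G$-queue empties) does not close it. The work-efficiency hypothesis $\bm t_P\le\bm t_\pi$ yields only the \emph{pointwise} bound $N_P(t)\ge N_\pi(t)$, whereas the $s_0$-based step in your sketch ultimately requires the \emph{increment} bound $N_P(t_0)-N_P(s_0)\ge N_\pi(t_0)-N_\pi(s_0)$, which does not follow: $P$ can be far ahead of $\pi$ at $s_0$ (having cleared non-$G$ work while its $G$-queue was empty) and then advance slowly on $(s_0,t_0]$ while $\pi$ catches up, exactly the non-$G$ ``bookkeeping across idle spells'' you name as the hard part. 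As written, no mechanism is supplied to rule this out, so the proof is incomplete at its crux.

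The paper closes this with a time-shift coupling that you do not use. Because $\bm t_P\le\bm t_\pi$ componentwise, one postpones $P$'s $i$-th completion from $t_{i,P}$ to $t_{i,\pi}$ while keeping the \emph{order} in which $P$'s tasks finish unchanged. This modification (i) can only increase $P$'s state vector at every time, so it suffices to prove \eqref{eq_ordering_3_1} for the modified path; (ii) synchronizes the two completion-epoch sequences, turning the comparison into a one-for-one induction over shared event epochs; and (iii) preserves the EDD discipline of $P$ on the modified path, because — using $a_1\le\cdots\le a_n$ from the model together with $d_1\le\cdots\le d_n$ from Condition~1 — any job arriving inside a postponement window has a larger index, hence a no-earlier due date, and so cannot usurp priority from the job whose task is completing. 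With the epochs synchronized, the induction is the elementary event-by-event bookkeeping you describe, discharged by Lemma~\ref{lem_non_prmp1_thm3_0} (simultaneous completions in both policies) and Lemma~\ref{lem_non_prmp2_thm3_0} (arrivals), mirroring the proof of Proposition~\ref{lem1_0}. It is precisely the time-shift that removes the asynchronous-increment accounting that blocks your direct argument; without it I do not see how to make your sketch airtight.
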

\begin{proof}
See Appendix \ref{app_lem2_0}.
\end{proof}

\begin{proposition}\label{lem2}
For given job parameters $\mathcal{I}$ and a sample path of two policies $P,\pi\in\Pi$, if
\begin{itemize}
\itemsep0em 
\item[1.] Policy $P$ is \textbf{weakly} {more work-efficient} than policy $\pi$,
\item[2.] In policy $P$, each task starting service is from the job with the earliest due time among all jobs with unassigned tasks, 
\end{itemize} 
then \eqref{eq_ordering_3_3} and \eqref{eq_ordering_3_4} hold. \end{proposition}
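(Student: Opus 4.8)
The plan is to recast \eqref{eq_ordering_3_3} as a statement comparing, for each deadline threshold $\tau$, the number of deadline-eligible tasks that have \emph{started} service in $P$ with the number that have \emph{completed} in $\pi$, and then to verify that comparison by an induction over the event epochs of the two sample paths, with the weak work-efficiency ordering and the EDD rule of $P$ doing the real work. Since \eqref{eq_ordering_3_4} is an immediate consequence of \eqref{eq_ordering_3_3} and Proposition \ref{ordering_3_1}, it suffices to establish \eqref{eq_ordering_3_3}.

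First I would fix $\tau\geq 0$, set $H=\{i:d_i\leq\tau\}$, and observe that a task leaves the count $\sum_{i\in H}\gamma_{i,P}(\cdot)$ exactly when it starts service in $P$ (a task never reverts to the unassigned state) and leaves $\sum_{i\in H}\xi_{i,\pi}(\cdot)$ exactly when it completes in $\pi$. Writing $A(t),S(t),C(t)$ for the number of tasks of $H$-jobs that have, respectively, arrived by $t$, started service in $P$ by $t$, and completed in $\pi$ by $t$, this gives
\begin{align}
\sum_{i\in H}\gamma_{i,P}(t)=A(t)-S(t),\qquad \sum_{i\in H}\xi_{i,\pi}(t)=A(t)-C(t),
\end{align}
so \eqref{eq_ordering_3_3} is equivalent to $C(t)\leq S(t)$ for all $t$. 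This holds at $t=0^-$, and $C$ increases only at a completion epoch $\nu$ in $\pi$ of some $H$-task $j$, which entered service in $\pi$ at some $\sigma\leq\nu$; hence it is enough to show $C(\nu^-)<S(\nu^-)$ whenever $\sum_{i\in H}\gamma_{i,P}(\nu^-)>0$ (when that quantity is $0$ one has $\sum_{i\in H}\gamma_{i,P}(\nu)=0$ and nothing to check).

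For the key step I would assume $\sum_{i\in H}\gamma_{i,P}(\nu^-)>0$ and let $u<\nu$ be the last epoch at which $P$ holds no unassigned task of any $H$-job (such $u$ exists since this is the case at $0^-$). Then on $(u,\nu)$ policy $P$ always carries an unassigned $H$-task — in particular its queue is nonempty — and at $u$ every $H$-task that has arrived has already started in $P$, so $S(u)=A(u)$. I would split the $H$-tasks completing in $\pi$ during $(u,\nu]$ into the ``straddling'' ones (already in service in $\pi$ at time $u$) and the ``interior'' ones (whose $\pi$-service starts after $u$), with counts $c_2$ and $c_1$. Each straddling task is a remaining $H$-task of $\pi$ at time $u$, so $\sum_{i\in H}\xi_{i,\pi}(u)\geq c_2$. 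For each interior task, its $\pi$-service interval lies in $(u,\nu]$ where $P$'s queue is nonempty, so the weak work-efficiency ordering yields a corresponding task of $P$ starting service inside that interval; and since $P$ always has an unassigned $H$-task there while every non-$H$-job has due date strictly larger than $\tau$, the EDD rule of $P$ forces that corresponding task to belong to an $H$-job. By injectivity of the correspondence this produces $c_1$ distinct $H$-tasks starting in $P$ during $(u,\nu)$, whence $S(\nu^-)\geq A(u)+c_1$, while $C(\nu^-)=C(u)+c_1+c_2-1=A(u)-\sum_{i\in H}\xi_{i,\pi}(u)+c_1+c_2-1\leq A(u)+c_1-1<S(\nu^-)$. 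This closes the induction, and Proposition \ref{ordering_3_1} then gives \eqref{eq_ordering_3_4}.

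The hard part will be the argument just described: the reference epoch $u$ must be chosen so that, over the \emph{entire} window in which a corresponding $P$-task can be produced, $P$ genuinely holds an unassigned $H$-task, since only then does its EDD rule guarantee that the weak-work-efficiency correspondent is itself an $H$-task rather than some lower-priority task; and one must simultaneously dispose of the tasks already in progress in $\pi$ at time $u$ (which need not correspond to any $P$-task started after $u$) by charging them against the $H$-tasks still remaining in $\pi$ at $u$. Disentangling this interaction between the work-efficiency correspondence and the EDD priority is the crux; handling simultaneous event epochs and the closed-versus-half-open endpoints of service intervals is routine.
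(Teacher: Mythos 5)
Your proof is correct in its essential structure and takes a genuinely different route from the paper's. The paper proves Proposition~\ref{lem2} by reducing to the proof of Proposition~\ref{lem1} (Appendix~\ref{app0}), substituting Lemmas~\ref{lem_non_prmp1_thm3}--\ref{lem_non_prmp2_thm3} for Lemmas~\ref{lem_non_prmp1}--\ref{lem_non_prmp2}. That argument inducts over the inter-arrival intervals $[a_i,a_{i+1})$, and within each interval it constructs a \emph{modified} sample path of $\pi$ in which any task that started in $\pi$ before $a_i$, finishes in $(a_i,t]$, and whose $P$-correspondent also started before $a_i$, has its completion moved back to $a_i^-$; this artificial shift makes the per-interval completion/allocation counts directly comparable and lets the algebraic state-update lemmas close the induction over the whole family of inequalities at once. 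You instead fix a single threshold $\tau$, recast the inequality as a running comparison $C(t)\leq S(t)$ of deadline-eligible completions in $\pi$ against service-starts in $P$, induct over $\pi$'s completion epochs, and at each epoch $\nu$ pick the last instant $u<\nu$ at which $P$ has no unassigned $H$-task, splitting the intervening $\pi$-completions into straddling and interior ones. Your straddling/interior decomposition plays exactly the role of the paper's modification trick, but it avoids altering the sample path: straddling completions are absorbed by the buffer $\sum_{i\in H}\xi_{i,\pi}(u)\geq c_2$ already present at $u$, while interior ones are matched one-for-one to $H$-task starts of $P$ via the weak work-efficiency ordering plus the EDD priority. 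Your version is more transparent about where the slack comes from; the paper's is more modular and reuses its lemmas across Propositions~\ref{lem1}, \ref{lem1_0}, \ref{lem2_0}, and \ref{lem2}.

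One small slip: you assert $S(\nu^-)\geq A(u)+c_1$ and then conclude $C(\nu^-)<S(\nu^-)$, but an interior task whose $\pi$-service interval ends exactly at $\nu$ may have its $P$-correspondent start \emph{at} $\nu$, in which case that start is not counted in $S(\nu^-)$. This is easily repaired: aim directly for $C(\nu)\leq S(\nu)$. If $P$'s queue is still nonempty at $\nu$, every interior task gets a correspondent starting in $(u,\nu]$, so $S(\nu)\geq A(u)+c_1$ and $C(\nu)=C(u)+c_1+c_2\leq A(u)+c_1\leq S(\nu)$. If instead $P$'s queue empties at $\nu$, then since $\sum_{i\in H}\gamma_{i,P}(\nu^-)>0$, at least one $H$-task must start service at $\nu$ by the EDD rule, so $S(\nu)\geq S(\nu^-)+1\geq A(u)+(c_1-1)+1=A(u)+c_1$ again. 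You should also state explicitly that the map from $\pi$-tasks to $P$-correspondents given by the weak work-efficiency ordering is injective (this is implicit in the paper's Claim~3 and is guaranteed by the coupling constructions), since your counting relies on it.
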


\begin{proof}
See Appendix \ref{app0_1}. 
\end{proof}

%
%
%
%
%
%

\subsubsection{More General Delay Metrics}
We now investigate more general delay metrics in $\mathcal{D}_{\text{sym}}$ and $\mathcal{D}_{\text{Sch-1}}$.
First, Proposition \ref{lem1_0} and Proposition \ref{lem1} can be directly generalized to all delay metrics in $\mathcal{D}_{\text{sym}}$.

\begin{proposition}\label{lem1_1_0}
If the conditions of Proposition \ref{lem1_0} are satisfied, then for all \emph{$f\in\mathcal{D}_{\text{sym}}$}
\begin{align}
f (\bm{c}(P))\leq f(\bm{c}(\pi)).\nonumber
\end{align}
\end{proposition}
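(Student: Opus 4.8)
The plan is to derive Proposition \ref{lem1_1_0} from Proposition \ref{lem1_0} together with Proposition \ref{ordering_1}, using the fact that majorization-type orderings on the sorted completion-time vectors transfer through symmetric increasing functions. First I would invoke Proposition \ref{lem1_0}: under its hypotheses, \eqref{eq_ordering_1_1_2} holds, i.e. $c_{(i)}(P)\leq c_{(i)}(\pi)$ for every $i=1,\ldots,n$. Equivalently, writing $\bm{c}_{\uparrow}(P)=(c_{(1)}(P),\ldots,c_{(n)}(P))$ and $\bm{c}_{\uparrow}(\pi)=(c_{(1)}(\pi),\ldots,c_{(n)}(\pi))$, we have the elementwise inequality $\bm{c}_{\uparrow}(P)\leq\bm{c}_{\uparrow}(\pi)$.

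Next I would use the definition of $\mathcal{D}_{\text{sym}}$: any $f\in\mathcal{D}_{\text{sym}}$ is symmetric and increasing. Symmetry gives $f(\bm{c}(P))=f(\bm{c}_{\uparrow}(P))$ and $f(\bm{c}(\pi))=f(\bm{c}_{\uparrow}(\pi))$, since $\bm{c}_{\uparrow}$ is just a permutation (the increasing rearrangement) of $\bm{c}$. Then monotonicity of $f$ applied to the elementwise inequality $\bm{c}_{\uparrow}(P)\leq\bm{c}_{\uparrow}(\pi)$ yields $f(\bm{c}_{\uparrow}(P))\leq f(\bm{c}_{\uparrow}(\pi))$. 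Chaining these three facts gives
\begin{align}
f(\bm{c}(P))=f(\bm{c}_{\uparrow}(P))\leq f(\bm{c}_{\uparrow}(\pi))=f(\bm{c}(\pi)),\nonumber
\end{align}
which is exactly the claimed inequality. This is the entire argument.

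There is essentially no obstacle here — the proposition is a routine corollary, and the only thing to be careful about is the bookkeeping: making sure that ``increasing'' is understood componentwise (as stated in the Notations subsection, increasing is in the non-strict sense and for vector arguments means preservation of the elementwise partial order $\leq$), and that symmetry is used in the form $f(\bm{x})=f(\bm{x}_{\uparrow})$ as defined in the paper. If one wanted to be maximally explicit, one could also note that the same conclusion follows from the $\leq_{\text{st}}$ statements once one observes that the deterministic (sample-path) inequalities are what drive the stochastic-ordering theorems; but at the level of Proposition \ref{lem1_1_0}, which is stated sample-path-wise, the three-line chain above suffices. I would keep the proof to that short paragraph and simply reference Appendix \ref{app_lem1_0} / Proposition \ref{lem1_0} for the underlying completion-time comparison.
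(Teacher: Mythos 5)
Your proof is correct and follows the paper's own argument essentially verbatim: invoke Proposition \ref{lem1_0} to get the componentwise inequality $\bm{c}_{\uparrow}(P)\leq\bm{c}_{\uparrow}(\pi)$, then combine symmetry ($f(\bm{x})=f(\bm{x}_{\uparrow})$) with monotonicity of $f\in\mathcal{D}_{\text{sym}}$ to conclude. No differences worth noting.
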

\begin{proof}
See Appendix \ref{app_lem1_1}.
\end{proof}
\begin{proposition}\label{lem1_1}
If the conditions of Proposition \ref{lem1} are satisfied, then for all \emph{$f\in\mathcal{D}_{\text{sym}}$}
\begin{align}
f (\bm{v}(P))\leq f(\bm{c}(\pi)).\nonumber
\end{align}
\end{proposition}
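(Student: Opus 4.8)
The plan is to derive the claim directly from the conclusion of Proposition \ref{lem1}, using only the two defining properties of $\mathcal{D}_{\text{sym}}$ (symmetry and monotonicity). Proposition \ref{lem1} already supplies the pointwise ordering of sorted completion-type vectors in \eqref{eq_ordering_2_2}, namely $v_{(i)}(P)\leq c_{(i)}(\pi)$ for every $i=1,\ldots,n$. In vector form this reads $\bm{v}_{\uparrow}(P)\leq \bm{c}_{\uparrow}(\pi)$, where the inequality is elementwise and $\bm{v}_{\uparrow}(P)=(v_{(1)}(P),\ldots,v_{(n)}(P))$, $\bm{c}_{\uparrow}(\pi)=(c_{(1)}(\pi),\ldots,c_{(n)}(\pi))$ are the increasing rearrangements of $\bm{v}(P)$ and $\bm{c}(\pi)$.

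First I would record that any $f\in\mathcal{D}_{\text{sym}}$ is symmetric, so $f(\bm{v}(P))=f(\bm{v}_{\uparrow}(P))$ and $f(\bm{c}(\pi))=f(\bm{c}_{\uparrow}(\pi))$. Next, since $f$ is increasing and $\bm{v}_{\uparrow}(P)\leq \bm{c}_{\uparrow}(\pi)$, monotonicity of $f$ yields $f(\bm{v}_{\uparrow}(P))\leq f(\bm{c}_{\uparrow}(\pi))$. Chaining the three relations gives $f(\bm{v}(P))=f(\bm{v}_{\uparrow}(P))\leq f(\bm{c}_{\uparrow}(\pi))=f(\bm{c}(\pi))$, which is exactly the asserted inequality. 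This is the same reasoning by which \eqref{eq_ordering_2_3} follows from \eqref{eq_ordering_2_2}, now stated for a general $f\in\mathcal{D}_{\text{sym}}$ rather than just $f=D_{\text{avg}}$; the point is simply that the argument never used anything about $D_{\text{avg}}$ beyond its membership in $\mathcal{D}_{\text{sym}}$.

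There is essentially no hard step here; the one thing to be careful about is that monotonicity must be applied to the \emph{sorted} vectors, not to $\bm{v}(P)$ and $\bm{c}(\pi)$ directly, since $\bm{v}(P)\leq \bm{c}(\pi)$ need not hold componentwise (the jobs are indexed identically but their completion/starting times appear in different positions in the two policies), whereas the sorted inequality is exactly what Proposition \ref{lem1} guarantees. Symmetry of $f$ is precisely the bridge that makes this permissible. In the case $n=\infty$, where $f$ carries the $\limsup$ convention of Section \ref{sec:metrics}, the same chain holds for every finite truncation and is preserved in the limit, so no additional work is needed.
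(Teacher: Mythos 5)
Your proof is correct and follows exactly the paper's own argument: invoke the sorted-vector inequality $\bm{v}_{\uparrow}(P)\leq\bm{c}_{\uparrow}(\pi)$ from Proposition \ref{lem1} (i.e.~\eqref{eq_ordering_2_2}), then apply symmetry and monotonicity of $f$. The paper proves Proposition \ref{lem1_1_0} this way and notes Proposition \ref{lem1_1} is identical with $\bm{c}(P)$ replaced by $\bm{v}(P)$, which is precisely what you wrote.
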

\begin{proof}
See Appendix \ref{app_lem1_1}.
\end{proof}

If policy $P$ simultaneously satisfies the sufficient conditions in Proposition \ref{lem1_0} and Proposition \ref{lem2_0} (or Proposition \ref{lem1} and Proposition \ref{lem2}), we can obtain a couple of delay inequalities for comparing any delay metric in $\mathcal{D}_{\text{sym}}\cup\mathcal{D}_{\text{Sch-1}}$.

\begin{proposition}\label{coro2_0}
If the conditions of Proposition \ref{lem1_0} and Proposition \ref{lem2_0} are simultaneously satisfied, then for all \emph{$f\in\mathcal{D}_{\text{sym}}\cup\mathcal{D}_{\text{Sch-1}}$}
\begin{align}\label{eq_lem_general}
f (\bm{c}(P))\leq f(\bm{c}(\pi)).
\end{align}
\end{proposition}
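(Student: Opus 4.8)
The plan is to treat the two families $\mathcal{D}_{\text{sym}}$ and $\mathcal{D}_{\text{Sch-1}}$ separately. For $f\in\mathcal{D}_{\text{sym}}$ nothing new is required: the hypotheses of Proposition~\ref{coro2_0} contain all hypotheses of Proposition~\ref{lem1_0}, so Proposition~\ref{lem1_1_0} already gives $f(\bm{c}(P))\le f(\bm{c}(\pi))$. Everything below is therefore about $f\in\mathcal{D}_{\text{Sch-1}}$.

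Fix $f\in\mathcal{D}_{\text{Sch-1}}$ and write $\bm{L}(\rho)=\bm{c}(\rho)-\bm{d}$ for the lateness vector under a policy $\rho$, so that $f(\bm{c}(\rho))=g(\bm{L}(\rho))$ with $g(\bm{x}):=f(\bm{x}+\bm{d})$ increasing and Schur convex. Since an increasing Schur-convex function preserves the order $\prec_{\text{w}}$ (weak majorization from below)~\cite{Marshall2011}, it suffices to prove $\bm{L}(P)\prec_{\text{w}}\bm{L}(\pi)$. Using the identity $\sum_{i=1}^{j}x_{[i]}=\min_{s\in\mathbb{R}}\big[\,js+\sum_{i=1}^{n}(x_i-s)^{+}\,\big]$, the relation $\bm{x}\prec_{\text{w}}\bm{y}$ is implied by $\sum_{i}(x_i-s)^{+}\le\sum_{i}(y_i-s)^{+}$ for every $s\in\mathbb{R}$; and writing $(a)^{+}=\int_{0}^{\infty}\mathbf{1}[a>r]\,dr$ and substituting $u=s+r$ converts the target inequality into
\[
\int_{s}^{\infty}N_{P}(u)\,du\ \le\ \int_{s}^{\infty}N_{\pi}(u)\,du,\qquad\forall s\in\mathbb{R},
\]
where $N_{\rho}(u):=\big|\{\,i:C_i(\rho)>d_i+u\,\}\big|$ counts the jobs whose lateness under $\rho$ exceeds $u$. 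Hence it is enough to establish the pointwise bound $N_{P}(u)\le N_{\pi}(u)$ for every $u$.

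To prove $N_{P}(u)\le N_{\pi}(u)$ I would fix $u$ and show the stronger prefix statement: for every due-date threshold $\tau$, the number of jobs with $d_i\le\tau$ that are late by more than $u$ in $P$ is at most the corresponding number in $\pi$. Such prefix domination yields an injection from the late jobs of $P$ into those of $\pi$ (match the $r$-th earliest-due late job of $P$ to the $r$-th earliest-due late job of $\pi$), giving $N_P(u)\le N_\pi(u)$. The prefix bound is where both halves of the hypothesis are used: because $d_1\le\cdots\le d_n$ and $k_1\le\cdots\le k_n$, the jobs due by $\tau$ are exactly jobs $1,\dots,n(\tau)$, which are also the jobs of smallest size, and in policy $P$ the FUT and EDD priorities coincide and always favour these jobs over the later ones; combining this ordered structure with the sample-path ordering \eqref{eq_ordering_3_1} from Proposition~\ref{lem2_0} (and, where needed, the ordering \eqref{eq_ordering_1_1} from Proposition~\ref{lem1_0}) converts the task-level inequalities of those orderings into the job-level prefix bound. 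This parallels, at the level of weak majorization, the argument behind Proposition~\ref{ordering_3}.

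The main obstacle is precisely this conversion from tasks to jobs. The sample-path orderings \eqref{eq_ordering_1_1} and \eqref{eq_ordering_3_1} compare \emph{numbers of remaining tasks}, whereas $N_\rho(u)$ counts \emph{jobs}, and since one job may carry many tasks a bound on task backlogs does not by itself bound job backlogs; this is why the monotonicity $k_1\le\cdots\le k_n$ together with $d_1\le\cdots\le d_n$ is essential rather than cosmetic, as it forces the jobs that can still be late in $P$ to be among the smallest jobs and to be served first, which is what makes the prefix counting and the injective matching valid (the extreme cases $k_1=\cdots=k_n=1$, and $d_i=a_i$ which recovers Corollaries~\ref{ordering_4}--\ref{ordering_4_1}, show why some such monotonicity is unavoidable). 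Ties in the due dates are handled in the customary way, by replacing each ``$\{i:d_i\le\tau\}$'' argument with a one-sided-limit / consistent-tie-breaking version, and introduce no real difficulty.
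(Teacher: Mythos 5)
Your treatment of $f\in\mathcal{D}_{\text{sym}}$ is fine and matches the paper. For $f\in\mathcal{D}_{\text{Sch-1}}$, however, the reduction you propose fails at the step ``Hence it is enough to establish the pointwise bound $N_P(u)\le N_\pi(u)$ for every $u$.'' The integral inequality $\int_s^\infty N_P\le\int_s^\infty N_\pi$ is indeed equivalent to $\bm{L}(P)\prec_{\text{w}}\bm{L}(\pi)$, but you strengthen it to a pointwise bound on $N$, and that pointwise bound --- equivalently, $L_{[i]}(P)\le L_{[i]}(\pi)$ for every $i$ --- is \emph{not} a consequence of the hypotheses. Concretely: take one server, $n=2$, $k_1=k_2=1$, $a_1=a_2=0$, $d_1=0<d_2=1$, and let the server's two consecutive service draws be $5$ and $2$. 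Policy $P$ (EDD = FUT here) gives $\bm{c}(P)=(5,7)$ so $\bm{L}(P)=(5,6)$; a competing work-conserving policy $\pi$ that serves job $2$ first gives $\bm{c}(\pi)=(7,5)$ so $\bm{L}(\pi)=(7,4)$. Both policies complete tasks at the same epochs $5$ and $7$, so the work-efficiency ordering and all conditions of Propositions~\ref{lem1_0} and~\ref{lem2_0} hold, yet $N_P(4.5)=2>1=N_\pi(4.5)$. So the pointwise bound --- and with it your prefix-counting / injective-matching plan --- is false, even though weak majorization (and hence the conclusion) does hold: $\sum L_{[i]}(P)=11=\sum L_{[i]}(\pi)$ and $L_{[1]}(P)=6\le 7=L_{[1]}(\pi)$.

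What the paper does instead is avoid any order-statistic comparison of the lateness vectors. It uses only the order-statistic comparison of the \emph{completion-time} vectors, $c_{(i)}(P)\le c_{(i)}(\pi)$ (which does follow, from Proposition~\ref{lem1_0}), together with the structural fact that under the EDD discipline $\bm{c}(P)$ is co-monotone with $\bm{d}$. One builds an auxiliary vector $\bm{c}'$ by permuting the entries of $\bm{c}(P)$ so that they sit in the same rank positions as the entries of $\bm{c}(\pi)$; then $\bm{c}'\le\bm{c}(\pi)$ componentwise from $c_{(i)}(P)\le c_{(i)}(\pi)$, and the rearrangement inequality (Lemma~\ref{lem_rearrangement}) applied repeatedly shows $\bm{c}(P)-\bm{d}\prec\bm{c}'-\bm{d}$. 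Chaining these gives $\bm{c}(P)-\bm{d}\prec_{\text{w}}\bm{c}(\pi)-\bm{d}$ and the Schur-convex step finishes the proof. The moral is that one should not try to compare the sorted lateness vectors coordinate-by-coordinate; the permutation-and-rearrangement argument is precisely what gets around the fact that that coordinate comparison is unavailable.

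Separately, the last sentence of your job-counting paragraph overstates what the task-level orderings~\eqref{eq_ordering_1_1} and~\eqref{eq_ordering_3_1} can yield even if the target were true: those orderings bound sums of remaining-task counts, and translating them into a count of \emph{jobs} past a lateness threshold requires more than the monotonicity $k_1\le\cdots\le k_n$ and $d_1\le\cdots\le d_n$; the counterexample above has $k_1=k_2=1$ and $d_1<d_2$, so those monotonicity conditions are present and do not rescue the claim.
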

\begin{proof}[Proof sketch of Proposition \ref{coro2_0}]
For any $f\in\mathcal{D}_{\text{sym}}$, \eqref{eq_ordering_1_1_2} and \eqref{eq_lem_general} follow from Proposition \ref{lem1_0} and Proposition \ref{lem1_1_0}.
For any $f\in\mathcal{D}_{\text{Sch-1}}$, we construct an $n$-dimensional vector $\bm{c}'$ and show that  
\begin{align}\label{eq_coro2_6_0}
\bm{c}(P)-\bm{d}\prec \bm{c}'- \bm{d} \leq \bm{c} (\pi)-\bm{d},
\end{align}
where the first majorization ordering in \eqref{eq_coro2_6_0} follows from the rearrangement inequality \cite[Theorem 6.F.14]{Marshall2011},\cite{Chang93rearrangement_majorization}, and the second inequality in  \eqref{eq_coro2_6_0} is proven by using  \eqref{eq_ordering_1_1_2}.
This further implies 
\begin{align}\label{eq_coro2_3_0}
\bm{c}(P)-\bm{d}\prec_{\text{w}} \bm{c} (\pi)-\bm{d}. 
\end{align} 
Using this, we can show that \eqref{eq_lem_general} holds for all $f\in\mathcal{D}_{\text{Sch-1}}$.
The details are provided in Appendix \ref{app_coro2_0}.
\end{proof}

\begin{proposition}\label{coro2}
If the conditions of Proposition \ref{lem1} and Proposition \ref{lem2} are simultaneously satisfied, then for all \emph{$f\in\mathcal{D}_{\text{sym}}\cup\mathcal{D}_{\text{Sch-1}}$}
\begin{align}\label{eq_coro2}
f (\bm{v}(P))\leq f(\bm{c}(\pi)).
\end{align}
\end{proposition}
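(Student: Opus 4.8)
The plan is to mirror exactly the structure of the proof of Proposition \ref{coro2_0}, but with the completion-time vector $\bm{c}(P)$ replaced by the service-entry-time vector $\bm{v}(P)$ throughout, and with Proposition \ref{lem1} and Proposition \ref{lem2} (the ``weak'' versions) playing the roles of Proposition \ref{lem1_0} and Proposition \ref{lem2_0}. Thus the argument splits into two cases according to which family $f$ belongs to. For $f\in\mathcal{D}_{\text{sym}}$, the conclusion \eqref{eq_coro2} is exactly the statement of Proposition \ref{lem1_1}, which is already available under the hypotheses of Proposition \ref{lem1}; nothing further is needed. So the real work is the case $f\in\mathcal{D}_{\text{Sch-1}}$.

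For $f\in\mathcal{D}_{\text{Sch-1}}$, recall that $f(\bm{x})=f[(\bm{x}-\bm{d})+\bm{d}]$ is Schur convex and increasing in $\bm{x}-\bm{d}$. The strategy is to construct an auxiliary $n$-dimensional vector $\bm{c}'$ such that
\begin{align}\label{eq_coro2_aux}
\bm{v}(P)-\bm{d}\prec \bm{c}'-\bm{d}\leq \bm{c}(\pi)-\bm{d},
\end{align}
in complete analogy with \eqref{eq_coro2_6_0}. For the left majorization, I would take $\bm{c}'$ to be (essentially) the rearrangement of $\bm{v}(P)$ that sorts its entries in the order induced by the due dates $\bm{d}$ — i.e., $c'_i$ is the $i$-th smallest entry of $\bm{v}(P)$ when jobs are indexed so that $d_1\le d_2\le\cdots\le d_n$ — so that $\bm{v}(P)-\bm{d}\prec\bm{c}'-\bm{d}$ by the rearrangement inequality of \cite[Theorem 6.F.14]{Marshall2011},\cite{Chang93rearrangement_majorization}, exactly as in the proof of Proposition \ref{coro2_0}. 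For the right inequality $\bm{c}'-\bm{d}\le\bm{c}(\pi)-\bm{d}$, i.e. $c'_i\le c_i(\pi)$ componentwise (after the appropriate matching of indices), I would invoke the sample-path ordering conclusions already derived in Step 2: from Proposition \ref{lem1} we get the ``fewest-unassigned'' ordering \eqref{eq_ordering_2_1}, hence $v_{(i)}(P)\le c_{(i)}(\pi)$ for all $i$ via \eqref{eq_ordering_2_2}; and from Proposition \ref{lem2} we get the ``earliest-due-date'' ordering \eqref{eq_ordering_3_3}, which controls the sorted-by-due-date partial sums of $\bm{v}(P)$ against those of $\bm{c}(\pi)$. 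Combining these two componentwise/partial-sum bounds is what lets one verify $\bm{c}'-\bm{d}\le\bm{c}(\pi)-\bm{d}$ for the chosen $\bm{c}'$. From \eqref{eq_coro2_aux} one immediately gets the weak majorization
\begin{align}\label{eq_coro2_wm}
\bm{v}(P)-\bm{d}\prec_{\text{w}}\bm{c}(\pi)-\bm{d},
\end{align}
and then, since every $f\in\mathcal{D}_{\text{Sch-1}}$ is Schur convex and increasing in $\bm{x}-\bm{d}$ and such functions preserve the weak-majorization-from-below order, we conclude $f(\bm{v}(P))\le f(\bm{c}(\pi))$, which is \eqref{eq_coro2}. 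The details of the $\bm{c}'$ construction and the weak-majorization step would be deferred to an appendix, paralleling Appendix \ref{app_coro2_0}.

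The main obstacle I anticipate is constructing the single auxiliary vector $\bm{c}'$ that simultaneously witnesses both relations in \eqref{eq_coro2_aux}: the left majorization forces $\bm{c}'$ to be a rearrangement of $\bm{v}(P)$ ordered compatibly with $\bm{d}$, while the right inequality forces $\bm{c}'$ to be dominated componentwise by $\bm{c}(\pi)$, and reconciling these requires carefully using \emph{both} the FUT-type ordering \eqref{eq_ordering_2_1} (which gives the sorted-order domination $v_{(i)}(P)\le c_{(i)}(\pi)$) \emph{and} the EDD-type ordering \eqref{eq_ordering_3_3} (which gives domination when jobs are grouped by due date) at once — neither alone suffices. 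The subtlety is exactly the same as in Proposition \ref{coro2_0}, with the extra wrinkle that here we are comparing $\bm{v}(P)$ rather than $\bm{c}(P)$ against $\bm{c}(\pi)$, so one must be careful that the weak forms of the hypotheses (Propositions \ref{lem1} and \ref{lem2}, built on the \emph{weak} work-efficiency ordering) really do deliver the two orderings \eqref{eq_ordering_2_1} and \eqref{eq_ordering_3_3} needed for the argument. Since both are stated and proven earlier in the excerpt, the remaining task is the combinatorial matching, which is routine once the indices are set up correctly.
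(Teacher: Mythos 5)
Your treatment of the $f\in\mathcal{D}_{\text{sym}}$ case is fine (Proposition~\ref{lem1_1}), but your plan for $f\in\mathcal{D}_{\text{Sch-1}}$ has the intermediate vector built the wrong way around, and the gap you flag at the end is not a ``routine combinatorial matching'' that can be deferred --- it is exactly where the proposal breaks.

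You define $\bm{c}'$ to be the rearrangement of $\bm{v}(P)$ sorted compatibly with the due-date vector $\bm{d}$. That choice does make the left relation $\bm{v}(P)-\bm{d}\prec\bm{c}'-\bm{d}$ immediate, but it destroys the right relation: a vector sorted by due date need bear no coordinatewise relation to $\bm{c}(\pi)$, because $\pi$ is an arbitrary policy that may complete jobs in any order. A two-job counterexample makes this concrete: take $d_1<d_2$, $\bm{v}(P)=(1,3)$, $\bm{c}(\pi)=(5,2)$. The sorted inequality $v_{(i)}(P)\le c_{(i)}(\pi)$ holds (so Proposition~\ref{lem1}'s conclusion \eqref{eq_ordering_2_2} is in force), yet your $\bm{c}'=(1,3)$ fails $\bm{c}'\le\bm{c}(\pi)$ in the second coordinate. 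The weak-majorization chain $\bm{v}(P)-\bm{d}\prec\bm{c}'-\bm{d}\le\bm{c}(\pi)-\bm{d}$ therefore does not go through with your $\bm{c}'$, and invoking \eqref{eq_ordering_3_3} does not rescue it, since \eqref{eq_ordering_3_3} is a statement about the state processes $\bm{\gamma}_P$ and $\bm{\xi}_\pi$ at each time $t$, not a partial-sum bound on $\bm{v}(P)$ versus $\bm{c}(\pi)$.

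The paper resolves the tension by building $\bm{v}'$ the opposite way: $\bm{v}'$ is the rearrangement of $\bm{v}(P)$ matched to the \emph{sort order of $\bm{c}(\pi)$}, not to the due dates. With that choice the right relation $\bm{v}'\le\bm{c}(\pi)$ follows immediately and coordinatewise from \eqref{eq_ordering_2_2}, i.e.\ $v_j'=v_{(i)}(P)\le c_{(i)}(\pi)=c_j(\pi)$. The left relation $\bm{v}(P)-\bm{d}\prec\bm{v}'-\bm{d}$ is then the nontrivial step: it is proved by an interchange procedure driven by the sets $S_j=\{i:a_i\le v_j(P),\,d_i<d_j\}$, each step invoking Lemma~\ref{lem_rearrangement}, and the procedure terminates at $\bm{v}(P)$ precisely because policy $P$ serves jobs in EDD order among arrived jobs --- that is, Condition~2 of Proposition~\ref{lem2} (the hypothesis), not its conclusion \eqref{eq_ordering_3_3}, is what is used. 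So both of your ingredients from Step~2 do appear, but in different roles than you assigned: Proposition~\ref{lem1} supplies the coordinatewise right inequality via \eqref{eq_ordering_2_2}, while Proposition~\ref{lem2}'s EDD priority structure (not its conclusion) drives the interchange argument for the left majorization.
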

\begin{proof}
See Appendix \ref{app_coro2}.
\end{proof}

\subsection{Step 3: Coupling Arguments}\label{sec_delay_ineq}
\subsubsection{Coupling Lemmas}
We need the following three coupling lemmas to prove our main results.

\begin{lemma}\label{lem_coupling}
Consider policy $P\in \Pi$ and any policy $\pi\in \Pi$. If (i) policy $P$ follows the LPR discipline, (ii) the task service times are NBU, independent across the servers, and {i.i.d.} across the tasks assigned to the same server, 
 then there exist policy $P_1$ and  policy $\pi_1$ satisfying the same queueing disciplines with policy $P$ and policy $\pi$, respectively,  such that 
\begin{itemize}
\itemsep0em 
\item[1.] The state process $\{\bm{\xi}_{P_1}(t),\bm{\gamma}_{P_1}(t),t\in[0,\infty)\}$ of policy $P_1$ has the same distribution with the state process $\{\bm{\xi}_{P}(t),\bm{\gamma}_{P}(t),t\in[0,\infty)\}$ of policy $P$,
\item[2.] The state process $\{\bm{\xi}_{\pi_1}(t),\bm{\gamma}_{\pi_1}(t),t\in[0,\infty)\}$ of policy $\pi_1$ has the same distribution with the state process $\{\bm{\xi}_{\pi}(t),\bm{\gamma}_{\pi}(t),t\in[0,\infty)\}$  of policy $\pi$,
\item[3.] Policy $P_1$ is \textbf{weakly} more work-efficient than policy $\pi_1$ with probability one. 
\end{itemize} 
\end{lemma}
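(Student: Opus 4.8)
The plan is to construct $P_1$ and $\pi_1$ on one probability space by a forward inductive coupling that sweeps through the (almost surely well-ordered) sequence of system events: job arrivals, task service completions, and task service starts in either system. Policy $\pi_1$ is just policy $\pi$ driven by service times drawn \emph{freshly} on each server $l$ from an i.i.d.\ sequence of copies of $X_l$, the next copy being attached to a task only at the instant that task (or one of its replicas) starts service in $\pi_1$; because a causal policy's decisions never use a service time before the corresponding task completes, the state process $\{\bm{\xi}_{\pi_1}(t),\bm{\gamma}_{\pi_1}(t),t\in[0,\infty)\}$ has the same law as $\{\bm{\xi}_{\pi}(t),\bm{\gamma}_{\pi}(t),t\in[0,\infty)\}$, giving the second assertion. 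Policy $P_1$ is policy $P$ — hence still of LPR type, since all of its decisions are made from its own observed history — run on the same space, but with its service times \emph{revealed lazily} and coupled against those of $\pi_1$ as described next; once more by causality the resulting state process of $P_1$ has the law of that of $P$, giving the first assertion. The real content is the third assertion, the almost sure weak work-efficiency ordering.

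The single probabilistic ingredient is a consequence of \eqref{eq_NBU}: for every server $l$ and every elapsed amount $s\ge 0$ one has $[X_l-s\,|\,X_l>s]\le_{\text{st}}X_l$, so there is a coupling of a ``used'' residual variable $R$ and a ``fresh'' variable $Z$ with $R$ distributed as $[X_l-s\,|\,X_l>s]$, $Z$ distributed as $X_l$, and $R\le Z$ almost surely. This is applied server by server at the moments it is needed. Suppose a (possibly replicated) task $j$ of $\pi_1$ starts its first copy at some epoch $\tau$ and completes at some epoch $\nu$, and suppose the queue of $P_1$ is nonempty throughout $[\tau,\nu]$; then, by work conservation, every server of $P_1$ is busy during $[\tau,\nu]$, and for each server $l$ that carries a copy of $j$ in $\pi_1$ we reveal, at the instant that copy starts, the residual service time of the task then occupying server $l$ in $P_1$ and couple it — via the NBU primitive for server $l$ — so that it is almost surely dominated by the fresh service time that $\pi_1$ gives to that copy. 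In particular this applies to the server carrying the first-finishing copy of $j$, and since $\nu$ equals that copy's start time plus its (fresh) service time, the domination forces that server of $P_1$ to complete the task it is running no later than $\nu$; being work-conserving with a nonempty queue, $P_1$ then puts a new task into service on or before $\nu$, which is exactly the task $j'$ demanded by the weak work-efficiency ordering. Cancellations in $P_1$ cause no trouble: under the LPR discipline a copy is cancelled only if its cancellation overhead is $\le_{\text{hr}}$ its residual service time, so replacing a completion by a cancellation never postpones a server's freeing up.

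Turning this into a rigorous argument is an induction over the event sequence in which one maintains, as the invariant, that the weak work-efficiency ordering holds for every task of $\pi_1$ completed so far and that the service times drawn on each server $l$, in each of the two systems, still form (marginally) an i.i.d.\ sequence of copies of $X_l$. The delicate point, and the main obstacle, is precisely this bookkeeping: a single long task of $P_1$ on server $l$ can straddle several task starts of $\pi_1$ on server $l$, and one must choose which of those starts to couple against — enough to guarantee, for every window $[\tau,\nu]$ in which $P_1$ is backlogged, that some server of $P_1$ frees up inside it, yet sparingly enough not to create spurious correlations among $\pi_1$'s i.i.d.\ service times — and one must also fix, once and for all, the order in which simultaneous events and residual revelations are processed. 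With that in place, the two distributional assertions follow from causality and the weak work-efficiency assertion from the chain of almost sure dominations produced along the induction.
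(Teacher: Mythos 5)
Your approach is the same as the paper's: construct $P_1$ and $\pi_1$ on a common probability space by revealing service times lazily, use the NBU stochastic-dominance primitive $[X_l-s\,|\,X_l>s]\le_{\text{st}}X_l$ to couple each residual in $P_1$ against the corresponding fresh draw in $\pi_1$ server by server, and argue that when $P_1$ is backlogged some targeted server must free up inside $[\tau,\nu]$ and, by work conservation and the LPR rules, must then start a new (non-replica) task. Your one-sentence dispatch of cancellations also tracks the paper's intent, though the paper spells this out as three cases (task completes; task will later be cancelled; task is mid-cancellation), chaining the hazard-rate condition of LPR with the NBU inequality to get $R_{l_w}\leq_{\text{st}}X_{l_w}$ in each case; you should state those chains explicitly rather than asserting that cancellation ``never postpones.''

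The genuine gap is exactly the point you flag and then leave open: you say the inductive bookkeeping is ``the main obstacle'' and that the result follows ``with that in place,'' but you do not supply the tool that closes it. The paper closes it by invoking the sequential-coupling theorem for multivariate stochastic orders, \cite[Theorem 6.B.3]{StochasticOrderBook}: conditioned on any realization of the construction up to a given event, the conditional law of the next increment for $P_1$ is stochastically dominated by that for $\pi_1$, so the joint law can be built step by step with the a.s. domination preserved and the correct marginals for each process. That theorem, together with the paper's Lemma~\ref{lem_independent} guaranteeing that residuals on distinct servers remain mutually independent given their elapsed times (so the per-server couplings may be done independently), is what turns your plan into a proof. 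You need to cite it, or reprove it, rather than gesture at it.

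Separately, the specific ``straddling'' worry you raise is not in fact an obstacle, and recognizing this simplifies the bookkeeping substantially. Once you couple the residual $R_{l_w}$ on server $l_w$ at time $\tau_w$ against the fresh draw $\nu_w-\tau_w$ for the copy of task $j$ on that same server in $\pi_1$, you have forced $\tau_w+R_{l_w}\le\nu_w$: the $P_1$-task on server $l_w$ finishes no later than that $\pi_1$-copy does. So a single $P_1$-task cannot straddle two $\pi_1$-starts on the same server, because the coupling at the first $\pi_1$-start already forces the $P_1$-task to end before that $\pi_1$-copy completes, hence before the next one begins. There is therefore no decision about ``which starts to couple against'': you couple at every $\pi_1$-start on a targeted server, and the chain of dominations ensures each $P_1$-residual is consumed by the end of its window. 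The content you need to add is not a smarter choice of coupling points, but the formal invocation of the sequential-coupling theorem.
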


\begin{proof}
See Appendix \ref{app1}.
\end{proof}

\begin{lemma}\label{lem_coupling_2}
Consider policy $P\in \Pi$ and any policy $\pi\in \Pi$. If (i) policy $P$ follows the R discipline, (ii) the task service times are NWU, independent across the servers, and {i.i.d.} across the tasks assigned to the same server, and (iii) the cancellation overhead is $\bm{O}=\bm{0}$, 
 then there exist policy $P_1$ and  policy $\pi_1$ satisfying the same queueing disciplines with policy $P$ and policy $\pi$, respectively, such that 
\begin{itemize}
\itemsep0em 
\item[1.] The state process $\{\bm{\xi}_{P_1}(t),\bm{\gamma}_{P_1}(t),t\in[0,\infty)\}$ of policy $P_1$ has the same distribution with the state process $\{\bm{\xi}_{P}(t),\bm{\gamma}_{P}(t),t\in[0,\infty)\}$ of policy $P$,
\item[2.] The state process $\{\bm{\xi}_{\pi_1}(t),\bm{\gamma}_{\pi_1}(t),t\in[0,\infty)\}$ of policy $\pi_1$ has the same distribution with the state process $\{\bm{\xi}_{\pi}(t),\bm{\gamma}_{\pi}(t),t\in[0,\infty)\}$  of policy $\pi$,
\item[3.] Policy $P_1$ is  more work-efficient than policy $\pi_1$ with probability one. 
\end{itemize} 
\end{lemma}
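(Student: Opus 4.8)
The plan is to prove the (strong) work-efficiency ordering $\bm{t}_{P_1}\leq\bm{t}_{\pi_1}$ by exhibiting a coupling under which $P_1$ has completed at least as many tasks as $\pi_1$ at every instant, i.e.\ $N_{P_1}(t)\geq N_{\pi_1}(t)$ for all $t\in[0,\infty)$ almost surely, where $N_\rho(t)$ is the number of tasks that policy $\rho$ has completed by time $t$; this is equivalent to $\bm{t}_{P_1}\leq\bm{t}_{\pi_1}$. Both policies see the same arrivals, so let $\tilde a_1\leq\tilde a_2\leq\cdots$ be the ordered task-arrival epochs (a job of size $k_i$ arriving at $a_i$ contributing $k_i$ copies of $a_i$). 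Under the R discipline $P$ always runs a single task on all $m$ servers and cancels at zero cost, so its $j$-th completion occurs at $s_j:=\max\{s_{j-1},\tilde a_j\}+\min_{l=1}^m X_l^{(j)}$ with $s_0:=0$, where $\{X_l^{(j)}:1\le l\le m,\ j\ge 1\}$ are mutually independent with $X_l^{(j)}\sim X_l$. This formula does not depend on which available task $P$ picks at each step, since every task carries one unit of work and is processed identically on a given server, so the completion-count process is independent of $P$'s priority rule. I take $P_1$ to be $P$ driven by exactly these service times; then $P_1$ has the same law as $P$ (the zero cancellation overhead renders the cancelled copies inconsequential) and $\bm{t}_{P_1}=(s_1,s_2,\dots)$.

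Next I would construct $\pi_1$ as $\pi$ driven by a coupled family of service times, revealed adaptively so that (i) on each server the service times are i.i.d.\ with the correct marginal and independent across servers, which gives $\pi_1$ the same law as $\pi$, and (ii) whenever $\pi_1$ has completed exactly $r$ tasks, the residual service time, measured from that completion epoch, of \emph{every} task then in service is almost surely at least $\min_{l}X_l^{(r+1)}$. Property (ii) is where the hypotheses enter. When a server $l$ starts a fresh task at an epoch where $\pi_1$ has $r$ completions, quantile-couple its service time with $\min_{l'}X_{l'}^{(r+1)}$; this is legitimate because $\bar F_l\geq\prod_{l'}\bar F_{l'}$, so a fresh $X_l$ stochastically dominates $\min_{l'}X_{l'}$. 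If that task is still running at $\pi_1$'s next completion epoch, reveal only that its service time exceeds the elapsed time and repeat: by the NWU property, which is preserved under taking residuals, the residual again dominates a fresh $X_l$, hence dominates $\min_{l'}X_{l'}$, and can be quantile-coupled with the next block. Task-level non-preemption makes these residuals well defined (a server never abandons a task voluntarily), and absolute continuity of the NWU distributions makes the probability of two simultaneous task completions --- in either policy --- equal to zero, so ``the $r$-th completion'' and its completing server are unambiguous.

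With this coupling in place I would prove $N_{P_1}(t)\geq N_{\pi_1}(t)$ by induction over the completion epochs of $\pi_1$. Let $t$ be $\pi_1$'s $(r+1)$-th completion, of a task $\hat T$, and assume the inequality holds up to $t^-$, so $N_{P_1}(t^-)\geq r$. If $N_{P_1}(t^-)\geq r+1$ we are done; otherwise $N_{P_1}(t^-)=r$, i.e.\ $s_r\leq t<s_{r+1}$, and the induction hypothesis at $\pi_1$'s $r$-th completion gives $t^{\pi_1}_r\geq s_r$. The final segment of $\hat T$'s service on its completing server carries block index $r+1$ by construction, so $t\geq\max\{t^{\pi_1}_r,\,a(\hat T)\}+\min_l X_l^{(r+1)}\geq\max\{s_r,\,a(\hat T)\}+\min_l X_l^{(r+1)}$, where $a(\hat T)$ is $\hat T$'s arrival time. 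If $\tilde a_{r+1}\leq s_r$ this already yields $t\geq s_r+\min_l X_l^{(r+1)}=s_{r+1}$. If instead $\tilde a_{r+1}>s_r$, then exactly $r$ tasks have arrived by $s_r$, and a counting argument shows $\max\{t^{\pi_1}_r,a(\hat T)\}\geq\tilde a_{r+1}$: either $\hat T$ arrived at position $\geq r+1$, so $a(\hat T)\geq\tilde a_{r+1}$; or $\hat T$ is among the first $r$ arrivals, in which case at least one of $\pi_1$'s first $r$ completed tasks arrived at position $\geq r+1$, forcing $t^{\pi_1}_r\geq\tilde a_{r+1}$. Either way $t\geq\tilde a_{r+1}+\min_l X_l^{(r+1)}=s_{r+1}$. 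In all cases $t\geq s_{r+1}$, so $N_{P_1}(t)\geq r+1=N_{\pi_1}(t)$; since $N_{\pi_1}$ is constant and $N_{P_1}$ non-decreasing between these epochs, the inequality persists, giving $s_j\leq t^{\pi_1}_j$ for every $j$, i.e.\ $\bm{t}_{P_1}\leq\bm{t}_{\pi_1}$ almost surely.

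I expect the main obstacle to be making the adaptive coupling in the second paragraph fully rigorous: one must reveal each $\pi_1$ task's service time in segments delimited by $\pi_1$'s own completion epochs, while simultaneously keeping (a) the per-server i.i.d.\ structure and cross-server independence exactly correct, (b) every in-service residual at a given epoch bounded below by the \emph{single} shared block $\min_l X_l^{(r+1)}$, and (c) causality, in that the block used is determined only by $\pi_1$'s current completion count, not by future information. The NWU assumption (residuals of NWU variables are again NWU and still dominate a fresh copy) and the absolute-continuity assumption (no ties, hence a well-defined completing server at each epoch) are precisely what make (b) and the bookkeeping in (a) go through; the remaining delicate point is the arrival-timing case analysis in the induction when $P_1$ idles after its $r$-th completion.
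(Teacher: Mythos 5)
Your plan lands on the same two observations that drive the paper's proof: zero-overhead cancellation makes the R policy restart all $m$ servers after each completion, so $P$'s $j$-th completion satisfies $s_j=\max\{s_{j-1},\tilde a_j\}+\min_l X_l^{(j)}$ with fresh i.i.d.\ blocks, and the NWU hypothesis forces every residual under $\pi$ to stochastically dominate a fresh copy, so $\pi$'s increments dominate $P$'s. The two routes diverge only in how the pathwise coupling is produced. The paper checks the one-step conditional ordering $T_{1,P}\leq_{\text{st}}T_{1,\pi}$ and $[T_{j+1,P}\mid T_{j,P}=t_j]\leq_{\text{st}}[T_{j+1,\pi}\mid T_{j,\pi}=t_j']$ for $t_j\leq t_j'$ (passing from the per-server NWU comparison to the minimum residual via Lemma~\ref{lem_independent} and Theorem~6.B.16(b) of \cite{StochasticOrderBook}) and then invokes Theorems~6.B.3 and 6.B.1 of \cite{StochasticOrderBook}, which abstractly furnish exactly the recursive, adaptively revealed coupling that you attempt to construct by hand --- that theorem is precisely what guarantees your segment-by-segment revelation can be performed while preserving all marginal laws, the point you rightly flag as delicate. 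Your arrival-time induction is a spelled-out version of the paper's $\max\{s_{j+1},T_{j,\pi}\}$ bound and is correct. One caution on the explicit construction you sketch: quantile-coupling every busy server's residual to the \emph{shared} block variable $\min_{l'}X_{l'}^{(r+1)}$ would make the residuals deterministic functions of one another and destroy the cross-server independence you need to preserve. Couple server $l$'s residual against its \emph{own} fresh $X_l^{(r+1)}$ instead (NWU gives $R_l\geq_{\text{st}}X_l$ server by server), keep these $m$ couplings independent, and $\min_l R_l\geq\min_l X_l^{(r+1)}$ then follows immediately --- or simply invoke Theorem~6.B.3 as the paper does and let it absorb the bookkeeping.
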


\begin{proof}
See Appendix \ref{app2}.
\end{proof}

\begin{lemma}\label{lem_coupling_3}
Consider policy $P\in \Pi$ and any policy $\pi\in \Pi$. If (i) policy $P$ follows the R discipline, (ii) the task service times are exponential, independent across the servers, and {i.i.d.} across the tasks assigned to the same server, and (iii) the cancellation overhead is $\bm{O}=\bm{0}$, 
 then there exist policy $P_1$ and  policy $\pi_1$ satisfying the same queueing disciplines with policy $P$ and policy $\pi$, respectively, such that 
\begin{itemize}
\itemsep0em 
\item[1.] The state process $\{\bm{\xi}_{P_1}(t),\bm{\gamma}_{P_1}(t),t\in[0,\infty)\}$ of policy $P_1$ has the same distribution with the state process $\{\bm{\xi}_{P}(t),\bm{\gamma}_{P}(t),t\in[0,\infty)\}$ of policy $P$,
\item[2.] The state process $\{\bm{\xi}_{\pi_1}(t),\bm{\gamma}_{\pi_1}(t),t\in[0,\infty)\}$ of policy $\pi_1$ has the same distribution with the state process $\{\bm{\xi}_{\pi}(t),\bm{\gamma}_{\pi}(t),t\in[0,\infty)\}$  of policy $\pi$,
\item[3.] Policy $P_1$ is \textbf{weakly} more work-efficient than policy $\pi_1$ with probability one. 
\end{itemize} 
\end{lemma}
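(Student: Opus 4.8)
The plan is to establish Lemma~\ref{lem_coupling_3} by an explicit coupling built on the memoryless property of the exponential distribution, in the same spirit as Lemmas~\ref{lem_coupling} and~\ref{lem_coupling_2}. A separate argument is needed because, although an exponential law is simultaneously NBU and NWU, Lemma~\ref{lem_coupling} is proved for the LPR discipline (which forbids replication while the queue is nonempty), whereas here $P$ always replicates on all $m$ servers; and Lemma~\ref{lem_coupling_2}, which does cover the R discipline (exponential being NWU), only yields the \emph{strong} work-efficiency ordering, which is known not to imply the \emph{weak} one. The structural fact that makes the R discipline tractable under exponential service is this: by work conservation, whenever $P$'s queue is nonempty there is exactly one task occupying all $m$ servers, and by memorylessness this task completes at the maximal rate $\mu:=\sum_{l=1}^{m}\mu_l$ with no service effort ever wasted, so $P$'s completions arrive essentially as fast as possible.

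For the construction I would take the common probability space to carry $m$ independent Poisson processes $\Lambda_1,\dots,\Lambda_m$, with $\Lambda_l$ of rate $\mu_l$, together with independent copies of any internal randomization used by $P$ and $\pi$; the job parameters $\mathcal{I}$ are fixed. Define $P_1$ (resp.\ $\pi_1$) to be the policy that runs the decision rule of $P$ (resp.\ $\pi$) on this space, under the rule that the copy a server $l$ is currently processing completes at the first point of $\Lambda_l$ that occurs while server $l$ is busy with it; since $\bm{O}=\bm{0}$, a cancellation simply makes the server drop its copy and move on. By the strong Markov (regeneration) property of Poisson processes at completion instants and at cancellation/re-assignment instants, the service effort consumed by each task on server $l$ is exponential with rate $\mu_l$, i.i.d.\ across that server's tasks and independent across servers; hence the state process of $P_1$ (resp.\ $\pi_1$) has the same law as that of $P$ (resp.\ $\pi$), which is exactly conditions~1 and~2 of the lemma. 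In particular, under $P_1$ the unique in-service task sits on all $m$ servers, so it completes at the first point of $\Lambda:=\bigcup_{l}\Lambda_l$, consistently with an exponential($\mu$) service time.

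It then suffices to verify condition~3 pathwise, and here the R discipline collapses the argument to a short comparison of stopping times. Fix any task $j$ executed by $\pi_1$ with service interval $[\tau,\nu]$, and suppose $P_1$'s queue is nonempty throughout $[\tau,\nu]$; I must produce a task of $P_1$ that starts service in $[\tau,\nu]$. On one hand, every copy of task $j$ in $\pi_1$ enters service at a time $\ge\tau$, and in our construction a copy completes exactly at a point of the relevant $\Lambda_l$, so $\nu$ is a point of $\Lambda$ with $\nu>\tau$. On the other hand, because $P_1$'s queue is nonempty at $\tau$, work conservation and the R discipline put a single task on all $m$ servers at $\tau$; letting $\sigma$ be the first point of $\Lambda$ strictly after $\tau$, this task stays in service on $[\tau,\sigma)$ (a completion cannot occur without a $\Lambda$-point) and completes at $\sigma$. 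Since $\nu$ is a point of $\Lambda$ after $\tau$ while $\sigma$ is the \emph{first} such point, $\sigma\le\nu$, so $P_1$'s queue is still nonempty at $\sigma\in(\tau,\nu]$ and the completion at $\sigma$ triggers the immediate start (no cancellation overhead) of a new task on all $m$ servers. That start lies in $[\tau,\nu]$, which is exactly the required weak work-efficiency ordering.

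I expect the main obstacle to be the middle step: showing that the single pair of per-server Poisson clocks simultaneously reproduces the correct marginal law of \emph{both} policies' state processes. Since $\pi$ may replicate and cancel tasks in arbitrary, history-dependent ways, one has to argue carefully — via the strong Markov property at the (stopping) times of completions, cancellations, and re-assignments — that a cancelled copy leaves no residual trace and that the service times manufactured by the clocks are genuinely i.i.d.\ exponential per server and independent across servers. Once that bookkeeping is in place, the pathwise part above is short precisely because, under the R discipline, $P_1$'s completions while the queue is nonempty form a Poisson($\mu$) stream, each completion being instantaneously followed by a task start.
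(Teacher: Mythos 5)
Your proposal is correct and reaches the same conclusion as the paper's Appendix~K, but by a genuinely different coupling construction. The paper follows the sequential style of Lemmas~\ref{lem_coupling}--\ref{lem_coupling_2}: it considers a task $j$ with copies launched at times $\tau_1,\dots,\tau_u$ on servers $l_1,\dots,l_u$ in $\pi_1$, uses memorylessness to show that the residual times $R_{l_w}$ to the next completion on those servers in $P_1$ satisfy $R_{l_w}=_{\text{st}}X_{l_w}$, invokes independence and Theorem~6.B.16(b) of \cite{StochasticOrderBook}, and then couples via Theorem~1.A.1 so that a completion in $P_1$ occurs exactly at $\nu$; the process is repeated progressively in the manner of the proof of Theorem~6.B.3. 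Your construction instead fixes a single realization of $m$ independent Poisson clocks $\Lambda_1,\dots,\Lambda_m$ that drive \emph{both} policies simultaneously. This buys two things: (i) conditions~1 and~2 of the lemma reduce to the strong Markov property of Poisson processes at the (stopping) times of task starts, so the marginal-law bookkeeping is localized to one place rather than woven through an induction; and (ii) condition~3 becomes a pathwise, essentially combinatorial statement --- $\nu$ is a $\Lambda$-point strictly after $\tau$, while the single task occupying all $m$ servers in $P_1$ at time $\tau$ completes at the \emph{first} $\Lambda$-point $\sigma$ strictly after $\tau$, hence $\sigma\leq\nu$ and a new task starts at $\sigma$. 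Your version also sidesteps a slight opacity in the paper's phrasing about coupling ``exactly at $\nu$'' when the copies in $\pi_1$ begin at unequal times $\tau_w$: in the Poisson-clock picture it is immediate that the relevant event in $P_1$ happens no later than $\nu$, whether or not the $\tau_w$ coincide. One small point worth making explicit when you flesh this out: the task of $P_1$ that is in service on all $m$ servers at time $\tau$ started at some $t_0\le\tau$ with no $\Lambda$-point in $(t_0,\tau]$ (otherwise it would already have completed), so its completion is indeed the first $\Lambda$-point after $\tau$; and the strong-Markov regeneration must be justified at cancellation times of $\pi_1$, which are stopping times determined by the \emph{other} servers' clocks, so independence across servers is exactly what is needed.
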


\begin{proof}
See Appendix \ref{app2_1}.
\end{proof}

We note that Theorem 6.B.3 in \cite{StochasticOrderBook} plays an important role in the proofs of Lemmas \ref{lem_coupling}-\ref{lem_coupling_3}: Because the task service times are \emph{independent} across the servers and \emph{i.i.d.} across the tasks assigned to the same server, we only need the NBU/NWU assumption and Theorem 6.B.3 in \cite{StochasticOrderBook}, instead of invoking the stronger likelihood ratio ordering  as in 
\cite{Chang93rearrangement_majorization}, \cite[Theorem 6.B.15]{StochasticOrderBook}, to prove these coupling lemmas.

\subsubsection{Proofs of the Main Results}

Now, we are ready to prove the main results. 

\begin{proof}[Proof of Theorem \ref{thm1}]
 According to lemma \ref{lem_coupling}, for any policy $\pi\in\Pi$, there exist two state processes $\{\bm{\xi}_{\text{FUT-LPR}_1}(t), $ $\bm{\gamma}_{\text{FUT-LPR}_1}(t),t\in[0,\infty)\}$ and $\{\bm{\xi}_{\pi_1}(t),\bm{\gamma}_{\pi_1}(t),t\in[0,\infty)\}$ of policy FUT-LPR$_1$ and  policy $\pi_1$, such that (i) the state process $\{\bm{\xi}_{\text{FUT-LPR}_1}(t),\bm{\gamma}_{\text{FUT-LPR}_1}(t),t\in[0,\infty)\}$ of policy FUT-LPR$_1$  has the same distribution with the state process $\{\bm{\xi}_{\text{FUT-LPR}}(t),\bm{\gamma}_{\text{FUT-LPR}}(t),t\in[0,\infty)\}$ of policy FUT-LPR, (ii) the state process  $\{\bm{\xi}_{\pi_1}(t),\bm{\gamma}_{\pi_1}(t),t\in[0,\infty)\}$ of policy   $\pi_1$ has the same distribution with the state process $\{\bm{\xi}_{\pi}(t),\bm{\gamma}_{\pi}(t),t\in[0,\infty)\}$ of policy $\pi$, and (iii) policy FUT-LPR$_1$ is weakly more work-efficient than policy $\pi_1$ with probability one.

By (iii), the scheduling decisions of policy FUT-LPR$_1$, and Proposition \ref{lem1_1}, for all $f\in\mathcal{D}_{\text{sym}}$
\begin{align}
\Pr[f(\bm{V}(\text{FUT-LPR}_1)) \leq f(\bm{C}(\pi_1)) |\mathcal{I}]=1. \nonumber
\end{align}
By (i), $f(\bm{V}(\text{FUT-LPR}_1))$ has the same distribution with $f(\bm{V}(\text{FUT-LPR}))$. By (ii), $f(\bm{C}(\pi_1))$ has the same distribution with $f(\bm{C}(\pi))$.
Using the property of stochastic ordering \cite[Theorem 1.A.1]{StochasticOrderBook}, we can obtain \eqref{eq_delaygap1}. This completes the proof.
\end{proof}

\begin{proof}[Proof of Theorem \ref{lem7_NBU}]
After Theorem \ref{thm1} is established, Theorem \ref{lem7_NBU} is proven in Appendix \ref{app_lem7}.
\end{proof}

\begin{proof}[Proof of Theorem \ref{thm2}]
 According to lemma \ref{lem_coupling_2}, for any policy $\pi\in\Pi$, there exist two state processes $\{\bm{\xi}_{\text{FUT-R}_1}(t), $ $\bm{\gamma}_{\text{FUT-R}_1}(t),t\in[0,\infty)\}$ and $\{\bm{\xi}_{\pi_1}(t),\bm{\gamma}_{\pi_1}(t),t\in[0,\infty)\}$ of policy FUT-R$_1$ and  policy $\pi_1$, such that (i) the state process $\{\bm{\xi}_{\text{FUT-LPR}_1}(t),\bm{\gamma}_{\text{FUT-R}_1}(t),t\in[0,\infty)\}$ of policy FUT-R$_1$  has the same distribution with the state process $\{\bm{\xi}_{\text{FUT-R}}(t),$ $\bm{\gamma}_{\text{FUT-R}}(t),t\in[0,\infty)\}$ of policy FUT-R, (ii) the state process  $\{\bm{\xi}_{\pi_1}(t),\bm{\gamma}_{\pi_1}(t),t\in[0,\infty)\}$ of policy   $\pi_1$ has the same distribution with the state process $\{\bm{\xi}_{\pi}(t),\bm{\gamma}_{\pi}(t),t\in[0,\infty)\}$ of policy $\pi$, and (iii) policy FUT-R$_1$ is more work-efficient than policy $\pi_1$ with probability one.

In policy FUT-R$_1$, each task completing service is from the job with the fewest remaining tasks among all jobs with remaining tasks.
By (iii) and Proposition \ref{lem1_1_0}, for all $f\in\mathcal{D}_{\text{sym}}$
\begin{align}
\Pr[f(\bm{C}(\text{FUT-R}_1)) \leq f(\bm{C}(\pi_1)) |\mathcal{I}]=1. \nonumber
\end{align}
By (i), $f(\bm{C}(\text{FUT-R}_1))$ has the same distribution with $f(\bm{C}$ $(\text{FUT-R}))$. By (ii), $f(\bm{C}(\pi_1))$ has the same distribution with $f(\bm{C}(\pi))$.
Then, by the property of stochastic ordering \cite[Theorem 1.A.1]{StochasticOrderBook}, we can obtain \eqref{eq_delaygap3}. This completes the proof.
\end{proof}

\begin{proof}[Proof of Theorem \ref{thm2_exp}]
According to lemma \ref{lem_coupling_3}, for any policy $\pi\in\Pi$, there exist two state processes $\{\bm{\xi}_{\text{FUT-R}_1}(t), $ $\bm{\gamma}_{\text{FUT-R}_1}(t),t\in[0,\infty)\}$ and $\{\bm{\xi}_{\pi_1}(t),\bm{\gamma}_{\pi_1}(t),t\in[0,\infty)\}$ of policy FUT-R$_1$ and  policy $\pi_1$, such that (i) the state process $\{\bm{\xi}_{\text{FUT-LPR}_1}(t),$ $\bm{\gamma}_{\text{FUT-R}_1}(t),t\in[0,\infty)\}$ of policy FUT-R$_1$  has the same distribution with the state process $\{\bm{\xi}_{\text{FUT-R}}(t),$ $\bm{\gamma}_{\text{FUT-R}}(t),t\in[0,\infty)\}$ of policy FUT-R, (ii) the state process  $\{\bm{\xi}_{\pi_1}(t),\bm{\gamma}_{\pi_1}(t),$ $t\in[0,\infty)\}$ of policy   $\pi_1$ has the same distribution with the state process $\{\bm{\xi}_{\pi}(t),\bm{\gamma}_{\pi}(t),t\in[0,\infty)\}$ of policy $\pi$, and (iii) policy FUT-R$_1$ is weakly more work-efficient than policy $\pi_1$ with probability one.

By (iii), the scheduling decisions of policy FUT-R$_1$, and Proposition \ref{lem1_1}, for all $f\in\mathcal{D}_{\text{sym}}$
\begin{align}
\Pr[f(\bm{V}(\text{FUT-R}_1)) \leq f(\bm{C}(\pi_1)) |\mathcal{I}]=1. \nonumber
\end{align}
By (i), $f(\bm{V}(\text{FUT-R}_1))$ has the same distribution with $f(\bm{V}$ $(\text{FUT-R}))$. By (ii), $f(\bm{C}(\pi_1))$ has the same distribution with $f(\bm{C}(\pi))$.
Using the property of stochastic ordering \cite[Theorem 1.A.1]{StochasticOrderBook}, we can obtain \eqref{eq_delaygap3_exp}. This completes the proof.

In addition, \eqref{eq_gap_NWU} is proven in Appendix \ref{app_lem7_NWU}. This completes the proof.
\end{proof}

\begin{proof}[Proof of Theorem \ref{thm3}]
By replacing policy FUT-LPR, policy FUT-R$_1$,  and Proposition \ref{lem1_1}  in the proof of Theorem \ref{thm1} with policy EDD-LPR, policy EDD-R$_1$,  and Proposition \ref{lem2}, respectively,
Theorem \ref{thm3} is proven.
\end{proof}

\begin{proof}[Proof of Theorem \ref{coro_thm3_1}]
If $k_1=\ldots=k_n=1$, each job has only one task. Hence, the job with the earliest due time among all jobs with unassigned tasks is also one job with the fewest unassigned tasks. 

If $d_1\leq d_2\leq \ldots\leq d_n$, $k_1\leq k_2 \leq \ldots\leq k_n$, in policy EDD-LPR, each task starting service is from the job with the earliest due time among all jobs with unassigned tasks, which is also the job with the fewest unassigned tasks among all jobs with unassigned tasks. 

By this and replacing policy FUT-LPR, policy FUT-LPR$_1$, and Proposition \ref{lem1_1}  in the proof of Theorem \ref{thm1} with policy EDD-LPR, policy EDD-LPR$_1$, and Proposition \ref{coro2}, respectively,
Theorem \ref{coro_thm3_1} is proven.
\end{proof}

\begin{proof}[Proof of Theorem \ref{thm4}]
By replacing policy FUT-R, policy FUT-R$_1$, and Proposition \ref{lem1_1_0}  in the proof of Theorem \ref{thm2} with policy EDD-R, policy EDD-R$_1$, and Proposition \ref{lem2_0}, respectively,
Theorem \ref{thm4} is proven.
\end{proof}

\begin{proof}[Proof of Theorem \ref{coro4_1}]
If $d_1\leq d_2\leq \ldots\leq d_n$, $k_1\leq k_2 \leq \ldots\leq k_n$, in policy EDD-R, each task completing service is from the job with the earliest due time among all jobs with remaining tasks, which is also the job with the fewest remaining tasks among all jobs with remaining tasks. 

By this and replacing policy FUT-R, policy FUT-R$_1$, and Proposition \ref{lem1_1_0}  in the proof of Theorem \ref{thm2} with policy EDD-R, policy EDD-R$_1$, and Proposition \ref{coro2_0}, respectively,
Theorem \ref{coro4_1} is proven.
\end{proof}

\begin{proof}[Proof of Theorem \ref{thm4_exp}]
By replacing policy FUT-R, policy FUT-R$_1$, and Proposition \ref{lem1_1}  in the proof of Theorem \ref{thm2_exp} with policy EDD-R, policy EDD-R$_1$, and Proposition \ref{lem2}, respectively,
Theorem \ref{thm4_exp} is proven.
\end{proof}

\begin{proof}[Proof of Theorem \ref{coro_thm3_1_exp}]
If $k_1=\ldots=k_n=1$, each job has only one task. Hence, the job with the earliest due time among all jobs with unassigned tasks is also one job with the fewest unassigned tasks. 

If $d_1\leq d_2\leq \ldots\leq d_n$, $k_1\leq k_2 \leq \ldots\leq k_n$, in policy EDD-R, each task starting service is from the job with the earliest due time among all jobs with unassigned tasks, which is also the job with the fewest unassigned tasks among all jobs with unassigned tasks. 

By this and replacing policy FUT-R, policy FUT-R$_1$, and Proposition \ref{lem1_1}  in the proof of Theorem \ref{thm2_exp} with policy EDD-R, policy EDD-R$_1$, and Proposition \ref{coro2}, respectively,
Theorem \ref{coro_thm3_1_exp} is proven.
\end{proof}

\section{Proof of Proposition~2}\label{app_lem2}
Let $j$ be any integer chosen from $\{1,\ldots,n\}$, and $y_{j}$ be the number of jobs that have arrived by the time  $c_{(j)}(\pi)$, where $y_{j}\geq j$. Because $j$ jobs are completed by the time $c_{(j)}(\pi)$ in policy $\pi$, 
there are exactly $(y_{j}-j)$ incomplete jobs in the system at time $c_{(j)}(\pi)$. By the definition of the system state, we have $\xi_{[i],\pi}(c_{(j)}(\pi))=0$ for $i=y_{j}-j+1,\ldots,n$. Hence,
\begin{align}
\sum_{i=y_{j}-j+1}^n\xi_{[i],\pi}(c_{(j)}(\pi))=0.\nonumber
\end{align}
Combining this with \eqref{eq_ordering_2_1}, yields that policy $P$ satisfies 
\begin{align}\label{eq_proof_1}
\sum_{i=y_{j}-j+1}^n \gamma_{[i],P}\left(c_{(j)}(\pi)\right)\leq 0.
\end{align}
Next, the definition of the system state tells us that $ \gamma_{i,P}\left(t\right)\geq0$ holds for all $i=1,\ldots,n$ and $t\geq0$. Hence, we have
\begin{align}\label{eq_proof_2}
\gamma_{[i],P}\left(c_{(j)}(\pi)\right)=0, ~\forall~i=y_j-j+1,\ldots,n.
\end{align}
Therefore, there are at most $y_j-j$ jobs which have unassigned tasks at time  $c_{(j)}(\pi)$ in policy $P$.
Because  the sequence of job arrival times $a_1,a_2,\ldots,a_n$ are invariant under any policy, $y_j$ jobs have arrived by the time  $c_{(j)}(\pi)$ in policy $P$. Thus, there are at least $j$ jobs which have no unassigned tasks  at the time  $c_{(j)}(\pi)$  in policy $P$, which can be equivalently expressed as
\begin{align}\label{eq_proof_222}
v_{(j)}(P)\!  \leq c_{(j)}(\pi).
\end{align}
Because $j$ is arbitrarily chosen, \eqref{eq_proof_222} holds for all $j=1,\ldots,n$, which is exactly \eqref{eq_ordering_2_2}. In addition, \eqref{eq_ordering_2_3} follows from \eqref{eq_ordering_2_2}, which completes the proof.

\section{Proofs of Propositions~3-4}\label{app_lem3}
\begin{proof}[Proof of Proposition~\ref{ordering_3}]
Let $w_i$ be the index of the job associated with the job completion time $c_{(i)}(P)$. In order to prove \eqref{eq_ordering_3_2}, it is sufficient to show that for each $j=1,2,\ldots,n$,
\begin{align}\label{eq_Condition_3_12_thm3}
c_{w_j}(P)-d_{w_j} \leq \max_{i=1,2,\ldots,n }[c_{i}(\pi)-d_i].
\end{align}
We prove \eqref{eq_Condition_3_12_thm3} by contradiction. \emph{For this, let us assume that 
\begin{align}\label{eq_proof_9_thm3}
c_{i}(\pi)< c_{w_j}(P) 
\end{align}
holds for all job $i$ satisfying $a_i\leq c_{w_j}(P)$ and $d_i\leq d_{w_j}$.} That is, if job $i$ arrives before time $c_{w_j}(P)$ and its due time is no later than $d_{w_j}$, then  job $i$ is completed before time $c_{w_j}(P)$ in policy $\pi$.
Define  
\begin{align}\label{eq_def}
\tau_j=\max_{i:a_i\leq c_{w_j}(P),d_i\leq d_{w_j}} c_{i}(\pi).
\end{align} 
According to \eqref{eq_proof_9_thm3} and \eqref{eq_def}, we can obtain
\begin{align}\label{eq_proof_19_thm3}
\tau_j<c_{w_j}(P).
\end{align}

On the other hand, \eqref{eq_def} tells us that all job  $i$ satisfying $d_i\leq d_{w_j}$ and $a_i\leq c_{w_j}(P)$ are completed by time $\tau_j$ in policy $\pi$. 
By this,
the system state of policy $\pi$ satisfies  
\begin{align}
\sum_{i:d_i\leq d_{w_j}}\xi_{i,\pi}(\tau_j) =0.\nonumber
\end{align}
Combining this with \eqref{eq_ordering_3_1}, yields
\begin{align}\label{eq_proof_1_thm3}
\sum_{i:d_i\leq d_{w_j}} \xi_{i,P}(\tau_j)\leq 0.
\end{align}
Further, the definition of the system state tells us that $\xi_{i,P}(t)\geq0$ for all $i=1,\ldots,n$ and $t\geq0$. Using this and \eqref{eq_proof_1_thm3}, we get that job $w_j$ satisfies
\begin{align}
\xi_{w_j,P}(\tau_j)= 0.\nonumber
\end{align}
That is, all tasks of job $w_j$ are completed by time $\tau_j$ in policy $P$. Hence, $c_{w_j}(P)\leq \tau_j$, where contradicts with \eqref{eq_proof_19_thm3}. Therefore, there exists at least one job $i$ satisfying the conditions 
$a_i\leq c_{w_j}(P)$, $d_i\leq d_{w_j}$, and $c_{w_j}(P) \leq c_{i}(\pi)$. This can be equivalently expressed as
\begin{align}\label{eq_proof_3_thm3}
c_{w_j}(P) \leq \max_{i:a_i\leq c_{w_j}(P),d_i\leq d_{w_j} }c_{i}(\pi).
\end{align}
Hence, for each $j=1,2,\ldots,n$,
\begin{align}
c_{w_j}(P)-d_{w_j} &\leq \max_{i:a_i\leq c_{w_j}(P),d_i\leq d_{w_j} }c_{i}(\pi)-d_{w_j} \nonumber\\
& \leq \max_{i:a_i\leq c_{w_j}(P),d_i\leq d_{w_j} }[c_{i}(\pi)-d_{i}]\nonumber\\
&\leq \max_{i=1,2,\ldots,n }[c_{i}(\pi)-d_i].\nonumber
\end{align}
This implies \eqref{eq_ordering_3_2}. Hence, Proposition~\ref{ordering_3} is proven.
\end{proof}
The proof of Proposition~\ref{ordering_3_1} is almost identical to that of 
Proposition~\ref{ordering_3}, and hence is not repeated here. The only difference is that $c_{w_j}(P)$ and $\bm{\xi}_{P}(\tau_j)$ in the proof of Proposition~\ref{ordering_3} should be replaced by $v_{w_j}(P)$ and $\bm{\gamma}_{P}(\tau_j)$, respectively.


\section{Proof of Proposition 5} \label{app_lem1_0}

The following two lemmas are needed to prove Proposition \ref{lem1_0}:  

\begin{lemma}\cite[Lemmas 1-2]{Smith78}\label{lem_non_prmp1_0}
Suppose that under policy $P$, $\{\bm{\xi}_{P}',\bm{\gamma}_{P}'\}$ is obtained by completing $b_P$ tasks in the system whose state is $\{\bm{\xi}_{P},\bm{\gamma}_{P}\}$. Further, suppose that under policy $\pi$, $\{\bm{\xi}_{\pi}',\bm{\gamma}_{\pi}'\}$ is obtained by completing $b_\pi$ tasks in the system whose state is $\{\bm{\xi}_{\pi},\bm{\gamma}_{\pi}\}$.
If $b_P\geq b_\pi$, policy $P$ satisfies Condition 3 of Proposition \ref{lem1_0}, and 
\begin{eqnarray}\label{eq_non_prmp_41_0}
\sum_{i=j}^n {\xi}_{[i],P}\leq \sum_{i=j}^n {\xi}_{[i],\pi}, ~j = 1,2,\ldots,n,\nonumber
\end{eqnarray}
then
\begin{eqnarray}\label{eq_non_prmp_40_0}
\sum_{i=j}^n {\xi}_{[i],P}'\leq \sum_{i=j}^n {\xi}_{[i],\pi}', ~j = 1,2,\ldots,n.
\end{eqnarray}
\end{lemma}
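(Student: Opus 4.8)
The plan is to argue purely about sorted non-negative integer vectors, ignoring the queueing semantics. For an $n$-vector $\bm{x}\ge\bm{0}$ and an index $j\in\{1,\dots,n\}$, write $k=n-j+1$ and
\[
g_j(\bm{x})\;=\;\sum_{i=j}^{n}x_{[i]}\;=\;\min_{|S|=k}\sum_{i\in S}x_i ,
\]
the sum of the $k$ smallest entries of $\bm{x}$; the second equality is immediate and will be convenient. In this language the hypothesis reads ``$g_j(\bm{\xi}_P)\le g_j(\bm{\xi}_\pi)$ for all $j$'' and the desired conclusion \eqref{eq_non_prmp_40_0} reads ``$g_j(\bm{\xi}'_P)\le g_j(\bm{\xi}'_\pi)$ for all $j$''. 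Completing a task amounts to subtracting $1$ from one strictly positive coordinate of the relevant state vector, and Condition~3 of Proposition~\ref{lem1_0} says that each of the $b_P$ completions under $P$ is performed on a coordinate achieving $\min\{x_i:x_i>0\}$ at that moment. I would then fix $j$ and compare a lower bound for $g_j(\bm{\xi}'_\pi)$ against an exact evaluation of $g_j(\bm{\xi}'_P)$.

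Two elementary observations do the work. First, if $\bm{x}'$ is obtained from $\bm{x}$ by decrementing a single positive coordinate, then $g_j(\bm{x})-1\le g_j(\bm{x}')\le g_j(\bm{x})$: the upper bound holds because decreasing one coordinate can only decrease every set-sum $\sum_{i\in S}x_i$; the lower bound holds because the set $T$ attaining $g_j(\bm{x}')$ satisfies $g_j(\bm{x}')=\sum_{i\in T}x'_i\ge\sum_{i\in T}x_i-1\ge g_j(\bm{x})-1$. Iterating over the $b_\pi$ completions under $\pi$ gives $g_j(\bm{\xi}'_\pi)\ge g_j(\bm{\xi}_\pi)-b_\pi$. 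Second, I claim that decrementing a \emph{smallest} positive coordinate of $\bm{x}$ lowers $g_j(\bm{x})$ by exactly $1$ when $g_j(\bm{x})>0$ and leaves it equal to $0$ when $g_j(\bm{x})=0$: if $g_j(\bm{x})=0$ then $\bm{x}$ has at least $k$ zero coordinates, so its smallest positive coordinate lies strictly outside the $k$ smallest and removing $1$ from it keeps all $k$ smallest entries equal to $0$; if $g_j(\bm{x})>0$ then $\bm{x}$ has at most $k-1$ zero coordinates, so its smallest positive coordinate is among the $k$ smallest and stays there after the decrement, so the sum of the $k$ smallest drops by exactly $1$. Applying this repeatedly to the $b_P$ greedy completions under $P$ yields the exact identity $g_j(\bm{\xi}'_P)=\max\{\,g_j(\bm{\xi}_P)-b_P,\;0\,\}$ (the ``$=$'' becomes ``$\le$'' in the degenerate case that $P$ exhausts its remaining tasks, which changes nothing below).

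Combining the two, fix $j$. If $g_j(\bm{\xi}_P)-b_P\le 0$, then $g_j(\bm{\xi}'_P)=0\le g_j(\bm{\xi}'_\pi)$ since $g_j$ is non-negative. Otherwise $g_j(\bm{\xi}'_P)=g_j(\bm{\xi}_P)-b_P\le g_j(\bm{\xi}_\pi)-b_P\le g_j(\bm{\xi}_\pi)-b_\pi\le g_j(\bm{\xi}'_\pi)$, where the three inequalities use the hypothesis, $b_P\ge b_\pi$, and the $\pi$-side lower bound in turn. As $j$ was arbitrary, this is \eqref{eq_non_prmp_40_0}. I expect the only real subtlety to be the second observation — pinning down precisely when a smallest-positive-coordinate decrement changes $g_j$, and in particular handling the boundary case $g_j(\bm{x})=0$, where the greedy decrement may turn a positive coordinate into a new zero without affecting the running sum of the $k$ smallest entries. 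Everything else is routine bookkeeping about sums of order statistics, and notably no induction, interchange argument, or appeal to the structure of policy $\pi$ is required once these two facts are in hand.
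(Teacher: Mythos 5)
Your proof is correct as a self-contained argument, and it is worth noting that the paper does not actually supply its own proof of this lemma: it is imported verbatim from \cite[Lemmas 1--2]{Smith78}, so there is no in-text proof to compare against line by line. Your reformulation of $\sum_{i=j}^n x_{[i]}$ as $\min_{|S|=n-j+1}\sum_{i\in S}x_i$ is a clean way to package the two elementary facts that drive the argument. Observation 1 (a single-unit decrement of any positive coordinate changes $g_j$ by at most $1$) and Observation 2 (a greedy decrement of a smallest positive coordinate changes $g_j$ by exactly $1$ when $g_j>0$ and by $0$ when $g_j=0$) are both verified correctly, including the tie cases and the boundary where the decremented coordinate becomes zero. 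Iterating gives $g_j(\bm{\xi}'_\pi)\ge g_j(\bm{\xi}_\pi)-b_\pi$ and $g_j(\bm{\xi}'_P)=\max\{g_j(\bm{\xi}_P)-b_P,\,0\}$, and the final chain $g_j(\bm{\xi}'_P)\le g_j(\bm{\xi}_\pi)-b_P\le g_j(\bm{\xi}_\pi)-b_\pi\le g_j(\bm{\xi}'_\pi)$ (or $g_j(\bm{\xi}'_P)=0$ in the truncated case) closes the argument. The one point to keep explicit, which you correctly account for via the iterated Observation 2, is that Condition~3 of Proposition~\ref{lem1_0} must be read as greedy-at-each-instant: each of the $b_P$ completions under $P$ removes a smallest positive coordinate of the \emph{current} state, not of the initial $\bm{\xi}_P$. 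Compared with Smith's original treatment (a direct case analysis on order statistics before and after single decrements, split on whether the decremented index falls above or below the threshold $j$), your version reaches the same conclusion by a slightly more compact route that avoids index bookkeeping, uses no interchange or induction over $j$, and makes no use of the structure of policy $\pi$ beyond non-negativity of its state; the content, however, is essentially the same.
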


\begin{lemma}\cite[Lemma 3]{Smith78}\label{lem_non_prmp2_0}
Suppose that, under policy $P$, $\{\bm{\xi}_{P}',\bm{\gamma}_{P}'\}$ is obtained by adding a job with $b$ tasks to the system whose state is $\{\bm{\xi}_{P},\bm{\gamma}_{P}\}$. Further, suppose that, under policy $\pi$, $\{\bm{\xi}_{\pi}',\bm{\gamma}_{\pi}'\}$ is obtained by adding a job with $b$ tasks to the system whose state is $\{\bm{\xi}_{\pi},\bm{\gamma}_{\pi}\}$.
If
\begin{eqnarray}
\sum_{i=j}^n {\xi}_{[i],P}\leq \sum_{i=j}^n {\xi}_{[i],\pi}, ~j = 1,2,\ldots,n,\nonumber
\end{eqnarray}
then
\begin{eqnarray}
\sum_{i=j}^n {\xi}_{[i],P}'\leq \sum_{i=j}^n {\xi}_{[i],\pi}', ~j = 1,2,\ldots,n.\nonumber
\end{eqnarray}
\end{lemma}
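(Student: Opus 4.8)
The plan is to recognize this as Lemma~3 of~\cite{Smith78} transcribed into the present notation, and to prove it by reducing the partial-sum inequality \eqref{eq_ordering_1_1} to an elementary fact about sums of order statistics. First I would rewrite both the hypothesis and the conclusion in a more workable form: since $\xi_{[i],P}$ is the $i$-th largest component, $\sum_{i=j}^n \xi_{[i],P}$ is exactly the sum of the $n-j+1$ smallest components of $\bm{\xi}_{P}$, so writing $\ell=n-j+1$ and letting $s_\ell(\bm{z})$ denote the sum of the $\ell$ smallest components of a nonnegative vector $\bm{z}$ (with $s_0\equiv 0$, and $s_\ell\equiv +\infty$ when $\ell$ exceeds the dimension of $\bm{z}$), the hypothesis becomes $s_\ell(\bm{\xi}_{P})\le s_\ell(\bm{\xi}_{\pi})$ for all $\ell$, and the conclusion $s_\ell(\bm{\xi}_{P}')\le s_\ell(\bm{\xi}_{\pi}')$ for all $\ell$. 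The key bookkeeping observation is that ``adding a job with $b$ tasks'' acts identically on the two state vectors: the arriving job's coordinate equals $0$ in both $\bm{\xi}_{P}$ and $\bm{\xi}_{\pi}$ and is raised to $b$ in both, so if $\bm{w}_{P}$ and $\bm{w}_{\pi}$ denote $\bm{\xi}_{P}$ and $\bm{\xi}_{\pi}$ with that common coordinate deleted, then $\bm{\xi}_{P}'$ is $\bm{w}_{P}$ with an extra entry $b$ adjoined, and likewise for $\pi$.

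Next I would record the one-line order-statistics identity that does the work. Deleting the shared zero shifts every partial sum by one index, $s_\ell(\bm{\xi}_{P})=s_{\ell-1}(\bm{w}_{P})$ and $s_\ell(\bm{\xi}_{\pi})=s_{\ell-1}(\bm{w}_{\pi})$, so the hypothesis is equivalent to $s_m(\bm{w}_{P})\le s_m(\bm{w}_{\pi})$ for all $m$. Adjoining the entry $b$ then gives, for every $\ell$,
\[
s_\ell(\bm{\xi}_{P}') \;=\; \min\bigl\{\, s_\ell(\bm{w}_{P}),\; b+s_{\ell-1}(\bm{w}_{P}) \,\bigr\},
\]
the two terms according to whether the new entry $b$ is or is not among the $\ell$ smallest entries of $\bm{\xi}_{P}'$; this needs only a short two-case check (if $b$ is excluded, the $\ell$ smallest entries are those of $\bm{w}_{P}$; if $b$ is included, the remaining $\ell-1$ are the smallest of $\bm{w}_{P}$; the true value is the smaller of the two sums). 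The same identity holds with $\pi$ in place of $P$. Since $s_\ell(\bm{w}_{P})\le s_\ell(\bm{w}_{\pi})$ and $b+s_{\ell-1}(\bm{w}_{P})\le b+s_{\ell-1}(\bm{w}_{\pi})$, monotonicity of $\min$ in both arguments immediately yields $s_\ell(\bm{\xi}_{P}')\le s_\ell(\bm{\xi}_{\pi}')$ for every $\ell$, which is \eqref{eq_non_prmp_40_0}. The $\bm{\gamma}$-components are inert here: the arriving job also raises the deleted coordinate of $\bm{\gamma}$ from $0$ to $b$, but the inequality asserted by the lemma involves only $\bm{\xi}$.

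There is no real obstacle in this lemma; the two points needing care are the bookkeeping remark that ``adding a job'' feeds the \emph{same} value $b$ to both policies (so the hypothesis transfers verbatim once the shared zero is deleted) and the short two-case justification of the displayed $\min$-identity. I would then close the loop by recalling how this lemma fits in: together with Lemma~\ref{lem_non_prmp1_0} it supplies the two induction steps for Proposition~\ref{lem1_0} — starting from $\bm{\xi}_{P}(0^-)=\bm{\xi}_{\pi}(0^-)=\bm{0}$, one propagates \eqref{eq_ordering_1_1} forward through job arrivals (this lemma, with $b=k_i$) and through batches of task completions (Lemma~\ref{lem_non_prmp1_0}, where Condition~2 of Proposition~\ref{lem1_0} gives $b_P\ge b_\pi$ and Condition~3 supplies the fewest-remaining-tasks rule), and then appeals to Proposition~\ref{ordering_1}. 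This is exactly why the paper can afford to cite \cite{Smith78} for the present statement.
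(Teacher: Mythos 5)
Your proof is correct. The paper itself supplies no proof of this lemma and simply cites Lemma~3 of Smith~(1978), so there is no in-paper argument to compare against; your write-up fills that gap with a self-contained, elementary order-statistics argument. The two load-bearing observations are both sound: (i) the arriving job's coordinate is $0$ in both $\bm{\xi}_P$ and $\bm{\xi}_\pi$ before the arrival (this follows from Definition~\ref{def_state_thm3}), so deleting it shifts every partial sum by exactly one index and transfers the hypothesis verbatim to the reduced vectors $\bm{w}_P,\bm{w}_\pi$; and (ii) the identity $s_\ell(\bm{\xi}_P') = \min\{s_\ell(\bm{w}_P),\,b+s_{\ell-1}(\bm{w}_P)\}$ is exactly right (the two-case check you describe is airtight because $s_\ell(\bm{w}_P)-(b+s_{\ell-1}(\bm{w}_P))=w_{(\ell),P}-b$, so inclusion/exclusion of $b$ is governed by the sign of that difference, and the true value is always the smaller branch), with the $+\infty$ convention handling the boundary $\ell=n$ cleanly. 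Monotonicity of $\min$ then closes the argument. Your concluding remark about how this lemma plugs into the induction for Proposition~\ref{lem1_0} alongside Lemma~\ref{lem_non_prmp1_0} matches the paper's actual use of it.
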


We now use Lemma \ref{lem_non_prmp1_0} and Lemma \ref{lem_non_prmp2_0} to prove Proposition \ref{lem1_0}.
\ifreport
\begin{proof}[Proof of Proposition \ref{lem1_0}]
\else
\begin{proof}[of Proposition \ref{lem1_0}]
\fi

Because policy $P$ is more work-efficient than policy $\pi$, the sequence of task completion times in policy $P$ are  smaller  than those in policy $\pi$, i.e., 
\begin{align}\label{eq_something}
(t_{1,P},\ldots, t_{k_{\text{sum}},P}) \leq(t_{1,\pi},\ldots, t_{k_{\text{sum}},\pi}). 
\end{align} 

We modify the task completion times on the sample-path of policy $P$ as follows:  For each $i=1,\ldots, k_{\text{sum}}$, if a task of job $j_i$ is completed at time $t_{i,P}$ on the original sample-path of policy $P$, then on the modified sample-path of policy $P$, the same task of job $j_i$ is completed at time $t_{i,\pi}$. This modification satisfies the following three claims: 
\begin{itemize}
\item[1.] According to \eqref{eq_something}, the task completion times of policy $P$ are postponed after the modification;
\item[2.] The order of completed tasks in policy $P$ remains the same before and after the modification;
\item[3.] The task completion times on the sample-path of policy $\pi$ and on the modified sample-path of policy $P$ are identical.
\end{itemize}

Let $\hat{\bm{\xi}}_{P}(t) =(\hat{\xi}_{1,P}(t),\ldots,\hat{\xi}_{n,P}(t))$ and $\hat{\bm{\gamma}}_{P}(t) =(\hat{\gamma}_{1,P}(t),$ $\ldots,\hat{\gamma}_{n,P}(t))$ denote the system state on the modified sample-path of policy $P$. 
From Claims 1 and 2, we can get ${\xi}_{i,P}(t)  \leq \hat{\xi}_{i,P}(t) $ for all $t\geq 0$ and $i=1,\ldots,n$. Therefore, for all $t\in[0,\infty)$
\begin{align}\label{eq_lem1_0_proof_1}
\sum_{i=j}^n {\xi}_{[i],P}(t)\leq \sum_{i=j}^n \hat{\xi}_{[i],P}(t), ~i=1,2,\ldots,n.
\end{align}

Next, we compare policy $\pi$ with the modified sample-path of policy $P$. 
According to Claim 1, Claim 2, and $k_1\leq k_2\leq \ldots\leq k_n$, each task completing service on the modified sample-path of policy $P$ is still from the job with the fewest remaining tasks among all jobs with remaining tasks. That is, Condition 3 of Proposition \ref{lem1_0} is satisfied on the modified sample-path of policy $P$, which is required by Lemma \ref{lem_non_prmp1_0}.

Because $\hat{\bm{\xi}}_{P}(0) = {\bm{\xi}}_{\pi}(0) =\bm{0}$, by using Claim 3, Lemma \ref{lem_non_prmp1_0}, and Lemma \ref{lem_non_prmp2_0}, and taking an induction on the job arrival events and task completion events over time, we can obtain for all $t\in[0,\infty)$
\begin{align}\label{eq_lem1_0_proof_2}
\sum_{i=j}^n \hat{\xi}_{[i],P}(t)\leq \sum_{i=j}^n {\xi}_{[i],\pi}(t), ~i=1,2,\ldots,n.
\end{align}
Combining \eqref{eq_lem1_0_proof_1} and \eqref{eq_lem1_0_proof_2}, yields \eqref{eq_ordering_1_1}. Then, \eqref{eq_ordering_1_1_2} and \eqref{eq_ordering_1_2} follow from Proposition \ref{ordering_1}, which completes the proof. \end{proof}

\section{Proof of Proposition 6} \label{app0}

The following two lemmas are needed to prove Proposition \ref{lem1}:  

\begin{lemma}\label{lem_non_prmp1}
Suppose that under policy $P$, $\{\bm{\xi}_{P}',\bm{\gamma}_{P}'\}$ is obtained by allocating $b_P$ unassigned tasks to the servers in the system whose state is $\{\bm{\xi}_{P},\bm{\gamma}_{P}\}$. Further, suppose that under policy $\pi$, $\{\bm{\xi}_{\pi}',\bm{\gamma}_{\pi}'\}$ is obtained by completing $b_\pi$ tasks in the system whose state is $\{\bm{\xi}_{\pi},\bm{\gamma}_{\pi}\}$.
If $b_P\geq b_\pi$, condition 2 of Proposition \ref{lem1} is satisfied in policy $P$, and
\begin{eqnarray}\label{eq_non_prmp_41}
\sum_{i=j}^n {\gamma}_{[i],P}\leq \sum_{i=j}^n {\xi}_{[i],\pi}, ~\forall~j = 1,2,\ldots,n,\nonumber
\end{eqnarray}
then
\begin{eqnarray}\label{eq_non_prmp_40}
\sum_{i=j}^n {\gamma}_{[i],P}'\leq \sum_{i=j}^n {\xi}_{[i],\pi}', ~\forall~j = 1,2,\ldots,n.
\end{eqnarray}
\end{lemma}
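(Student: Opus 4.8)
\textbf{Proof proposal for Lemma~\ref{lem_non_prmp1}.}
The plan is to mimic the structure of the proof of Lemma~\ref{lem_non_prmp1_0} (the ``Smith'' lemma) but to track the sorted partial sums of the \emph{unassigned}-task vector $\bm{\gamma}_P$ on the left and the sorted partial sums of the \emph{remaining}-task vector $\bm{\xi}_\pi$ on the right. The hypothesis is the inequality $\sum_{i=j}^n \gamma_{[i],P}\le \sum_{i=j}^n \xi_{[i],\pi}$ for every $j$, i.e.\ (in the notation of the paper) $\bm{\gamma}_\pi$'s analogue is weakly supermajorized; equivalently, writing things in increasing order, $\sum_{i=1}^{j}\gamma_{(i),P} \ge$ nothing useful — I will instead work directly with the ``upper'' partial sums $\Sigma_j(\bm{x})=\sum_{i=j}^n x_{[i]}$ as the statement does, and keep in mind that $\Sigma_j$ is the sum of the $n-j+1$ smallest entries. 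First I would handle the two elementary operations separately: (a) on the $\pi$-side, completing $b_\pi$ tasks decreases $\Sigma_j(\bm{\xi}_\pi)$ by at most $b_\pi$, and in fact the change is controlled entrywise; (b) on the $P$-side, allocating $b_P$ unassigned tasks \emph{decreases} entries of $\bm{\gamma}_P$ by exactly $b_P$ in total (the $\bm{\xi}_P$ vector is untouched, but only $\bm{\gamma}_P$ appears on the left here), and by Condition~2 of Proposition~\ref{lem1} these decrements are taken from the jobs with the fewest unassigned tasks.

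The key step is the following monotonicity/greedy argument, parallel to Lemma~\ref{lem_non_prmp1_0}: because policy $P$ always allocates an unassigned task from a job currently having the fewest unassigned tasks, the operation that produces $\bm{\gamma}_P'$ from $\bm{\gamma}_P$ removes units from the \emph{smallest} positive components first. I would argue that, for each fixed $j$, decrementing the smallest positive components can only help the inequality: if $\Sigma_j(\bm{\gamma}_P)\le \Sigma_j(\bm{\xi}_\pi)$ held before, then after removing $b_P\ge b_\pi$ units from the small end of $\bm{\gamma}_P$ and at most $b_\pi$ units (the task completions) from $\bm{\xi}_\pi$, we still have $\Sigma_j(\bm{\gamma}_P')\le \Sigma_j(\bm{\xi}_\pi')$. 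Concretely, I would split into the two regimes used in \cite{Smith78}: the case where the component being decremented in $\bm{\gamma}_P$ is among the $n-j+1$ smallest (so $\Sigma_j$ drops by the full decrement, and since $\Sigma_j(\bm{\xi}_\pi)$ drops by at most $b_\pi\le b_P$ the inequality is maintained), and the case where it is not among them (so $\Sigma_j(\bm{\gamma}_P)$ is unchanged or the decremented entry merely moves into/out of the bottom $n-j+1$ block, which by the ``fewest first'' rule can only keep $\Sigma_j$ from increasing). The task-completion side is handled exactly as in Lemma~\ref{lem_non_prmp1_0}, since completing a task on the $\pi$-side lowers one positive entry of $\bm{\xi}_\pi$ by one and the sorted partial sums change in the controlled way described there. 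Combining the two one-step effects and using $b_P\ge b_\pi$ gives \eqref{eq_non_prmp_40}.

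I expect the main obstacle to be bookkeeping the interaction between the two vectors being different objects ($\bm{\gamma}$ on the left, $\bm{\xi}$ on the right): unlike in Lemma~\ref{lem_non_prmp1_0}, the left-hand operation (allocation) does not change $\bm{\xi}_P$ at all, so I cannot simply reuse the symmetric argument; I must verify that allocation's effect on $\bm{\gamma}_P$ is ``dominated'' by completion's effect on $\bm{\xi}_\pi$ in the weak-supermajorization order, using only $b_P\ge b_\pi$ and the fewest-unassigned-first rule. The subtle point is that an allocation removes a whole unit from a \emph{single} job's $\gamma$, and I need to make sure the rearrangement of the sorted vector after this removal does not break any of the $n$ partial-sum inequalities; this is where the ``fewest first'' hypothesis is essential, and I would isolate it as a small combinatorial claim (removing one unit from a smallest positive coordinate of $\bm{x}$ never increases $\Sigma_j(\bm{x})$ for any $j$, and decreases $\Sigma_j$ by $1$ whenever the bottom $n-j+1$ block is nonzero). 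Everything else is an induction that will be carried out in the body of Step~3 when this lemma is invoked; here it suffices to establish the one-step statement.
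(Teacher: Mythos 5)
Your proposal is correct and takes essentially the same route as the paper: after dispatching the trivial case $\sum_{i=j}^n \gamma_{[i],P}'=0$, both use the fewest-unassigned-first rule to obtain the exact identity $\sum_{i=j}^n \gamma_{[i],P}'=\sum_{i=j}^n \gamma_{[i],P}-b_P$, bound the decrease in $\sum_{i=j}^n \xi_{[i],\pi}$ by at most $b_\pi$, and chain these with $b_P\ge b_\pi$. The paper's version is just a three-step chain of (in)equalities and states the middle identity with only a one-line appeal to the fewest-first rule; the combinatorial claim you isolate at the end (decrementing a smallest positive coordinate reduces each lower partial sum $\Sigma_j$ by one whenever $\Sigma_j>0$ and otherwise leaves it at zero) is exactly the step that justifies it.
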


\begin{proof}
If $\sum_{i=j}^n {\gamma}_{[i],P}'=0$, then the inequality \eqref{eq_non_prmp_40} follows naturally. 
If $\sum_{i=j}^n {\gamma}_{[i],P}'>0$, then there exist  unassigned tasks which have not been assigned to any server. 
In policy $P$, each task allocated to the servers is from the job with the minimum positive ${\gamma}_{i,P}$. Hence,
$\sum_{i=j}^n {\gamma}_{[i],P}'=\sum_{i=j}^n {\gamma}_{[i],P} - b_P \leq \sum_{i=j}^n {\xi}_{[i],\pi} - b_\pi \leq \sum_{i=j}^n {\xi}_{[i],\pi}'$.
\end{proof}

\begin{lemma}\label{lem_non_prmp2}
Suppose that, under policy $P$, $\{\bm{\xi}_{P}',\bm{\gamma}_{P}'\}$ is obtained by adding a job with $b$ tasks to the system whose state is $\{\bm{\xi}_{P},\bm{\gamma}_{P}\}$. Further, suppose that, under policy $\pi$, $\{\bm{\xi}_{\pi}',\bm{\gamma}_{\pi}'\}$ is obtained by adding a job with $b$ tasks to the system whose state is $\{\bm{\xi}_{\pi},\bm{\gamma}_{\pi}\}$.
If
\begin{eqnarray}
\sum_{i=j}^n {\gamma}_{[i],P}\leq \sum_{i=j}^n {\xi}_{[i],\pi}, ~\forall~j = 1,2,\ldots,n,\nonumber
\end{eqnarray}
then
\begin{eqnarray}
\sum_{i=j}^n {\gamma}_{[i],P}'\leq \sum_{i=j}^n {\xi}_{[i],\pi}', ~\forall~j = 1,2,\ldots,n.\nonumber
\end{eqnarray}
\end{lemma}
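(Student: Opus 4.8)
The plan is to reduce Lemma~\ref{lem_non_prmp2} to a short combinatorial identity describing how the partial sums $\sum_{i=j}^n z_{[i]}$ (the sum of the $n-j+1$ smallest entries of an $n$-vector $\bm z$) change when a single zero entry is raised to the value $b$. First I would record the one structural fact that makes everything work: by Definition~\ref{def_state_thm3}, the job being added, call it job $i$, is not present in either system before its arrival, so $\gamma_{i,P}=\xi_{i,\pi}=0$ beforehand; consequently $\bm\gamma'_{P}$ (resp.\ $\bm\xi'_{\pi}$) is obtained from $\bm\gamma_{P}$ (resp.\ $\bm\xi_{\pi}$) by replacing one zero component with $b\ge 0$, and in particular the smallest component of each of $\bm\gamma_{P}$ and $\bm\xi_{\pi}$ equals $0$. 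Writing $S_\ell(\bm z)=\sum_{i=n-\ell+1}^{n}z_{[i]}$ for the sum of the $\ell$ smallest entries ($S_0:=0$, $S_{n+1}:=+\infty$), the hypothesis becomes $S_\ell(\bm\gamma_{P})\le S_\ell(\bm\xi_{\pi})$ for $\ell=1,\dots,n$, and the desired conclusion becomes $S_\ell(\bm\gamma'_{P})\le S_\ell(\bm\xi'_{\pi})$ for $\ell=1,\dots,n$.

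The key step is the identity: if $\bm z\ge\bm 0$ has $\min_i z_i=0$ and $\bm z'$ is obtained from $\bm z$ by replacing one zero entry with $b\ge 0$, then
\begin{align}
S_\ell(\bm z') = \min\bigl\{\, S_{\ell+1}(\bm z),\; b + S_\ell(\bm z)\,\bigr\},\qquad \ell=1,\dots,n. \nonumber
\end{align}
To prove it, consider the $\ell$ smallest components of $\bm z'$ and ask whether they include the new component $b$. If they do not, they form an $\ell$-element sub-multiset of the other $n-1$ components of $\bm z'$, which are exactly the components of $\bm z$ with one zero deleted; since the deleted entry was a smallest component of $\bm z$, the minimal $\ell$-element sum among those $n-1$ components is $S_{\ell+1}(\bm z)$. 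If they do include $b$, the minimal sum is $b$ plus the minimal $(\ell-1)$-element sum among those same $n-1$ components, i.e.\ $b+S_\ell(\bm z)$. As $S_\ell(\bm z')$ is the smaller of the two, the identity follows; the boundary case $\ell=n$ forces the second alternative (only $n-1$ entries besides $b$ are available), which is consistent with the convention $S_{n+1}(\bm z)=+\infty$.

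Finally I would apply this identity to $\bm z=\bm\gamma_{P}$ (giving $\bm z'=\bm\gamma'_{P}$) and to $\bm z=\bm\xi_{\pi}$ (giving $\bm z'=\bm\xi'_{\pi}$) --- both have a zero component, namely job $i$'s, and in both the inserted value is the \emph{same} $b$ --- and then invoke monotonicity of $\min$: the hypothesis yields $S_{\ell+1}(\bm\gamma_{P})\le S_{\ell+1}(\bm\xi_{\pi})$ and $b+S_\ell(\bm\gamma_{P})\le b+S_\ell(\bm\xi_{\pi})$ for every $\ell$, hence $S_\ell(\bm\gamma'_{P})\le S_\ell(\bm\xi'_{\pi})$ for $\ell=1,\dots,n$, which after the substitution $j=n-\ell+1$ is exactly the conclusion of the lemma. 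I do not anticipate a genuine obstacle here; the only points requiring care are the index translation between $\sum_{i=j}^n z_{[i]}$ and ``sum of the $\ell$ smallest'' together with the boundary case $\ell=n$, and the easy-but-essential remark that the arriving job contributes a zero to \emph{both} state vectors --- it is precisely this that aligns the ``exclude $b$'' branch of $S_\ell(\bm z')$ with $S_{\ell+1}(\bm z)$ rather than with $S_\ell(\bm z)$, and hence lets the hypothesis pass through unchanged.
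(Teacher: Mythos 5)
Your proof is correct. Here is how it relates to the paper's argument: both hinge on the two observations that (a) the arriving job contributes a zero to $\bm\gamma_P$ and $\bm\xi_\pi$ prior to arrival and the same value $b$ to both afterward, and (b) the tail sum $\sum_{i=j}^n z'_{[i]}$ equals either $\sum_{i=j-1}^n z_{[i]}$ (if $b$ falls outside the $n-j+1$ smallest of $\bm z'$) or $b+\sum_{i=j}^n z_{[i]}$ (if it falls inside). The paper handles this with a four-way case split on which branch is active for $\bm\gamma'_P$ and for $\bm\xi'_\pi$; in the two mixed cases it additionally needs the observation that $\gamma_{[j-1],P}$ compares to $b$ in the right direction so that the two branch expressions can be chained. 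You instead collapse (b) into the single identity $S_\ell(\bm z')=\min\{S_{\ell+1}(\bm z),\,b+S_\ell(\bm z)\}$ (which subsumes that comparison automatically, since the min always selects the smaller branch), and then conclude by monotonicity of $\min$ and the hypothesis at indices $\ell$ and $\ell+1$. This is a cleaner packaging that avoids the case split and the auxiliary inequality, and the extracted identity has the side benefit of being reusable (it is exactly what the paper's Lemma for the $\bm\xi$-version, Lemma~\ref{lem_non_prmp2_0}, would also follow from with minor modifications). Your attention to the boundary $\ell=n$ via $S_{n+1}:=+\infty$, and to the essential role of the inserted zero in making the index shift $\ell\mapsto\ell+1$ come out right, is exactly the care the argument requires.
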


\begin{proof}
Without loss of generalization, we suppose that after the job arrival, $b$ is the $l$-th largest component of $\bm{\gamma}_{P}'$ and the $m$-th largest component of $\bm{\xi}_{\pi}'$, i.e., $\gamma'_{[l],P} = \xi'_{[m],\pi}= b$. We consider the following four cases:

{Case 1}: $l<j, m<j$. We have $\sum_{i=j}^n {\gamma}_{[i],P}' =\sum_{i=j-1}^n {\gamma}_{[i],P} \leq \sum_{i=j-1}^n {\xi}_{[i],\pi}= \sum_{i=j}^n {\xi}_{[i],\pi}'$.

{Case 2}: $l<j, m\geq j$. We have $\sum_{i=j}^n {\gamma}_{[i],P}' =\sum_{i=j-1}^n {\gamma}_{[i],P} \leq b + \sum_{i=j}^n {\gamma}_{[i],P} \leq b + \sum_{i=j}^n {\xi}_{[i],\pi} = \sum_{i=j}^n {\xi}_{[i],\pi}'$.

{Case 3}: $l\geq j, m<j$. We have $\sum_{i=j}^n {\gamma}_{[i],P}' = b + \sum_{i=j}^n {\gamma}_{[i],P} \leq \sum_{i=j-1}^n {\gamma}_{[i],P} \leq \sum_{i=j-1}^n {\xi}_{[i],\pi} = \sum_{i=j}^n {\xi}_{[i],\pi}'$.

{Case 4}: $l\geq j, m\geq j$. We have $\sum_{i=j}^n {\gamma}_{[i],P}' = b + \sum_{i=j}^n {\gamma}_{[i],P} \leq b + \sum_{i=j}^n {\xi}_{[i],\pi} = \sum_{i=j}^n {\xi}_{[i],\pi}'$.
\end{proof}

We now use Lemma \ref{lem_non_prmp1} and Lemma \ref{lem_non_prmp2} to prove Proposition \ref{lem1}.
\ifreport
\begin{proof}[Proof of Proposition \ref{lem1}]
\else
\begin{proof}[of Proposition \ref{lem1}]
\fi

\begin{figure*}
\centering 
\includegraphics[width=0.8\textwidth]{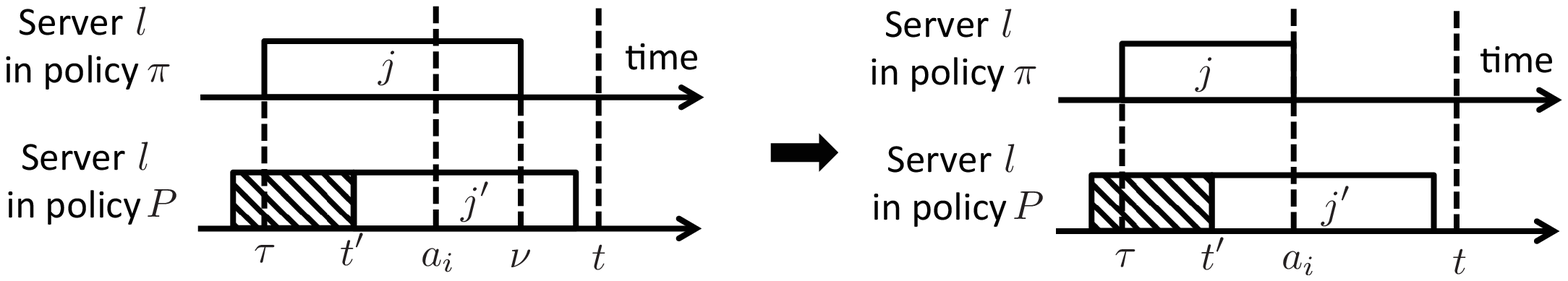} \caption{Illustration of the modification of task completion times in policy $\pi$: If in policy $\pi$, task $j$ starts execution at time $\tau\in[0,a_i]$ and completes execution at time $\nu\in(a_i,t]$, and in policy $P$, task $j'$ starts execution at time $t'\in[0,a_i]$, then the completion time of task $j$ is changed from $\nu$ to $a_i^-$ in policy $\pi$.}
\label{fig_modification} 
\end{figure*} 

Assume that no task is completed at the job arrival times $a_i$ for $i=1,\ldots,n$. This does not lose any generality, because if a task is completed at time $t_j=a_i$, Proposition \ref{lem1} can be proven by first proving for the case $t_j=a_i+\epsilon$ and then taking the limit $\epsilon\rightarrow 0$. 
We prove \eqref{eq_ordering_2_1} by induction. 

\emph{Step 1: We will show that \eqref{eq_ordering_2_1} holds during $[0,a_2)$.\footnote{Note that $a_1=0$.}} 

Because $\bm{\xi}_{P}(0^-) = \bm{\gamma}_{P}(0^-) =\bm{\xi}_{\pi}(0^-) = \bm{\gamma}_{\pi}(0^-)=\bm{0}$, \eqref{eq_ordering_2_1} holds at time $0^-$. Job 1 arrives at time $a_1=0$. By Lemma \ref{lem_non_prmp2}, \eqref{eq_ordering_2_1} holds at time $0$. Let $t$ be an arbitrarily chosen time during $(0,a_{2})$. Suppose that  $b_\pi$ tasks start execution and also complete execution during $[0,t]$ in policy $\pi$. We need to consider two cases: 

Case 1: The queue is not empty (there exist unassigned tasks in the queue) during $[0,t]$ in policy $P$. By the weak work-efficiency ordering condition, no fewer than $b_\pi$ tasks start execution during $[0,t]$ in policy $P$. Because \eqref{eq_ordering_2_1} holds at time $0$, by Lemma \ref{lem_non_prmp1}, \eqref{eq_ordering_2_1} also holds at time $t$. 

Case 2: The queue is empty (all tasks in the system are in service) by time $t'\in[0,t]$ in policy $P$. Because $t\in(0,a_{2})$ and there is no task arrival during $(0,a_2)$, there is no task arrival during $(t',t]$. Hence, it must hold that all tasks in the system are in service at time $t$. Then, the system state of policy $P$ satisfies $\sum_{i=j}^n {\gamma}_{[i],P}(t)=0$ for all $j=1,2,\ldots,n$ at time $t$. Hence, \eqref{eq_ordering_2_1} holds at time $t$.

In summary of these two cases, \eqref{eq_ordering_2_1} holds for all $t\in[0,a_2)$.

\emph{Step 2: Assume that for some integer $i\in\{2,\ldots, n\}$, the conditions of Proposition \ref{lem1} imply that \eqref{eq_ordering_2_1} holds for all $t\in[0,a_i)$. We will prove that the conditions of Proposition \ref{lem1} imply that  \eqref{eq_ordering_2_1} holds for all $t\in[0,a_{i+1})$.}

Let $t$ be an arbitrarily chosen time during $(a_i,a_{i+1})$. We modify the task completion times in policy $\pi$ as follows: For each pair of corresponding task $j$ and task $j'$ mentioned in the definition of the weak work-efficiency ordering, if 
\begin{itemize}
\item In policy $\pi$, task $j$ starts execution at time $\tau\in[0,a_i]$ and completes execution at time $\nu\in(a_i,t]$,
\item In policy $P$, the queue is not empty (there exist unassigned tasks in the queue) during $[\tau,\nu]$,
\item In policy $P$, the corresponding task $j'$ starts execution at time $t'\in[0,a_i]$,
\end{itemize}
then the completion time of task $j$ is modified from $\nu$ to $a_i^-$ in policy $\pi$, as illustrated in Fig. \ref{fig_modification}.

This modification satisfies the following three claims:
\begin{itemize}
\item[1.] The system state of policy $\pi$ at time $t$ remains the same before and after this modification;
\item[2.] Policy $P$ is still weakly more work-efficient than policy $\pi$ after this modification;
\item[3.] If $b_\pi$ tasks complete execution during $[a_i,t]$ on the modified sample path of policy $\pi$, and  the queue is not empty (there exist unassigned tasks in the queue) during $[a_i,t]$ in policy $P$, then no fewer than $b_\pi$ tasks start execution during $[a_i,t]$ in policy $P$.
\end{itemize}
We now prove these three claims. Claim 1 follows from the fact that the tasks completed during $[0,t]$ remain the same before and after this modification. It is easy to prove Claim 2 by checking the definition of work-efficiency ordering. For Claim 3, notice that if a task $j$ starts execution and completes execution during $[a_i,t]$ on the modified sample path of policy $\pi$, then by Claim 2, its corresponding task $j'$ must start execution during $[a_i,t]$ in policy $P$. On the other hand, if a task $j$ starts execution during $[0,a_i]$ and completes execution during $[a_i,t]$ on the modified sample path of policy $\pi$, then by the modification, its corresponding task $j'$ must start execution during $[a_i,t]$ in policy $P$. By combining these two cases, Claim 3 follows.

We use these three claims to prove the  statement of \emph{Step 2}. According to Claim 2, policy $P$ is weakly more work-efficient than policy $\pi$ after the modification. By the assumption of \emph{Step 2}, \eqref{eq_ordering_2_1} holds during $[0,a_i)$ for the modified sample path of policy $\pi$. Job $j$ arrives at time $a_i$. By Lemma \ref{lem_non_prmp2}, \eqref{eq_ordering_2_1} holds at time $a_i$ for the modified sample path of policy $\pi$. Suppose that $b_\pi$ tasks complete execution during $[a_i,t]$ on the modified sample path of policy $\pi$.
We need to consider two cases: 

Case 1: The queue is not empty (there exist unassigned tasks in the queue) during $[a_i,t]$ in policy $P$. By Claim 3, no fewer than $b_\pi$ tasks start execution during $[a_i,t]$ in policy $P$. Because \eqref{eq_ordering_2_1} holds at time $a_i$, by Lemma \ref{lem_non_prmp1}, \eqref{eq_ordering_2_1} also holds at time $t$ for the modified sample path of policy $\pi$. 

Case 2: The queue is empty (all tasks in the system are in service) at  time $t'\in[a_i,t]$ in policy $P$. Because $t\in(a_i,a_{i+1})$ and $t'\in[a_i,t]$, there is no task arrival during $(t',t]$. Hence, it must hold that all tasks in the system are in service at time $t$. Then, the system state of policy $P$ satisfies $\sum_{i=j}^n {\gamma}_{[i],P}(t)=0$ for all $j=1,2,\ldots,n$  at time $t$. Hence, \eqref{eq_ordering_2_1} holds at time $t$ for the modified sample path of policy $\pi$.

In summary of these two cases, \eqref{eq_ordering_2_1} holds at time $t$ for the modified sample path of policy $\pi$. By Claim 1, the system state of policy $\pi$ at time $t$ remains the same before and after this modification. Hence, \eqref{eq_ordering_2_1} holds at time $t$ for the original sample path of policy $\pi$. Therefore, if  the assumption of \emph{Step 2} is true, then \eqref{eq_ordering_2_1} holds for all $t\in[0,a_{i+1})$. 

By induction, \eqref{eq_ordering_2_1} holds at time $t\in[0,\infty)$. Then, \eqref{eq_ordering_2_2} and \eqref{eq_ordering_2_3} follow from Proposition \ref{ordering_2}.
This completes the proof.\end{proof}

\section{Proof  of Proposition 7} \label{app_lem2_0}

The proof of Proposition \ref{lem2_0} requires the following two lemmas:

\begin{lemma}\label{lem_non_prmp1_thm3_0}
Suppose that, in policy $P$, $\{\bm{\xi}_P',\bm{\gamma}_P'\}$ is obtained by completing $b_P$ tasks in the system whose state is $\{\bm{\xi}_P,\bm{\gamma}_P\}$. Further, suppose that, in policy $\pi$, $\{\bm{\xi}_\pi',\bm{\gamma}_\pi'\}$ is obtained by completing $b_\pi$ tasks in the system whose state is $\{\bm{\xi}_\pi,\bm{\gamma}_\pi\}$.
If $b_P\geq b_\pi$, condition 3 of Proposition \ref{lem2_0} is satisfied in policy $P$, and
\begin{eqnarray}
\sum_{i:d_i\leq\tau} \xi_{i,P}\leq \sum_{i:d_i\leq\tau} \xi_{i,\pi}, ~\tau\in[0,\infty),\nonumber
\end{eqnarray}
then
\begin{eqnarray}\label{eq_non_prmp_40_thm3_0}
\sum_{i:d_i\leq\tau} \xi_{i,P}'\leq \sum_{i:d_i\leq\tau} \xi_{i,\pi}', ~\tau\in[0,\infty).\end{eqnarray}
\end{lemma}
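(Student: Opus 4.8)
# Proof Proposal for Lemma~\ref{lem_non_prmp1_thm3_0}

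The plan is to mirror the structure of Lemma~\ref{lem_non_prmp1_0} (from \cite{Smith78}), but to work with the due-date-indexed partial sums $\sum_{i:d_i\le\tau}\xi_{i,\cdot}$ rather than the ordered partial sums $\sum_{i=j}^{n}\xi_{[i],\cdot}$. The key structural observation is that the quantity on the left-hand side of \eqref{eq_non_prmp_40_thm3_0} is a monotone function of $\tau$ that only changes at the finitely many values $\tau=d_i$, so it suffices to verify the inequality for each fixed threshold $\tau$. First I would fix an arbitrary $\tau\in[0,\infty)$ and split the set of jobs into the ``urgent'' class $\mathcal{U}_\tau=\{i: d_i\le\tau\}$ and its complement. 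The only thing that can increase $\sum_{i\in\mathcal{U}_\tau}\xi_{i,P}'$ relative to $\sum_{i\in\mathcal{U}_\tau}\xi_{i,P}$ is impossible (task completions only decrease $\xi$), and likewise for $\pi$; so the real content is a bookkeeping argument about how many urgent-class task completions occur on each side.

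The main step is a case analysis. Let $a_\pi$ denote the number of the $b_\pi$ completed tasks in policy $\pi$ that belong to jobs in $\mathcal{U}_\tau$, and let $a_P$ be the corresponding count for policy $P$. Then
\begin{align}
\sum_{i\in\mathcal{U}_\tau}\xi_{i,P}' &= \sum_{i\in\mathcal{U}_\tau}\xi_{i,P} - a_P,\nonumber\\
\sum_{i\in\mathcal{U}_\tau}\xi_{i,\pi}' &= \sum_{i\in\mathcal{U}_\tau}\xi_{i,\pi} - a_\pi.\nonumber
\end{align}
If $\sum_{i\in\mathcal{U}_\tau}\xi_{i,P}'=0$ the inequality is immediate, so assume it is positive, i.e.\ after the completions there still remains at least one urgent-class task under policy $P$. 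Here Condition~3 of Proposition~\ref{lem2_0} (each task completing service in $P$ is from the job with the earliest due date among jobs with remaining tasks) together with the ordering hypothesis $d_1\le d_2\le\cdots\le d_n$ forces that all of the $b_P$ completions in $P$ were drawn from urgent jobs whenever urgent jobs with remaining tasks exist. Since after the step an urgent job still has remaining tasks, every one of the $b_P$ completions must have been urgent, so $a_P=b_P$. Combining with $a_\pi\le b_\pi\le b_P=a_P$ and the hypothesis $\sum_{i\in\mathcal{U}_\tau}\xi_{i,P}\le\sum_{i\in\mathcal{U}_\tau}\xi_{i,\pi}$ gives
\begin{align}
\sum_{i\in\mathcal{U}_\tau}\xi_{i,P}' = \sum_{i\in\mathcal{U}_\tau}\xi_{i,P} - a_P \le \sum_{i\in\mathcal{U}_\tau}\xi_{i,\pi} - a_\pi = \sum_{i\in\mathcal{U}_\tau}\xi_{i,\pi}',\nonumber
\end{align}
which is exactly \eqref{eq_non_prmp_40_thm3_0}. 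Since $\tau$ was arbitrary, the proof is complete.

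The step I expect to be the main obstacle is making the claim ``$a_P=b_P$ whenever an urgent job still has remaining tasks after the step'' fully rigorous, because the $b_P$ completions are not simultaneous — they occur one at a time, and between two successive completions a job may depart (its $\xi$ hits zero) or the set of jobs with remaining tasks may otherwise change. The clean way to handle this is to decompose the transition from $\{\bm{\xi}_P,\bm{\gamma}_P\}$ to $\{\bm{\xi}_P',\bm{\gamma}_P'\}$ into $b_P$ single-completion substeps and apply the argument inductively at each substep: as long as some urgent job has a remaining task, the next completion under $P$ (being from the earliest-due job with remaining tasks, and due dates being sorted) is urgent; the only way a non-urgent completion can occur is if no urgent remaining task is left, but then $\sum_{i\in\mathcal{U}_\tau}\xi_{i,P}$ has already reached zero and stays zero, making the final inequality trivial. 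I would also remark, as in the original, that the assumption ``no task completes exactly at a due date'' loses no generality by a limiting argument, so the threshold $\tau$ can be taken to avoid boundary coincidences.
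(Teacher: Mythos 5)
Your proof is correct and follows essentially the same route as the paper's: case-split on whether the post-transition urgent count is zero, argue via Condition~3 that when it is positive all $b_P$ completions in $P$ must be urgent (so $a_P=b_P$), then chain $a_\pi\le b_\pi\le b_P$ against the threshold inequality. Two small points: (i) you invoke the ordering $d_1\le d_2\le\cdots\le d_n$ as a necessary ingredient, but it is not part of this lemma's hypotheses (only Condition~3 of Proposition~\ref{lem2_0} is assumed) and is not actually needed --- Condition~3 alone guarantees that a completion is from the urgent class whenever some urgent job still has remaining tasks, regardless of how the job indices are labeled; and (ii) the closing remark about $\tau$ avoiding coincidences with due dates is vacuous here, since $\tau$ is a static threshold parameter rather than a time instant and no tie-breaking issue arises.
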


\begin{proof}
If $\sum_{i:d_i\leq\tau} \xi_{i,P}'=0$, then the inequality \eqref{eq_non_prmp_40_thm3} follows naturally. 
If $\sum_{i:d_i\leq\tau} \xi_{i,P}'>0$, then there exist some remaining tasks. 
In policy $P$, each task completing service is from the job with the earliest due time. Hence,
$\sum_{i:d_i\leq\tau} \xi_{i,P}'=\sum_{i:d_i\leq\tau} \xi_{i,P} - b_P \leq \sum_{i:d_i\leq\tau} \xi_{i,\pi} -b_\pi \leq \sum_{i:d_i\leq\tau} \xi_{i,\pi}'$.
\end{proof}

\begin{lemma}\label{lem_non_prmp2_thm3_0}
Suppose that under policy $P$, $\{\bm{\xi}_P',\bm{\gamma}_P'\}$ is obtained by adding a job with $b$ tasks and due time $d$ to the system whose state is $\{\bm{\xi}_P,\bm{\gamma}_P\}$. Further, suppose that under policy $\pi$, $\{\bm{\xi}_\pi',\bm{\gamma}_\pi'\}$ is obtained by adding a job with $b$ tasks and due time $d$ to the system whose state is $\{\bm{\xi}_\pi,\bm{\gamma}_\pi\}$.
If
\begin{eqnarray}
\sum_{i:d_i\leq\tau} \xi_{i,P}\leq \sum_{i:d_i\leq\tau} \xi_{i,\pi}, ~\tau\in[0,\infty),\nonumber
\end{eqnarray}
then
\begin{eqnarray}
\sum_{i:d_i\leq\tau} \xi_{i,P}'\leq \sum_{i:d_i\leq\tau} \xi_{i,\pi}', ~\tau\in[0,\infty).\nonumber\end{eqnarray}
\end{lemma}

\begin{proof}
If $d\leq\tau$, then
$\sum_{i:d_i\leq\tau} \xi_{i,P}'\leq \sum_{i:d_i\leq\tau} \xi_{i,P} + b\leq \sum_{i:d_i\leq\tau} \xi_{i,\pi}+b \leq \sum_{i:d_i\leq\tau} \xi_{i,\pi}'$.

If $d>\tau$, then
$\sum_{i:d_i\leq\tau} \xi_{i,P}'\leq \sum_{i:d_i\leq\tau} \xi_{i,P} \leq \sum_{i:d_i\leq\tau} \xi_{i,\pi} \leq \sum_{i:d_i\leq\tau} \xi_{i,\pi}'$.
\end{proof}

The proof of Proposition \ref{lem2_0} is almost identical with that of Proposition \ref{lem1_0}, and hence is not repeated here. The only difference is that Lemma \ref{lem_non_prmp1_0} and Lemma \ref{lem_non_prmp2_0} in the proof of Proposition \ref{lem1_0} should be replaced by Lemma \ref{lem_non_prmp1_thm3_0} and Lemma \ref{lem_non_prmp2_thm3_0}, respectively.


\section{Proof  of Proposition 8} \label{app0_1}



The proof of Proposition \ref{lem2} requires the following two lemmas:

\begin{lemma}\label{lem_non_prmp1_thm3}
Suppose that, in policy $P$, $\{\bm{\xi}_P',\bm{\gamma}_P'\}$ is obtained by allocating $b_P$ unassigned tasks to the servers in the system whose state is $\{\bm{\xi}_P,\bm{\gamma}_P\}$. Further, suppose that, in policy $\pi$, $\{\bm{\xi}_\pi',\bm{\gamma}_\pi'\}$ is obtained by completing $b_\pi$ tasks in the system whose state is $\{\bm{\xi}_\pi,\bm{\gamma}_\pi\}$.
If $b_P\geq b_\pi$, condition 2 of Proposition \ref{lem2} is satisfied in policy $P$, and
\begin{eqnarray}
\sum_{i:d_i\leq\tau} \gamma_{i,P}\leq \sum_{i:d_i\leq\tau} \xi_{i,\pi}, ~\tau\in[0,\infty),\nonumber
\end{eqnarray}
then
\begin{eqnarray}\label{eq_non_prmp_40_thm3}
\sum_{i:d_i\leq\tau} \gamma_{i,P}'\leq \sum_{i:d_i\leq\tau} \xi_{i,\pi}', ~\tau\in[0,\infty).\end{eqnarray}
\end{lemma}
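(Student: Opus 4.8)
The plan is to prove Lemma~\ref{lem_non_prmp1_thm3} by a short case analysis that mirrors the proof of Lemma~\ref{lem_non_prmp1}, with the order-statistic tail sums replaced by the due-time-indexed sums. First I would fix an arbitrary $\tau\in[0,\infty)$ and write $A_\tau=\{i:d_i\leq\tau\}$ for the set of jobs with due time at most $\tau$. If $\sum_{i\in A_\tau}\gamma_{i,P}'=0$, then \eqref{eq_non_prmp_40_thm3} holds trivially, since $\sum_{i\in A_\tau}\xi_{i,\pi}'\geq0$ by the non-negativity of the state components.

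The substantive case is $\sum_{i\in A_\tau}\gamma_{i,P}'>0$, meaning that after the allocation some job $i^\star\in A_\tau$ still has an unassigned task. Here Condition~2 of Proposition~\ref{lem2} --- each task starting service is taken from the job with the earliest due time among all jobs with unassigned tasks --- is exactly what is needed: since an instantaneous allocation only removes tasks from the unassigned pool and never adds any, job $i^\star$ has unassigned tasks at every intermediate stage of the allocation, so at each stage the job chosen by policy~$P$ has due time at most $d_{i^\star}\leq\tau$ and therefore lies in $A_\tau$. Hence all $b_P$ allocated tasks belong to jobs in $A_\tau$, which gives $\sum_{i\in A_\tau}\gamma_{i,P}'=\sum_{i\in A_\tau}\gamma_{i,P}-b_P$. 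On the policy-$\pi$ side, completing $b_\pi$ tasks decreases $\sum_{i\in A_\tau}\xi_{i,\pi}$ by at most $b_\pi$, so $\sum_{i\in A_\tau}\xi_{i,\pi}'\geq\sum_{i\in A_\tau}\xi_{i,\pi}-b_\pi$. Combining these with the hypothesis $\sum_{i\in A_\tau}\gamma_{i,P}\leq\sum_{i\in A_\tau}\xi_{i,\pi}$ and with $b_P\geq b_\pi$ gives $\sum_{i\in A_\tau}\gamma_{i,P}'=\sum_{i\in A_\tau}\gamma_{i,P}-b_P\leq\sum_{i\in A_\tau}\xi_{i,\pi}-b_\pi\leq\sum_{i\in A_\tau}\xi_{i,\pi}'$, which is \eqref{eq_non_prmp_40_thm3}; since $\tau$ was arbitrary this would complete the proof.

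The only point that will need a little care is the claim that every task allocated by policy~$P$ during this event comes from a job in $A_\tau$: one must note that $i^\star$ is eligible at every step (immediate, because unassigned tasks are only removed, never added, during the event) and that ties in due times are harmless, since $A_\tau$ is downward closed in the due-time order --- any job tying with a member of $A_\tau$ is itself in $A_\tau$. Apart from this bookkeeping, the argument is routine, and I expect Lemma~\ref{lem_non_prmp1_thm3} to be no harder than Lemma~\ref{lem_non_prmp1}; the genuine content of the EDD analysis is concentrated in the subsequent proof of Proposition~\ref{lem2}.
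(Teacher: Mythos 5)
Your proof is correct and follows essentially the same route as the paper: split on whether $\sum_{i:d_i\le\tau}\gamma_{i,P}'$ is zero, and in the non-trivial case use the EDD allocation rule to show all $b_P$ allocated tasks have due time at most $\tau$, so $\sum_{i:d_i\le\tau}\gamma_{i,P}'=\sum_{i:d_i\le\tau}\gamma_{i,P}-b_P\le\sum_{i:d_i\le\tau}\xi_{i,\pi}-b_\pi\le\sum_{i:d_i\le\tau}\xi_{i,\pi}'$. You spell out the eligibility-of-$i^\star$ argument and downward-closedness of $A_\tau$ a bit more explicitly than the paper, but the content is identical.
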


\begin{proof}
If $\sum_{i:d_i\leq\tau} \gamma_{i,P}'=0$, then the inequality \eqref{eq_non_prmp_40_thm3} follows naturally. 
If $\sum_{i:d_i\leq\tau} \gamma_{i,P}'>0$, then there exist some unassigned tasks in the queue. 
In policy $P$, each task allocated to the servers is from the job with the earliest due time. Hence,
$\sum_{i:d_i\leq\tau} \gamma_{i,P}'=\sum_{i:d_i\leq\tau} \gamma_{i,P} - b_P \leq \sum_{i:d_i\leq\tau} \xi_{i,\pi} -b_\pi \leq \sum_{i:d_i\leq\tau} \xi_{i,\pi}'$.
\end{proof}

\begin{lemma}\label{lem_non_prmp2_thm3}
Suppose that under policy $P$, $\{\bm{\xi}_P',\bm{\gamma}_P'\}$ is obtained by adding a job with $b$ tasks and due time $d$ to the system whose state is $\{\bm{\xi}_P,\bm{\gamma}_P\}$. Further, suppose that under policy $\pi$, $\{\bm{\xi}_\pi',\bm{\gamma}_\pi'\}$ is obtained by adding a job with $b$ tasks and due time $d$ to the system whose state is $\{\bm{\xi}_\pi,\bm{\gamma}_\pi\}$.
If
\begin{eqnarray}
\sum_{i:d_i\leq\tau} \gamma_{i,P}\leq \sum_{i:d_i\leq\tau} \xi_{i,\pi}, ~\tau\in[0,\infty),\nonumber
\end{eqnarray}
then
\begin{eqnarray}
\sum_{i:d_i\leq\tau} \gamma_{i,P}'\leq \sum_{i:d_i\leq\tau} \xi_{i,\pi}', ~\tau\in[0,\infty).\nonumber\end{eqnarray}
\end{lemma}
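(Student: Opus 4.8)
The plan is to follow exactly the template of the already-established Lemma~\ref{lem_non_prmp2_thm3_0}, splitting into two cases according to whether the due time $d$ of the newly added job satisfies $d\leq\tau$ or $d>\tau$. The key observation, which comes directly from Definition~\ref{def_state_thm3}, is that appending a job to a system alters neither the number of remaining tasks nor the number of unassigned tasks of any job already present --- it only introduces one new coordinate --- and that a job which has just been added has all of its $b$ tasks simultaneously remaining and unassigned. Consequently, on the $P$-side the new coordinate contributes exactly $b$ to $\gamma$, and on the $\pi$-side the new coordinate contributes exactly $b$ to $\xi$.

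First I would handle the case $d\leq\tau$. Then the new job is counted in both $\sum_{i:d_i\leq\tau}\gamma_{i,P}'$ and $\sum_{i:d_i\leq\tau}\xi_{i,\pi}'$, so these equal $b+\sum_{i:d_i\leq\tau}\gamma_{i,P}$ and $b+\sum_{i:d_i\leq\tau}\xi_{i,\pi}$ respectively; adding $b$ to both sides of the hypothesis $\sum_{i:d_i\leq\tau}\gamma_{i,P}\leq\sum_{i:d_i\leq\tau}\xi_{i,\pi}$ gives the claim. Next I would handle the case $d>\tau$, where the new job is excluded from both sums, so that $\sum_{i:d_i\leq\tau}\gamma_{i,P}'=\sum_{i:d_i\leq\tau}\gamma_{i,P}$ and $\sum_{i:d_i\leq\tau}\xi_{i,\pi}'=\sum_{i:d_i\leq\tau}\xi_{i,\pi}$, and the hypothesis closes the argument unchanged.

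There is no substantive obstacle here: unlike Lemma~\ref{lem_non_prmp2} (whose sums range over the sorted components $\sum_{i=j}^n$ and therefore required a four-way case split on where the new value lands), in the present lemma the sums range over the fixed index set $\{i:d_i\leq\tau\}$, so the new job's membership in that set is decided solely by comparing $d$ with $\tau$. The only point requiring a moment's care is the bookkeeping noted above --- namely that a freshly arrived job has $\gamma=b$ on the $P$-side and $\xi=b$ on the $\pi$-side while the pre-existing coordinates are untouched --- after which the result is simply two short chains of inequalities.
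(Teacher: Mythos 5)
Your proposal is correct and follows essentially the same route as the paper: a case split on whether $d\leq\tau$ or $d>\tau$, with the observation that the new job adds exactly $b$ to each sum in the first case and $0$ in the second. The only cosmetic difference is that you state the bookkeeping as equalities whereas the paper writes them as inequalities, but the underlying argument is identical.
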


\begin{proof}
If $d\leq\tau$, then
$\sum_{i:d_i\leq\tau} \gamma_{i,P}'\leq \sum_{i:d_i\leq\tau} \gamma_{i,P} + b\leq \sum_{i:d_i\leq\tau} \xi_{i,\pi}+b \leq \sum_{i:d_i\leq\tau} \xi_{i,\pi}'$.

If $d>\tau$, then
$\sum_{i:d_i\leq\tau} \gamma_{i,P}'\leq \sum_{i:d_i\leq\tau} \gamma_{i,P} \leq \sum_{i:d_i\leq\tau} \xi_{i,\pi} \leq \sum_{i:d_i\leq\tau} \xi_{i,\pi}'$.
\end{proof}

The proof of Proposition \ref{lem2} is almost identical with that of Proposition \ref{lem1}, and hence is not repeated here. The only difference is that Lemma \ref{lem_non_prmp1} and Lemma \ref{lem_non_prmp2} in the proof of Proposition \ref{lem1} should be replaced by Lemma \ref{lem_non_prmp1_thm3} and Lemma \ref{lem_non_prmp2_thm3}, respectively.


\section{Proofs of Propositions 9-10} \label{app_lem1_1}

\ifreport
\begin{proof}[Proof of Proposition \ref{lem1_1_0}]
\else
\begin{proof}[of Proposition \ref{lem1_1_0}]
\fi
We have proven that \eqref{eq_ordering_1_1_2} holds under the conditions of Proposition \ref{lem1_1_0}.
Note that \eqref{eq_ordering_1_1_2} can be equivalently expressed in the following vector form:
\begin{align}
\bm{c}_{\uparrow} (P)\leq \bm{c}_{\uparrow} (\pi).\nonumber
\end{align}
Because any $f\in\mathcal{D}_{\text{sym}}$ is a symmetric and increasing function, we can obtain
\begin{align}
&f(\bm{c} (P))=f(\bm{c}_{\uparrow} (P)) \nonumber\\
\leq_{}&f(\bm{c}_{\uparrow} (\pi))= f(\bm{c} (\pi)).\nonumber
\end{align} 
This completes the proof.
\end{proof}

The proof of Proposition \ref{lem1_1} is almost identical with that of 
Proposition \ref{lem1_1_0}, and hence is not repeated here. The only difference is that $\bm{c} (P)$ in the proof of Proposition \ref{lem1_1_0} should be replaced by $\bm{v} (P)$, respectively.

\section{Proof of Proposition 11}\label{app_coro2_0}
In the proof of Proposition \ref{coro2_0}, we need to use the following rearrangement inequality: 
\begin{lemma} \cite[Theorem 6.F.14]{Marshall2011}\label{lem_rearrangement}
Consider two $n$-dimensional vectors $(x_1,\dots,x_n)$ and $(y_1,\ldots,y_n)$. If $(x_i- x_j)(y_i-y_j)\leq0$ for two indices $i$ and $j$ where $1\leq i<j\leq n$, then 
\begin{align}
&(x_1\!-\!y_1,\ldots,x_j\!-\!y_i,\ldots,x_i\!-\!y_j,\ldots,x_n\!-\!y_n)\nonumber\\
\prec& (x_1\!-\!y_1,\ldots,x_i\!-\!y_i,\ldots,x_j\!-\!y_j,\ldots,x_n\!-\!y_n).\nonumber
\end{align}
\end{lemma}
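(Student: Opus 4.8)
The plan is to recognize the claim as a one-step majorization fact about a ``Robin--Hood transfer'' and reduce it to an elementary two-variable inequality. Write $u_k := x_k - y_k$ for the coordinates of the right-hand vector $\bm u$, and let $\bm u'$ denote the left-hand vector. By construction $\bm u'$ agrees with $\bm u$ in every coordinate except $i$ and $j$, where the two differences have been exchanged; in particular $u'_i + u'_j = (x_i + x_j) - (y_i + y_j) = u_i + u_j$, so $\bm u'$ and $\bm u$ have the same total sum. This settles condition (ii) of the majorization order from the definition in the preliminaries, so the work is entirely in the partial-sum inequalities (i).

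Next I would reduce everything to coordinates $i,j$. Put $a := x_i - x_j$ and $b := y_i - y_j$, so the hypothesis $(x_i - x_j)(y_i - y_j)\le 0$ reads $ab\le 0$. A direct computation gives $u_i - u_j = a - b$ and, regardless of how one reads the swapped entries, $|u'_i - u'_j| = |a + b|$. Since $|a+b|\le |a-b|$ is equivalent to $(a+b)^2 \le (a-b)^2$, i.e.\ to $4ab\le 0$, the hypothesis is exactly the assertion that $(u'_i,u'_j)$ has the same sum as $(u_i,u_j)$ but is no more spread out. Hence there is $\lambda\in[0,1]$ with $u'_i = \lambda u_i + (1-\lambda)u_j$ and $u'_j = (1-\lambda)u_i + \lambda u_j$, that is, $\bm u'$ is obtained from $\bm u$ by a single $T$-transform on coordinates $i$ and $j$ (when $a = b$ the hypothesis forces $a = b = 0$ and $\bm u' = \bm u$, a trivial case).

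The last step is to conclude $\bm u' \prec \bm u$ from the fact that $\bm u'$ arises from $\bm u$ by a $T$-transform; this is the classical Hardy--Littlewood--P\'olya characterization of majorization and may simply be cited (it is exactly the content of the cited \cite[Theorem 6.F.14]{Marshall2011}). If a self-contained argument is preferred, it follows by short casework on the ordered partial sums $\sum_{k=1}^\ell u'_{[k]}$ versus $\sum_{k=1}^\ell u_{[k]}$: these can differ only through the contributions of positions $i$ and $j$, and using $\max(u_i,u_j)\ge \max(u'_i,u'_j)$, $\min(u_i,u_j)\le\min(u'_i,u'_j)$ together with the equal-sum identity, the inequality holds whether none, exactly one, or both of the affected coordinates lie among the top $\ell$ values. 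Combining this with the sum identity yields $\bm u'\prec\bm u$, which is the statement of the lemma.

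The main obstacle is organizational rather than mathematical: one must be careful about which coordinates of $\bm u'$ actually differ from those of $\bm u$, get the signs right in the identities $u_i - u_j = a-b$ and $|u'_i-u'_j| = |a+b|$, and not overlook the degenerate case $ab=0$ with $a=b$. The only substantive ingredient --- that a $T$-transform produces a majorized vector --- is standard and needs no new idea; everything else is verification.
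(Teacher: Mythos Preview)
The paper does not supply its own proof of this lemma; it is simply quoted from \cite[Theorem 6.F.14]{Marshall2011} and then invoked in the proofs of Propositions~\ref{coro2_0} and~\ref{coro2}. So there is no in-paper argument to compare against, and the relevant question is only whether your proposal is sound.

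Your argument is correct. Setting $u_k=x_k-y_k$, the modified vector differs from $\bm u$ only in coordinates $i,j$, where the new entries are $u'_i=x_j-y_i$ and $u'_j=x_i-y_j$; the sum identity $u'_i+u'_j=u_i+u_j$ and the spread comparison $|u'_i-u'_j|=|a+b|\le|a-b|=|u_i-u_j|$ (equivalent to $ab\le 0$) are exactly right, and together they say $(u'_i,u'_j)$ is a $T$-transform of $(u_i,u_j)$. From there, either citing the Hardy--Littlewood--P\'olya characterization or doing the short partial-sum casework you sketch finishes it.

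Two small points of presentation. First, your sentence ``the two differences have been exchanged'' is misleading: it is the $x$-coordinates that are swapped, not the entries $u_i,u_j$ themselves (a pure permutation would give trivial majorization). Your computations use the correct interpretation, so this is only a wording issue. Second, the parenthetical that the $T$-transform fact ``is exactly the content of the cited \cite[Theorem 6.F.14]{Marshall2011}'' is not quite accurate; that theorem is the rearrangement inequality itself, whereas the $T$-transform characterization of majorization is a separate (equally standard) result in the same reference. Neither point affects the validity of the argument.
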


\ifreport
\begin{proof}[Proof of Proposition \ref{coro2_0}]
\else
\begin{proof}[of Proposition \ref{coro2_0}]
\fi

For  $f\in\mathcal{D}_{\text{sym}}$, \eqref{eq_ordering_1_1_2} and  \eqref{eq_lem_general} follow from Proposition \ref{lem1_0} and Proposition \ref{lem1_1_0}.

For  $f\in\mathcal{D}_{\text{Sch-1}}$, \eqref{eq_lem_general}  is proven in 3 steps, which are described as follows:

\emph{Step 1: We will show that}
\begin{align}\label{eq_coro2_0_3}
\bm{c}(P)-\bm{d}\prec_{\text{w}} \bm{c} (\pi)-\bm{d}.
\end{align} 
According to Eq. (1.A.17) and Theorem 5.A.9 of \cite{Marshall2011},
it is sufficient to show that there exists an $n$-dimensional vector $\bm{c}'$ such that 
\begin{align}\label{eq_coro2_0_6}
\bm{c}(P)-\bm{d}\prec \bm{c}'- \bm{d} \leq \bm{c} (\pi)-\bm{d}.
\end{align}
Vector $\bm{c}'$ is constructed as follows: First, $\bm{c}'$ is a rearrangement (or permutation) of the vector $\bm{c}(P)$, which can be equivalently expressed as
\begin{align}\label{eq_coro2_0_4}
c_{(i)}'=c_{(i)}(P),~i=1,\ldots,n.
\end{align}
Second, for each $j=1,\ldots,n$, if the completion time $c_j(\pi)$ of job $j$  is the $i$-th smallest component of $\bm{c}(\pi)$, i.e., 
\begin{align}
c_j(\pi)= c_{(i)}(\pi),
\end{align}
then $c_j'$ associated with job $j$ is the $i$-th smallest component of $\bm{c}'$, i.e., 
\begin{align}\label{eq_coro2_0_5}
c_j' = c_{(i)}'.
\end{align}
Combining \eqref{eq_ordering_1_1_2} and \eqref{eq_coro2_0_4}-\eqref{eq_coro2_0_5}, yields
\begin{align}
c_j' = c_{(i)}'=c_{(i)}(P) \leq c_{(i)}(\pi) = c_j(\pi)\nonumber
\end{align}
for $j=1,\ldots,n$. This implies $\bm{c}' \leq \bm{c} (\pi)$, and hence the second inequality in \eqref{eq_coro2_0_6} is proven.

The remaining task is to prove the first inequality in \eqref{eq_coro2_0_6}. 
First, consider the case that the due times $d_1,\ldots, d_n$ of the $n$ jobs are  different from each other. 
The vector  $\bm{c}(P)$ can be obtained from  $\bm{c}'$ by the following procedure: For each $j=1,\ldots,n$, define a set 
\begin{align}
S_j = \{i: a_i\leq c_{j}(P), d_i < d_{j}\}.
\end{align}
If there exists two jobs $i$ and $j$ which satisfy $i\in S_j$ and $c_i' > c_{j}'$, we interchange the components $c_i'$ and $c_{j}'$ in vector $\bm{c}'$. Repeat this interchange operation, until such two jobs $i$ and $j$ satisfying  $i\in S_j$ and $c_i' > c_{j}'$ cannot be found. Therefore, at the end of this procedure, if job $i$ arrives before $c_{j}(P)$ and job $i$ has an earlier due time than job $j$, then $c_i' < c_{j}'$, which is  satisfied by policy $P$. Therefore, the vector  $\bm{c}(P)$ is obtained at the end of this procedure.
In each interchange operation of this procedure, $(c_i' - c_{j}')(d_i - d_{j})\leq0$ is satisfied before the interchange of $c_i'$ and $c_{j}'$. By Lemma \ref{lem_rearrangement} and the transitivity of the ordering of majorization, we can obtain $\bm{c}(P)-\bm{d}\prec \bm{c}'- \bm{d}$, which is the first inequality in \eqref{eq_coro2_0_6}. 

Next, consider the case that two jobs $i$ and $j$ have identical due time $d_i = d_j$. Hence, $(v_i' - v_{j}')(d_i - d_{j})=0$. In this case, the service order of job $i$ and job $j$ are indeterminate in policy $P$. Nonetheless, by Lemma \ref{lem_rearrangement}, the service order of job $i$ and job $j$ does not affect the first inequality in \eqref{eq_coro2_0_6}. Hence, the first inequality in \eqref{eq_coro2_0_6} holds even when $d_i = d_j$.
 
Finally, \eqref{eq_coro2_0_3} follows from \eqref{eq_coro2_0_6}.

\emph{Step 3: We use \eqref{eq_coro2_0_3} to prove Proposition \ref{coro2_0}.}
For any $f\in\mathcal{D}_{\text{Sch-1}}$, $f(\bm{x}+\bm{d})$ is increasing and Schur convex. 
According to Theorem 3.A.8 of \cite{Marshall2011}, for all $f\in\mathcal{D}_{\text{Sch-1}}$
\begin{align}
&f(\bm{c}(P)) \nonumber\\
= &f[(\bm{c}(P)-\bm{d})+\bm{d}] \nonumber\\
\leq& f[(\bm{c}(\pi)-\bm{d})+\bm{d}] \nonumber\\
=& f(\bm{c} (\pi)).\nonumber
\end{align} 
This completes the proof.
\end{proof}

\section{Proof of Proposition 12}\label{app_coro2}
%

For  $f\in\mathcal{D}_{\text{sym}}$, \eqref{eq_ordering_2_2} and  \eqref{eq_coro2} follow from Proposition \ref{lem1} and Proposition \ref{lem1_1}.

For  $f\in\mathcal{D}_{\text{Sch-1}}$, \eqref{eq_coro2} is proven in 3 steps, which are described as follows:



\emph{Step 1: We will show that}
\begin{align}\label{eq_coro2_3}
\bm{v}(P)-\bm{d}\prec_{\text{w}} \bm{c} (\pi)-\bm{d}.
\end{align} 
According to Eq. (1.A.17) and Theorem 5.A.9 of \cite{Marshall2011},
it is sufficient to show that there exists an $n$-dimensional vector $\bm{v}'$ such that 
\begin{align}\label{eq_coro2_6}
\bm{v}(P)-\bm{d}\prec \bm{v}'- \bm{d} \leq \bm{c} (\pi)-\bm{d}.
\end{align}
Vector $\bm{v}'$ is constructed as follows: First, the components of the vector $\bm{v}'$ is a rearrangement (or permutation) of the components of the vector $\bm{v}(P)$, which can be equivalently expressed as
\begin{align}\label{eq_coro2_4}
v_{(i)}'=v_{(i)}(P),~\forall~i=1,\ldots,n.
\end{align}
Second, for each $j=1,\ldots,n$, if the completion time $c_j(\pi)$ of job $j$  is the $i$-th smallest component of $\bm{c}(\pi)$, i.e., 
\begin{align}
c_j(\pi)= c_{(i)}(\pi),
\end{align}
then $v_j'$ associated with job $j$ is the $i$-th smallest component of $\bm{v}'$, i.e., 
\begin{align}\label{eq_coro2_5}
v_j' = v_{(i)}'.
\end{align}
Combining \eqref{eq_ordering_2_2} and \eqref{eq_coro2_4}-\eqref{eq_coro2_5}, yields
\begin{align}
v_j' = v_{(i)}'=v_{(i)}(P) \leq c_{(i)}(\pi) = c_j(\pi)\nonumber
\end{align}
for $j=1,\ldots,n$. This implies $\bm{v}' \leq \bm{c} (\pi)$, and hence the second inequality in \eqref{eq_coro2_6} is proven.

The remaining task is to prove the first inequality in \eqref{eq_coro2_6}. 
First, consider the case that the due times $d_1,\ldots, d_n$ of the $n$ jobs are  different from each other. 
The vector  $\bm{v}(P)$ can be obtained from  $\bm{v}'$ by the following procedure: For each $j=1,\ldots,n$, define a set 
\begin{align}
S_j = \{i: a_i\leq v_{j}(P), d_i < d_{j}\}.
\end{align}
If there exist two jobs $i$ and $j$ which satisfy $i\in S_j$ and $v_i' > v_{j}'$, we interchange the components $v_i'$ and $v_{j}'$ in vector $\bm{v}'$. Repeat this interchange operation, until such two jobs $i$ and $j$ satisfying  $i\in S_j$ and $v_i' > v_{j}'$ cannot be found. Therefore, at the end of this procedure, if job $i$ arrives before $v_{j}(P)$ and job $i$ has an earlier due time than job $j$, then $v_i' < v_{j}'$, which is exactly the priority rule of job service satisfied by policy $P$. Therefore, the vector  $\bm{v}(P)$ is obtained at the end of this procedure.
In each interchange operation of this procedure, $(v_i' - v_{j}')(d_i - d_{j})\leq0$ is satisfied before the interchange of $v_i'$ and $v_{j}'$. By Lemma \ref{lem_rearrangement} and the transitivity of the ordering of majorization, we can obtain $\bm{v}(P)-\bm{d}\prec \bm{v}'- \bm{d}$, which is the first inequality in \eqref{eq_coro2_6}. 

Next, consider the case that two jobs $i$ and $j$ have identical due time $d_i = d_j$. Hence, $(v_i' - v_{j}')(d_i - d_{j})=0$. In this case, the service order of job $i$ and job $j$ are indeterminate in policy $P$. Nonetheless, by Lemma \ref{lem_rearrangement}, the service order of job $i$ and job $j$ does not affect the first inequality in \eqref{eq_coro2_6}. Hence, the first inequality in \eqref{eq_coro2_6} holds even when $d_i = d_j$.
 
Finally, \eqref{eq_coro2_3} follows from \eqref{eq_coro2_6}.

\emph{Step 3: We use \eqref{eq_coro2_3} to prove Proposition \ref{coro2}.}
For any $f\in\mathcal{D}_{\text{Sch-1}}$, $f(\bm{x}+\bm{d})$ is increasing and Schur convex. 
According to Theorem 3.A.8 of \cite{Marshall2011}, for all $f\in\mathcal{D}_{\text{Sch-1}}$, we have
\begin{align}
&f(\bm{v}(P)) \nonumber\\
= &f[(\bm{v}(P)-\bm{d})+\bm{d}] \nonumber\\
\leq& f[(\bm{c}(\pi)-\bm{d})+\bm{d}] \nonumber\\
=& f(\bm{c} (\pi)).\nonumber
\end{align} 
This completes the proof.

\begin{figure}
\centering
\includegraphics[width=0.5\textwidth]{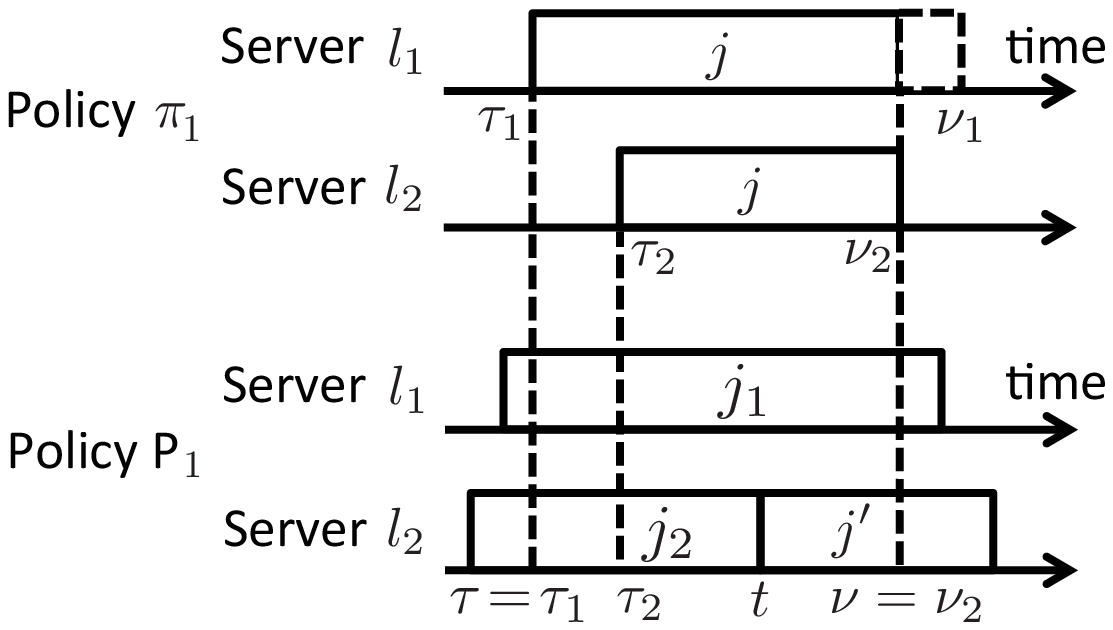} \caption{Illustration of the weak work-efficiency ordering between policy $\pi_1$ and policy $P$$_1$. In policy $\pi_1$, two copies of task $j$ are replicated on the server $l_1$ and server $l_2$ at time $\tau_1$ and $\tau_2$, where $\tau=\min\{\tau_1,\tau_2\}$. Server $l_2$ completes one copy of task $j$ at time $\nu$, server $l_1$ cancels its redundant copy of task $j$ at time $\nu$. Hence, the service duration of task $j$ is $[\tau,\nu]$ in policy $\pi_1$. In policy $P$$_1$, at least one of the servers $l_1$ and $l_2$ becomes idle before time $\nu$. In this example, server $l_2$ becomes idle at time $t\in[\tau,\nu]$ and a new task $j'$ starts execution on server $l_2$ at time $t$. Hence, the weak work-efficiency ordering is satisfied.} 
\label{fig_theorem_proof_1}
\end{figure}

\section{Proof of Lemma \ref{lem_coupling}}\label{app1}
We will need the following lemma:
\begin{lemma}\label{lem_independent}
Suppose that $X_1, \ldots, X_m$ are non-negative independent random variables, $\chi_1, \ldots, \chi_m$ are arbitrarily given non-negative constants, $R_l = [X_l - \chi_l | X_l > \chi_l]$ for  $l=1,\ldots,m$, then $R_1, \ldots, R_m$  are  mutually independent.
\end{lemma}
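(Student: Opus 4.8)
\textbf{Proof proposal for Lemma \ref{lem_independent}.}

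The plan is to show that the joint CCDF of $(R_1,\ldots,R_m)$ factors into the product of the marginal CCDFs, which is equivalent to mutual independence. First I would fix arbitrary $s_1,\ldots,s_m\geq 0$ and write
\begin{align}
\Pr[R_1>s_1,\ldots,R_m>s_m] = \Pr[X_1-\chi_1>s_1,\ldots,X_m-\chi_m>s_m \mid X_1>\chi_1,\ldots,X_m>\chi_m].\nonumber
\end{align}
The key observation is that, since the $X_l$ are mutually independent, the event $\{X_l>\chi_l\}$ depends only on $X_l$, so the conditioning event $\bigcap_{l=1}^m\{X_l>\chi_l\}$ is an intersection of independent events, each involving a distinct coordinate. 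I would then expand the conditional probability as a ratio of an intersection of events over a product of probabilities.

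The main step is the factorization: because $\{X_l - \chi_l > s_l\}\cap\{X_l>\chi_l\} = \{X_l > \chi_l + s_l\}$ is an event depending only on $X_l$, the numerator
\begin{align}
\Pr\!\left[\bigcap_{l=1}^m \{X_l > \chi_l+s_l\}\right] = \prod_{l=1}^m \Pr[X_l>\chi_l+s_l]\nonumber
\end{align}
by independence of the $X_l$, and similarly the denominator $\Pr[\bigcap_{l=1}^m\{X_l>\chi_l\}] = \prod_{l=1}^m \Pr[X_l>\chi_l]$. Dividing, the ratio splits coordinatewise:
\begin{align}
\Pr[R_1>s_1,\ldots,R_m>s_m] = \prod_{l=1}^m \frac{\Pr[X_l>\chi_l+s_l]}{\Pr[X_l>\chi_l]} = \prod_{l=1}^m \Pr[R_l>s_l].\nonumber
\end{align}
Here I should note the edge case $\Pr[X_l>\chi_l]=0$: in that situation $R_l$ is not well defined, so the lemma implicitly assumes $\Pr[X_l>\chi_l]>0$ for each $l$ (equivalently $\chi_l$ lies in the support), and I would state this caveat once at the start.

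Since this identity holds for all $(s_1,\ldots,s_m)\in[0,\infty)^m$ and each $R_l\geq 0$, the joint CCDF equals the product of marginal CCDFs on the whole orthant, which characterizes mutual independence of non-negative random variables. I do not anticipate any real obstacle here — the only thing requiring a little care is making the conditioning rigorous (writing everything as ratios of unconditional probabilities rather than manipulating conditional measures directly), and flagging the positivity assumption on $\Pr[X_l>\chi_l]$. This is essentially a one-paragraph computation once set up correctly.
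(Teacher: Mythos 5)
Your proof is correct and follows essentially the same route as the paper's: write the joint CCDF of $(R_1,\ldots,R_m)$ as a ratio of unconditional probabilities, factor numerator and denominator coordinatewise using independence of the $X_l$, and recombine into a product of marginal CCDFs. The added remark about the positivity requirement $\Pr[X_l>\chi_l]>0$ is a reasonable clarification but does not change the argument.
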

\begin{proof}
For all constants $t_l \geq 0$, $l=1,\dots m$, we have
\begin{align}
&\Pr[R_l >t_l, l=1,\ldots,m] \nonumber\\
=& \Pr[X_l - \chi_l >t_l, l=1,\ldots,m| X_l > \chi_l, l=1,\ldots,m]\nonumber\\
=&\frac{\Pr[X_l   >t_l + \chi_l, l=1,\ldots,m]}{\Pr[X_l > \chi_l, l=1,\ldots,m]}\nonumber\\
=& \frac{\prod_{l=1}^m \Pr[X_l   >t_l + \chi_l]}{\prod_{l=1}^m \Pr[X_l   > \chi_l]}\nonumber\\
=& \prod_{l=1}^m \Pr[X_l - \chi_l >t_l| X_l > \chi_l]\nonumber\\
=&\prod_{l=1}^m \Pr[R_l >t_l].
\end{align}
Hence, $R_1, \ldots, R_m$  are mutually independent.
\end{proof}

\begin{proof}[Proof of Lemma \ref{lem_coupling}]
We use coupling to prove Lemma \ref{lem_coupling}: We construct two policies $P_1$ and $\pi_1$ such that policy $P_1$
satisfies the same queueing discipline with policy $P$, and policy $\pi_1$ satisfies the same queueing discipline with policy $\pi$. Hence, policy $P_1$ is work-conserving.
The task and job completion times of policy $P_1$ (policy $\pi_1$) have the same distribution with those of policy $P$ (policy $\pi$). Because the state process is determined by the job parameters $\mathcal{I}$ and the task/job completion events, the state process $\{\bm{\xi}_{P_1}(t),\bm{\gamma}_{P_1}(t),t\in[0,\infty)\}$ of policy $P_1$ has the same distribution with the state process $\{\bm{\xi}_{P}(t),\bm{\gamma}_{P}(t),t\in[0,\infty)\}$ of policy $P$, and the state process $\{\bm{\xi}_{\pi_1}(t),\bm{\gamma}_{\pi_1}(t),t\in[0,\infty)\}$ of policy $\pi_1$ has the same distribution with the state process $\{\bm{\xi}_{\pi}(t),\bm{\gamma}_{\pi}(t),t\in[0,\infty)\}$  of policy $\pi$.

Next, we show that policy $P_1$ and policy $\pi_1$ can be constructed such that policy $P_1$ is weakly more work-efficient than policy $\pi_1$ with probability one.
Let us consider any task $j$ executed in policy $\pi_1$. 
As illustrated in Fig. \ref{fig_theorem_proof_1}, suppose that 
$u$ copies of task $j$ are replicated on the servers $l_1,l_2,\dots,l_u$ at the time instants $\tau_1,\tau_2,\ldots, \tau_u$ in policy $\pi_1$, where $\tau=\min_{w=1,\ldots,u} \tau_w$.\footnote{If $u=1$, there is no replication.}~In addition, suppose that server $l_w$ will complete processingits copy of task $j$ at time $\tau_w$ if there is no cancellation. Then, one of these $u$ servers will complete one copy of task $j$ at time $\nu = \min_{w=1,\ldots,u} \nu_w$, which is the earliest among these $u$ servers.
Hence, task $j$ starts service at time $\tau$ and completes service at time $\nu$ in policy $\pi_1$. Suppose that the queue is not empty (there exist unassigned tasks) during $[\tau,\nu]$ in policy $P$$_1$.
Because policy $P$$_1$ is work-conserving, all servers are busy during $[\tau,\nu]$ in policy $P$$_1$. 
In policy $P_1$, let $\tau_w+R_{l_w}$ be the earliest time that server $l_w$ becomes available to process a new task after time $\tau_w$. We will show that policy $P_1$ can be constructed such that  for all $w=1,\ldots,u$, 
\begin{align}\label{eq_sample_earlier}
\tau_w+R_{l_w}\leq \nu_w
\end{align} holds with probability one. Let $X_{l}$ denote the task service time of server $l$ and $O_{l}$ denote the cancellation delay overhead of server $l$. 
We need to consider three cases:



\emph{Case 1:} In policy $P_1$, server $l_w$ is processing task $j_w$ at time $\tau_w$, and will keep processing task $j_w$ until it completes task $j_w$ at time $\tau_w + R_{l_w}$. Suppose that server $l_w$ has spent $\chi_{l_w}$ ($\chi_{l_w}\geq0$) seconds on task $j_w$ by time $\tau_w$ in policy $P$$_1$. Then, the CCDF of $R_{l_w}$  is given by
\begin{align}\label{eq_couplecase1}
\Pr[R_{l_w}>s] = \Pr[X_{l_w}-\chi_{l_w}>s|X_{l_w}> \chi_{l_w} ].
\end{align}
Because the task service times are  NBU, we can obtain that for all $s\geq 0$ 
\begin{align}\label{eq_couplecase1_1}
\Pr[X_{l_w}-\chi_{l_w}>s|X_{l_w}> \chi_{l_w} ] \leq \Pr[X_{l_w}>s].
\end{align}
By combining \eqref{eq_couplecase1} and \eqref{eq_couplecase1_1}, we obtain
\begin{align}\label{lem_FUT_NR_1_0}
R_{l_w} \leq_{\text{st}}X_{l_w}.
\end{align}
By Theorem 1.A.1 of \cite{StochasticOrderBook}, policy $P_1$ can be constructed such that \eqref{eq_sample_earlier} always holds in \emph{Case 1}. 

%

\emph{Case 2:} In policy $P_1$, server $l_w$ is processing task $j_w$ at time $\tau_w$ and will keep processing task $j_w$ until another server completes a copy of task $j_w$; then server $l_w$ will cancel its redundant copy of task $j_w$ and will complete the cancellation operation at time $\tau_w + R_{l_w}$. 
Suppose that server $l_w$ has spent $\chi_{l_w}$($\chi_{l_w}\geq0$)  seconds on processing task $j_w$ by time $\tau_w$ in policy $P$$_1$. In addition, suppose that server $l_w$ will spend an additional $\zeta_{l_w}$($\zeta_{l_w}\geq0$)  seconds on processing task $j_w$ after time $\tau_w$  in policy $P$$_1$, before starting to cancel task $j_w$. Then, 
\begin{align}\label{eq_couplecase2_1}
R_{l_w} = \zeta_{l_w} + O_{l_w},
\end{align}
where $O_{l_w}$ the cancellation delay overhead of server $l_w$.

First, let us consider the case that task $j_w$ is not cancelled. Suppose that in this case, server $l_w$ will complete processing task $j_w$ at time $\tau_w + R'_{l_w}$. As shown in \emph{Case 1}, the service of task $j_w$ can be constructed such that 
\begin{align}\label{eq_couplecase2_2}
\tau_w + R'_{l_w}\leq \nu_w
\end{align}
always holds. 

Second, in \emph{Case 2}, task $j_w$ is cancelled at $\tau_w + \zeta_{l_w}$.
According to the LPR discipline, task cancellation only happen when the time to cancel the task is  shorter than the remaining service time to complete the task in the hazard rate ordering. Hence, for all $t\geq 0$
\begin{align}
\Pr[O_{l_w}>t] \leq \Pr[X_{l_w}-\chi_{l_w}-\zeta_{l_w}>t|X_{l_w}> \chi_{l_w}+\zeta_{l_w} ].\nonumber
\end{align}
By Theorem 1.A.1 of \cite{StochasticOrderBook}, policy $P_1$ can be constructed such that 
\begin{align}\label{eq_couplecase2_4}
O_{l_w} \leq R'_{l_w}-\zeta_{l_w}
\end{align}
always holds. By combining \eqref{eq_couplecase2_1}-\eqref{eq_couplecase2_4},
 policy $P_1$ can be constructed such that \eqref{eq_sample_earlier} always holds in \emph{Case 2}.


\emph{Case 3:} In policy $P_1$, server $l_w$ is cancelling a redundant copy of task $j_w$ at time $\tau_w$, and will complete the cancellation operation at time $\tau_w + R_{l_w}$. Suppose that  in policy $P$$_1$, server $l_w$ has spent $\chi_{l_w}$($\chi_{l_w}\geq0$) seconds on processing task $j_w$ before starting to cancel task $j_w$, and server $l_w$ has spent $\zeta_{l_w}$($\zeta_{l_w}\geq0$)  seconds on cancelling task $j_w$ by time $\tau_w$. Hence, for all $s\geq0$
\begin{align}\label{eq_couplecase3}
\Pr[R_{l_w}>s] = \Pr[O_{l_w}-\zeta_{l_w}>s|O_{l_w}> \zeta_{l_w} ].
\end{align}
According to the LPR discipline, task cancellation only happen when the time to cancel the task is  shorter than the remaining service time to complete the task in the hazard rate ordering. Hence, for all $s\geq0$
\begin{align}
&\Pr[O_{l_w}-\zeta_{l_w}>s|O_{l_w}> \zeta_{l_w} ] \nonumber\\
\leq& \Pr[X_{l_w}-\chi_{l_w}-\zeta_{l_w}>s|X_{l_w}> \chi_{l_w}+\zeta_{l_w} ].
\end{align}
Finally, because the task service times are  NBU, for all $s\geq0$
\begin{align}
&\Pr[X_{l_w}-\chi_{l_w}-\zeta_{l_w}>s|X_{l_w}> \chi_{l_w}+\zeta_{l_w} ]\nonumber\\
\leq & \Pr[X_{l_w}>s].\label{eq_couplecase3_1}
\end{align}
By combining \eqref{eq_couplecase3}-\eqref{eq_couplecase3_1}, \eqref{lem_FUT_NR_1_0} follows. By Theorem 1.A.1 of \cite{StochasticOrderBook}, policy $P_1$ can be constructed such that \eqref{eq_sample_earlier} always holds in \emph{Case 3}.

By Lemma \ref{lem_independent}, $R_{l_1},\ldots, R_{l_u}$ are mutually independent. Hence, policy $P_1$ can be constructed such that \eqref{eq_sample_earlier} holds for all $w=1,\ldots,u$ with probability one. Therefore, in policy $P$$_1$ there exists at least one of the server $l_1,\ldots,l_u$, say server $l_v$, that completes processing or cancelling a task and becomes available to process a new task before time $\nu= \min_{w=1,\ldots,u} \nu_w$. Let $t\in[\tau,\nu]$ denote the time that server $l_v$ becomes available to process a new task in policy $P_1$. Because server $l_v$ is kept busy during $[\tau,\nu]$, a new task, say task $j'$, will start execution  on server $l_v$ at time $t$ in policy $P_1$. Since the queue is not empty (there exist unassigned tasks in the queue) during $[\tau,\nu]$, according to the LPR discipline, task $j'$ cannot be a replicated copy of a task that has been assigned to some server before time $t$. Hence, task $j'$ starts service at time $t\in[\tau,\nu]$. 

In the above coupling arguments, conditioned on every possible realization of policy $P_1$ and policy $\pi_1$ before the service of task $j$ starts, we can construct the service of task $j$ in policy $\pi_1$ and the service of the corresponding task $j'$ in policy $P_1$ such that the requirement of weak work-efficiency ordering is satisfied for this pair of tasks. Next, following the proof of \cite[Theorem 6.B.3]{StochasticOrderBook}, one can continue this procedure to progressively construct the service of all tasks in policy $\pi_1$ and policy $P_1$. By this, we obtain that policy $P$$_1$ is weakly more work-efficient than policy $\pi_1$ with probability one, which completes the proof.
%
\end{proof}

\section{Proof of Lemma~2} \label{app2}

Define $\bm{s}=(s_{1},\ldots,$ $ s_{k_{\text{sum}}})$ as the sequence of task arrival times where $s_1\leq \ldots\leq s_{k_{\text{sum}}}$. Hence, $\bm{s}$ is unique determined by the job arrival times $a_1,\ldots,a_n$ and job sizes $k_1,\ldots,k_n$ which are included in the job parameters $\mathcal{I}$.
Recall that $\bm{T}_\pi=(T_{1,\pi},\ldots,$ $ T_{k_{\text{sum}},\pi})$ is the sequence of task completion times in policy $\pi$ where $T_{1,\pi}\leq \ldots\leq T_{k_{\text{sum}},\pi}$. 
We will show that \emph{for all $\pi\in\Pi$}
\begin{align}\label{eq_thm2_proof1}
[\bm{T}_{P}|\mathcal{I}] \leq_{\text{st}} [\bm{T}_\pi|\mathcal{I}].
\end{align}

We prove \eqref{eq_thm2_proof1} by using Theorem 6.B.3 of \cite{StochasticOrderBook}. Consider the first task completion time $T_{1,\pi}$. Job 1 arrives at time $a_1=0$. Note that any policy $\pi\in\Pi$ is non-preemptive. If policy $\pi$ is  work-conserving, then
\begin{align}
[T_{1,\pi}|\mathcal{I}]=\min_{l=1,\ldots,m} X_l,\nonumber
\end{align} 
otherwise, if policy $\pi$ is non-work-conserving, then
\begin{align}
[T_{1,\pi}|\mathcal{I}]\geq\min_{l=1,\ldots,m} X_l,\nonumber
\end{align} 
because of the possibility of server idleness. Since policy $P$ is work-conserving, we can obtain that for all $\pi\in\Pi$
\begin{align}\label{eq_thm2_condition1_2}
[T_{1,P}|\mathcal{I}]=\min_{l=1,\ldots,m} X_l\leq[T_{1,\pi}|\mathcal{I}].
\end{align} 

Next, consider the  evolution from $T_{j,\pi}$ to $T_{j+1,\pi}$. For any work-conserving policy $\pi\in\Pi$, we can obtain
\begin{eqnarray}\label{eq_thm2_2}
T_{j+1,\pi}\!=\max\{s_{j+1},T_{j,\pi}\} + \!\min_{l=1,\ldots,m} R_{j,l,\pi},
\end{eqnarray}
where $R_{j,l,\pi}$ is the remaining service time for server $l$ to complete the task being executed at time $\max\{s_{j+1},T_{j,\pi}\}$. 

Because the task service times are independent across the servers and the CCDF $\bar{F}$  is absolutely continuous, the probability for any two servers to complete their tasks at the same time is zero. 
Therefore, in policy $P$, when a task copy is completed on a server, the remaining $m-1$ replicated copies of this task are still being processed on the other servers; these replicated task copies are cancelled immediately and $m$ replicated copies of a new task are assigned to the servers. Suppose that server $l$ has spent $\tau_{l,\pi}$ ($\tau_{l,\pi}\geq0$) seconds on processing a task by time $\max\{s_{j+1},T_{j,\pi}\}$ in policy $\pi$. 
Then, in policy $P$, 
$\tau_{l,P}=0$ for $l=1,\ldots,m$. Hence, CCDF of $R_{j,l,P}$ is given by
\begin{align}\label{eq_thm2_7}
&\Pr\Big[ R_{j,l,P}>t\Big]= \Pr\Big[X_l>t \Big].
\end{align}
If $\pi$ is a work-conserving policy, then we have $\tau_l\geq0$ for $l=1,\ldots,m$. Hence, 
Hence, CCDF of $R_{j,l,\pi}$ is given by
\begin{align}\label{eq_thm2_8}
&\Pr\Big[R_{j,l,\pi}>t\Big]= \Pr\Big[(X_l-\tau_l)>t\Big|X_l> \tau_l \Big].
\end{align} 
Because the task service times are independent NWU, by \eqref{eq_thm2_7} and \eqref{eq_thm2_8},  for $l=1,\ldots,m$
\begin{align}
 R_{j,l,P}\leq_{\text{st}} R_{j,l,\pi}.\nonumber
\end{align}
According to Lemma \ref{lem_independent}, $R_{j,1,\pi},\ldots, R_{j,m,\pi}$ are mutual independent. Hence, using Theorem 6.B.16(b) of \cite{StochasticOrderBook}, yields 
\begin{align}\label{eq_min_thm2}
\min_{l=1,\ldots,m} R_{j,l,P}\leq_{\text{st}} \min_{l=1,\ldots,m} R_{j,l,\pi}.
\end{align}
Combining \eqref{eq_thm2_2}, \eqref{eq_min_thm2}, and the fact that $s_j$ is uniquely determined by $\mathcal{I}$, it follows that for all work-conserving policy $\pi\in\Pi$
\begin{align}\label{eq_thm2_4}
[T_{j+1,P} | \mathcal{I},  T_{j,P}= t_j] \leq_{\text{st}} [T_{j+1,\pi} | \mathcal{I},  T_{j,\pi} = t_j'] \nonumber\\
\text{whenever}~ t_j \leq t_j',j=1,2,\ldots
\end{align}
If policy $\pi$ is non-work-conserving, \eqref{eq_thm2_2} becomes
\begin{eqnarray}\label{eq_thm2_2_1}
T_{j+1,\pi}\!\geq\max\{s_{j+1},T_{j,\pi}\} + \!\min_{l=1,\ldots,m} R_{j,l,\pi},\nonumber
\end{eqnarray}
because of the possibility of server idleness. In this case, \eqref{eq_thm2_4} still holds. Hence, \eqref{eq_thm2_4} holds for all $\pi\in\Pi$.
Then, substituting \eqref{eq_thm2_condition1_2} and \eqref{eq_thm2_4} into Theorem 6.B.3 of \cite{StochasticOrderBook}, yields
\begin{align}
[(T_{1,P},\ldots,T_{j,P})|\mathcal{I}] \leq_{\text{st}} [(T_{1,\pi},\ldots,T_{j,\pi})|\mathcal{I}],~\forall~\pi\in\Pi.\nonumber
\end{align}
Hence, \eqref{eq_thm2_proof1} is proven. According to Theorem 6.B.1 of \cite{StochasticOrderBook}, this is equivalent to Lemma \ref{lem_coupling_2}. This completes the proof.


\section{Proof of Lemma~\ref{lem_coupling_3}} \label{app2_1}We use coupling to prove Lemma \ref{lem_coupling_3}: We construct two policies $P_1$ and $\pi_1$ such that policy $P_1$
satisfies the same queueing discipline with policy $P$, and policy $\pi_1$ satisfies the same queueing discipline with policy $\pi$. Hence, policy $P_1$ is work-conserving.
The task and job completion times of policy $P_1$ (policy $\pi_1$) have the same distribution with those of policy $P$ (policy $\pi$). Because the state process is determined by the job parameters $\mathcal{I}$ and the task/job completion events, the state process $\{\bm{\xi}_{P_1}(t),\bm{\gamma}_{P_1}(t),t\in[0,\infty)\}$ of policy $P_1$ has the same distribution with the state process $\{\bm{\xi}_{P}(t),\bm{\gamma}_{P}(t),t\in[0,\infty)\}$ of policy $P$, and the state process $\{\bm{\xi}_{\pi_1}(t),\bm{\gamma}_{\pi_1}(t),t\in[0,\infty)\}$ of policy $\pi_1$ has the same distribution with the state process $\{\bm{\xi}_{\pi}(t),\bm{\gamma}_{\pi}(t),t\in[0,\infty)\}$  of policy $\pi$.

Next, we show that policy $P_1$ and policy $\pi_1$ can be constructed such that policy $P_1$ is weakly more work-efficient than policy $\pi_1$ with probability one.
Let us consider any task $j$ executed in policy $\pi_1$. 
As illustrated in the upper part of Fig. \ref{fig_theorem_proof_1}, suppose that 
$u$ copies of task $j$ are replicated on the servers $l_1,l_2,\dots,l_u$ at the time instants $\tau_1,\tau_2,\ldots, \tau_u$ in policy $\pi_1$, where $\tau=\min_{w=1,\ldots,u} \tau_w$.\footnote{If $u=1$, there is no replication.} 
In addition, suppose that server $l_w$ will complete processingits copy of task $j$ at time $\tau_w$ if there is no cancellation. Then, one of these $u$ servers will complete one copy of task $j$ at time $\nu = \min_{w=1,\ldots,u} \nu_w$, which is the earliest among these $u$ servers.
Hence, task $j$ starts service at time $\tau$ and completes service at time $\nu$ in policy $\pi_1$. Suppose that the queue is not empty (there exist unassigned tasks) during $[\tau,\nu]$ in policy $P$$_1$,
we will  show that policy $P$$_1$ can be constructed such that \emph{there exists one corresponding task $j'$ which starts service during $[\tau,\nu]$.}




Because policy $P$$_1$ is work-conserving and there exist unassigned tasks at any time during $[\tau,\nu]$, all servers are busy during $[\tau,\nu]$ in policy $P$$_1$. Suppose that in policy $P_1$, $\tau_w+R_{l_w}$ is the earliest time that one task is completed on server $l_w$ after time $\tau_w$. Let $X_{l_w}$ denote the task service time of server $l_w$ which follows an exponential distribution. Because  exponential distributions are memoryless, $R_{l_w}$  follows the same exponential distribution, i.e.,
\begin{align}\label{lem_FUT_NR_1_0_0}
R_{l_w} =_{\text{st}} X_{l_w}.
\end{align}
Because the task service times are independent across the servers, by Lemma \ref{lem_independent}, $R_{l_1},\ldots, R_{l_u}$ are mutually independent. 
By Theorem 6.B.16(b) of \cite{StochasticOrderBook}, we can obtain 
\begin{align}\label{lem_FUT_NR_11}
\min_{w=1,\ldots,u}R_{l_w} =_{\text{st}} \min_{w=1,\ldots,u}X_{l_w}.
\end{align}
In policy $\pi_1$, server $l_w$ starts to process task $j$ at time $\tau_w$ for $w=1,\ldots,u$, until one of the servers $l_1,\ldots,l_u$, say server $l_v$, completes task $j$ at time $\nu$. 
According to \eqref{lem_FUT_NR_1_0_0}, \eqref{lem_FUT_NR_11}, and Theorem 1.A.1 of \cite{StochasticOrderBook}, policy $P$$_1$ and $\pi_1$ can be coupled such that in policy $P$$_1$, server $l_v$ completes a task exactly at time $\nu$. 
Since the queue is not empty (there exist unassigned tasks in the queue) during $[\tau,\nu]$, according to the R discipline, $m$ replicated copies of a new task, say task $j'$, will be assigned to the $m$ servers at time $\nu$. Hence, task $j'$ starts service at time $\nu\in[\tau,\nu]$.

In the above coupling arguments, conditioned on every possible realization of policy $P_1$ and policy $\pi_1$ before the service of task $j$ starts in policy $\pi_1$, we can construct the service of task $j$ in policy $\pi_1$ and the service of the corresponding task $j'$ in policy $P_1$ such that the requirement of weak work-efficiency ordering is satisfied for this pair of tasks. 
Next, following the proof of \cite[Theorem 6.B.3]{StochasticOrderBook}, one can continue the above procedure to progressively construct the service of all tasks in policy $\pi_1$ and policy $P_1$. By this, we obtain that policy $P$$_1$ is weakly more work-efficient than policy $\pi_1$ with probability one, which completes the proof.
\section{Proof of Theorem 2}\label{app_lem7}
Let us consider  ${C}_i(\text{FUT-LPR})-V_i(\text{FUT-LPR})$.
At time $V_i(\text{FUT-LPR})$, all tasks of job $i$ are completed or under service.
if $k_i>m$,  then job $i$ has at most $m$ incomplete tasks that are under service at time $V_i(\text{FUT-LPR})$; if $k_i\leq m$, then job $i$ has at most $k_i$ incomplete tasks that are under service  at time $V_i(\text{FUT-LPR})$. Therefore, in policy FUT-LPR, no more than $k_i\wedge m = \min\{k_i,m\}$ tasks of job $i$ are completed during the time interval $[V_i(\text{FUT}$ $\text{-LPR}), {C}_i(\text{FUT-LPR})]$.

Suppose that at time $V_i(\text{FUT-LPR})$, a set of servers  $\mathcal{S}_i\subseteq\{1,\ldots,m\}$ are processing the tasks of job $i$, which satisfies $|\mathcal{S}_i|\leq k_i \wedge m$ and there is no replications in the set of servers $\mathcal{S}_i$. Note that if some servers in $\{1,\ldots,m\}/\mathcal{S}_i$ are processing the replicated task copies of job $i$, the delay gap that we will obtain will be even smaller.   

Let $\chi_l$ denote the amount of time that server $l\in \mathcal{S}_i$ has spent on executing a task of job $i$ by time $V_i(\text{FUT-LPR})$ in policy FUT-LPR. Let $R_{l}$ denote the remaining service time of server $l\in \mathcal{S}_i$ for executing this task after time $V_i(\text{FUT-LPR})$. Then, $R_l$ can be expressed as $R_l = [X_l - \chi_l | X_l > \chi_l]$. Because the $X_l$'s are independent NBU random variables with mean $\mathbb{E}[X_l]=1/\mu_l$, for all realizations of $\chi_l$
\begin{align}
[R_l|\chi_l] \leq_{\text{st}} X_l,~\forall~l\in \mathcal{S}_i.\nonumber
\end{align}
In addition, Theorem 3.A.55 of \cite{StochasticOrderBook} tells us that
\begin{align}
X_l \leq_{\text{icx}} Z_l,~\forall~l\in \mathcal{S}_i,\nonumber
\end{align}
where $\leq_{\text{icx}}$ is the increasing convex order defined in \cite[Chapter 4]{StochasticOrderBook} and the $Z_l$'s are independent exponential random variables with mean $\mathbb{E}[Z_l]=\mathbb{E}[X_l] =\mu_l$. Hence, 
\begin{align}
[R_l|\chi_l] \leq_{\text{icx}} Z_l,~\forall~l\in \mathcal{S}_i.\nonumber
\end{align}
Lemma \ref{lem_independent} tells us that 
the $R_l$'s are conditional independent for any given realization of $\{\chi_l,l\in \mathcal{S}_i\}$. Hence, by Corollary 4.A.16 of \cite{StochasticOrderBook}, for all realizations of $\mathcal{S}_i$ and $\{\chi_l,l\in \mathcal{S}_i\}$
\begin{align}\label{eq_exp_order}
\big[\max_{l\in \mathcal{S}_i} R_l \big| \mathcal{S}_i, \{\chi_l, l\in \mathcal{S}_i\}\big] \leq_{\text{icx}} \big[\max_{l\in \mathcal{S}_i} Z_l\big| \mathcal{S}_i\big].
\end{align}

Then,
\begin{align}
&\mathbb{E}[{C}_i(\text{FUT-LPR})-V_i(\text{FUT-LPR})|\mathcal{S}_i ,\{\chi_l, l\in \mathcal{S}_i\}]\nonumber\\
\leq & \mathbb{E}\!\left[ \max_{l\in \mathcal{S}_i} R_l \bigg| \mathcal{S}_i ,\{\chi_l, l\in \mathcal{S}_i\}\right]\nonumber\\
\leq&\mathbb{E}\!\left[ \max_{l\in \mathcal{S}_i} Z_l \bigg| \mathcal{S}_i \right]\label{eq_gap_condition3}\\
\leq &\mathbb{E}\!\left[\max_{l=1,\ldots,k_i \wedge m} Z_l\right]\label{eq_gap_condition4}\\
\leq &\sum_{l=1}^{k_i \wedge m} \frac{1}{\sum_{j=1}^l \mu_j},\label{eq_gap_condition5}
\end{align}
where 
\eqref{eq_gap_condition3} is due to
\eqref{eq_exp_order} and Eq. (4.A.1) of \cite{StochasticOrderBook},
\eqref{eq_gap_condition4} is due to $\mu_1\leq \ldots\leq \mu_M$, $|\mathcal{S}_i|\leq k_i \wedge m$, and the fact that $\max_{l=1,\ldots,k_i \wedge m} Z_l$ is independent of $\mathcal{S}_i$, and \eqref{eq_gap_condition5} is due to the property of exponential distributions. Because $\mathcal{S}_{i}$ and $\{\chi_{l}, l\in \mathcal{S}_{i}\}$
 are random variables which are determined by the job parameters $\mathcal{I}$, taking the conditional expectation for given $\mathcal{I}$ in \eqref{eq_gap_condition5}, yields
\begin{align} 
\mathbb{E}\!\left[ {C}_i(\text{FUT-LPR})-V_i(\text{FUT-LPR}) | \mathcal{I}\right] \leq \sum_{l=1}^{k_i \wedge m} \frac{1}{\sum_{j=1}^l \mu_j}.\nonumber
\end{align} 
By taking the average over all $n$ jobs, the first inequality of \eqref{eq_gap} is proven.
In addition, it is known that for each $k=1,2,\ldots,$
\begin{align}
\sum_{l=1}^k \frac{1}{l} \leq \ln(k)+1.\nonumber
\end{align}
By this, the second inequality of \eqref{eq_gap} holds. This completes the proof.
\section{Proof of (22)}\label{app_lem7_NWU}  
Consider the time difference ${C}_i(\text{FUT-R})-{V}_i(\text{FUT-R})$.
In policy FUT-R, all servers are allocated to process $m$ replicated copies of a task from the job with the fewest unassigned tasks. Hence, at time ${V}_i(\text{FUT-R})$, one task of job $i$ are being processed by all $m$ servers. 
Because the $X_l$'s are independent exponential random variables with mean $\mathbb{E}[X_l]=1/\mu_l$, 
by Theorem 3.A.55 of \cite{StochasticOrderBook}, we can obtain
\begin{align}
&\mathbb{E}[{C}_i(\text{FUT-R})-{V}_i(\text{FUT-R})|\mathcal{I}]\nonumber\\
\leq &\mathbb{E}\!\left[\min_{l=1,\ldots,m} X_l\bigg|\mathcal{I}\right]\label{eq_gap_NWU_condition1}\\
=&\mathbb{E}\!\left[ \min_{l=1,\ldots,m} X_l\right]\label{eq_gap_NWU_condition2}\\
\leq &\frac{1}{\sum_{l=1}^m \mu_l}.\label{eq_gap_NWU_condition5}
\end{align}
where \eqref{eq_gap_NWU_condition1} is because one task of job $i$ are being processed by all $m$ servers at time ${V}_i(\text{FUT-R})$, \eqref{eq_gap_NWU_condition2} is because $X_l$ is independent of $\mathcal{I}$, 
and \eqref{eq_gap_NWU_condition5} is due to the property of exponential distributions. By this,  \eqref{eq_gap_NWU} is proven.

\section{Proof of Theorem \ref{thm_dis4}}\label{app_dis_EDD} 
By Theorem \ref{thm3}, Theorem \ref{thm4}, Theorem \ref{thm4_exp}, and the fact that $V_{ih}(\text{EDD-GR})\leq C_{ih}(\text{EDD-GR})$, we obtain that for all $h=1,\ldots,g$, $\pi\in\Pi$,
and $\mathcal{I}$
\begin{align}
\Big[\max_{i=1,\ldots,n}[V_{ih}(\text{EDD-GR})-d_i]\Big|\mathcal{I}\Big] \leq_{\text{st}} \Big[\max_{i=1,\ldots,n}[C_{ih}(\pi)-d_i]\Big|\mathcal{I}\Big]. \nonumber
\end{align}
In addition, according to \eqref{eq_sub_job_relation}, we can get  
\begin{align}
&L_{\max}(\bm{V}(\text{EDD-GR})) = \max_{i=1,\ldots,n}[V_{i}(\text{EDD-GR})-d_i] = \max_{h=1,\ldots, g} \max_{i=1,\ldots,n}[V_{ih}(\text{EDD-GR})-d_i], \nonumber\\
&L_{\max}(\bm{C}(\pi)) = \max_{i=1,\ldots,n}[C_{i}(\pi)-d_i] = \max_{h=1,\ldots, g} \max_{i=1,\ldots,n}[C_{ih}(\pi)-d_i].\nonumber
\end{align}
Then, by using Theorem 6.B.16.(b) of \cite{StochasticOrderBook} and the independence of the service across the server groups, it follows that
\begin{align}
\Big[\max_{i=1,\ldots,n}[V_{i}(\text{EDD-GR})-d_i]\Big|\mathcal{I}\Big] \leq_{\text{st}} \Big[\max_{i=1,\ldots,n}[C_{i}(\pi)-d_i]\Big|\mathcal{I}\Big]. \nonumber
\end{align}
By this, Theorem \ref{thm_dis4} is proven.

\section{Proof of Theorem \ref{thm2_dist}}\label{app_thm2_dist} 

For any policy $\pi\in\Pi$, suppose that policy FUT-GR$_1$  (policy $\pi_1$) satisfies the same queueing discipline with policy FUT-GR (policy $\pi$).
By  the proof arguments of  Theorems \ref{thm1}-\ref{thm2_exp} and the fact that $V_{ih}(\text{FUT-GR}_1)\leq C_{ih}(\text{FUT-GR}_1)$, policy FUT-GR$_1$ and policy $\pi_1$ can be coupled such that 
\begin{align}\label{eq_thm2_dist1}
V_{(i),h}(\text{FUT-GR}_1)\leq C_{(i),h}(\pi_1) 
\end{align}
holds with probability one for $h=1,\ldots,g$ and $i=1,\ldots,n$. 
Under per-job data locality constraints, for each job $i$ there exists $u(i)\in\{1,\ldots,g\}$ such that $C_{i,u(i)} (\pi_1) = C_{i} (\pi_1) $, $V_{i,u(i)} (\text{FUT-GR}_1) = V_{i} (\text{FUT-GR}_1) $, and $C_{i,h} (\pi_1) = V_{i,h} (\text{FUT-GR}_1) = 0$ for all $h\neq u(i)$. By this and \eqref{eq_thm2_dist1}, we can obtain
\begin{align}
V_{(i)}(\text{FUT-GR}_1)\leq C_{(i)}(\pi_1)\nonumber
\end{align}
holds with probability one for $i=1,\ldots,n$.
Then, Because any $f\in\mathcal{D}_{\text{sym}}$ is a symmetric and increasing function, we can obtain
\begin{align}
&f(\bm{V} (\text{FUT-GR}_1))=f(\bm{V}_{\uparrow} (\text{FUT-GR}_1)) \nonumber\\
\leq_{}&f(\bm{C}_{\uparrow} (\pi_1))= f(\bm{C} (\pi_1)).\nonumber
\end{align} 
holds with probability one.
Then, by the property of stochastic ordering \cite[Theorem 1.A.1]{StochasticOrderBook}, we can obtain \eqref{eq_delaygap2_dist}. This completes the proof.

\section{Proof of Theorem \ref{thm3_dist}}\label{app_thm3_dist} 

If each server group $h$ and its sub-job parameters $\mathcal{I}_h$ satisfy the conditions of Theorem \ref{coro_thm3_1}, Theorem \ref{coro4_1}, or Theorem \ref{coro_thm3_1_exp}, then we can obtain
\begin{itemize}
\item[1.] The job with the earliest due time among all jobs with unassigned tasks is also the jobs with fewest unassigned tasks,

\item[2.] Each server group $h$ and its sub-job parameters $\mathcal{I}_h$  satisfy the conditions of Theorem \ref{thm1}, Theorem \ref{thm2}, or Theorem \ref{thm2_exp}.
\end{itemize}
Then, by using Theorem \ref{thm2_dist}, yields that \eqref{eq_delaygap3_dist} holds for all $f\in\mathcal{D}_{\text{sym}}$. 

Next, we consider the delay metrics in $\mathcal{D}_{\text{Sch-1}}$.
For any policy $\pi\in\Pi$, suppose that policy EDD-GR$_1$  (policy $\pi_1$) satisfies the same queueing discipline  with policy EDD-GR (policy $\pi$).
By using the proof arguments of Theorem \ref{coro_thm3_1}, Theorem \ref{coro4_1}, and Theorem \ref{coro_thm3_1_exp}, and the fact that $V_{ih}(\text{EDD-GR}_1)\leq C_{ih}(\text{EDD-GR}_1)$, policy EDD-GR$_1$ and policy $\pi_1$ can be coupled such that 
\begin{align}
\bm{V}_h(\text{EDD-GR}_1)-\bm{d}\prec_{\text{w}} \bm{C}_h (\pi_1)-\bm{d}\nonumber
\end{align}
holds with probability one for each $h=1,\ldots,g$. Then, Theorem 5.A.7 of \cite{Marshall2011} tells us that
\begin{align}\label{eq_thm3_dist}
(\bm{V}_1(\text{EDD-GR}_1)-\bm{d}, \ldots,\bm{V}_g(\text{EDD-GR}_1)-\bm{d})\prec_{\text{w}} (\bm{C}_1 (\pi_1)-\bm{d},\ldots,\bm{C}_g (\pi_1)-\bm{d})
\end{align}
holds with probability one. Under per-job data locality constraints, for each job $i$ there exists $u(i)\in\{1,\ldots,g\}$ such that $C_{i,u(i)} (\pi_1) = C_{i} (\pi_1) $, $V_{i,u(i)} (\text{EDD-GR}_1) = V_{i} (\text{EDD-GR}_1) $, and $C_{i,h} (\pi_1) = V_{i,h} (\text{FUT-GR}_1) $ $= 0$ for all $h\neq u(i)$.  By this and \eqref{eq_thm3_dist}, we get that
\begin{align}
\bm{V}(\text{EDD-GR}_1)-\bm{d}\prec_{\text{w}} \bm{C} (\pi_1)-\bm{d}\nonumber
\end{align}
holds with probability one.
In addition, by Theorem 3.A.8 of \cite{Marshall2011},
\begin{align}
f(\bm{V}(\text{EDD-GR}_1))\leq f(\bm{C}(\pi_1)) \nonumber
\end{align}
holds with probability one for all $f\in\mathcal{D}_{\text{Sch-1}}$.
Then, by the property of stochastic ordering \cite[Theorem 1.A.1]{StochasticOrderBook}, we can obtain \eqref{eq_delaygap3_dist}. This completes the proof.

\bibliographystyle{IEEEtran}
\bibliography{ref,ref1}

\end{document}